\documentclass{article}

\usepackage{misc}
\usepackage{amsmath,amsfonts}%amssymb
\usepackage{amsthm}
\usepackage{algorithmic}
\usepackage{graphicx}
\usepackage{textcomp}
\usepackage{xcolor}
\usepackage{xspace}
\usepackage{csquotes}
\usepackage{enumerate}
\usepackage{hyperref}
\usepackage{thm-restate}
\usepackage[all]{nowidow}
\usepackage[T1]{fontenc}

\newtheorem{theorem}{Theorem}
\newtheorem{example}[theorem]{Example}
\newtheorem{definition}[theorem]{Definition} 
\newtheorem{lemma}[theorem]{Lemma}
\newtheorem{corollary}[theorem]{Corollary}
\newtheorem{claim}[theorem]{Claim}

\usepackage{listings}
\begin{document}

\newcommand{\Gfull}{\ensuremath{\Gmc_{@,U}}\xspace}
\newcommand{\Gat}{\ensuremath{\Gmc_@}\xspace}
\newcommand{\Hfull}{\ensuremath{\Hmc_{@,U}}\xspace}
\renewcommand{\Hat}{\ensuremath{\Hmc_@}\xspace}
\newcommand{\Hu}{\ensuremath{\Hmc_U}\xspace}
\renewcommand{\ML}{\ensuremath{\mathsf{ML}}\xspace}
\newcommand{\GL}{\ensuremath{\mathsf{GHL}}\xspace}
\newcommand{\HL}{\ensuremath{\mathsf{HL}}\xspace}
\newcommand{\M}{\Mmc}
\newcommand{\N}{\Nmc}
\renewcommand{\phi}{\varphi}
\newcommand{\case}{\ensuremath{\mathsf{case}}\xspace}
\newcommand{\Case}{\ensuremath{\mathsf{Case}}\xspace}

\newcommand{\Var}{\ensuremath{\mn{Var}}\xspace}
\newcommand{\Rel}{\ensuremath{\mn{Rel}}\xspace}
\newcommand{\Nom}{\ensuremath{\mn{Nom}}\xspace}
\newcommand{\sig}{\ensuremath{\mn{sig}}\xspace}
\newcommand{\mydiam}[1]{\langle{#1}\rangle}
\newcommand{\mybox}[1]{[{#1}]}
\newcommand{\sep}{\ensuremath{\mathsf{sep}}\xspace}
\newcommand{\Sep}{\ensuremath{\mathsf{Sep}}\xspace}
\newcommand{\TP}{\ensuremath{\mathsf{Type}}\xspace}
\newcommand{\tp}{\ensuremath{\mathsf{tp}}\xspace}
\newcommand{\SepProf}{\text{$\mathsf{Sep}$-$\TP$}} %{\text{$\Gamma$-$\TP$}}
\newcommand{\SepCaseType}{\text{$\mathsf{Sep}_\cmf$-$\TP$}}
\newcommand{\OneSepProf}{\SepProf^\star}
\newcommand{\SepCaseStarType}{\SepCaseType^\star}
\newcommand{\mos}{\ensuremath\mathsf{m}\xspace}
\newcommand{\Mos}{\ensuremath\mathsf{Mos}\xspace}
\newcommand{\D}{\mathcal{D}}
%%
%% Rights management information.
%% CC-BY is default license.

%%
%% This command is for the conference information
% \conference{CI-BD-SOQE'26: Workshop on Craig Interpolation, Beth Definability, and Second-Order Quantifier Elimination}

%%

\title{Computation and Size of Interpolants for Hybrid Modal Logics}

\author{Jean Christoph Jung \\ TU Dortmund University \and Jędrzej Kołodziejski \\ University of Warsaw \and Frank Wolter \\ University of Liverpool}

\maketitle

\begin{abstract}
Recent research has established complexity results for the problem of deciding the existence of interpolants in logics lacking the Craig interpolation property (CIP). The proof techniques developed so far are non-constructive, and no meaningful bounds on the size of interpolants are known. Hybrid modal logics (or modal logics with nominals) are a particularly interesting class of logics without CIP: in their case, CIP cannot be restored without sacrificing decidability and, in applications, interpolants in these logics can serve as definite descriptions and separators between positive and negative data examples in description logic knowledge bases. In this contribution we show, using a new hypermosaic elimination technique, that in many standard hybrid modal logics Craig interpolants can be computed in fourfold exponential time, if they exist. On the other hand, we show that the existence of uniform interpolants is undecidable, which is in stark contrast to modal or intuitionistic logic where uniform interpolants always exist.  
\end{abstract}

\section{Introduction}

Consider formulae $\varphi$ and $\psi$ such that $\varphi\rightarrow \psi$ is valid. Then a \emph{Craig interpolant} for $\varphi$ and $\psi$ is any formula $\chi$ using only the shared non-logical symbols of $\varphi$ and $\psi$ such that both $\varphi\rightarrow \chi$ and $\chi \rightarrow \psi$ are valid. The formula $\chi$ often provides a useful explanation of why $\varphi\rightarrow \psi$ is valid. 
Craig interpolants have found many applications, including the inference of loop invariants and program abstractions in program verification~\cite{DBLP:reference/mc/McMillan18}, query rewriting in databases~\cite{DBLP:series/synthesis/2016Benedikt,DBLP:conf/kr/TomanW21}, and the construction of separators for positive and negative data examples in concept learning~\cite{DBLP:journals/ai/JungLPW22}. In all of these settings, it is crucial that interpolants can be computed efficiently, given $\varphi$ and $\psi$. 

A logic is said to enjoy the Craig interpolation property (CIP) if a Craig interpolant exists whenever $\varphi\rightarrow\psi$ is valid. In such cases, interpolants can often be extracted directly from proofs of $\varphi\rightarrow\psi$ in an appropriate proof system~\cite{DBLP:journals/apal/TabatabaiJ25a,DBLP:journals/apal/FittingK15,TenEtAl13}. By contrast, if a logic does not have CIP, then the existence of a proof does not guarantee the existence of an interpolant, and it remains unclear how interpolants could be extracted even when they do exist. Although recent work has made significant progress on the problem of deciding whether interpolants exist in logics without CIP, very little has been achieved on the construction of interpolants or on estimating their size. This problem  arises because the algorithms developed so far are non-constructive: rather than providing explicit interpolants, they rely on variants of Robinson’s joint consistency, which in turn rely on compactness.

The aim of this paper is to establish the first elementary upper bounds for constructing Craig interpolants in modal and description logics that lack CIP. We focus on the family of decidable hybrid modal logics, or equivalently, description logics with nominals. The computation of Craig interpolants in these logics is particularly significant for two reasons.

First, in hybrid modal logics, CIP cannot be restored merely by adding further logical connectives without sacrificing decidability~\cite{DBLP:journals/jsyml/Cate05}. Consequently, the natural strategy of repairing CIP by moving to a more expressive logic is effectively ruled out, at least for applications where effective reasoning is essential.

Second, when hybrid modal logics are viewed as description logics with nominals, Craig interpolants have numerous potential applications in knowledge representation and reasoning~\cite{DBLP:journals/corr/abs-2512-08833}. They can explain modularity violations between ontologies or knowledge bases~\cite{KonevLWW09,DBLP:conf/kr/KonevLPW10}, act as separators between positive and negative data examples in description logic knowledge bases~\cite{DBLP:journals/ai/JungLPW22,DBLP:journals/tocl/ArtaleJMOW23}, provide definitions of concept names~\cite{TenEtAl13}, enable alignments between ontologies~\cite{DBLP:conf/ekaw/GeletaPT16}, and supply definite descriptions (or referring expressions) for individuals in description logic knowledge bases~\cite{ArtEtAl21}.

We now present our main contributions. We study hybrid modal logic $\Hmc$, which extends modal logic with \emph{nominals} interpreted as singleton sets, together with its extensions $\Hat$ and $\Hfull$, incorporating the \emph{@-operator}, $@_{a}\varphi$, stating that $\varphi$ is true in the node denoted by nominal $a$, and the \emph{universal modality}, $\mybox{U}\varphi$, stating that $\varphi$ is true everywhere, respectively. In addition, we examine the further extensions of these logics with \emph{graded modalities} (also known as qualified number restrictions in description logics)~\cite{DBLP:journals/corr/abs-1910-00039,DBLP:journals/sLogica/Rijke00}. Let $\HL$ and $\GL$ denote the respective collections of these logics.

For logics in $\HL \cup \GL$, satisfiability is \PSpace-complete in the absence of the universal modality, and \ExpTime-complete when the universal modality is present~\cite{DBLP:conf/csl/ArecesBM99,DBLP:journals/jair/Tobies00,BMLSW-LTCS-05-02}. No logic in $\HL \cup \GL$ enjoys the CIP, as witnessed by the validity of
$(a \wedge \mydiam{R}a) \rightarrow (b \rightarrow \mydiam{R} b)$, with $a,b$ nominals,  for which no interpolant exists.

Interpolant existence has previously been investigated~\cite{DBLP:journals/tocl/ArtaleJMOW23}, where it was shown to be co\NExpTime-complete for $\Hmc$ and 2$\ExpTime$-complete for $\Hfull$. Building on this, one can easily show that for all logics in $\HL\cup \GL$ interpolant existence is co\NExpTime-complete in the absence of the universal modality and 2\ExpTime-complete otherwise. Our interest here is not in deciding the existence of interpolants but in understanding their complexity. We state our main result which gives an elementary upper bound on the size of Craig interpolants.
\begin{theorem}\label{thm:mainthm}
    Let $\Lmc\in \HL\cup \GL$. Given $\varphi,\psi$ in $\Lmc$ such that a Craig interpolant for $\varphi \rightarrow\psi$ exists, one can construct such an interpolant in fourfold exponential time (and of fourfold exponential size) in the size of $\varphi$ and $\psi$.
\end{theorem}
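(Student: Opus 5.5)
We plan to prove Theorem~\ref{thm:mainthm} by a mosaic-style elimination procedure that, on termination, either certifies that no interpolant exists or outputs one explicitly. Let $\sigma$ be the shared signature of $\varphi$ and $\psi$. Interpolant existence is equivalent to the non-existence of a pair of pointed models $(\M,w)\models\varphi$ and $(\N,v)\models\neg\psi$ that are $\sigma$-bisimilar in the appropriate sense for $\Lmc$. For nominals, this means bisimulations respecting shared nominals (and the @-operator, the universal modality, or counting, depending on $\Lmc$). The first step is to fix a finite combinatorial representation of such joint counterexamples.

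\emph{Hypermosaics.} A \emph{mosaic} is a pair $(S_1,S_2)$ of sets of types over $\mn{sub}(\varphi)$ and $\mn{sub}(\psi)$ respectively. It is meant to describe the sets of types realized on the two sides in a single $\sigma$-bisimulation class. This gives doubly exponentially many mosaics. Nominals force global information: which shared and non-shared nominals are realized in which bisimulation class. We therefore work with \emph{hypermosaics}, which are sets of mosaics together with a placement of nominals. There are triply exponentially many hypermosaics. Each one fixes a candidate global picture of the two models up to $\sigma$-bisimilarity, and within it we run an elimination loop. A mosaic is eliminated if some type in it has a $\Diamond$-demand (or, for graded modalities, a counting demand) that cannot be witnessed by successor mosaics still present, with the $\sigma$-relations matched on both sides. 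The proof of correctness has two parts:
\begin{enumerate}[(i)]
\item A surviving hypermosaic containing a mosaic with $\varphi$ on the left and $\neg\psi$ on the right yields bisimilar models. This is shown by an unravelling construction that respects nominals.
\item Conversely, any counterexample induces such a surviving hypermosaic.
\end{enumerate}

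\emph{Extracting interpolants.} For each hypermosaic $H$ and each mosaic $m$ eliminated at round $i$, we build by induction on $i$ a $\sigma$-formula $\chi_{m}$. On the left side, $\chi_m$ holds at every point whose type lies in the left component of $m$. On the right side, it fails everywhere, relative to the assumptions encoded by $H$. The inductive step is the usual one: if a demand $\Diamond\theta$ cannot be witnessed, take $\Diamond$ of the disjunction of the left-side descriptions of the possible successors, combined with $\Box$ of the negations of the $\chi$'s of the already eliminated successor mosaics. For graded modalities we use graded modalities over these formulas. Shared nominals, @, and $U$ enter as atomic descriptions of the hypermosaic. Each round at most squares or exponentiates the formula size, since formulas are disjunctions and conjunctions over sets of mosaics. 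With doubly exponentially many rounds and mosaics, the formulas have triply exponential size per hypermosaic.

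\emph{Assembling the interpolant.} For each hypermosaic $H$ that is inconsistent with the existence of a joint model, we obtain a separating formula $\chi_H$. These are combined over all triply exponentially many $H$ by a case distinction on the nominal configuration, expressible in $\sigma$ via shared nominals, @, or $U$. This adds one more exponential and yields the fourfold exponential bound.

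The main obstacle is the interaction between nominals and non-shared symbols. Points named by non-shared nominals cannot be referred to in $\sigma$, so the case distinction over hypermosaics must be expressible without them, and the formula must work uniformly. Without @ and $U$, this must also hold only locally, reachable from the evaluation point. We expect to handle this by guessing, inside the hypermosaic, the $\sigma$-descriptions (up to bounded modal depth) of the nominal points, and by proving a completeness lemma. That lemma states that if interpolants exist, then every hypermosaic is eliminated. It is where the known non-constructive existence criterion of Artale et al.\ will be used.
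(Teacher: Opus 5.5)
\paragraph{The gap: extracting formulas per mosaic inside a fixed configuration.} Your hypermosaics are fixed global configurations (a nominal placement plus a set of mosaics). Inside each one you eliminate single mosaics, and for each eliminated $m$ you want a $\sigma$-formula $\chi_m$ that holds on the left and fails on the right ``relative to the assumptions encoded by $H$''. But what distinguishes one configuration from another is a cross-model statement about $\sigma$-bisimilarity. No per-side premise expresses it, so in general no such $\chi_m$ exists.

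Concretely, let $\varphi_1=\mydiam{R}\mydiam{R}\psi_a\wedge\mydiam{S}\mydiam{S}\psi_a$ and $\varphi_2=\mybox{R}\mybox{R}\psi_R\wedge\mybox{S}\mybox{S}\psi_S$, with $R,S\in\sigma$, $\psi_a\models a$ and $a\notin\sigma$. Assume $\psi_a,\psi_R$ and $\psi_a,\psi_S$ are jointly $\sigma$-consistent, but $\psi_a,\psi_R,\psi_S$ cannot be realized at three pairwise $\sigma$-bisimilar points. Take the configuration in which the bisimulation class of $a$ contains only $\psi_R$-points on the right. There the mosaic $(\{\mydiam{S}\psi_a\},\{\mybox{S}\psi_S\})$ is eliminated. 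Yet $\mydiam{S}\psi_a$ and $\mybox{S}\psi_S$ are jointly $\sigma$-consistent, so no $\sigma$-formula separates them. What actually fails is the joint realizability of the \emph{set} $\{(\{\mydiam{R}\psi_a\},\{\mybox{R}\psi_R\}),(\{\mydiam{S}\psi_a\},\{\mybox{S}\psi_S\})\}$.

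For the same reason, your final ``case distinction on the nominal configuration, expressible in $\sigma$'' cannot be written down. Non-shared nominals are not in $\sigma$, and in $\mathcal{H}$ even shared ones cannot be addressed from the evaluation point.

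Your proposed remedy does not close the gap. The criterion of Artale et al.\ is non-constructive. Your plan gives no depth bound for the guessed descriptions, and enumerating all $\sigma$-descriptions up to a depth that grows with the input is non-elementary (cf.\ Example~\ref{exm:non-elem}). Separately, ``if interpolants exist then every hypermosaic is eliminated'' is false as stated: realizable hypermosaics unrelated to $\varphi,\neg\psi$ survive.

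\paragraph{The missing idea.} You need to eliminate sets of mosaics as units, and to make the nominal cases \emph{dynamic} while keeping them out of the output. The paper does this as follows.

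\begin{itemize}
\item It eliminates hypermosaics $H$. A witness for $m\in H$ is any surviving $H'$ that extends $H$ (monotonicity condition \textbf{Mon}) and witnesses all demands of $m$.
\item Alongside, it maintains a hyperseparator. This is a case distinction whose cases $c=(c_1,c_2)$ prescribe per-side formulas at every nominal, together with formulas $\mathsf{sep}_i(c,t)\in\mathcal{H}(\sigma)$ satisfying $t\wedge\mathsf{case}(c_i)\models\mathsf{sep}_i(c,t)$. It separates $H$ if, under every case, some $m\in H$ has jointly inconsistent $\mathsf{sep}$'s.
\item Each round splits every case $c$ according to the Boolean types, over the current formulas $\mathsf{sep}_i(c,\cdot)$, that hold at each nominal on each side. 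It then sets $\mathsf{sep}^+_i(c^+,t)$ to the disjunction of the star-types consistent with $t\wedge\mathsf{case}(c^+_i)$. A star-type is such a Boolean type together with, for each $R\in\sigma$, the set of such types realized at $R$-successors.
\item Correctness: suppose some refined case failed to separate a newly eliminated $H$. Then the star-types shared by the mosaics of $H$ yield a nominal-clean $H'$ that witnesses every $m\in H$. Hence $H'$ was eliminated earlier, and the previous hyperseparator forces one of the Boolean types used to build $H'$ to be inconsistent, a contradiction.
\item The cases are finally discharged without ever expressing them in $\sigma$. One takes $\bigvee_S\bigwedge_{c:\,c_1\in S}\mathsf{sep}_1(c,t_1)$ over the sets $S$ of left-side cases whose exact Boolean combination is satisfiable (Lemma~\ref{lem:hypersep2sep}). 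This step costs the fourth exponential.
\end{itemize}

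\paragraph{Size bound.} Your size arithmetic also needs repair. The bound requires multiplicative growth, $s_{\ell+1}\le s_\ell\cdot f(n)$ with $f$ triply exponential. Squaring in each of doubly exponentially many rounds would already give fourfold exponential size before the final step.
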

Before turning to the proof, we note that a standard strategy for bounding the size of interpolants for $\varphi \rightarrow \psi$, namely, bounding the size of the logically strongest consequence of $\varphi$ over the symbols shared by $\varphi$ and $\psi$,
is not applicable here, as there is no elementary upper bound on the size of this consequence. 
\begin{example}\label{exm:non-elem}
Let $n\geq 2$ and $\sigma=\{R,S,p\}$ with $R$ and $S$ modal accessibility relations and $p$ a propositional variable. Let   
\[\varphi_n = \mybox{R} a\wedge \mybox{S} a\wedge \bigwedge_{1\leq i\leq n-1}\mybox{R}^{i}\mybox{S}\bot\wedge\mybox{R}^n\bot \wedge \mybox{S}^{2}\bot\]
with $a$ a nominal. Then, up to logical equivalence, there are only finitely many formulae in $\Hmc$ over $\sigma$ that are consistent with $\varphi_n$. Their disjunction, $\chi_{n}$, is the logically strongest consequence 
of $\varphi_{n}$ in $\Hmc$ using symbols in $\sigma$ only. Hence, if there is an interpolant between $\varphi_{n}$ and some formula $\psi$ with shared symbols in $\sigma$, then $\chi_{n}$ is such an interpolant. However, using the fact that $\varphi_n\to (\mydiam{R}\chi\to \mydiam{S}\chi)$ is valid for all formulae $\chi$ one can 
easily show that there is no elementary upper bound on the size of formulae in $\Hmc$ equivalent to $\chi_{n}$. 
\qed
\end{example}

Instead, our proof of Theorem~\ref{thm:mainthm} builds on the following two ideas. Given formulae $\varphi$ and $\psi$ sharing a set of non-logical symbols $\sigma$:
\begin{description}
    \item[(bisim)] for modal and guarded logics, the non-existence of an interpolant for $\varphi \rightarrow \psi$ is often equivalent to the existence of $\sigma$-bisimilar models satisfying $\varphi$ and $\neg\psi$~\cite{DBLP:conf/amast/Marx98,DBLP:journals/corr/abs-2508-12805};
    \item[(elim)] for modal and guarded logics with CIP, interpolants can sometimes be constructed using the standard type elimination procedure, a technique commonly employed to establish complexity upper bounds for satisfiability~\cite{DBLP:journals/tocl/BenediktCB16,bezhanishvili2025six}.
    The procedure eliminates pairs of types that cannot be jointly satisfied, and inductively computes interpolants between them.
\end{description}
% \begin{itemize}
%     \item[(bisim)] for modal and guarded logics, the non-existence of an interpolant for $\varphi \rightarrow \psi$ can be shown by constructing $\sigma$-bisimilar models satisfying $\varphi$ and $\neg\psi$~\cite{DBLP:conf/amast/Marx98,DBLP:journals/corr/abs-2508-12805}; and
%     \item[(elim)] for modal and guarded logics with CIP, interpolants can be constructed using the type elimination procedure, a technique commonly employed to establish complexity upper bounds for satisfiability~\cite{DBLP:journals/tocl/BenediktCB16,bezhanishvili2025six}. Intuitively, for each eliminated pair of types, an interpolating formula is inductively constructed when running the elimination procedure. {\color{red} I guess it's alright, but one could even be more honest and more closely connected to (bisim) and say that one can adapt type elimination to decide existence of bisimilar models, and from this procedure construct interpolants? Or is it too complicated?}
% \end{itemize}
We illustrate (elim) in a warm-up section by showing that one can always construct at most exponential size interpolants for $\varphi\rightarrow\psi$ in $\Hmc$ \emph{if} all nominals in $\varphi,\psi$ are shared. 
To apply the elimination procedure to logics without the CIP, it must be lifted to \emph{mosaics}, pairs of sets of types rather than pairs of individual types. In this form, the method (elim) has already been successfully employed to obtain tight complexity bounds for the existence of interpolants in a variety of logics lacking the CIP~\cite{DBLP:conf/lics/JungW21,DBLP:journals/tocl/ArtaleJMOW23,DBLP:journals/lmcs/KuruczWZ25}. The underlying idea is that sets of types in a mosaic represent conceivable types of points from a bisimilarity class, which may contain more than two elements.\footnote{We note an unfortunate clash of terminology: in~\cite{DBLP:journals/tocl/BenediktCB16}, mosaics are a generalization of types tailored to guarded logics, and are always realized within a single model. In~\cite{DBLP:conf/lics/JungW21,DBLP:journals/tocl/ArtaleJMOW23} and the present paper, mosaics represent sets of types realized by bisimilar nodes.}

The transition from pairs of types to mosaics, however, comes at a cost. Making mosaic elimination constructive is substantially more difficult, and has so far been achieved only in very restricted settings~\cite{DBLP:conf/dl/interpolantshard}. We argue that for hybrid modal logics, mosaic elimination is actually too local to compute interpolants at each elimination step. To capture global interactions between mosaics, we extend the elimination method further by eliminating unrealizable \emph{sets} of mosaics, called hypermosaics. 

It remains open whether the fourfold exponential upper bound is tight. However, we show a triply exponential (doubly exponential for DAG representation) lower bound on the size of Craig interpolants for languages in $\HL\cup \GL$ with universal modality. Thus, interpolants are at least one exponential larger than in modal logic. 

In the final section of this article, we investigate uniform interpolants in hybrid modal logics. In description logic, uniform interpolants have been extensively investigated and applied to forget symbols in ontologies or knowledge bases~\cite{FAME,LETHE}. Here we further demonstrate the complexity of constructing interpolants in hybrid modal logics by 
proving that the existence of uniform interpolants for a formula and set of non-logical symbols is undecidable. We show this for a broader class of logics, which in particular subsumes $\HL \cup \GL$.
\begin{theorem} Let $\Lmc$ be a logic containing $\Hmc$ and contained in ${\sf FO}$.
The existence of a uniform $\Lmc(\sigma)$-interpolant for given $\varphi\in\Lmc$ and signature $\sigma$ is undecidable.
\end{theorem}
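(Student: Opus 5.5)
We reduce from an undecidable tiling or grid problem. The reduction uses the standard characterization of uniform interpolants via bisimulation-type invariance. Recall that a uniform $\Lmc(\sigma)$-interpolant of $\varphi$ is a formula $\chi\in\Lmc(\sigma)$ with $\varphi\models\chi$ and $\chi\models\psi$ for every $\psi\in\Lmc(\sigma)$ with $\varphi\models\psi$. Since $\Lmc\subseteq{\sf FO}$ is compact, such a $\chi$ exists iff the set $\Gamma=\{\psi\in\Lmc(\sigma)\mid\varphi\models\psi\}$ is equivalent to a single formula. Because $\Lmc\supseteq\Hmc$, this fails precisely when $\varphi$ has infinitely many pairwise inequivalent $\sigma$-consequences that do not follow from any finite subset.

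The plan is to build, from an instance $P$ of an undecidable problem, a formula $\varphi_P\in\Hmc$ and signature $\sigma$ with the following property: $\varphi_P$ has a uniform interpolant iff $P$ has no solution. I would take $P$ to be the (co-r.e.-hard) problem of whether a tiling system tiles some finite rectangle, or equivalently whether a Turing machine halts. The formula uses nominals outside $\sigma$ to force grid-like structure, as in Example~\ref{exm:non-elem}. Concretely, a nominal $a\notin\sigma$ together with $\mybox{R}a\wedge\mybox{S}a$ forces the successors of paths to close up. This yields $\sigma$-consequences of the form $\mydiam{R}\chi\to\mydiam{S}\chi$ for all $\chi$, which are not expressible by a single $\sigma$-formula unless depth is bounded.

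I would design $\varphi_P$ so that, using nominals and confluence tricks, it encodes the following. Any model contains a finite grid (commuting $R$/$S$ steps enforced via nominals at grid points). A $\sigma$-observable property propagates along rows and columns. The resulting $\sigma$-consequences $\theta_n$, saying ``along every $R^nS^m$ path a tiling-consistent labelling exists'', collapse to finitely many consequences exactly when the tiling reaches a finite accepting rectangle. In the no-solution case, the consequences $\theta_n$ form a strictly increasing chain with no finite axiomatization. To see this, for each $n$ I would exhibit a model of $\{\theta_m\}_{m\le n}$ that is $\sigma$-bisimilar to no model of $\varphi_P$ at depth $n+1$. By compactness, no single $\chi$ is then equivalent to $\Gamma$. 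In the solution case, the solution bounds the relevant depth, so $\Gamma$ is equivalent to its finite part of bounded modal depth and a uniform interpolant exists. Since only $\Hmc$-formulae are constructed and the arguments use only compactness and the fact that $\Hmc$-formulae are preserved under $\sigma$-bisimulation (with the nominals outside $\sigma$), the result transfers to every $\Lmc$ between $\Hmc$ and ${\sf FO}$. The positive direction still needs an $\Lmc(\sigma)$ formula; the plan is to obtain it as an $\Hmc(\sigma)$-formula, which is also in $\Lmc$. For the negative direction we need that no $\Lmc(\sigma)$-formula, even an ${\sf FO}$ one, works. Here the $\sigma$-consequences in $\Lmc$ include ${\sf FO}$ ones, so I would argue with ${\sf FO}(\sigma)$-consequences and Ehrenfeucht--Fraïssé games, ensuring that the witnessing $\Hmc(\sigma)$ consequences $\theta_n$ already cannot be captured by any first-order sentence.

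The main obstacle is this last robustness: the gadget must make the two directions hold uniformly for every $\Lmc$ between $\Hmc$ and ${\sf FO}$. The existence of an interpolant in the positive case must not depend on extra ${\sf FO}$ consequences, and the infinite chain in the negative case must defeat full ${\sf FO}$. I would first try to make all $\sigma$-reducts of models of $\varphi_P$ tree-like except at the finitely many nominal-forced confluence points. The ${\sf FO}(\sigma)$-theory and the $\Hmc(\sigma)$-theory of $\varphi_P$ should then coincide up to consequence, which makes the argument logic-independent.
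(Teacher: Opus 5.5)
Your proposal has a genuine gap: the formula $\varphi_P$ on which everything rests is never constructed, and the parts that are specified do not fit together.

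You announce ``uniform interpolant iff $P$ has no solution''. But you then argue that without a solution the $\theta_n$ form a non-finitely-axiomatizable chain, and that with a solution ``the solution bounds the relevant depth'' so an interpolant exists. That is the opposite equivalence. Neither direction has a mechanism behind it. An $\Hmc$-formula can force two paths to meet only at the finitely many points named by its nominals. So the $\mybox{R}a\wedge\mybox{S}a$ trick of Example~\ref{exm:non-elem} cannot make grids of unbounded size close up. Nothing explains why the existence of one tiled rectangle should bound the depth of the $\sigma$-consequences of $\varphi_P$.

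Two further problems concern the notion itself. First, you only consider consequences in $\Lmc(\sigma)$. The paper's definition admits every $\psi$ with $\sig(\psi)\cap\sig(\varphi)\subseteq\sigma$, i.e.\ consequences with fresh symbols. Since $\Hmc$ lacks CIP, this changes the notion. For $\Lmc=\Hmc$ and $\sigma=\{E,S,\mathsf{min}\}$, the nominal-free $\varphi_{even}$ has a uniform interpolant in your sense, by uniform interpolation of basic modal logic, but none in the paper's sense. Second, you rightly flag robustness for all $\Lmc$ between $\Hmc$ and ${\sf FO}$ as the main obstacle, but leave it open, and bisimulation arguments cannot settle it. Your own gadget $\mybox{R}a\wedge\mybox{S}a\wedge\mydiam{R}\top$ over $\{R,S\}$ has no uniform interpolant in $\Hmc$, yet it has the uniform interpolant $\exists y\,(R(x,y)\wedge\forall z\,((R(x,z)\vee S(x,z))\rightarrow z=y))$ in ${\sf FO}$.

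The paper never forces a grid. Its $\varphi_{\Smc}$ uses the auxiliary symbols $Q,P\notin\sigma$ to say, existentially, that the tiled $F$-accessible structure is \emph{not} a proper finite grid (there is an infinite or a non-closing $R_x/R_y$-path from the origin), or else $\varphi_{even}$ holds on the $E/S$-order.

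If $\Smc$ has no solution, every model of the $\sigma$-part $\psi_1\wedge\psi_2\wedge\psi_3$ expands to a model of $\varphi_{\Smc}$ by changing only non-$\sigma$ symbols. This expansion property alone shows that this $\Hmc(\sigma)$-formula entails every admissible consequence in any logic up to ${\sf FO}$. That is exactly the positive-direction robustness you were missing.

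If $\Smc$ has a solution, fresh nominals yield an $\Hmc$-formula $\chi'_n$ describing the solution grid together with a linear order of length $n$, essentially up to isomorphism. Moreover, $\varphi_{\Smc}\models\neg\chi'_n$ iff $n$ is even. Since $\neg\chi'_n$ is an admissible consequence, any uniform interpolant, even an ${\sf FO}$ one, would separate finite linear orders of even and odd length. This contradicts the standard Ehrenfeucht--Fra\"iss\'e result.

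The idea your plan lacks is to use fresh nominals in the \emph{consequences} to pin down finite structures. This turns non-existence into a non-${\sf FO}$-definability statement and makes both directions uniform for all $\Lmc$ between $\Hmc$ and ${\sf FO}$.
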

This result is in stark contrast to modal and intuitionistic logic, where uniform interpolants always exist~\cite{DBLP:journals/jsyml/Pitts92,visser1996uniform} and can, at least for modal logic, even be constructed in exponential time~\cite{TenEtAl06}.

% {\bf Related Work}. 
\subparagraph{Related Work.}We discuss related work not yet mentioned above. An overview of what is currently known about the size of (uniform) interpolants in modal logic is given in~\cite{DBLP:journals/corr/abs-2511-04577}. The CIP has been investigated extensively for hybrid modal logics and corresponding description logics with nominals. The main positive results are that by adding the binding operator $\downarrow$ to $\Hmc(@)$ one obtains a logic with CIP (also if restricted to popular classes of frames)~\cite{DBLP:journals/jsyml/BlackburnM03}, and that CIP holds for languages in {\sf HL} if arbitrary nominals can be used in interpolants (called CIP for propositional variables)~\cite{DBLP:journals/logcom/BezhanishviliC06,DBLP:journals/jsyml/ArecesBM01}. However, satisfiability, and hence also interpolant existence, are undecidable for $\Hmc(\downarrow,@)$. This makes its use in knowledge representation applications problematic. Likewise, allowing arbitrary nominals in interpolants undermines the very purpose of interpolants in description logic. The decision problem of whether an interpolant exists for a given implication in hybrid modal logics lacking CIP is studied in~\cite{DBLP:journals/tocl/ArtaleJMOW23}. The principal question left open there—how to effectively construct interpolants when they do exist, and what bounds can be established on their size—is the focus of the present contribution.

\section{Preliminaries}
We introduce hybrid logics $\Hmc,\Hat,\Hfull$ following~\cite{areces200714}. They are based on countably infinite sets $\Rel$, $\Var$, and $\Nom$ of binary \emph{relation symbols}, \emph{propositional variables}, and \emph{nominals}, respectively. We use the symbol~$U$ to denote the \emph{universal modality}. Formulae in the logic $\Hfull$ are defined by the rule
\[\varphi,\psi ::= \top\mid p \mid a\mid \neg \varphi\mid \varphi\wedge\psi\mid \mydiam{R}\varphi\mid @_a\varphi\]
where $p\in\Var$, $R\in\Rel\cup\{U\}$, and $a\in \Nom$. We use the standard abbreviations $\varphi\vee \psi$, $\varphi\to\psi$, $\mybox{R}\varphi$, etc. We denote with $\Hat$ the fragment of \Hfull in which the constructor $\mydiam{U}\varphi$ is dropped and with $\Hmc$ the fragment
of $\Hat$ in which also $@_a\varphi$ is dropped. In $\Hfull$ we regard $@_{a}\varphi$ as an abbreviation of $\mydiam{U}(a \wedge \varphi)$. We refer with $\HL$ to the set of all logics defined so far. 

The semantics is given by \emph{models} $\M=(W^\M,\cdot^\M)$ where $W^\M$ is the \emph{domain} of \M, and $\cdot^\M$ is an \emph{interpretation function} that maps every $R\in \Rel\cup\{U\}$ to a binary relation $R^\M\subseteq W^\M\times W^\M$ such that $U^\M= W^\M\times W^\M$,
every variable $p\in\Var$ to a subset $p^\M\subseteq W^\M$, and every $a\in\Nom$ to an element $a^\M\in W$. A \emph{pointed model} is a pair $\M,w$ with $w\in W^\M$. When the model $\M$ is clear from the context we will skip the superscripts and write $W$, $R$ etc.
The satisfaction relation between pointed models and formulae from \Hfull is then defined as usual:
% \begin{align*}
% %
% \M,w & \models \top && \text{always} &
% %
% \M,w & \models p && \text{if }w \in p^\M \\
% %
% \M,w & \models a && \text{if }w=a^\M &
% %
% \M,w & \models \neg \varphi && \text{if }\M,w\not\models \varphi \\
% %
% \M,w & \models \varphi\wedge\psi && \text{if }\M,w\models\varphi\text{ and }\M,w\models\psi &
% %
% \M,w & \models \mydiam{R}\varphi && \text{if }\M,v\models\varphi\text{ for some $(w,v)\in R^\M$} \\
% %
% \M,w & \models @_a\varphi && \text{if }\M,a^\M\models \varphi
% %
% % \end{align*}
\begin{align*}
\M,w & \models \top && \text{always}; &
\M,w & \models \varphi\wedge\psi && \text{if }\M,w\models\varphi\text{ and }\M,w\models\psi;\\
\M,w & \models p && \text{if }w \in p^\M; &
\M,w & \models \mydiam{R}\varphi && \text{if }\M,v\models\varphi\text{ for some $(w,v)\in R^\M$}; \\
\M,w & \models a && \text{if }w=a^\M; &
\M,w & \models @_a\varphi && \text{if }\M,a^\M\models \varphi;\\
\M,w & \models \neg \varphi && \text{if }\M,w\not\models \varphi.
\end{align*}

% The semantics is given by (hybrid) \emph{models} $\M=(W,(R^\M)_{R\in\Rel}, V)$ where $W$ is a set of \emph{worlds}, for all $R\in \Rel$, $R^\M$ is a binary relation over $W$ with $U^\M=W\times W$, and $V$ is a map that assigns a set $V(p)\subseteq W$ to every proposition $p\in\Var$ and an element $V(a)\in W$ to every $a\in\Nom$. The satisfaction relation is then defined as usual, for each $w\in W$:
% %
% \begin{align*}
% %
% \M,w & \models \top \\
% %
% \M,w & \models p && \text{if }w \in V(p) \\
% %
% \M,w & \models a && \text{if }w=V(a) \\
% %
% \M,w & \models \neg \varphi && \text{if }\M,w\not\models \varphi \\
% %
% \M,w & \models \varphi\wedge\psi && \text{if }\M,w\models\varphi\text{ and }\M,w\models\psi \\
% %
% \M,w & \models \mydiam{R}\varphi && \text{if }\M,v\models\varphi\text{ for some $(w,v)\in R^\M$} \\
% %
% \M,w & \models @_a\varphi && \text{if }\M,V(a)\models \varphi
% %
% \end{align*}
%
% If $\M,w\models \varphi$, then $\M,w$ is called \emph{model of $\varphi$}\textcolor{blue}{model already used}.

We write $\varphi\models \psi$ if $\Mmc,w\models \psi$ whenever $\Mmc,w\models \varphi$, for all pointed models $\Mmc,w$. A formula $\varphi$ is \emph{valid}, written $\models \varphi$, if $\top\models\varphi$; it is \emph{satisfiable} if there is a pointed model $\M,w$ with $\M,w\models \varphi$.

The \emph{size} of a formula $\varphi$, denoted by $|\varphi|$, is the number of nodes in the syntax tree underlying $\varphi$. In addition, we also consider the \emph{DAG-size} of a formula, defined as the number of distinct subformulae.

A \emph{signature} is a finite subset $\sigma$ of $\Rel\cup\Var\cup\Nom$. The signature $\sig(\varphi)$ of a formula $\varphi$ is the set of all relation symbols, propositional variables, and nominals used in $\varphi$. Let $\Lmc\in \HL$ and $\sigma$ be a signature. We denote by $\Lmc(\sigma)$ the set of all formulae $\varphi\in\Lmc$ with $\sig(\varphi)\subseteq \sigma$. A \emph{Craig interpolant} for $\varphi_{1},\varphi_{2}\in\Lmc$ is any formula $\theta\in\Lmc$ in the shared signature $\sigma=\sig(\varphi_{1})\cap\sig(\varphi_{2})$ of $\varphi_{1},\varphi_{2}$
such that $\models \varphi_{1}\to\theta$ and $\models \theta\to \varphi_{2}$. We say that $\Lmc$ has \emph{Craig interpolation (CIP)} if
for every $\varphi_{1},\varphi_{2}$ with $\models \varphi_{1} \rightarrow \varphi_{2}$ there exists a Craig interpolant in $\Lmc$ for $\varphi_{1},\varphi_{2}$.
\begin{theorem}[Theorem~4.8 in~\cite{DBLP:journals/jsyml/ArecesBM01}]
    No $\Lmc\in \HL$ has CIP.
\end{theorem}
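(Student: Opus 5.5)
The plan is to exhibit a single valid implication, the one already mentioned in the introduction,
\[
\varphi_1 = a\wedge\mydiam{R}a \qquad\text{and}\qquad \varphi_2 = b\rightarrow\mydiam{R}b,
\]
with $a,b$ distinct nominals. We then show that it has no interpolant in the strongest logic $\Hfull$ extended to include all connectives under consideration. Since every $\Lmc\in\HL$ is a fragment of $\Hfull$, an interpolant in $\Lmc$ would also be one in $\Hfull$, so non-existence there settles all cases at once.

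\emph{Validity.} Suppose $\M,w\models a\wedge\mydiam{R}a\wedge b$. Then $w=a^\M=b^\M$ and $(w,w)\in R^\M$, so $\M,w\models\mydiam{R}b$. The shared signature is $\sigma=\{R\}$: it contains no nominals and no propositional variables.

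\emph{No interpolant.} Suppose $\theta\in\Hfull(\{R\})$ were one. Take $\M$ with domain $\{w\}$ and $R^\M=\{(w,w)\}$, with $a^\M=w$. Then $\M,w\models\varphi_1$, hence $\M,w\models\theta$. Next take $\N$ with domain $\mathbb{Z}$ and $R^\N=\{(n,n+1)\mid n\in\mathbb{Z}\}$, with $b^\N=0$. Then $\N,0\models b\wedge\neg\mydiam{R}b$, so $\N,0\not\models\varphi_2$, hence $\N,0\not\models\theta$.

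We now show by induction on $\theta$ that $\M,w$ and $\N,n$ agree on every $\{R\}$-formula, for every $n\in\mathbb{Z}$. This is the standard bisimulation-invariance argument. The relation $Z=\{w\}\times\mathbb{Z}$ is an $\{R\}$-bisimulation: each point has exactly one $R$-successor on both sides. It is total in both directions, so it also respects $\mydiam{U}$. Boolean cases are immediate. For $\mydiam{R}$, the unique successors are again $Z$-related. For $\mydiam{U}$, totality of $Z$ gives the claim.

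Since $\theta$ contains no nominals and no variables, the atomic cases are just $\top$. The same holds for $@$: in $\Hfull$ it abbreviates $\mydiam{U}(c\wedge\cdot)$ with $c$ a nominal, and no such $c$ may occur in $\theta$. This contradicts $\M,w\models\theta$ and $\N,0\not\models\theta$.

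For graded extensions, if one wishes to cover them too, one additionally notes that $Z$ matches successor counts, since both sides have exactly one successor.

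The only point needing care is ensuring the countermodel is a genuine infinite chain rather than a finite cycle. Otherwise $\mydiam{U}$ combined with $R$-cycles could not distinguish anything anyway, but a finite cycle through $b$ would make $\mydiam{R}^k b$-style reasoning suggestive. The $\mathbb{Z}$-chain avoids this cleanly, and the induction itself is routine.
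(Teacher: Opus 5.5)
Your proof is correct and follows essentially the same route as the paper. The paper does not reprove this cited result, but it names exactly your witness $(a\wedge\mydiam{R}a)\rightarrow(b\rightarrow\mydiam{R}b)$, and your total $\{R\}$-bisimulation between a reflexive point and an irreflexive serial model of $b\wedge\neg\mydiam{R}b$ is the standard bisimulation certificate behind Lemma~\ref{lem:joint-consistent} (you only need its easy direction). Your closing remark is off, though harmless: nothing depends on the chain being infinite, since a two-element $R$-cycle, or any serial model in which $b$ has no $R$-loop, works just as well.
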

Moreover, it is known that, under a natural definition of \emph{logic}, 
no decidable logic extending $\Hmc$ and contained in ${\sf FO}$ has CIP~\cite{DBLP:journals/jsyml/Cate05}. 
Another possible way to restore CIP is 
to only restrict the use of propositional variables. Say that $\Lmc\in \HL$ has \emph{CIP for propositional variables} if for
every $\varphi_{1},\varphi_{2}\in \Lmc$ with $\models \varphi_{1}\rightarrow \varphi_{2}$ there is a formula $\theta\in\Lmc(\sigma)$ using only the shared propositional variables of $\varphi_{1},\varphi_{2}$ (but possibly nominals occurring only in $\varphi_{1}$ or $\varphi_{2}$) such that $\models \varphi_{1}\to\theta$ and $\models \theta\to \varphi_{2}$.
\begin{theorem}% [\cite{DBLP:journals/logcom/BezhanishviliC06,DBLP:journals/jsyml/ArecesBM01}]
[Chapter~6 in \cite{tenCatediss}]
\label{thm:propletter}
All $\Lmc\in \HL$ have CIP for propositional variables.
%\nb{J: previous references, still in the comments, were not the right ones}
\end{theorem}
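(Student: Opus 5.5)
The statement to prove is Theorem~\ref{thm:propletter}: every $\Lmc\in\HL$ has CIP for propositional variables. I would argue semantically, via a Robinson-style joint consistency argument based on bisimulations that respect nominals. Assume $\models\varphi_1\to\varphi_2$. Let $\sigma$ consist of the shared propositional variables and shared relation symbols, together with \emph{all} nominals occurring in $\varphi_1$ or $\varphi_2$. Set $\Gamma=\{\theta\in\Lmc(\sigma)\mid \varphi_1\models\theta\}$.

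The aim is to show $\Gamma\models\varphi_2$. Compactness of $\Lmc$, as a fragment of first-order logic, then yields a finite conjunction $\theta$ with $\varphi_1\models\theta\models\varphi_2$. Suppose instead that $\Gamma\cup\{\neg\varphi_2\}$ is satisfied at some $\N,v$. A standard argument gives a model $\M,w\models\varphi_1$ such that $\M,w$ and $\N,v$ agree on all $\Lmc(\sigma)$-formulae: if not, for each candidate $\M,w$ some $\sigma$-formula holds in $\M,w$ but fails in $\N,v$, and compactness yields $\varphi_1\models\bigvee\theta_i$ with every $\theta_i$ false at $v$, contradicting the choice of $\Gamma$.

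Next I pass to $\omega$-saturated elementary extensions. There $\Lmc(\sigma)$-equivalence becomes a $\sigma$-bisimulation $Z$. Its back-and-forth clauses cover the relations in $\sigma$ and $U$ when present. It also respects nominals: $x Z y$ implies ($x=a^\M$ iff $y=a^\N$) for every $a\in\sigma$. When $@$ or $U$ is available, the points named by nominals are related as well, since $@_a\chi$ is expressible. Without $@$, I restrict to the generated submodels from $w$ and $v$; formulae of the form $\mydiam{R_1}\cdots\mydiam{R_k}(a\wedge\chi)$ still force the reachable named points to be matched.

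The key step is amalgamation. Since every nominal of both formulae lies in $\sigma$, $Z$ is functional on named points, so a nominal's two interpretations are never split. I build the model $\Kmc$ on the set $Z$ (restricted to reachable pairs). A pair $(x,y)$ carries the propositional variables of $\varphi_1$ from $x$ and those of $\varphi_2$ from $y$; this is consistent because the two sources agree on shared variables. A relation $R$ holds between $(x,y)$ and $(x',y')$ if $R$ is shared and both coordinates are $R$-related; if $R$ occurs only in $\varphi_1$ it is copied from the first coordinate, and symmetrically for $\varphi_2$. Nominal $a$ is interpreted as the pair $(a^\M,a^\N)$. This pair is unique: it lies in $Z$, and any other pair containing $a^\M$ in $Z$ must have $a^\N$ as its second coordinate.

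To handle non-shared relations I would follow the standard tree-unravelling trick. First unravel $\M$ and $\N$ while keeping the named points as fixed points. Then apply a zig-zag product construction so that the projections $\Kmc\to\M$ and $\Kmc\to\N$ are bisimulations for the respective full signatures, with nominals preserved. It then follows that $\Kmc,(w,v)\models\varphi_1\wedge\neg\varphi_2$, a contradiction.

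I expect the main obstacle to be this last step. The relations that are not shared must be handled so that each projection is a genuine $\sig(\varphi_i)$-bisimulation, while nominals still denote singletons in the product. The fix I would try first is to let the relations of $\varphi_1$ alone act by choosing a matching second coordinate through the back clause, for which shared-nominal functionality keeps named points unique. As a fallback I would invoke the known proof in~\cite{tenCatediss}.
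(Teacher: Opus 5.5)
The paper gives no proof of its own here. It cites ten Cate's thesis, and Theorem~\ref{thm:nom} is only a constructive counterpart, so your Robinson-style route has to stand on its own. The compactness and $\omega$-saturation steps are fine. The argument breaks for $\Hmc$, however, and there the claim you set out to prove is in fact false.

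You put only the \emph{shared} relation symbols into $\sigma$. You then assert that formulae $\mydiam{R_1}\cdots\mydiam{R_k}(a\wedge\chi)$ force every reachable named point to be matched. That holds only when $R_1,\dots,R_k\in\sigma$; a nominal reached through a non-shared relation is unconstrained. So without $@$ or $U$ the pair $(a^{\M},a^{\N})$ need not lie in $Z$, and your interpretation of $a$ in $\mathcal{K}$ collapses.

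Concretely, let $\varphi_1=\mydiam{R}(a\wedge p)$ and $\varphi_2=\mybox{S}(a\to p)$ with $R\neq S$. The implication is valid. Your $\sigma$ is $\{p,a\}$, so every candidate interpolant in $\Hmc$ is a Boolean combination of $p$ and $a$ at the current point. Yet a point $w$ with $w\notin p$ and $w\neq a$ satisfies $\varphi_1$ in the model $\{w,u\}$ with $w\,R\,u$ and $u=a\in p$. The same valuation at $w$ satisfies $\neg\varphi_2$ in the model $\{w,u\}$ with $w\,S\,u$ and $u=a\notin p$. Hence no such interpolant exists.

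Since the definition only restricts propositional variables, the remedy is to put \emph{all} relation symbols of $\varphi_1,\varphi_2$ into $\sigma$. Then every relation is shared, and the plain product on $Z$ works in all three logics. Read $\varphi_1$'s variables from the first coordinate and $\varphi_2$'s from the second, and take all relations as products. By (nom), the only pair in $Z$ with $a^{\M}$ or $a^{\N}$ as a coordinate is $(a^{\M},a^{\N})$. In $\Hmc$, if that pair is not in $Z$, interpret $a$ at a fresh isolated point. No unravelling is needed.

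If you want the stronger version with only shared relation symbols, it does hold for $\Hat$ and $\Hfull$. But the step you flag as the main obstacle is precisely the missing idea, and your proposed fix via the back clause cannot work. $Z$ has no back or forth clause for a non-shared $R$, so an $R$-successor $x'$ of $x$ may have no $Z$-partner at all.

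In $\Hfull$ this is cured by totality. $U$ behaves as a shared relation, so $Z$ is total and surjective. On domain $Z$ you may then set $(x,y)\,R\,(x',y')$ iff $x\,R\,x'$ for $R$ occurring only in $\varphi_1$, and dually for $\varphi_2$. Take products for shared relations and make $U$ universal.

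In $\Hat$ you need one-sided points:
\begin{itemize}
\item Add $(x,\ast)$ for every unnamed $x$ of $\M$, and $(\ast,y)$ for every unnamed $y$ of $\N$.
\item An edge $x\,R\,x'$ with $R$ only in $\varphi_1$ goes from $(x,y)$ to $(x',\ast)$. If $x'=a^{\M}$, it goes instead to $(a^{\M},a^{\N})$, which lies in $Z$ by (@).
\item Points $(x,\ast)$ copy all $\sig(\varphi_1)$-edges of $x$ in the same way.
\item Shared edges between pairs stay inside $Z$, and everything is dual for $\varphi_2$.
\end{itemize}
Then projection to the first coordinate is an $\Hat(\sig(\varphi_1))$-bisimulation on $Z\cup\{(x,\ast)\}$. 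Projection to the second coordinate is an $\Hat(\sig(\varphi_2))$-bisimulation on $Z\cup\{(\ast,y)\}$, because no $\sig(\varphi_2)$-edge leads from there into a point $(x,\ast)$.
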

On a technical level, we will deal with separators instead of interpolants. An \emph{$\Lmc(\sigma)$-separator} for $\varphi_{1},\varphi_{2}$ is a formula $\theta\in\Lmc(\sigma)$ such that $\varphi_1\models \theta$ and $\varphi_{2}\wedge\theta\models \bot$. Separators and Craig interpolants are easily polynomially reduced to each other:  clearly, a Craig-interpolant for $\varphi_{1},\varphi_{2}$ in $\Lmc$ is obtained by taking an $\Lmc(\sigma)$-separator for $\varphi_{1},\neg\varphi_{2}$ in the shared signature $\sigma$ of $\varphi_{1},\varphi_{2}$. Conversely, one can obtain an $\Lmc(\sigma)$-separator for $\varphi_{1},\varphi_{2}$ by taking any Craig interpolant for $\varphi_{1}',\neg\varphi_{2}'$ with $\varphi_{1}',\varphi_{2}'$ 
obtained from $\varphi_{1},\varphi_{2}$ by replacing symbols not in $\sigma$ by fresh mutually disjoint symbols.
Hence, we can focus on the following computation problem for every $\Lmc\in\HL$: 
\begin{description}
    \item[Input:] $\varphi_{1},\varphi_{2}\in\Lmc$, signature $\sigma$
    \item[Output:] $\Lmc(\sigma)$-separator for $\varphi_{1},\varphi_{2}$ if one exists, and ``no'' if no such separator exists. 
\end{description}
Separator existence can be model-theoretically characterized using bisimulations, which we recall next. 
\begin{definition}\label{def:bisi}
Let $\M,\N$ be models, and $\sigma$ a signature. An \emph{$\Hmc(\sigma)$-bisimulation} between \M and \N 
is a relation $Z\subseteq \dom(\M)\times \dom(\N)$ such that 
every $(w,w')\in Z$ satisfies: 
\begin{description}%[\IEEEsetlabelwidth{abc}]

    \item[(atom)] %for every $(w,w')\in Z$, 
    $w\in p^\M$ iff $w'\in p^\N$, for all $p\in \sigma \cap \Var$; % and $V(a)=V'(a)$, for all $x\in\sigma\cap\Nom$;
    
    \item[(nom)] %for every $(w,w')\in Z$, 
    $a^\M=w$ iff $a^\N=w'$, for all $a\in\sigma\cap\Nom$;

    \item[(forth)] % for every $(w,w')\in Z$ and 
     for every $(w,v)\in R^\M$, there is $(w',v')\in R^\N$ with $(v,v')\in Z$, for all $R\in \sigma \cap \Rel$;
    
    \item[(back)] % for every $(w,w')\in Z$ and 
    for every $(w',v')\in R^\N$, there is $(w,v)\in R^\M$ with $(v,v')\in Z$, for all $R\in \sigma \cap \Rel$.
     
\end{description}    
An $\Hat(\sigma)$-bisimulation is an $\Hmc(\sigma)$-bisimulation $Z$ that additionally satisfies 
\begin{description}
    \item[(@)] $(a^\M,a^\N)\in Z$, for all $a\in \sigma\cap \Nom$.
\end{description}
Finally, an $\Hu(\sigma)$-bisimulation is an $\Hmc(\sigma)$-bisimulation for which \emph{(back)} and \emph{(forth)} also hold for the universal modality~$U$.

For $\Lmc\in \HL$, we write $\M,w\sim_{\Lmc,\sigma} \N,w'$ if there is an $\Lmc(\sigma)$-bisimulation $Z$ between \M and \N such that $(w,w')\in Z$. 
\end{definition}
Note that an $\Lmc(\sigma)$-bisimulation is surjective and inverse surjective whenever $U\in \sigma$. 
We call formulae $\varphi_{1},\varphi_{2}\in\Lmc$ \emph{jointly $\Lmc(\sigma)$-consistent} if there are pointed models $\M_{i},w_{i}$, $i\in \{1,2\}$, with $\M_{i},w_{i}\models\varphi_{i}$ for all $i\in \{1,2\}$, and $\M_{1},w_{1}\sim_{\Lmc,\sigma} \M_{2},w_{2}$.
We have the following (standard) characterization of interpolant existence, which can be established along the lines of~\cite[Theorem~5.6]{DBLP:journals/tocl/ArtaleJMOW23}.

\begin{lemma}\label{lem:joint-consistent}
Let $\Lmc\in\HL$. Then, for every $\varphi_1,\varphi_2\in \Lmc$ and every signature $\sigma$, the following are equivalent:
  \begin{enumerate}
  
      \item there is an $\Lmc(\sigma)$-separator for $\varphi_1,\varphi_2$;
      
      \item $\varphi_1,\varphi_2$ are not jointly $\Lmc(\sigma)$-consistent.
  \end{enumerate}
\end{lemma}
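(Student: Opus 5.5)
We prove the two directions separately. The easy one is (1)$\Rightarrow$(2). The other, (2)$\Rightarrow$(1), rests on compactness together with a Hennessy--Milner style argument over $\omega$-saturated models.

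\emph{(1)$\Rightarrow$(2).} Let $\theta\in\Lmc(\sigma)$ be a separator. Suppose $\M_1,w_1\models\varphi_1$, $\M_2,w_2\models\varphi_2$ and $\M_1,w_1\sim_{\Lmc,\sigma}\M_2,w_2$. By a routine induction on formulae, $\Lmc(\sigma)$-bisimulations preserve truth of $\Lmc(\sigma)$-formulae:
\begin{itemize}
\item clause (nom) handles nominals;
\item (forth)/(back) handle $\mydiam{R}$;
\item for $\Hat$, clause (@) handles $@_a\psi$, since $(a^{\M_1},a^{\M_2})\in Z$;
\item for $\Hfull$, (back)/(forth) for $U$ handle $\mydiam{U}$.
\end{itemize}
Hence $\M_2,w_2\models\theta$, which contradicts $\varphi_2\wedge\theta\models\bot$.

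\emph{(2)$\Rightarrow$(1).} Assume there is no separator. Let $\Gamma=\{\theta\in\Lmc(\sigma)\mid \varphi_1\models\theta\}$. We first show that $\{\varphi_2\}\cup\Gamma$ is satisfiable. If it were not, compactness (via the standard translation into FO, with nominals as constants) gives finitely many $\theta_1,\dots,\theta_k\in\Gamma$ with $\varphi_2\wedge\bigwedge\theta_i\models\bot$. Then $\bigwedge\theta_i$ is a separator, a contradiction.

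So take $\M_2,w_2\models\varphi_2\wedge\Gamma$. Let $t$ be the $\Lmc(\sigma)$-type of $w_2$. Then $\{\varphi_1\}\cup t$ is satisfiable. Otherwise there is a finite conjunction $\chi\in t$ with $\varphi_1\models\neg\chi$. This puts $\neg\chi\in\Gamma$, so $\M_2,w_2\models\neg\chi$, contradicting $\chi\in t$. Take $\M_1,w_1\models\varphi_1\wedge t$. The two points then agree on all $\Lmc(\sigma)$-formulae.

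For $\Hat$ and $\Hfull$ we want a stronger agreement: the named points should agree as well. To get this, work with the full theory. Consider the $\sigma$-theory of $\M_2$, that is, the formulae $@_a\psi$ and $[U]\psi$ it satisfies; these are all expressible in the logic. Include these among the constraints in the compactness argument, which works the same way because such formulae belong to $\Lmc(\sigma)$. This makes $a^{\M_1}$ and $a^{\M_2}$ $\Lmc(\sigma)$-equivalent. For $\Hfull$ it also makes every type realized in one model consistent with the other.

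Next replace $\M_1$ and $\M_2$ by $\omega$-saturated elementary extensions of their FO translations; this preserves all the properties above. Define $Z$ as $\Lmc(\sigma)$-equivalence. We verify that $Z$ is an $\Lmc(\sigma)$-bisimulation:
\begin{itemize}
\item (atom) holds directly.
\item (nom) holds because $a$ is itself a formula.
\item (forth): the $\Lmc(\sigma)$-type of an $R$-successor is finitely satisfiable among the successors of the partner point, since its finite conjunctions $\chi$ give $\mydiam{R}\chi$. Saturation then realizes it; note that only finitely many parameters are needed. (back) is symmetric.
\item (@) for $\Hat$ holds from the equivalence of the nominal points.
\item For $U$: each type realized in one model is finitely satisfiable in the other, via $\mydiam{U}\chi$, and saturation realizes it.
\end{itemize}
Thus $\varphi_1,\varphi_2$ are jointly consistent.

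The main obstacle is this last step. One has to verify that the saturation argument really yields the extra clauses (@) and the $U$-clauses. This in turn depends on choosing $\Gamma$ so that it carries the global theory (the $@$ and $U$ formulae) and not only the local type. For plain $\Hmc$ the obstacle disappears, because (nom) needs no global agreement.
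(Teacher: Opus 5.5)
Your proposal is correct and follows essentially the standard compactness argument the paper defers to (Artale et al., Theorem~5.6): compactness yields $\Lmc(\sigma)$-equivalent pointed models of $\varphi_1$ and $\varphi_2$, and on $\omega$-saturated elementary extensions $\Lmc(\sigma)$-equivalence is an $\Lmc(\sigma)$-bisimulation. Your extra ``global theory'' step is not needed: the full $\Lmc(\sigma)$-type $t$ of $w_2$ already contains every true $@_a\psi$ and $\mydiam{U}\psi$, so clause (@) and the $U$-clauses follow directly from $w_1$ and $w_2$ agreeing on all $\Lmc(\sigma)$-formulae, and the ``main obstacle'' you flag is not a real obstacle.
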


Hence, an $\Lmc(\sigma)$-separator for $\varphi_1,\varphi_2$ can be viewed as a certificate for the lack of joint $\Lmc(\sigma)$-consistency of $\varphi_1,\varphi_2$. This characterization has been used to decide the existence of interpolants in various logics. Unfortunately, its proof is \emph{not} constructive: indeed, $(2) \Rightarrow (1)$ is proved using compactness and 
does not construct a witnessing separator. 
\section{Warm-Up: First-Order Separators}
\label{sec:first-order}
Hybrid logics can be regarded as fragments of first-order logic (FO) using the \emph{standard translation} $T_{x}(\varphi)$ that maps every formula $\varphi$ in $\Hmc_{@,U}$ to an FO-formula with free variable $x$ such that $\mathcal{M},w\models \varphi$ iff $\mathcal{M}\models T_{x}(\varphi)(w)$.
By CIP of FO and since $\Hmc_{@,U}$ is decidable in \ExpTime, one can decide the existence of FO$(\sigma)$-separators for formulae $T_{x}(\varphi),T_{x}(\psi)$ in exponential time. In this section, we show that FO$(\sigma)$-separators can also be \emph{computed} in exponential time (note that, in general, there is no recursive bound on the length of FO-separators of FO-formulae~\cite{FRIEDMAN197618}). To this end (and because it is of independent interest), we first show the following constructive version of Theorem~\ref{thm:propletter}. 
\begin{restatable}{theorem}{thmnom}
\label{thm:nom}
Let $\Lmc\in \HL$. There is an exponential time algorithm deciding
for any $\varphi_{1},\varphi_{2}\in \Lmc$ and any $\sigma$ containing all nominals in 
$\sig(\varphi_{1})\cup \sig(\varphi_{2})$ 
whether an $\Lmc(\sigma)$-separator exists and, if so, constructs one of exponential DAG-size.
\end{restatable}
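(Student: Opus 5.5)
We run a type elimination over pairs of types and compute a separator for every eliminated pair as it is eliminated. This is the (elim) idea from the introduction, and it works here because all nominals of $\varphi_1,\varphi_2$ lie in $\sigma$, so no mosaics are needed. Let $\mathrm{cl}_i$ be the closure under subformulae and single negation of $\varphi_i$, together with the nominals of $\sigma$. A \emph{type} $t$ for $\varphi_i$ is a maximal propositionally consistent subset of $\mathrm{cl}_i$. Since every nominal $a$ of either formula is in $\sigma$, an $\Lmc(\sigma)$-bisimulation relates $a^{\M_1}$ only to $a^{\M_2}$, by (nom). So a jointly realizable pair $(t_1,t_2)$ must agree on $\sigma$-atoms and $\sigma$-nominals. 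Moreover, a pair whose types contain a common nominal $a$ is the unique pair of $a$-points. We therefore work with candidate sets $S$ of pairs $(t_1,t_2)$ that agree on $\sigma\cap(\Var\cup\Nom)$. For each $a$ we also fix a guess $g(a)=(t_1^a,t_2^a)$ recording which pair $a$ denotes. For $\Hmc$, where $a$ need not be reachable, we additionally allow separate guesses for $a$ on each side; the $@$ and $U$ variants are handled uniformly below. There are only exponentially many pairs, so the exponential time bound follows once we show that each iteration is polynomial in the number of pairs.

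\textbf{Elimination rules.} A pair $(t_1,t_2)$ is removed from $S$ in three cases: (i) it disagrees on $\sigma$-atoms or $\sigma$-nominals; (ii) some $\mydiam{R}\chi\in t_1$ with $R\in\sigma$ has no $R$-successor pair $(s_1,s_2)\in S$ with $\chi\in s_1$ and $s_2$ compatible with the boxes $\mybox{R}$ of $t_2$, or symmetrically for $t_2$; (iii) a diamond with $R\notin\sigma$ in $t_i$ has no witnessing type $s_i$ occurring in some pair of $S$. With $@$ or $U$ we add global conditions: the pairs $g(a)$ must survive, and every $\mydiam{U}\chi$ / $@_a\chi$ must be witnessed by a surviving pair, resp.\ by $g(a)$. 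When $U\in\sigma$, every surviving pair needs a bisimilar partner for each global witness; when $U\notin\sigma$, the two sides may have witnesses independently.

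\textbf{Separators for eliminated pairs.} When $(t_1,t_2)$ is eliminated, we build $\theta_{t_1t_2}\in\Lmc(\sigma)$ with $t_1\models\theta$ and $t_2\wedge\theta\models\bot$, assuming earlier eliminated pairs already have separators (relative to the current guesses). In case (i), $\theta$ is the distinguishing literal. In case (ii), say $\mydiam{R}\chi\in t_1$ fails. Let $P$ be the set of relevant successor types $s_1$ and $Q$ the set of $s_2$ compatible with $t_2$'s boxes. Each pair in $P\times Q$ is eliminated, so we take
\[
\theta=\mydiam{R}\Big(\bigvee_{s_1\in P}\bigwedge_{s_2\in Q}\theta_{s_1s_2}\Big).
\]
Types never occurring in any pair of $S$ (case (iii), and the side conditions) are handled by a one-sided separator in which the type is simply inconsistent with the relevant side's restrictions.

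To make the one-sided case fully formal, we maintain for each side $i$ a formula $\delta_i(s)\in\Lmc(\sigma)$ that excludes type $s$ from side $i$ relative to the other side. Failures of global conditions are turned into $\mybox{U}$ or $@_a$ formulas, or folded into the nominal guesses. We take the final separator to be the disjunction over all guesses $g$ consistent with $\varphi_1$: each disjunct conjoins $@_a$- (resp.\ $\mybox{U}$-) constraints describing $g$ with the separator obtained for the eliminated pairs containing $\varphi_1$ versus $\varphi_2$. Sharing subformulae gives exponential DAG-size.

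\textbf{Correctness and the main obstacle.} If no pair with $\varphi_1\in t_1$ and $\varphi_2\in t_2$ survives for any guess, the construction above yields a separator. If some pair survives, the surviving set $S$ itself builds models $\M_1,\M_2$ of $\varphi_1,\varphi_2$ with the $\sigma$-bisimulation given by $S$; this uses that nominals are shared, so the named points are consistently matched. Lemma~\ref{lem:joint-consistent} then shows no separator exists. The main obstacle is twofold. First, for logics without $U$ or $@$, a nominal's point may be unreachable on one side, so the guesses must be made independently per side. Second, the separators must be made relative to guesses, and the exponential disjunction over guesses has to be avoided, for instance by expressing guesses through $@_a$ of separators, which is legitimate because $a\in\sigma$. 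For $\Hmc$ without $@$, I would first try encoding $@_a$ via the shared nominal directly as a conjunct $a\wedge\ldots$ inside the relevant diamonds.
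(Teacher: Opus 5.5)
Your skeleton (pairs of types, guesses for the nominals, separators $\mydiam{R}(\bigvee\bigwedge\ldots)$ built along the elimination) matches the paper's, but two steps fail.

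\textbf{First gap: rule (iii) is unsound for $\Hmc$.} A witness for $\mydiam{R'}\chi$ with $R'\notin\sigma$ need not lie in the domain of any bisimulation, so it need not occur in a surviving pair. Take $\sigma=\{R,p,a\}$, $\varphi_1=\mydiam{R'}(a\wedge\mydiam{R}p)$ and $\varphi_2=\mydiam{R''}(a\wedge\mybox{R}\neg p)$ with $R',R''\notin\sigma$. These are jointly $\Hmc(\sigma)$-consistent: take roots $w_1,w_2$ without $R$-successors and $Z=\{(w_1,w_2)\}$, and never relate the two $a$-points.

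Yet under every guess $g$ the only candidate pair containing $a$ is $g(a)$. Either it fails (ii), or one of the two $a$-witnesses is absent from $S$. In both cases (iii) deletes every pair containing $\varphi_1,\varphi_2$, and you report separability. Exempting guessed nominal types does not help, since $\mydiam{R'}\mydiam{R}(a\wedge\mydiam{R}p)$ causes the same problem. (For $\Hfull$ the rule is justified, because $U$ makes bisimulations total.)

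For $\Hmc$ the paper instead uses satisfiable types and imposes on side $i$ only that $\mos_i\wedge\case(\cmf_i)$ be satisfiable. It eliminates only for diamonds over $R\in\sigma$. Non-$\sigma$ witnesses are realized in an arbitrary model of $\case(\cmf_i)$ glued in at the nominals.

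\textbf{Second gap: the guesses are never discharged.} You flag this step as the main obstacle and leave it open, and your candidate fixes do not work. Guards describing $g$ via $@_a$ are not $\Lmc(\sigma)$-formulae: the guessed types contain non-$\sigma$ symbols, and $\Hmc$ has no $@$ at all. Conjoining $a\wedge\ldots$ inside diamonds fails whenever $a$ is not $\sigma$-reachable.

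The missing idea is that the guesses never need to be expressed. Guess independently per side, $\cmf=(\cmf_1,\cmf_2)$ with $\cmf_i:\Nom_0\to\TP$. Demand only $\mos_1\wedge\case(\cmf_1)\models\sep_\cmf(\mos)$ and $\mos_2\wedge\case(\cmf_2)\models\neg\sep_\cmf(\mos)$. Pairs inconsistent with the case simply get $\bot$ or $\top$, which also replaces your undefined one-sided formulae $\delta_i(s)$. Since $\bigvee_{\cmf_i}\case(\cmf_i)$ is valid on each side,
\[
\sep(\mos)=\bigvee_{\cmf_1}\bigwedge_{\cmf_2}\sep_{(\cmf_1,\cmf_2)}(\mos)
\]
is an $\Lmc(\sigma)$-separator for $\mos_1,\mos_2$. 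There is no need to avoid this disjunction: there are only $|\TP|^{|\Nom_0|}$ guesses per side, i.e.\ exponentially many, so time and DAG-size stay exponential.
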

%\emph{Remark.} It is known already that the existence of $\Lmc(\sigma)$-separators can be decided in %exponential time because all $\Lmc\in \HL$ are known to have CIP for propositional letters (i.e., when non-%shared nominals are allowed in the %interpolant)~\cite{DBLP:journals/logcom/BezhanishviliC06,DBLP:journals/jsyml/ArecesBM01}). CIP for %propositional letters also follows from the proof below.

We sketch the proof of Theorem~\ref{thm:nom} for $\Lmc=\Hmc$ as it illustrates in an elementary setting the type elimination approach to interpolant construction which we generalize later to prove our main results. 
If $\Psi$ is a set of formulae then a \emph{$\Psi$-type} is a maximal satisfiable subset $t$ of formulae from $\Psi$ and negations thereof. That is, $t=\tp_\M^\Psi(w)$ for some $\M$ and $w\in W$ where
\[
\tp_\M^\Psi(w)=\{\phi \mid  \text{$\phi\in\{\psi,\neg\psi\}$ for $\psi\in\Psi$ and } \M,w\models\phi\}.
\]
If $\tp_{\M}^\Psi(w)=t$, then we say that $\M,w$ \emph{realizes} the type $t$. The set of all $\Psi$-types is denoted $\Psi$-$\TP$. Fix formulae $\varphi_{1},\varphi_{2}$ and let $\sub(\varphi_1,\varphi_2)$ denote the set of subformulae of $\varphi_{1}$ or $\varphi_{2}$. When no $\Psi$ is explicitly specified, by a \emph{type} we mean a $\sub(\varphi_{1},\varphi_{2})$-type, and we write $\tp_\M(w)$ and $\TP$ with no superscript. We sometimes use $t$ to denote the conjunction $\bigwedge_{\chi\in t}\chi$ of all formulae in type $t$.

We come to the proof of Theorem~\ref{thm:nom}.
Fix $\varphi_{1},\varphi_{2}\in \Hmc$ and signature $\sigma$ containing all nominals occurring in $\varphi_{1}$ or $\varphi_{2}$. 
%We show that if $\varphi_{1},\varphi_{2}$ are $\mathcal{H}(\sigma)$-inconsistent, then one can construct %an $\mathcal{H}(\sigma)$-separator in exponential time (and of at most exponential dag size). 
Let $\Nom_{0}=\Nom\cap\sub(\phi_1,\phi_2)$.

A \emph{singleton mosaic} $\mos=(\mos_{1},\mos_{2})$ consists of types $\mos_{1}$ and $\mos_{2}$.
%We generally assume that $T_{i}=\{t_{i}\}$ if $T_{i}\not=\emptyset$, for $i=,1,2$.
Following Lemma~\ref{lem:joint-consistent}, we aim to check whether the types in a singleton mosaic can be satisfied
in $\Hmc(\sigma)$-bisimilar nodes and to construct $\Hmc(\sigma)$-separators if this is not the case.
We will later generalize this idea to sets of non-singleton mosaics.
A basic case in which a singleton mosaic cannot be realized in this way is if its types do not coincide on propositional variables or nominals in $\sigma$. We say that $\mos$ is \emph{$\sigma$-inconsistent} if there is a propositional variable or nominal $r \in\sigma$ with $r\in \mos_{1}\setminus \mos_{2}$ or $r\in \mos_{2}\setminus \mos_{1}$.

We also make a case distinction that we will generalize later.
A \emph{basic nominal case} is a pair $\cmf=(\cmf_1,\cmf_2)$ of sequences of types,
one for each nominal:
\[
\cmf_1,\cmf_2:\Nom_0\to \TP.
\]
%Then $\cmf_{i}(a)=t$ expresses that $a$ satisfies type $t$ on side $i$. 
As with mosaics, for $i\in \{1,2\}$ we refer by $\cmf_i$ to the $i$-th component of a given basic nominal case $\cmf$. We use $\case(\cmf_i)$ to abbreviate the formula:
\[
\case(\cmf_i) = \bigwedge_{a\in\Nom_0} @_a \cmf_{i}(a)
\]
which expresses that all nominals $a$ satisfy the respective type $\cmf_{i}(a)$.
Note that $\case(\cmf_{i})$ is not in \Hmc, but this is not essential here. Also note that, for convenience, we define cases syntactically and allow for $\cmf$ with unsatisfiable $\case(\cmf_i)$, such as when $\cmf_1(a)\models a\wedge b\wedge p$ but $\cmf_1(b)\models a\wedge b\wedge\neg p$.
% For convenience, we define cases syntactically and allow them to be unsatisfiable.

Types $t,t'$ are called \emph{$R$-coherent}, in symbols $t \rightsquigarrow_{R} t'$, if $\chi\in t'$ and $\mydiam{R}\chi \in \sub(\varphi_1,\varphi_2)$ imply $\mydiam{R}\chi\in t$.
Singleton mosaics $\mos$ and $\mos'$ are \emph{$R$-coherent},
in symbols $\mos\rightsquigarrow_{R} \mos'$, if $\mos_{i}\rightsquigarrow_{R} \mos_{i}'$ for all $i\in \{1,2\}$.
Let $\mos$ be a singleton mosaic and $i\in \{1,2\}$. We say that a singleton mosaic $\mos'$ is a \emph{witness} for $\mydiam{R}\chi\in \mos_{i}$ if $\chi\in \mos_{i}'$ and $\mos \rightsquigarrow_{R} \mos'$.

%\item[(W$_{\bar{\sigma}}$)] if $R\not\in\sigma$, then $\chi\in t_{i}'$ and $t_{i}\rightsquigarrow_{R} t_{i}'$.
%\end{description}
Now fix a basic nominal case $\cmf=(\cmf_{1},\cmf_{2})$.

\medskip
\noindent
\emph{Elimination procedure for $\cmf$}. Let $\mathcal{Y}$ be the set of all singleton mosaics. Obtain $\mathcal{Y}_{0}$ from $\mathcal{Y}$ by dropping 
\begin{itemize}
    \item all $\mos$ such that $\mos_{i}\wedge \case(\cmf_{i})$ is not satisfiable for some $i\in \{1,2\}$, and
%    \item all $\mos$ such that $a\in \mos_{i}$ and $\mos_{i}\not=t_{a}^{i}$ for some $a$ and 
%    $i\in \{1,2\}$. 
    \item all $\sigma$-inconsistent $\mos$.
\end{itemize}
Next obtain by induction $\mathcal{Y}_{\ell+1}$ from $\mathcal{Y}_{\ell}$ by dropping all $\mos$ 
such that there is $\mydiam{R}\chi\in \mos_{i}$ with $R\in \sigma$ without any witness in $\mathcal{Y}_{\ell}$.
Let $\mathcal{Y}^{\ast}$ denote the set of singleton mosaics where this procedure stabilizes.
\begin{restatable}{lemma}{lemjointconsone}
\label{lem:jointconsone} 
%Let $\overline{t}_{1}$ and $\overline{t}_{2}$ be nominal sequences
%and $\mos$ a singleton mosaic.
The following conditions are equivalent for all $\mos$:
\begin{itemize}
    \item $\mos_{1} \wedge \case(\cmf_{1}),\mos_{2}\wedge \case(\cmf_{2})$ are jointly $\Hmc(\sigma)$-consistent;
    \item $\mathsf{m}\in \mathcal{Y}^{\ast}$.
\end{itemize}
\end{restatable}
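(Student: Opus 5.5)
We prove the two directions separately. Soundness (first condition implies $\mos\in\mathcal{Y}^\ast$) goes by induction on $\ell$; completeness (the converse) builds bisimilar models directly from $\mathcal{Y}^\ast$.

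\emph{Soundness.} Call $\mos$ \emph{realizable} if there are models $\M_1,\M_2$ with $\M_i\models\case(\cmf_i)$ (a global condition on the nominals), nodes $w_i$ with $\tp_{\M_i}(w_i)=\mos_i$, and $\M_1,w_1\sim_{\Hmc,\sigma}\M_2,w_2$. We show by induction on $\ell$ that every realizable $\mos$ lies in $\mathcal{Y}_\ell$.
\begin{itemize}
\item For $\mathcal{Y}_0$: satisfiability of $\mos_i\wedge\case(\cmf_i)$ is immediate. $\sigma$-consistency follows from (atom) and (nom), since the nominals of $\sigma$ that occur in $\sub(\varphi_1,\varphi_2)$ are decided identically at bisimilar points.
\item For the step: let $\mydiam{R}\chi\in\mos_i$ with $R\in\sigma$, witnessed by a successor $v_i$ of $w_i$ in $\M_i$. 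By (forth) or (back) there is a bisimilar successor $v_{3-i}$ on the other side. Then $\mos'=(\tp_{\M_1}(v_1),\tp_{\M_2}(v_2))$ is realizable by the same pair of models, and it is $R$-coherent with $\mos$. By induction $\mos'\in\mathcal{Y}_\ell$, so $\mos$ has a witness and survives into $\mathcal{Y}_{\ell+1}$.
\end{itemize}

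\emph{Completeness.} Fix $\mos\in\mathcal{Y}^\ast$; we build $\M_1,\M_2$ with a bisimulation. The subtle parts are the nominals and the relations $R\notin\sigma$, which $\mathcal{Y}^\ast$ never checks. For each $i$ we use the satisfiability of $\mos_i\wedge\case(\cmf_i)$ to fix a model $\Nmc_i$ with node $u_i$ realizing $\mos_i$ and with each $a\in\Nom_0$ denoting a node of type $\cmf_i(a)$.

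The domain of $\M_i$ consists of the following parts.
\begin{itemize}
\item A copy of $\Nmc_i$. It provides the nominal nodes and $\case(\cmf_i)$, and it serves all non-$\sigma$ diamonds as well as diamonds of types that are unreachable via $\sigma$-steps.
\item Paths $\pi=\mos^0 R_1 \mos^1\cdots R_k\mos^k$ with $\mos^0=\mos$, each $\mos^j\in\mathcal{Y}^\ast$, each $R_j\in\sigma$ and $\mos^{j-1}\rightsquigarrow_{R_j}\mos^j$. The $i$-side copy of $\pi$ carries type $\mos^k_i$.
\end{itemize}
Edges and valuations are defined as follows.
\begin{itemize}
\item Along a path we put an $R_j$-edge between consecutive positions.
\item Propositional variables on path nodes follow their types.
\item A path node whose type contains a nominal $a$ must equal $a$'s node. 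We therefore identify such path nodes with the corresponding node of $\Nmc_i$. This is consistent, because $\mos^k_i\wedge\case(\cmf_i)$ is satisfiable, so the type containing $a$ is exactly $\cmf_i(a)$.
\item Diamonds $\mydiam{R}\chi$ with $R\notin\sigma$ at path nodes are satisfied by adding an $R$-edge into a node of $\Nmc_i$ realizing a suitable type. Such a node exists because $\mos^k_i\wedge\case(\cmf_i)$ is satisfiable; if needed we replace $\Nmc_i$ by a disjoint union of such witnessing models with the nominal part glued together.
\end{itemize}

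We then prove a truth lemma by induction on $\chi\in\sub(\varphi_1,\varphi_2)$: every node satisfies exactly the formulas of its assigned type. For $\sigma$-diamonds at path nodes, $\mathcal{Y}^\ast$ supplies witnesses, and $R$-coherence gives the converse direction for boxes. The bisimulation $Z$ links the two copies of each path. (atom) and (nom) hold by $\sigma$-consistency, and (forth) and (back) hold because paths are extended symmetrically.

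\emph{Main obstacle.} Two places need care and I expect them to be the hardest part.
\begin{itemize}
\item \textbf{Identifying path nodes with nominal nodes.} This identification imports the successors of $a$'s node in $\Nmc_i$. These must also be $\sigma$-bisimilar across the two sides, which $\Nmc_1,\Nmc_2$ need not be. The remedy is to take the nominal nodes themselves to be path nodes: the mosaics $(\cmf_1(a),\cmf_2(a))$ lie in $\mathcal{Y}^\ast$, since a path reaches $a$ through some mosaic whose components contain $a$ on both sides. The $\sigma$-successors of nominal nodes are then generated by $\mathcal{Y}^\ast$, and $\Nmc_i$ is used only for non-$\sigma$ structure.
\item \textbf{Non-$\sigma$ edges.} For the truth lemma, any edge $w\xrightarrow{R}v$ with $R\notin\sigma$ must respect $\rightsquigarrow_R$. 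Into nominal nodes this holds because the target type is fixed by $\cmf_i$ and the source type is consistent with $\case(\cmf_i)$; in general I would verify it via the satisfiable witnessing models.
\end{itemize}
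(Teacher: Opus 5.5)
Your proposal is correct and follows the paper's argument. The forward direction matches the paper's observation that the set $\mathcal{X}$ of type pairs of $\Hmc(\sigma)$-bisimilar points survives every round. For the converse, the paper likewise builds the $\sigma$-part from $\mathcal{Y}^\ast$ via $R$-coherence, merges it with an auxiliary model $\Nmc_i$ of $\case(\cmf_i)$ that supplies non-$\sigma$ witnesses, and identifies nominal nodes while discarding the $\sigma$-edges out of $a^{\Nmc_i}$. This is exactly your remedy for the first obstacle.

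The only real difference is your unravelling into paths. The paper avoids it by taking $\mathcal{Y}^\ast$ itself as the $\sigma$-part of the domain. Nominals are then interpreted for free: since all nominals lie in $\sigma$, the only mosaic in $\mathcal{Y}^\ast$ containing $a$ is $(\cmf_1(a),\cmf_2(a))$, so no identification of several path nodes is needed.

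One justification in your second obstacle is off. An added non-$\sigma$ edge is $R$-coherent because its target is an actual $R$-successor in a model of $\mos^k_i\wedge\case(\cmf_i)$. It does not follow from the source type merely being consistent with $\case(\cmf_i)$.
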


\noindent\emph{Separator construction for $\cmf$}. We now construct by induction for each eliminated $\mos$ a $\mathcal{H}(\sigma)$-formula $\sep_{\cmf}(\mos)$ 
with 
\[
\mos_{1} \wedge \case(\cmf_{1}) \models \sep_{\cmf}(\mathsf{m})\quad\text{and}\quad
\mos_{2} \wedge \case(\cmf_{2}) \models \neg\sep_{\cmf}(\mathsf{m}).
\]
Assume $\mos$ is eliminated in the base case. If $\mos_{i}\wedge \case(\cmf_{i})$ is not satisfiable, then set $\sep_{\cmf}(\mos)=\bot$ if $i=1$ and $\sep_{\cmf}(\mos)=\top$ otherwise.
%If there is $a\in \mos_{i}\not=t_{a}^{i}$, then set $\sep_{\overline{t}_{1},\overline{t}_{2}}(\mos)=\bot$ if %$i=1$ and $\text{sep}_{\overline{t}_{1},\overline{t}_{2}}(\mos)=\top$ otherwise. 
If $\mos$ is $\sigma$-inconsistent then there exists $r\in \sigma$ with $r \in \mos_{i}$ and $\neg r\in \mos_{j}$ for $\{i,j\}=\{1,2\}$. Set $\sep_{\cmf}(\mos)=r$ if $i=1$ and 
$\sep_{\cmf}(\mos)=\neg r$ otherwise.

Now assume $\mos$ is eliminated in round $\ell+1$ because there is 
$\mydiam{R}\chi\in \mos_{i}$ with $R\in \sigma$ and no witness 
left in $\mathcal{Y}_{\ell}$. Assume $i=1$ (the case $i=2$ is dual).
Set $W_{1}=\{ t \in \TP\mid \chi\in t, \mos_{1} \rightsquigarrow_{R} t\}$
and $W_{2}= \{t'\in \TP\mid \mos_{2} \rightsquigarrow_{R} t'\}$. Note that all singleton mosaics in $W_{1}\times W_{2}$ have been eliminated before round $\ell+1$. Then we set
\[
\sep_{\cmf}(\mos) = \mydiam{R}(\bigvee_{t\in W_{1}}\bigwedge_{t'\in W_{2}}\sep_{\cmf}(t,t'))
\]
(Here, if $W_{1}=\emptyset$, then $\sep_{\cmf}(\mos)=\bot$ and otherwise, if $W_{2}=\emptyset$ we have $\sep_{\cmf}(\mos)= \mydiam{R}\top$.)
One can show by induction that $\mos_{1} \wedge \case(\cmf_{1}) \models \sep_{\cmf}(\mathsf{m})$ and $\mos_{2} \wedge \case(\cmf_{2}) \models \neg\sep_{\cmf}(\mathsf{m})$, as required.

Let BCase denote the set of all basic nominal cases. Then it follows from Lemma~\ref{lem:jointconsone} that a mosaic $\mos$ is eliminated for all $\cmf\in \text{BCase}$ if, and only if, $\mos_{1},\mos_{2}$ are not jointly $\Hmc(\sigma)$-consistent. For such $\mos$ we can assemble an `overall' $\Hmc(\sigma)$-separator $\sep(\mos)$ using all $\sep_{\cmf}(\mos)$ with $\cmf\in \text{BCase}$ by taking, for $N =\{ f \mid f:\Nom_0\to \TP\}$,
\[
\sep(\mos) = \bigvee_{\cmf_{1}\in N}\bigwedge_{\cmf_{2}\in N}\sep_{(\cmf_{1},\cmf_{2})}(\mos).
\]
Then $\mos_{1}\models \sep(\mos)$ and $\mos_{2}\models\neg\sep(\mos)$, as required. 

Finally, observe that $\varphi_{1},\varphi_{2}$ are not jointly $\Hmc(\sigma)$-consistent if, and only if, for $S_{i}= \{ t \in \TP \mid \varphi_{i}\in t\}$ no pair of types $t_{1},t_{2}$ with $(t_{1},t_{2})\in S_{1}\times S_{2}$ is jointly $\Hmc(\sigma)$-consistent. Hence we construct an $\Hmc(\sigma)$-separator $\sep(\varphi_{1},\varphi_{2})$ for $\varphi_{1},\varphi_{2}$ 
from $\Hmc(\sigma)$-separators for mosaics $(t_{1},t_{2})\in S_{1}\times S_{2}$ 
by setting
\begin{equation}
\sep(\varphi_1,\varphi_2) = \bigvee_{t_{1}\in S_{1}}\bigwedge_{t_{2}\in S_{2}}\sep(t_{1},t_{2}).\tag{\mn{TP2F}}\label{eq:compile}
\end{equation}
It is routine to show that $\sep(\varphi_{1},\varphi_{2})$ is as required. The construction of $\sep_{\cmf}(\mos)$ is in exponential time, provided that separators are always represented as DAGs: we construct only one formula $\sep_\cmf(\mos)$ for every mosaic $\mos$, and there are only exponentially many. Thus, $\sep_{\cmf}(\mos)$ is of at most exponential DAG-size. These exponential bounds are not affected by the later compilation of the $\sep_\cmf(\mos)$ into $\sep(\varphi_1,\varphi_2)$.
We have proved Theorem~\ref{thm:nom} for $\Hmc$.

\begin{restatable}{theorem}{thmfo}
\label{thm:fo}
Let $\Lmc\in \HL$. There is an exponential time algorithm deciding
for any $\varphi_1,\varphi_2\in \Lmc$ and any $\sigma$ whether an FO$(\sigma)$-separator exists 
and, if so, constructs one of exponential DAG-size.   
\end{restatable}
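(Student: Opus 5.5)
The idea is to reduce FO$(\sigma)$-separation to the nominal-complete hybrid setting of Theorem~\ref{thm:nom}. Nominals that are not in $\sigma$ behave, from the FO side, like constants that may be existentially quantified away. By CIP of FO, an FO$(\sigma)$-separator for $T_x(\varphi_1),T_x(\varphi_2)$ exists iff $T_x(\varphi_1)\wedge T_x(\varphi_2)'$ is unsatisfiable. Here $'$ renames every symbol outside $\sigma$ to a fresh copy, so in particular the non-shared nominals of the two sides become distinct constants.

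Let $\Nom_1$ be the nominals of $\varphi_1,\varphi_2$ not in $\sigma$ (after renaming, disjoint between the two sides), and set $\sigma^+=\sigma\cup\Nom_1$. The first step is to decide existence: check in exponential time, using \ExpTime{} satisfiability of $\Hfull$, whether $\varphi_1\wedge\varphi_2'$ is unsatisfiable. Since $\varphi_1,\varphi_2'$ share exactly the $\sigma$-symbols, this is equivalent to the existence of an FO$(\sigma)$-separator, by CIP of FO together with the standard translation. If they are unsatisfiable, then they are not jointly $\Lmc(\sigma^+)$-consistent with $U\in\sigma^+$. I would take the target logic $\Hfull$, since FO can express $@$ and $U$ anyway. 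The reason is that any two $\Hu(\sigma^+)$-bisimilar models can be merged into a single model satisfying both: all nominals in $\sigma^+$ and the full $\sigma$-structure can be identified along the bisimulation by taking its graph as a new domain with the $\sigma$-relations defined componentwise. So Theorem~\ref{thm:nom} for $\Hfull$, applied with signature $\sigma^+$ (which contains all nominals), yields in exponential time an $\Hfull(\sigma^+)$-separator $\theta$ of exponential DAG-size.

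The second step removes the non-shared nominals. In $T_x(\theta)$, each nominal $a\in\Nom_1$ is a constant. If $a$ occurs only in $\varphi_1$, then since $\varphi_1\models\theta$ and $a$ does not occur in $\varphi_2$, replacing $a$ by a variable $y_a$ and quantifying existentially gives $T_x(\varphi_1)\models\exists y_a\, T_x(\theta)[a/y_a]$. The resulting formula still separates, because $\varphi_2\wedge T_x(\theta)[a/y_a](d)$ would give a model of $\varphi_2\wedge\theta$ by reinterpreting $a$ as $d$, which is harmless since $a\notin\sig(\varphi_2)$. Dually, nominals of $\varphi_2$ only are universally quantified. Because the renaming made these sets disjoint, I would quantify $\exists\bar y_1\forall\bar y_2$ and check both separation conditions. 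The first condition is immediate. For the second, fix the $\varphi_2$-nominals to their values in a model of $\varphi_2$; then the negation of the existential part follows pointwise for every choice of $\bar y_1$, since $\varphi_2\wedge\theta$ is unsatisfiable and the $\varphi_1$-nominals are free in $\varphi_2$. The prefix adds only linearly many symbols, so the DAG-size stays exponential.

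The main obstacle is the merging argument justifying that unsatisfiability of $\varphi_1\wedge\varphi_2'$ implies non-joint-consistency over $\sigma^+$ with $U$. It requires checking that the product construction on the bisimulation graph preserves the truth of $\varphi_1$ and $\varphi_2'$. This should hold because each side's non-$\sigma$ symbols can be pulled back via the projections, making each projection a surjective bounded morphism for that side's full signature. Nominals remain singletons by (nom). If Theorem~\ref{thm:nom} is only available for the specific $\Lmc$ of the input, I would instead simply view $\varphi_1,\varphi_2$ as $\Hfull$-formulae.
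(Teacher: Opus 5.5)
Your proposal is correct and follows essentially the same route as the paper. You rename the non-shared symbols apart and decide existence via unsatisfiability of $\varphi_1\wedge\varphi_2'$ using CIP of FO. You then obtain a hybrid separator over $\sigma$ plus all nominals from Theorem~\ref{thm:nom}, and replace the side-1 nominals by existentially and the side-2 nominals by universally quantified variables, exactly as in the paper's $\exists\mathbf{n}_1\forall\mathbf{n}_2 T_x(\chi)$.

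The one difference is in how existence of the hybrid separator is guaranteed. The paper stays in $\Lmc$ and cites CIP for propositional variables (Theorem~\ref{thm:propletter}). You instead pass to $\Hfull$ and prove the needed instance directly, via your product construction on the $U$-total bisimulation combined with Lemma~\ref{lem:joint-consistent}. That argument is valid, and your check of the quantifier prefix spells out what the paper leaves implicit.
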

\begin{proof}
Assume $\varphi_{1},\varphi_2$ and $\sigma$ are given. Let $\mathbf{n}_{i}$ be 
the set of nominals in $\varphi_{i}$ not in $\sigma$, for $i\in \{1,2\}$. We may assume that $\mathbf{n}_{1}$ and $\mathbf{n}_{2}$ are disjoint. Let $\sigma'=\sigma \cup \mathbf{n}_{1} \cup \mathbf{n}_{2}$. Recall that FO has CIP, $\Lmc$ has CIP for propositional variables, and
that satisfiability in $\Lmc$ is in \ExpTime. Then it follows that one can decide the existence of an FO$(\sigma)$-separator for $T_{x}(\varphi_{1}),T_{x}(\varphi_{2})$ in exponential time and such a separator exists iff an $\Lmc(\sigma')$-separator exists for $\varphi_{1},\varphi_{2}$ iff $\varphi_1 \wedge \varphi_{2}$ is not satisfiable. Let
$\chi$ be an $\Lmc(\sigma')$-separator for $\varphi_1,\varphi_2$ given by Theorem~\ref{thm:nom} .
Regard the nominals in $\mathbf{n}_{1}$ and $\mathbf{n}_{2}$ as variables, different 
from $x$. Then 
$\exists \mathbf{n}_{1} \forall \mathbf{n}_{2} T_{x}(\chi)$
is an FO($\sigma$)-separator for $T_{x}(\varphi_{1}),T_{x}(\varphi_{2})$. 
\end{proof}

\section{Separators in Hybrid Logics }\label{sec:hyper}

When generalizing Theorem~\ref{thm:nom} from the previous section to the case of arbitrary signatures, we face two challenges. First, singleton mosaics are not sufficient anymore, as non-shared nominals may enforce that models
witnessing joint consistency contain \emph{multiple} pairwise bisimilar elements~\cite{DBLP:journals/tocl/ArtaleJMOW23}. Algorithmically, this is reflected by the fact that instead of working with mosaics which are pairs of types, one has to work with mosaics that are \emph{pairs of sets of types}. Second, the known algorithm for deciding separator existence cannot be easily extended towards separator construction. To motivate our approach, we recall the decision procedure~\cite{DBLP:journals/tocl/ArtaleJMOW23}. By Lemma~\ref{lem:joint-consistent}, it suffices to decide whether given formulae are jointly consistent.

% {\color{blue}OLD First, while existence of separators is still decidable it is known that it is harder than \ExpTime. Algorithmically, this is reflected by the fact that instead of working with mosaics which are pairs of types, one has to work with mosaics that are \emph{pairs of sets of types}. Second, known algorithms cannot be easily extended towards separator construction. To motivate our approach, we recall the decision procedure for separator existence~\cite{DBLP:journals/tocl/ArtaleJMOW23}. By Lemma~\ref{lem:joint-consistent}, it suffices to decide whether given formulae are jointly consistent.}

Fix $\varphi_1,\varphi_2\in\Hmc$ and signature $\sigma$. A \emph{mosaic} is now a pair $\mos=(\mos_1,\mos_2)$ with sets of ($\sub(\varphi_1,\varphi_2)$)-types $\mos_1,\mos_2$. Intuitively, a mosaic describes \emph{sets} of elements in models $\M_1,\M_2$ which are all pairwise $\Hmc(\sigma)$-bisimilar. Formally,
a mosaic $\mos$ is \emph{realized in a pair $\M_{1},\M_{2}$ by an $\Hmc(\sigma)$-bisimulation $Z$ between $\M_{1}$ and $\M_{2}$}
if for $\mos_{1}=\{t_{1},\ldots,t_{k}\}$ and $\mos_{2}=\{t_{1}',\ldots,t_{l}'\}$ there are elements $w_{1},\ldots,w_{k}$ in $\M_{1}$ and $v_{1},\ldots,v_{l}$ in $\M_{2}$ such that
% \nb{JK: compressed here}  for all $1\leq i\leq k$ and $1\leq j\leq l$ we have: $(w_i,v_j)\in Z$, $\tp_{\M_{1}}(w_{i})= t_{i}$ and $\tp_{\M_{2}}(v_{j})= t_{j}'$.
% \[
% (w_i,v_j)\in Z, \hspace{0,4cm} \tp_{\M_{1}}(w_{i})= t_{i}, \hspace{0.3cm}\text{and}\hspace{0.3cm}\tp_{\M_{2}}(v_{j})= t_{j}'
% \]
\begin{itemize}
    \item $(w_i,v_j)\in Z$; % for all $w,w'\in \{w_{1},\ldots,w_{k},v_{1},\ldots,v_{l}\}$;
    \item $\tp_{\M_{1}}(w_{i})= t_{i}$; % for all $1\leq i \leq k$;
    \item $\tp_{\M_{2}}(v_{j})= t_{j}'$; % for all $1\leq i \leq k$.
\end{itemize}
for all $1\leq i\leq k$ and $1\leq j\leq l$.
Note that realizability of a mosaic in this sense is a generalization of joint $\Hmc(\sigma)$-consistency from pairs of formulae to pairs of sets of formulae. 
%\nb{F:maybe just give the definition of realized mosaics here? After all, that's what is needed; JJ: let's do that: I was hesitating, because I originally planned to have this first part a bit informal, but now it is rather complete anyway}\textcolor{red}{or in other words a collection of types that is joint $\Hmc(\sigma)$-consistent. JK: joint consistency was only introduced for pairs of formulae, so this may be confusing as we mean something stronger than joint consistency of every pair of types from $\mos$. Explain or remove joint consistency from example?} 
For mosaics $\mos,\mos'$ and $R\in\Rel$, we write $\mos\rightsquigarrow_R \mos'$ if for every $i\in \{1,2\}$ and $t\in \mos_i$, there is $t'\in \mos'_i$ with $t\rightsquigarrow_R t'$. A mosaic $\mos'$ is a \emph{witness} for $\mydiam{R}\chi\in t\in\mos_i$ if $\chi\in t'$ for some $t'\in \mos_i'$ with $t\rightsquigarrow_R t'$ and, in case $R\in \sigma$, also $\mos\rightsquigarrow_R\mos'$.

% The mosaic elimination procedure is as follows. Let $\Ymc_{0},\ldots,\Ymc_{n}$ be the list of all maximal sets $\Ymc$ of mosaics that are 
The mosaic elimination procedure removes mosaics which cannot be realized. It is based on the observation that in order for $\mos$ to be realizable, every $\mydiam{R}\chi\in t\in\mos_i$ needs to have a realizable witness. In analogy to the basic nominal cases choosing a type for every nominal and side, we now choose a type \emph{and a mosaic} for every nominal and side. Formally, this is reflected by a choice of a maximal initial set \Ymc of mosaics from which we start the elimination. Let $\Ymc_{0},\ldots,\Ymc_{n}$ be all the maximal sets $\Ymc$ of mosaics which are
\begin{itemize}

    \item \emph{$\sigma$-consistent}: for every $\mos\in\Ymc$, $t,t'\in \mos_1\cup\mos_2$, and every $r\in\sigma\cap(\Var\cup\Nom)$, we have $r\in t$ iff $r\in t'$, and 

    \item \emph{nominal-clean}: for every nominal $a\in \sig(\varphi_1)\cup\sig(\varphi_2)$ and all $i\in \{1,2\}$ there is at most one type $t_{a}^{i}\in \bigcup_{(\mos_{1},\mos_{2})\in \Ymc}\mos_{i}$ containing $a$ and, moreover, there is at most one mosaic $(\mos_{1},\mos_{2})\in \Ymc$ with $t_{a}^{i}\in \mos_{i}$.\footnote{The definition in~\cite{DBLP:journals/tocl/ArtaleJMOW23} requires \emph{exactly one} type/mosaic instead of \emph{at most one} type/mosaic in the corresponding condition. Since we now consider \emph{maximal} sets of mosaics with the given property, these two variants of the requirement define the same sets. However, later in the paper we will consider other sets of mosaics, and then our current definition will be more natural.} 
   
\end{itemize}
In the elimination for $\Ymc_{j}$, we set $\Ymc_{j,0}=\Ymc_{j}$ and obtain $\Ymc_{j,\ell+1}$ from $\Ymc_{j,\ell}$ by eliminating all mosaics $\mos$ such that there is $t\in\mos_i$ and $\mydiam{R}\chi\in t$ such that there is no witness in $\Ymc_{j,\ell}$. Thus, a mosaic is eliminated if it cannot be realized in the intended way in models which realize only mosaics from $\Ymc_{j,\ell}$.
One can then show that the following are equivalent: 
\begin{itemize}

    \item $\varphi_1,\varphi_2$ are jointly $\Hmc(\sigma)$-consistent; 
    
    \item for some $\Ymc_j$ the set of mosaics $\Ymc_{j}^*$ where the procedure stabilizes contains a mosaic $\mos$ such that $\varphi_1\in t_1\in \mos_1$ and $\varphi_2\in t_2\in \mos_2$ for types $t_1,t_2$.

\end{itemize}
As there are at most double exponentially many $\Ymc_{j}$ and the elimination procedure also stabilizes after at most double exponentially many steps, it follows that joint consistency can be decided in 2\ExpTime for $\Hmc$. The extension to $\Hfull$ is straightforward and proves a tight 2\ExpTime upper bound. For $\Hmc$ and $\Hat$ one can show a tight \coNExpTime upper bound by proving, in addition, that  
one can construct exponential size models witnessing joint consistency.

\begin{example}\label{ex:motivation}
Assume a signature $\sigma$ containing (possibly among other symbols) relations $R$ and $S$. Let $\psi_a,\psi_{R},\psi_{S}$ be formulae such that $\psi_{a},\psi_{T}$ are jointly $\Hmc(\sigma)$-consistent for $T=R,S$, but $\psi_{a},\psi_{R},\psi_{S}$ are not jointly $\Hmc(\sigma)$-consistent (that is, $\psi_{a},\psi_{R},\psi_{S}$ are not realizable in three pairwise $\Hmc(\sigma)$-bisimilar pointed models), and $\psi_{a}=\psi_{a}'\wedge a$ for a nominal $a\not\in\sigma$. We assume that $a$ is the only nominal in $\psi_a,\psi_{R},\psi_{S}$. Let 
$$
\varphi_1 = \mydiam{R}\mydiam{R}\psi_a\wedge \mydiam{S}\mydiam{S}\psi_a , \quad
\varphi_2 = \mybox{R}\mybox{R}\psi_R \wedge \mybox{S}\mybox{S}\psi_S.
$$
% \psi_0 & = a\wedge \psi_0' \text{ with }\psi_0' = \mydiam{T}(p_1\vee p_2) \\
    % \psi_i & = \mybox{T}\neg p_i \text{, for $i=1,2$}
%\end{align*}
%for $a\in\Nom$, $p_1,p_2\in\Var$, and $R,S,T\in\Rel$.
Observe that $\varphi_1,\varphi_2$ are not jointly $\Hmc(\sigma)$-consistent. Indeed, in a witness $\M_1,\M_2$ for joint $\Hmc(\sigma)$-consistency, there need to be two elements $v_1,v_2$ in $\M_2$ satisfying $\psi_R,\psi_S$ and $\Hmc(\sigma)$-bisimilar to $a^{\M_1}$ satisfying $\psi_{a}$.
We analyze how our elimination procedure detects this.
For the sake of readability, in this example we only partially specify the types in the mosaics. For instance, we talk about \emph{mosaic} $(\{\varphi_1\},\{\varphi_2\})$ and mean by this any mosaic $(\{t_1\},\{t_2\})$ with $\varphi_1\in t_1$ and $\varphi_2\in t_2$.

% We consider three initial sets of mosaics $\Ymc_0,\Ymc_1,\Ymc_2$. All of them contain $\mos^0=(\{\varphi_1\},\{\varphi_2\})$,
% $\mos^{R}= (\{\mydiam{R}\psi_a\},\{\mybox{R}\psi_R\})$, and $\mos^S = (\{\mydiam{S}\psi_a\},\{\mybox{S}\psi_S\})$, but they differ in the mosaics speaking about nominals: $\Ymc_0$ contains 
% $\mos^{RS}=(\{\psi_{a}\},\{\psi_{R},\psi_{S}\})$ while $\Ymc_1$ and $\Ymc_2$ contain $\mos^a_R=$ and

By the witness condition, all $\Ymc_{j}$
that could stabilize at some $\Ymc_{j}^{\ast}$ containing $\mos^0=(\{\varphi_1\},\{\varphi_2\})$ 
contain $\mos^{R}= (\{\mydiam{R}\psi_a\},\{\mybox{R}\psi_R\})$ and $\mos^S = (\{\mydiam{S}\psi_a\},\{\mybox{S}\psi_S\})$.
Let $\mos^{\ast}=(\{\psi_{a}\},\{\psi_{R},\psi_{S}\})$ and $\mos^{\ast T}= (\{\psi_{a}\},\{\psi_{T}\})$, for $T=R,S$. By nominal-cleanness and the witness condition we obtain three initial sets $\Ymc_{\ast},\Ymc_{\ast R},\Ymc_{\ast S}$ of interest, where $\mos^{\ast},\mos^{R},\mos^{S},\mos^{0}\in \Ymc_{\ast}$ and $\mos^{\ast T},\mos^{R},\mos^{S},\mos^{0}\in \Ymc_{\ast T}$, for $T=R,S$.% \nb{JJ: optimal indices? with $*$? We could just go with $0,1,2$ as in the procedure; it's a bit more to write but not much. But it's also understandable now.  }

For $\Ymc_{\ast}$, first $\mos^{\ast}$ is eliminated, then $\mos^{R}$ and $\mos^{S}$ are eliminated as we do not find witnesses, and finally $\mos^{0}$. For $\Ymc_{\ast R}$, $\mos^{\ast R}$ and $\mos^{R}$ are never eliminated, but $\mos^{S}$ is eliminated because no witness exists, and then $\mos^{0}$ is eliminated. Similarly, for $\Ymc_{\ast S}$, $\mos^{R}$ and $\mos^{0}$ are eliminated.\footnote{Observe that if formulae $\varphi_{1},\varphi_{2}$ and $\sigma$ are such that $\sigma$ and either $\varphi_{1}$ or $\varphi_{2}$ do not contain any nominals, then if $\varphi_{1}$ and $\varphi_{2}$ are not jointly $\Hmc(\sigma)$-consistent, one can obtain an $\Hmc(\sigma)$-separator of exponential size by taking the uniform $\sigma$-interpolant of the respective $\varphi_{i}$ in \emph{modal logic without nominals}~\cite{TenEtAl06}. Example~\ref{exm:non-elem} shows that this approach cannot be generalized. It follows, however, that in the example we discuss here there is a rather straightforward way to construct Craig interpolants which we ignore to motivate our general approach.}
\qed
%
% {\color{red} F: needed?} Note that the example does not rely on the exact number of $\mydiam{R}$ and $\mydiam{S}$ in $\varphi$. Indeed, the path to the nominal $a$ could be much more complicated, and cannot easily be treated in a local fashion. \qed
\end{example}
Similar to the previous section, one might now attempt to associate, for each case $\Ymc_{x}=\Ymc_{\ast},\Ymc_{\ast R},\Ymc_{\ast S}$,
inductively with each eliminated mosaic $\mos$ a `separator' $\sep_{\Ymc_{x}}(\mos)$ of $\Hmc(\sigma)$-formulae that `follows from $\mos$ w.r.t.~$\Ymc_{x}$' and is inconsistent. In contrast to the nominal cases used in the previous section, this cannot work, however, as the cases $\Ymc_{x}$ implicitly refer to $\Hmc(\sigma)$-bisimilarity: in case $\Ymc_{\ast}$ we assume that we have realized $\mos^{\ast}$ in $\Hmc(\sigma)$-bisimilar nodes and in case $\Ymc_{\ast T}$ we assume we have only realized $\mos^{\ast T}$, for $T=R,S$. Such assumptions cannot be reflected using derivability of $\Hmc(\sigma)$-formulae. (By contrast, nominal cases of the previous section simply add the condition that certain types are satisfied.) Note that, on the other hand, the case distinction appears necessary because $\mos^{\ast R},\mos^{R}\in \Ymc_{\ast R}$ and $\mos^{\ast S},\mos^{S}\in \Ymc_{\ast S}$ are not eliminated and their union provides the witnesses needed for $\mos^{0}$. Hence, to avoid the case distinction, we should detect that the \emph{set} of mosaics $\{\mos^{R},\mos^{S}\}$ cannot be realized.
%cannconsider have to detect we shouild 
%in isolation and without the full case  
%
We now introduce hypermosaics as a data structure that allows us to consider different mosaics simultaneously and to move between different nominal-clean cases. We will later, when constructing separators, also make nominal cases dynamic.
\subsection{Hypermosaic Elimination}
%The illustrated necessity of considering different mosaics in conjunction motivates the definition of our %main data structure. 
Fix $\varphi_1,\varphi_2\in \Hmc$ and signature $\sigma$. A \emph{hypermosaic} is a set of mosaics, which shall be realized \emph{together} in bisimilar elements in models $\M_1,\M_2$. 
Formally, a hypermosaic $H$ is \emph{realized in a pair $\M_{1},\M_{2}$ by an $\Hmc(\sigma)$-bisimulation $Z$ between $\M_{1}$ and $\M_{2}$} if all $\mos\in H$ are realized by $Z$. We call $H$ \emph{realizable} if it can be realized in described way. We provide a hypermosaic elimination procedure that identifies all realizable hypermosaics. 

We first need to lift the notion of a witness. Let $H,H'$ be hypermosaics. We say that $H'$ \emph{extends} $H$ if the following monotonicity condition is satisfied: 

\begin{description}
\item[Mon]\label{it:monotonicity} for every $\mos\in H$, there is $\mos'\in H'$ with $\mos_i\subseteq\mos_i'$ for both $i\in \{1,2\}$.
%\item[Mon] if no type in $\mos$ contains a nominal, then $\mos\in H'$; otherwise, there is $\mos'\in H'$ with $\mos_1\subseteq \mos_1'$ and $\mos_2\subseteq \mos_2'$.
\end{description}

Clearly,  if $H'$ extends $H$ and is realizable, $H$ is realizable as well.

Let $\mos\in H$ and $i\in\{1,2\}$. The hypermosaic $H'$ is a \emph{witness for $\mydiam{R}\chi\in t\in \mos_{i}$} if $H'$ extends $H$ and there is $\mos'\in H'$ which is a witness for $\mydiam{R}\chi\in t\in \mos_{i}$.
We call $H'$ a \emph{witness for $\mos\in H$} if it is a witness for all $\mydiam{R}\chi\in t\in \mos_i$, for both $i\in \{1,2\}$. Intuitively, we require the witness $H'$ to extend $H$ since the $\mydiam{R}\chi$ may force us to  realize \emph{larger} or \emph{additional} mosaics.

We are now in a position to give the hypermosaic elimination procedure for \Hmc, that identifies the realizable hypermosaics. It starts with the set $\Zmc_0$ of all hypermosaics that are $\sigma$-consistent and nominal-clean (as defined in the description of the mosaic elimination). 
Then, $\mathcal{Z}_{\ell+1}$ is obtained from $\mathcal{Z}_{\ell}$ by dropping all $H$ that contain a mosaic $\mos$ that does not have a witness in $\Zmc_\ell$. Let $\mathcal{Z}^{*}$ be the set of hypermosaics where this stabilizes.

\begin{restatable}{theorem}{thmhypermosaics}
\label{thm:correctness-elimination}
For a hypermosaic $H$, the following are equivalent: 
\begin{itemize}

    \item  $H$ is realizable by an $\Hmc(\sigma)$-bisimulation;
    
    \item $H\in \Zmc^*$.
    
\end{itemize}
\end{restatable}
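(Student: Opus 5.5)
Both directions will be proved separately: soundness ($H$ realizable implies $H\in\Zmc^*$) by induction on the elimination rounds, and completeness ($H\in\Zmc^*$ implies realizable) by building a pair of models from $\Zmc^*$.

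\emph{Soundness.} Let $\mathcal{R}$ be the set of all hypermosaics realizable by some $\Hmc(\sigma)$-bisimulation. I show $\mathcal{R}\subseteq\Zmc_\ell$ for all $\ell$.

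For $\ell=0$, a realizable $H$ is $\sigma$-consistent by (atom) and (nom) of Definition~\ref{def:bisi}, since all types in one mosaic are pairwise linked via $Z$. It is nominal-clean because each nominal $a$ denotes a single element $a^{\M_i}$, which has a single type. The part "at most one mosaic containing $t_a^i$" needs care: if $t_a^i$ occurs in several mosaics, I replace them by their union. The union is still realized, since all its elements are bisimilar to $a^{\M_i}$ on side $i$, and hence to one another, using $Z\circ Z^{-1}\circ Z\subseteq Z$ for the bisimilarity relation. So I will argue with $Z$ being the maximal $\Hmc(\sigma)$-bisimulation, which is closed in this sense. Alternatively, I will note that the elimination's extension condition only requires $\mathcal{R}$ closed under this merging. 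Either way, I may need to define "realized" so that nominal-cleanness is automatic, e.g.\ by merging.

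For the inductive step, take $H\in\mathcal{R}$ realized by $Z$ in $\M_1,\M_2$, and $\mydiam{R}\chi\in t\in\mos_i$ with $t$ realized at $w$. Pick an $R$-successor $v$ of $w$ satisfying $\chi$. Let $\mos'$ collect the types of all elements $Z$-related to $v$ (both sides), and set $H'=H\cup\{\mos'\}$, merged as above. If $R\in\sigma$, (forth) and (back) give $\mos\rightsquigarrow_R\mos'$. $H'$ extends $H$ and is realized by $Z$, so by induction it lies in $\Zmc_\ell$ and witnesses every diamond. Since types are finite, $\mos'$ is a legitimate finite mosaic.

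\emph{Completeness.} This is the main work. I build $\M_1,\M_2$ with domains consisting of triples $(H,\mos,t)$ with $H\in\Zmc^*$, $\mos\in H$, $t\in\mos_i$, identifying the elements carrying a nominal $a$. Let $Z$ relate $(H,\mos,t)$ on side 1 with $(H,\mos,t')$ on side 2.

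For each $\mydiam{R}\chi\in t$ I fix a witness $H'\in\Zmc^*$ and $\mos'\in H'$, and draw an $R$-edge to $(H',\mos',t'')$ with $t\rightsquigarrow_R t''$ and $\chi\in t''$. For $R\in\sigma$ I add edges from every type of $\mos$ to some coherent type in $\mos'$, using $\mos\rightsquigarrow_R\mos'$, so that forth and back hold.

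The main obstacle is nominals. A nominal $a$ must denote one point, yet $t_a^i$ may appear in many different $H\in\Zmc^*$, each with a different mosaic containing it. The fix I would try first is to fix one hypermosaic $H_0$ (the target $H$, closed under witnesses along a chain). Using monotonicity, all witnesses chosen along any chain extend $H_0$, so the mosaic containing $t_a^i$ only grows. I then take the nominal point from a limit/union mosaic, which is realized consistently because nominal-cleanness forces a unique mosaic per nominal within each hypermosaic. Finally, a truth lemma by induction on subformulae shows $\tp(H,\mos,t)=t$. The $R$-coherence gives the box formulae, and the nominal identification gives the nominal atoms; the latter is the delicate check.
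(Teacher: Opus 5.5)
\paragraph{Soundness.} This direction is fine in substance. The paper avoids your round-by-round induction: it takes the single hypermosaic $H'$ of all maximal mosaics realized by $Z$. For $H'$ to be nominal-clean, $Z$ must be the maximal bisimulation, as you observe. This $H'$ extends $H$ and is a witness for each of its own mosaics. Two remarks on your version:
\begin{itemize}
\item A witness for $\mos$ must serve all diamonds in all types of $\mos$ at once, so your $H'$ needs one $\mos'$ per diamond.
\item Your base-case worry is justified. A realizable $H$ need not itself be nominal-clean, since two mosaics may both contain $t_a^i$. Merging changes $H$ and does not show $H\in\Zmc_0$, so this direction only makes sense for $H\in\Zmc_0$, which the paper also tacitly assumes.
\end{itemize}

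\paragraph{Completeness: the gap.} Your triples range over all of $\Zmc^*$, but nominal-cleanness is a property of a single hypermosaic. Two members of $\Zmc^*$ can contain different types for the same nominal $a$, say one with $p$ and one with $\neg p$. Identifying the points carrying $a$ then breaks the truth lemma. Even if the types agree, the identified point is $Z$-linked to partners from different mosaics. Their $R$-successors for $R\in\sigma$ were drawn towards different witnesses, so (forth) and (back) fail.

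Your proposed fix does not close this. Because you pick a witness per diamond, the witnesses branch into hypermosaics that all extend $H$ but not one another. Before $a$ appears, different branches may fix different types for it, so there is no common ``limit/union mosaic''. You also never say how the edges and $Z$ at the merged point would satisfy the bisimulation conditions.

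\paragraph{The missing idea.} One hypermosaic suffices.
\begin{enumerate}
\item Put $H_0=H$. As long as some $\mos\in H_k$ has no witness inside $H_k$, choose a witness $H_{k+1}\in\Zmc^*$ for it.
\item Being a witnessing mosaic is upward closed under componentwise inclusion. So if $H_k$ extended $H_{k+1}$, then $H_k$ would itself be a witness for $\mos$.
\item Hence the downward closures of the $H_k$ strictly grow. Since there are finitely many mosaics, the chain stops at some $H^*\in\Zmc^*$ that extends $H$ and is a witness for every mosaic it contains.
\item Build $\M_i$ on $\{(t,\mos)\mid \mos\in H^*,\ t\in\mos_i\}$, plus isolated points for nominals not occurring on side $i$.
\item Put $((t,\mos),(t',\mos'))\in R^{\M_i}$ iff $t\rightsquigarrow_R t'$ and, for $R\in\sigma$, also $\mos\rightsquigarrow_R\mos'$.
\item Let $Z$ link points with the same mosaic component.
\end{enumerate}
Since $H^*$ alone is nominal-clean, each nominal names a unique point and no identification across hypermosaics is needed. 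The truth lemma and the back-and-forth check are then routine, and $H$ is realized because $H^*$ extends it. This is the paper's construction.
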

\begin{proof}(Sketch)
Assume first that $H$ is realized in $\M_{1}, \M_{2}$ by $\Hmc(\sigma)$-bisimulation $Z$. 
% Without loosing generality $Z$ is the \emph{maximal} $\Hmc(\sigma)$-bisimulation between $\M_{1}$ and $\M_{2}$.
Define $H'$ as the set of maximal (w.r.t.~set-inclusion on both coordinates) mosaics $\mos$ realized by $Z$. By definition, $H'$ is nominal-clean, $\sigma$-consistent, and is a witness for all $\mos\in H'$. So $H'$ is not eliminated. Moreover, $H'$ extends $H$, and hence the latter is not eliminated either.

Conversely, assume that $H$ is not eliminated. Take $H^*\in\Zmc^*$ extending $H$ which is a witness for every $\mos\in H^*$. Since $H^*$ extends $H$, it is enough to realize $H^*$.
We now construct models $\M_{1}$ and $\M_{2}$ realizing $H^*$ by $\Hmc(\sigma)$-bisimulation $Z$. For every $i\in\{1,2\}$, let Nom$_{i}$ be the set of nominals $a\in \Nom \cap (\sig(\varphi_1)\cup\sig(\varphi_2))$ that do not occur in any $t$ with $t\in \mos_{i}$ 
for some $\mos\in H^*$. Take
\[
W^{\M_{i}}= \{(t,\mos)\mid t\in \mos_{i}, \mos\in H^*\} \cup \text{Nom}_{i}
\]
as the domain of $\M_i$.
We next define $R^{\M_{i}}$ for relation symbols $R\in\sig(\varphi_1)\cup\sig(\varphi_2)$. Let $(t,\mos),(t',\mos')\in W^{\M_{i}}$. Then $((t,\mos),(t',\mos'))\in R^{\M_i}$ iff $t\rightsquigarrow_R t'$ and, if $R\in \sigma$, additionally $\mos \rightsquigarrow_R \mos'$.
For a propositional variable $p\in \sig(\varphi_1)\cup\sig(\varphi_2)$, let $p^{\M_{i}}=\{ (t,\mos)\in W_{i} \mid p\in t\}$.
For $a\in \Nom \cap (\sig(\varphi_1)\cup\sig(\varphi_2)$ we make a case distinction.
If there is $(t,\mos)\in W_{i}$ with $a\in t$, then let $a^{\M_{i}}= (t,\mos)$.
Otherwise $a\in \text{Nom}_{i}$ and we set $a^{\M_{i}}=a$.
Finally, define $Z\subseteq W^{\M_1}\times W^{\M_2}$ by setting
\begin{gather*}
(t,\mos)\ Z\ (t',\mos') \iff \mos=\mos'.
\end{gather*}
One can inductively verify that (1) $\psi\in t$ if, and only if, $\Mmc_i,(t,\mos)\models \psi$, for every $i\in\{1,2\}$, $\psi\in\sub(\varphi_1,\varphi_2)$ and every $(t,\mos)\in W^{\Mmc_i}$; and that (2) $Z$ is an $\Hmc(\sigma)$-bisimulation between $\Mmc_1$ and $\Mmc_2$ that realizes each $\mos\in H^*$.
% \begin{itemize}
%     \item $\psi\in t\quad\text{ iff }\quad\Mmc_i,(t,\mos)\models \psi$ for every $i\in\{1,2\}$, $\psi\in\sub(\varphi_1,\varphi_2)$ and every $(t,\mos)\in W^{\Mmc_i}$;
%     \item $Z$ is an $\Hmc(\sigma)$-bisimulation between $\Mmc_1$ and $\Mmc_2$, and each $\mos\in H^*$ is realized in $\Mmc_1,\Mmc_2$ by $Z$.
% \end{itemize}
\end{proof}

The following is an immediate consequence of  
Theorem~\ref{thm:correctness-elimination}. 

\begin{corollary}\label{cor:hypermosaics}
    $\varphi_1$ and $\varphi_2$ are jointly $\Hmc(\sigma)$-consistent iff there exists a hypermosaic $\{(\{t_{1}\},\{t_{2}\})\}\in\Zmc^*$ with $\varphi_1\in t_{1}$ and $\varphi_2\in t_{2}$.
\end{corollary}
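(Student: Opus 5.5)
The plan is to derive the corollary directly from Theorem~\ref{thm:correctness-elimination}, applied to the singleton hypermosaic $H=\{(\{t_1\},\{t_2\})\}$. For this $H$, realizability has a simple meaning: there are $\Hmc(\sigma)$-bisimilar nodes $w_1\in\M_1$ and $w_2\in\M_2$ with $\tp_{\M_1}(w_1)=t_1$ and $\tp_{\M_2}(w_2)=t_2$. So the whole argument reduces to showing that joint $\Hmc(\sigma)$-consistency of $\varphi_1,\varphi_2$ is the same as realizability of such a singleton hypermosaic with $\varphi_i\in t_i$.

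For the direction from left to right, I would fix pointed models $\M_i,w_i\models\varphi_i$ and an $\Hmc(\sigma)$-bisimulation $Z$ with $(w_1,w_2)\in Z$. I then set $t_i=\tp_{\M_i}(w_i)$. These are genuine $\sub(\varphi_1,\varphi_2)$-types, and $\varphi_i\in t_i$ because $\varphi_i$ is itself a subformula. By construction $H=\{(\{t_1\},\{t_2\})\}$ is realized by $Z$. Theorem~\ref{thm:correctness-elimination} then gives $H\in\Zmc^*$.

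For the converse, suppose $H=\{(\{t_1\},\{t_2\})\}\in\Zmc^*$ with $\varphi_i\in t_i$. By the theorem, $H$ is realized in some $\M_1,\M_2$ by an $\Hmc(\sigma)$-bisimulation $Z$. This yields $w_1,w_2$ with $(w_1,w_2)\in Z$ and $\tp_{\M_i}(w_i)=t_i$, so $\M_i,w_i\models\varphi_i$ by the definition of $\tp$. Hence $\varphi_1,\varphi_2$ are jointly $\Hmc(\sigma)$-consistent.

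There is no real obstacle, since all the work sits in Theorem~\ref{thm:correctness-elimination}. The only thing to check is that the notion of realization of a mosaic with singleton components coincides exactly with the definition of joint consistency. It does: both require a pair in $Z$ whose nodes satisfy the given types, and no nominal-cleanness or $\sigma$-consistency condition has to be verified separately, because the theorem already handles this.
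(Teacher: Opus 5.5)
Your proposal is correct and is exactly the paper's argument: the paper states the corollary as an immediate consequence of Theorem~\ref{thm:correctness-elimination} applied to the singleton hypermosaic $\{(\{t_1\},\{t_2\})\}$, whose realizability is literally joint $\Hmc(\sigma)$-consistency of the two types. Your remark about $\Zmc_0$ can be justified directly: a realized singleton hypermosaic is trivially nominal-clean, and it is $\sigma$-consistent by the (atom) and (nom) conditions, so it indeed lies in $\Zmc_0$.
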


The following example illustrates the elimination procedure.

\begin{example}\label{ex:hypermosaics}
For $\varphi_1,\varphi_2,\sigma$ and mosaics $\mos^*,\mos^{\ast R},\mos^{\ast S},\mos^R,\mos^S,\mos^0$ as in Example~\ref{ex:motivation}, the initial set of hypermosaics $\Zmc_0$ will contain (among others) the following hypermosaics $H_0, H_1, H_{\ast R}, H_{\ast S},H_{\ast}$: 
\begin{align*}
    H_0 = \{\mos^0\},\quad
    H_1 = \{\mos^R,\mos^S,\mos^0\},\quad 
    H_{\ast R} = \{\mos^{*R},\mos^R,\mos^S,\mos^0\} \\
    H_{\ast S} = \{\mos^{*S},\mos^R,\mos^S,\mos^0\}, \quad
    H_{\ast} = \{\mos^*,\mos^R,\mos^S,\mos^0\}.
\end{align*}
Note that $H_1$ is a \emph{minimal} witness for $\mos^0\in H_0$ in the sense that every witness for $\mos^0$ extends $H_1$. Similarly, $H_{\ast R}$ and $H_{\ast S}$ are minimal witnesses for $\mos^R$ and $\mos^S$ in $H_1$, respectively, and $H_{\ast}$ is a minimal witness for $\mos^S\in H_{\ast R}$ and $\mos^R\in H_{\ast S}$, respectively. 
Further note that our elimination procedure is monotone in the sense that whenever $H$ is eliminated and $H'$ extends $H$, then $H'$ is eliminated as well. Thus, if a minimal witness is eliminated then all other witnesses are eliminated as well.

Since $\psi_{a},\psi_{R},\psi_{S}$ cannot be realized in $\Hmc(\sigma)$-bisimilar elements, $H_{\ast}$ is not realizable by an $\Hmc(\sigma)$-bisimulation. By Theorem~\ref{thm:correctness-elimination}, there is a round $\ell$ when $H_{\ast}$ is eliminated, that is, $H_{\ast}\notin\Zmc_\ell$. In the next round $\ell+1$, $H_{\ast R}$ and $H_{\ast S}$ are eliminated since $H_{\ast}$ was the minimal witness for $\mos^S\in H_{\ast R}$ and $\mos^R\in H_{\ast S}$, respectively. By similar arguments, $H_1$ and $H_0$ are eliminated in the next two rounds.
\qed
\end{example}

It is not difficult to show that  Theorem~\ref{thm:correctness-elimination} and Corollary~\ref{cor:hypermosaics} also hold for $\Hfull$ and $\Hat$, when the following changes are applied. 

For $\Hat$, we assume without loss of generality that $@_aa\in\sub(\varphi_1,\varphi_2)$ for all nominals $a\in \sigma$, and thus that $@_aa\in t$ for every type $t$ and $a\in\sigma$.
% $(\{t_{a}^{1}\},\{t_{a}^{2}\})\in H$ for all $a\in \sigma$ for nominal-clean hypermosaics $H$. 
In the definition of a witness $H'$ for a mosaic $\mos\in H$, we additionally require that for every $i\in\{1,2\}$, $t\in\mos_i$, and $@_a\chi\in t$ there exists $\mos'\in H'$ such that $a,\chi\in t'$ for some $t'\in \mos'_i$. Finally, we require that all hypermosaics $H$ in $\Zmc_0$ are 
\begin{itemize}
    \item \emph{@-consistent:} for every $\mos,\mos'\in H$ and $t,t'\in \mos_{i}\cup \mos_{i}'$, $@_a\chi\in t$ iff $@_a\chi\in t'$, for all $@_a\chi \in \sub(\varphi_1,\varphi_2)$ and all $i\in \{1,2\}$.
    
\end{itemize}

For $\Hfull$, we assume without loss of generality that $\mydiam{U}a\in \sub(\varphi_1,\varphi_2)$, for all $a\in\sub(\varphi_1,\varphi_2)$, and thus that $\mydiam{U}a$ is contained in every type. Moreover, we allow only mosaics $\mos$ with $\mos_1\neq \emptyset$ and $\mos_2\neq\emptyset$, and require that all hypermosaics $H$ in $\Zmc_0$ be
\begin{itemize}

    \item \emph{$U$-consistent}: for every $\mos,\mos'\in H$ and $t,t'\in \mos_{i}\cup \mos_{i}'$, $\mydiam{U}\chi\in t$ iff $\mydiam{U}\chi\in t'$, for all $\mydiam{U}\chi \in \sub(\varphi_1,\varphi_2)$ and all $i\in \{1,2\}$.
\end{itemize}
Finally, during the elimination, we require witnesses also for formulae $\mydiam{U}\chi$ and treat them as if $U$ was part of $\sigma$.

We conclude by observing that hypermosaic elimination is inherently triply exponential since there are triply exponentially many hypermosaics, and is therefore one exponential more expensive than mosaic elimination. We show next that in contrast to mosaic elimination, hypermosaic elimination can be made constructive. 

\subsection{Separator Construction}

We start with the case of the hybrid logic $\Hmc$. The easy modifications needed to adapt the construction to $\Hat$ and $\Hfull$ will be discussed in the end.
Recall that $\phi_1,\phi_2\in\Hmc$ and signature $\sigma$ were the input to the elimination procedure, and that $\TP$ denotes the set of all relevant types. Let $\Nom_0=\Nom\cap\sub(\phi_1,\phi_2)$ denote the set of all relevant nominals.
% \nb{JK: maybe make the link between separators for $\phi_1,\phi_2$ and types containing them here? or to Lemma~\ref{lem:joint-consistent}}
We first generalize the notion of a separator for a pair of formulae to hypermosaics. As in Section~\ref{sec:first-order}, the separators depend on the `type' of the nominals, but in a more dynamic way. 

A \emph{nominal case} is a pair $\cmf=(\cmf_1,\cmf_2)$ of sequences of $\Hmc$-formulae,
one for each nominal
\[
\cmf_1,\cmf_2:\Nom_0\to \Hmc.
\]
As in Section~\ref{sec:first-order}, we use $\case(\cmf_i)$ to abbreviate the formula
\[
\case(\cmf_i) = \bigwedge_{a\in\Nom_0} @_a \cmf_{i}(a).
\]
A \emph{case distinction} is a finite set $\Case$ of nominal cases which 
exhausts all possibilities. That is, every pair of models $(\M_1,\M_2)$ adheres to some case $\cmf\in\Case$ in the sense that $\M_i\models\case(\cmf_i)$ for both $i$.

A separator for a pair of formulae is a witness that they cannot be realized at $\sigma$-bisimilar points. We now define an analogous notion for hypermosaics: \emph{a hyperseparator} which witnesses that a given hypermosaic $H$ cannot be realized by a bisimulation between two models. Since we will construct hyperseparators in a uniform way, for convenience we first define a hyperseparator without any fixed hypermosaic, and then say what it means that this separator separates a given $H$.

\begin{definition}\label{def:separator}
A \emph{hyperseparator} consists of a case distinction $\Case$
and functions $\sep_1,\sep_2$ of type
\[
\sep_i:\Case\times\TP \to \Hmc(\sigma)
\]
such that for every $t\in\TP$, $\cmf\in\Case$, and all $i\in\{1,2\}$:
\[
t\wedge\case(\cmf_i)\models\sep_i(\cmf,t).
\]
We say that $\Case,\sep_1,\sep_2$ \emph{separates} a hypermosaic $H$ if for every $\cmf\in\Case$ there is $\mos\in H$ such that:
\[
\bigwedge_{i=1,2}\bigwedge_{t\in \mos_i}\sep_i(\cmf,t) \models \bot.
\]
\end{definition}
Intuitively, a hyperseparator for $H$ certifies that $H$ is not realizable. 
% Note that Definition~\ref{def:separator} generalizes the notion of a separator for a pair of formulae to hypermosaics. 
The definition ranges over some exhaustive set of cases, and for each case $\cmf$, side $i$, and type~$t$, provides a $\Hmc(\sigma)$-formula $\sep_i(\cmf,t)$ which follows from $t$ in the case $\cmf$.
The hyperseparator separates $H$ if no matter which case $\cmf\in\Case$ holds, we always find an inconsistency in the consequences of some mosaic $\mos\in H$, showing that $H$ is not realizable.

In what follows, we will supplement the elimination procedure with an elementary construction of hyperseparators which separate all eliminated hypermosaics. Before we do that, let us motivate the construction by showing how the usual separators for types are easily obtained from hyperseparators for hypermosaics.

\begin{lemma}\label{lem:hypersep2sep}
    Assume a hyperseparator $\Case,\sep_1,\sep_2$ for a hypermosaic $\{(\{t_1\},\{t_2\})\}$, and let
    \[
    m=\max\{|\sep_i(\cmf,t)| \mid \cmf\in\Case,\ i=1,2,\ t\in\TP\}.
    \]
    One can construct a separator for $t_1,t_2$ of size polynomial in $2^{|\Case|}\times m$.
    %is the maximal size of a formula used by $\sep_i$.
    % One can construct a separator for $t_1,t_2$ of size polynomial in $2^{|\Case|}\times m$.
\end{lemma}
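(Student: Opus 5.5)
The plan is to mimic the compilation step \eqref{eq:compile} of Section~\ref{sec:first-order}, with subsets of $\Case$ in place of the choice of $\cmf_1$. The key observation is that each formula $\case(\cmf_i)$ is a conjunction of $@$-formulae, so its truth value in a model does not depend on the point of evaluation. Hence every pointed model $\M_1,w$ determines a set
\[
A(\M_1)=\{\cmf\in\Case\mid \M_1\models\case(\cmf_1)\},
\]
and every pointed model $\M_2,v$ determines a set $B(\M_2)=\{\cmf\in\Case\mid \M_2\models\case(\cmf_2)\}$. That $\Case$ is a case distinction means exactly that $A(\M_1)\cap B(\M_2)\neq\emptyset$ for every pair of models $\M_1,\M_2$.

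Next I would let $\mathcal{A}$ be the family of all sets $A\subseteq\Case$ that are \emph{realizable for $t_1$}, i.e.\ $A=A(\M_1)$ for some pointed model $\M_1,w$ of $t_1$. Equivalently, $t_1\wedge\bigwedge_{\cmf\in A}\case(\cmf_1)\wedge\bigwedge_{\cmf\notin A}\neg\case(\cmf_1)$ is satisfiable. I then define
\[
\theta=\bigvee_{A\in\mathcal{A}}\ \bigwedge_{\cmf\in A}\sep_1(\cmf,t_1),
\]
which is in $\Hmc(\sigma)$ because every $\sep_1(\cmf,t_1)$ is. It has at most $2^{|\Case|}$ disjuncts, each of size $O(|\Case|\cdot m)$, so its size is polynomial in $2^{|\Case|}\times m$ as required.

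The verification has two parts.
\begin{itemize}
\item $t_1\models\theta$: if $\M_1,w\models t_1$, then $A(\M_1)\in\mathcal{A}$, and by the defining property of a hyperseparator, $\M_1,w\models\sep_1(\cmf,t_1)$ for every $\cmf\in A(\M_1)$. So the disjunct for $A(\M_1)$ holds at $w$.
\item $t_2\wedge\theta\models\bot$: suppose $\M_2,v\models t_2\wedge\theta$, and pick $A\in\mathcal{A}$ whose disjunct holds at $v$, with $A=A(\M_1)$. Exhaustiveness applied to the pair $(\M_1,\M_2)$ yields some $\cmf\in A\cap B(\M_2)$. Then $\M_2,v\models\sep_1(\cmf,t_1)$ because $\cmf\in A$. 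Also $\M_2,v\models t_2\wedge\case(\cmf_2)$, so $\M_2,v\models\sep_2(\cmf,t_2)$. This contradicts the separation condition for the unique mosaic $(\{t_1\},\{t_2\})$ of the hypermosaic, namely $\sep_1(\cmf,t_1)\wedge\sep_2(\cmf,t_2)\models\bot$.
\end{itemize}

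The only delicate point is constructiveness, i.e.\ computing $\mathcal{A}$, which I expect to be the main obstacle. I would first handle it by a satisfiability check in $\Hfull$ for each $A\subseteq\Case$, using the \ExpTime decidability recalled in the introduction. This costs at most $2^{|\Case|}$ calls, each on a formula polynomial in $|\Case|\cdot\max_{\cmf}|\case(\cmf_1)|$.

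If the case formulae should not enter the complexity, I would fall back on a purely syntactic variant. Enlarging $\mathcal{A}$ to any family that contains all realizable sets and whose members each meet every realizable $B(\M_2)$ still works by the same argument. One could also simply take all $A$, provided the case distinction is presented in a form where the realizable $A$'s are known by construction. That is the case for the hyperseparators built later, where the cases come from explicit choices per nominal, as with $N$ in Section~\ref{sec:first-order}.
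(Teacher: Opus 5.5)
Your proof is correct and is essentially the paper's argument. The paper likewise takes, over all satisfiable combinations of the side-$1$ case formulae $\case(\cmf_1)$, the disjunction of the conjunctions of the corresponding $\sep_1(\cmf,t_1)$, and it derives the contradiction by applying exhaustiveness of $\Case$ to a pair of models realizing a side-$1$ and a side-$2$ combination. The differences are cosmetic: the paper also builds the symmetric formula $\psi_2$ and proves $\psi_1\wedge\psi_2\models\bot$ where you check $t_2\wedge\theta\models\bot$ directly, and it does not restrict to combinations realizable together with $t_1$.

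One caveat concerns only your closing remark. Taking literally all $A\subseteq\Case$ would fail, since $A=\emptyset$ contributes the disjunct $\top$. The enlargement criterion you state just before it is the correct condition, and your main proof does not depend on this remark.
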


\begin{proof}
Assume a hyperseparator $\Case,\sep_1,\sep_2$ separating $\{(\{t_1\},\{t_2\})\}$. For $i\in \{1,2\}$ denote $\Cmc_i=\{\cmf_i\mid \cmf\in\Case\}$ and consider the family $\widetilde{\Cmc_i}$ of all \emph{satisfiable} subsets of $\Cmc_i$ in the sense 
that \[
\widetilde{\Cmc_i} = \{S\in\Pmc(\Cmc_i) \mid \chi_S\not\models\bot\}
%\]
%where
%\[
\quad\quad\text{where}\quad\quad \chi_S=\bigwedge_{\cmf_i\in S}\case(\cmf_i) \wedge \bigwedge_{\cmf_i\in \Cmc_{i}\setminus S}\neg\case(\cmf_i). 
\]
Define:
\[
\theta^S_i = \bigwedge_{\substack{\cmf\in\Case\\ \cmf_i\in S}}\sep_i(\cmf,t_{i})
\hspace{1cm} \text{and} \hspace{1cm}
\psi_i=\bigvee_{S\in \widetilde{\Cmc_{i}}} \theta^S_i
\]
for both $i\in \{1,2\}$ and $S\in\widetilde{\Cmc_i}$. We claim that
\[
t_1\models\psi_1\models\neg\psi_2\models\neg t_2
\]
which means that $\psi_1$ (and also $\neg\psi_2$) separates $t_1,t_2$.

Since $\chi_S\models\case(\cmf_i)$ for every $\cmf_i\in S\in\widetilde{\Cmc_i}$, we get $t_i\wedge\chi_S\models \theta^S_i$.
Since $\bigvee_{S\in \widetilde{\Cmc_{i}}}\chi_S$ is tautological by construction of $\widetilde{\Cmc_i}$, this gives us
$t_i\models \psi_i$ for both $i\in \{1,2\}$. It remains to show $\psi_1\wedge\psi_2\models\bot$. Assume the opposite. Then there are $S_1\in\widetilde{\Cmc_1}$ and $S_2\in\widetilde{\Cmc_2}$ such that $\theta_1^{S_1}\wedge\theta_2^{S_2}$ is satisfiable. By definition of $\widetilde{\Cmc_i}$, we have models $\Mmc_{i}$ satisfying $\chi_{S_i}$ for both $i\in \{1,2\}$. The pair $\M_1,\M_2$ of models must adhere to some case $\cmf\in\Case$. Then we have $\cmf_i\in S_i$, for both $i\in \{1,2\}$.
But this implies $\theta^{S_i}_i\models\sep_i(\cmf,t_i)$. This is a contradiction because $\Case,\sep_1,\sep_2$ separates $\{(\{t_1\},\{t_2\})\}$ and hence
$\sep_1(\cmf,t_1)\wedge\sep_2(\cmf,t_2)\models\bot$.
\end{proof}

We now provide the construction of hyperseparators. The construction proceeds by induction on the number of rounds of the elimination procedure, and in round $\ell$ we build a separator for all hypermosaics eliminated up to that round.

\begin{lemma}\label{lem:sep construction}
For every round $\ell$ of the elimination procedure, we can compute a hyperseparator $\Case,\sep_1,\sep_2$ which separates every hypermosaic eliminated up to that round. 
\end{lemma}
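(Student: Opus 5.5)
The proof goes by induction on the round $\ell$, following the pattern of the separator construction of Section~\ref{sec:first-order}. Two things change. Single formulas $\sep_\cmf(\mos)$ are replaced by whole hyperseparators. The nominal cases are refined at every round. The invariant I will maintain is: after round $\ell$ there is a hyperseparator $\Case^\ell,\sep^\ell_1,\sep^\ell_2$ such that for every $H\notin\Zmc_\ell$ and every $\cmf\in\Case^\ell$ some $\mos\in H$ has jointly inconsistent $\sep^\ell$-formulae.

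\textbf{Base case.} Take $\Case^0$ to be the set of basic nominal cases $\cmf_1,\cmf_2:\Nom_0\to\TP$, which are clearly exhaustive. Let $\sep^0_i(\cmf,t)$ be $\bot$ if $t\wedge\case(\cmf_i)$ is unsatisfiable. Otherwise let it be the conjunction of the $\sigma$-literals over $\Var\cup\Nom$ occurring in $t$.
\begin{itemize}
\item A $\sigma$-inconsistent $H$ contains a mosaic with two types disagreeing on some $r\in\sigma$, so the $\sep^0$-formulae of that mosaic contain both $r$ and $\neg r$.
\item Suppose $H$ violates the first half of nominal-cleanness, i.e.\ two distinct types on side $i$ contain $a$. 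Then at least one of them differs from $\cmf_i(a)$, and its $\sep^0_i$ is $\bot$.
\item The second half (two mosaics sharing $t^i_a$) is more delicate. The element named $a$ would be bisimilar to everything in both mosaics. For $a\in\sigma$ the $\sep^0$-formulae cannot express this directly. I would therefore rather move this condition into the witness and extension notions, or argue that such $H$ are separated exactly when their merged hypermosaic is. I flag this as a point to settle first.
\end{itemize}

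\textbf{Inductive step.} Let $H$ be dropped in round $\ell+1$. Then some $\mos\in H$ has a formula $\mydiam{R}\chi\in t\in\mos_i$ with no witness in $\Zmc_\ell$, i.e.\ every witness $H'$ is separated by $\Case^\ell$. First refine the cases dynamically: $\cmf'\in\Case^{\ell+1}$ consists of an old case $\cmf$ together with, for each nominal $a$ and side $j$, a choice of which formulae $\sep^\ell_j(\cmf,t')$ are true at $a$. Then $\cmf'_j(a)=\cmf_j(a)\wedge\bigwedge\pm\sep^\ell_j(\cmf,t')$, and exhaustiveness is preserved.

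Next, for $R\in\sigma$, define $\sep^{\ell+1}_j(\cmf',t)$ as $\sep^\ell_j(\cmf,t)$ conjoined with
\[
\bigwedge_{\mydiam{R}\chi\in t}\mydiam{R}\Bigl(\bigvee_{t'\ni\chi,\ t\rightsquigarrow_R t'}\sep^\ell_j(\cmf,t')\Bigr)\ \wedge\ \mybox{R}\Bigl(\bigvee_{t\rightsquigarrow_R t'}\sep^\ell_j(\cmf,t')\Bigr).
\]
I would also add the $\sigma$-part of the information recorded in $\cmf'$ for nominals $a\in t$. Soundness, $t\wedge\case(\cmf'_j)\models\sep^{\ell+1}_j(\cmf',t)$, follows from the inductive hypothesis applied at successors of the current node.

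\textbf{Main obstacle.} The crux is separation. Fix $\cmf'$ and suppose, for contradiction, that every mosaic of $H$ has jointly consistent $\sep^{\ell+1}$-formulae. From pointed models of these conjunctions, and of their $\mydiam{R}$-successors, I would read off a hypermosaic $H'$ whose mosaics collect the types $t'$ with jointly satisfiable $\sep^\ell(\cmf,t')$-formulae across the two sides. The $\mybox{R}$-conjuncts should give the $\mos\rightsquigarrow_R\mos'$ coherence, and the $\mydiam{R}$-conjuncts provide the required witness for $\mydiam{R}\chi$.

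The nominal mosaics of $H'$ are fixed globally by the refinement $\cmf'$, and I expect this is what makes $H'$ nominal-clean. It is also why the cases must be dynamic rather than the static basic cases of Section~\ref{sec:first-order}. By construction $H'$ extends $H$ and witnesses $\mydiam{R}\chi$, so $H'\notin\Zmc_\ell$. By the inductive hypothesis, in case $\cmf$ some mosaic of $H'$ has inconsistent $\sep^\ell$-formulae, contradicting how $H'$ was built.

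\textbf{Size.} Each round multiplies $|\Case|$ at most exponentially. The number of rounds is bounded by the triply exponential number of hypermosaics. Tracking DAG-size therefore gives the fourfold exponential bound via Lemma~\ref{lem:hypersep2sep}.
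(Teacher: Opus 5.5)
\textbf{There is a genuine gap, and it sits in the step you call the main obstacle.} Your skeleton matches the paper: induction on rounds, refining cases by recording at each nominal which current separator formulae hold, and deriving a contradiction from a hypermosaic $H'$ that the round-$\ell$ hyperseparator fails to separate. However, the $H'$ you read off from arbitrary points is in general not nominal-clean, and without nominal-cleanness the induction hypothesis cannot be applied to it.

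\emph{Why nominal-cleanness fails.} An $R$-successor $u'$ of $u_\mos$ may satisfy the disjunct $\sep^\ell_i(\cmf,t^i_a)$ for the nominal type $t^i_a$. So may the point chosen for another mosaic of $H$, or another successor. These points generally disagree on other $\sep^\ell(\cmf,\cdot)$-formulae, so $H'$ acquires several distinct mosaics containing $t^i_a$. You cannot merge them, since no single point satisfies the union of their formulae. Conjoining the information that $\cmf'$ records for $a$ only at the top level of $\sep^{\ell+1}_j(\cmf',t)$ with $a\in t$ does not constrain the successors inside the $\mydiam{R}$/$\mybox{R}$ disjunctions. Recording only side-$j$ formulae at $a$ is also too coarse, because a mosaic couples both sides.

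\emph{You also misread the elimination rule.} $H$ is dropped because some $\mos\in H$ has no single $H'\in\Zmc_\ell$ that extends $H$ and witnesses \emph{all} $\mydiam{R}\chi\in t\in\mos_i$ on both sides at once, including $R\notin\sigma$. Witnessing one diamond therefore does not give $H'\notin\Zmc_\ell$. Your $\sep^{\ell+1}$ says nothing about $R\notin\sigma$. Nothing in it ensures that $t\wedge\case(\cmf'_j)$ is satisfiable, which is what the paper uses to extract such witnesses from a model of the case.

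\emph{Your flagged base-case worry is resolved differently.} Restrict the invariant to nominal-clean hypermosaics, as the paper implicitly does: members of $\Zmc_\ell$ are nominal-clean, and the base case only has to separate $\sigma$-inconsistent ones. Some non-clean hypermosaics, e.g.\ $\{(\{t_a\},\{s\}),(\{t_a\},\{s'\})\}$ with $a\notin\sigma$ and suitable $s\neq s'$, are realizable and cannot be separated at all.

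\textbf{The missing idea is to make $H'$ canonical, determined by the case rather than by chosen points.} In the paper:
\begin{itemize}
\item $\Sep_\cmf$ contains the formulae of \emph{both} sides, and $\cmf^+$ fixes a complete $\Sep_\cmf$-type $f_i(a)$ for every nominal and side.
\item $\sep^+_i(\cmf^+,t)$ is the disjunction of all complete $\Sep_\cmf$-star-types consistent with $\case(\cmf^+_i)\wedge t$. A star-type is a type $d$ together with $\nabla^R$ of the set of successor $\Sep_\cmf$-types.
\item Mutual exclusivity of star-types yields a common disjunct $\delta_\mos$ for each $\mos\in H$.
\item $H'=\{(\mos^d_1,\mos^d_2)\mid d\in\SepCaseType\}$ with $\mos^d_i=\{t\mid\case(\cmf^+_i)\wedge t\wedge d\not\models\bot\}$, which depends only on $\cmf^+$.
\end{itemize}
Consequently $t^i_a$ lies only in $\mos^{f_i(a)}$, giving nominal-cleanness. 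Extension comes from the $d$ entailed by $\delta_\mos$. All witnesses come from a model of $\case(\cmf^+_i)\wedge\delta_\mos\wedge t$, and coherence on the other side follows from $\delta_\mos\models\mydiam{R}d$. The contradiction is that $d$ itself would be inconsistent.

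Your one-step formulae can be repaired in the same spirit:
\begin{itemize}
\item record two-sided types $D^j_a$;
\item conjoin $D^j_a$ to $\sep^\ell_j(\cmf,t')$ whenever $a\in t'$, also inside the successor disjunctions;
\item set $\sep^{\ell+1}_j(\cmf',t)=\bot$ when $t\wedge\case(\cmf'_j)$ is unsatisfiable;
\item make each mosaic of $H'$ a function of a point's $\Sep_\cmf$-type.
\end{itemize}
This canonicity is exactly what the proposal lacks.

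\textbf{Your size count is also off.} With triply exponentially many rounds, $|\Case|$ becomes fourfold exponential, and Lemma~\ref{lem:hypersep2sep} then only gives a fivefold bound. The fourfold bound needs the doubly exponential bound on rounds that comes from \textbf{Mon}.
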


% \begin{proof}
We first provide the inductive construction and then prove its correctness. 
In the base case, we let the case distinction determine the type of each nominal, that is:  
\begin{align*}
\Case = \{(\cmf_1,\cmf_2)\ |\ \cmf_1,\cmf_2:\Nom_0\to\TP \}.% \label{eq:base case dist}    
\end{align*}
This case distinction is exhaustive because every nominal in every model has some type.
For every case $\cmf\in\Case$, the functions $\sep_1,\sep_2$ ignore $\cmf$ and map types to their atomic $\sigma$-consequences:
\[
\sep_i(\cmf,t) = \bigwedge(t\cap\{r,\neg r\ |\ r\in\sigma\})
\]
for every $i,\cmf$ and $t$. Clearly, $t\models \sep_i(\cmf,t)$ and thus $\Case,\sep_1,\sep_2$ is a hyperseparator. Moreover, it separates every $\sigma$-inconsistent hypermosaic.

We now deal with the inductive case.
Assume we have a hyperseparator $\Case,\sep_1,\sep_2$ which separates all hypermosaics eliminated up to round $\ell$. We construct a hyperseparator $\Case^+,\sep_1^+,\sep_2^+$ separating all hypermosaics eliminated up to stage $\ell+1$.

We first refine each case in $\Case$.
For $\cmf\in\Case$ consider the set:
\[
\Sep_\cmf=\{\sep_i(\cmf,t)\mid t\in\TP,i=1,2\}
\]
of all $\Hmc(\sigma)$-formulae which are in the image of $\sep_1(\cmf,\ \cdot\ )$ or $\sep_2(\cmf, \ \cdot\ )$.
We split $\cmf$ into subcases specifying which $\Sep_\cmf$-type $d\in\SepCaseType$ is true at each nominal. The new case distinction is
\[
\Case^+=
\left\{
  (\cmf_1 \otimes f_1,\ \ \cmf_2 \otimes f_2) \;\middle|\;
  \begin{aligned}
   \cmf\in\Case,  f_1,f_2:\Nom_0\to\SepCaseType
  \end{aligned}
\right\}
\]
where $\otimes$ denotes pointwise conjunction, that is,  $(f\otimes g)(a)=f(a)\wedge g(a)$, for every $a\in\Nom_0$.
We say that $(\cmf^+_1,\cmf^+_2)\in\Case^+$ \emph{refines} $(\cmf_1,\cmf_2)\in\Case$ if both $\cmf^+_i$ are of the shape $\cmf_i\otimes f_i$ for some $f_i$. For every $\cmf^+\in\Case^+$ the case from $\Case$ refined by $\cmf^+$ can be uniquely read off from $\cmf^+$, and we denote it by $\cmf$. 

We now construct $\sep^+_1$ and $\sep^+_2$. For every $\cmf\in\Case$ define a \emph{$\Sep_\cmf$-star-type} of a given point $v$ to be the full description of $v$ and its neighbors with respect to $\SepCaseType$: the $\Sep_\cmf$-type of $v$ and information which $\Sep_\cmf$-types are true in $v$'s $R$-children for every $R\in\sigma$. The set of all $\Sep_\cmf$-star-types will be denoted $\SepCaseStarType$. Formally, let $\nabla^R\Psi$ for a finite set $\Psi$ of formulae be a shorthand for $\bigwedge_{\psi\in\Psi}\mydiam{R}\psi\wedge\mybox{R}\bigvee_{\psi\in\Psi}\psi$, and set
\[
\SepCaseStarType = \{d\wedge\bigwedge_{R\in\sigma}\nabla^R \mathbb{D}^R\ |\ d\in\SepCaseType,\ \mathbb{D}^R\subseteq\SepCaseType\}.
\]

For every type $t$, case $\cmf^+=(\cmf^+_1,\cmf^+_2)\in\Case^+$, and side $i\in \{1,2\}$ we define:
\[
\sep^+_i(\cmf^+,t) = \bigvee
\left\{
  \delta \in\SepCaseStarType\mid
  \begin{aligned}
  \case(\cmf^+_i)\wedge\delta\wedge t\not\models\bot
  \end{aligned}
\right\}.
\]
That is, $\sep^+_i(\cmf^+,t)$ is the disjunction of all $\Sep_\cmf$-star-types consistent with $t$ in case $\cmf^+$, for $\cmf$ the case from $\Case$ refined by $\cmf^+$. If the set of disjuncts is empty, then as usual 
$\sep^+_i(\cmf_i^+,t) = \bigvee\emptyset=\bot$. This may happen when the type $t$ is inconsistent with case $\cmf^+_i$, for example when already $\case(\cmf^+_i)$ is unsatisfiable, or when
$\mydiam{R}(a\wedge p)\in t$ but $\case(\cmf^+_i)$ enforces that $p$ is false at nominal $a$.

% Note that the definition of $\sep^+_i$ covers a situation when $t$ cannot be satisfied together with $\case(\cmf^+_i)$. This could happen for example when $\mydiam{R}(a\wedge p)\in t$ (with $R$ not necessarily in $\sigma$) but $\cmf^+_i$ enforces that $p$ is false at $a$. Another scenario when $\case(\cmf_i)\wedge t$ is not satisfiable is when $\case(\cmf_i)$ alone is already unsatisfiable. In any such situation $\sep^+_i(\cmf_i^+,t) = \bigvee\emptyset=\bot$.

This finishes the construction of $\Case^+,\sep^+_1,\sep^+_2$. Before we provide the correctness proof, let us illustrate the construction by continuing Examples~\ref{ex:motivation} and~\ref{ex:hypermosaics}.

\begin{example}\label{ex:separator-construction}
We describe how to construct separators along the elimination described in Example~\ref{ex:hypermosaics}. For the sake of readability we only loosely follow the general construction and focus on its key mechanism. Consider the step $\ell$ when the hypermosaic $H_{\ast}$ is eliminated and assume we have a hyperseparator for it. For simplicity assume that this hyperseparator is actually case-free and separates $\mos^\ast$, in the sense that there are $\theta_a,\theta_R,\theta_S\in\Hmc(\sigma)$ with $\psi_x\models\theta_x$ for $x=a,R,S$ and $\theta_a\wedge\theta_R\wedge\theta_S\models\bot$. We construct a hyperseparator $\Case,\sep_1,\sep_2$ which separates $H_{\ast R}$ and $H_{\ast S}$. We distinguish three cases $\Case=\{\cmf_a,\cmf_R,\cmf_S\}$ depending on the behavior of the nominal $a$ on the left: $\cmf_x$ says that $\theta_x$ is false at the nominal $a$ on the left, for $x=a,R,S$. Note that, unlike in our general construction, the case distinction $\Case$ is not necessarily exclusive, but this is not important. What is crucial is that the distinction is exhaustive, thanks to $\theta_a\wedge\theta_R\wedge\theta_S\models\bot$. We set:
\[
\sep_1(\cmf_a,\mydiam{T}\psi_a)=\bot, \hspace{1cm} \sep_1(\cmf_T,\mydiam{T}\psi_a)=\mydiam{T}\neg\theta_T,
\]
and for the other side
\[
\sep_2(\cmf_T,\mybox{T}\psi_T)=\mybox{T}\theta_T
\]
for $T=R,S$. The remaining values of $\sep_i$'s are irrelevant and we set them to $\top$. It is easily verified that in every case formulae follow from the respective types, and hence $\Case,\sep_1,\sep_2$ is a hyperseparator. It separates $H_{\ast R}$ because for every case $\cmf\in\Case$ there is a mosaic $\mos\in H_{\ast R}$ with inconsistent consequences: $\mos^R$ for $\cmf_a$ and $\cmf_R$, and $\mos_S$ for $\cmf_S$. This actually shows that the hyperseparator separates $H_{\ast S}$ and $H_1$ as well. Continuing the construction we build a hyperseparator for $H_0$ in a similar way, and in our example this can be done without further refinement of the case distinction.\qed 
\end{example}

We now prove correctness of the construction, that is, the triple $\Case^+,\sep^+_1,\sep^+_2$ is a hyperseparator, and separates every $H$ eliminated up to round $\ell+1$.
 
% \begin{lemma}\label{lem:construction-correctness}
%     The triple $\Case^+,\sep^+_1,\sep^+_2$ is a hyperseparator, and separates every $H$ eliminated up to round $\ell+1$.
% \end{lemma}
% 
% \begin{proof}
We first show that $\Case^+,\sep^+_1,\sep^+_2$ is a hyperseparator. To see that $\Case^+$ is exhaustive take models $\M_1,\M_2$. Since $\Case$ was exhaustive, there is $\cmf=(\cmf_1,\cmf_2)\in\Case$ such that  for both $i$, $\M_i$ satisfies $\case(\cmf_i)$. Define $f_i(a)$ to be the $\Sep_\cmf$-type of $a$ in $\M_i$. It follows that $\M_i$ satisfies $\case(\cmf_i\otimes f_i)$, and hence the pair of models adheres to $(\cmf_1\otimes f_1,\ \ \cmf_2\otimes f_2)\in\Case^+$. That is, $\Case^+$ is exhaustive. Since $\sep^+_i$ is a disjunction ranging over all consistent possibilities, it follows that:
\[
\case(\cmf^+_i)\wedge t \models \sep^+_i(\cmf^+,t)
\]
for all $t$ and $i$. This proves that $\Case^+,\sep^+_1,\sep^+_2$ is a hyperseparator.

It remains to prove that it separates every hypermosaic $H$ eliminated up to round $\ell+1$. Assume such an $H$.
We need to show that for every $\cmf^+\in\Case^+$ there is $\mos\in H$ such that:
\begin{align}
    \bigwedge_{i=1,2}\bigwedge_{t\in \mos_i}\sep^+_i(\cmf^+,t) \models \bot. \label{eq:nominals proof inconsistency 1}
\end{align}
Note that for every $\cmf\in\Case$, different $\Sep_\cmf$-types are mutually exclusive, and hence so are different $\Sep_\cmf$-star-types.
Consequently, it follows for every $\cmf^+\in\Case^+$ that $\sep^+_i(\cmf^+,t)\wedge\sep^+_j(\cmf^+,s)$ is consistent iff $\sep^+_i(\cmf^+,t)$ and $\sep^+_j(\cmf^+,s)$ have a common disjunct.
Assume towards contradiction that~\eqref{eq:nominals proof inconsistency 1} is false. This means that there is $\cmf^+\in\Case^+$ which refines some $\cmf\in\Case$, and such that for every mosaic $\mos\in H$ there is a $\Sep_\cmf$-star-type $\delta_\mos$ which is a disjunct of $\sep_i^+(\cmf^+,t)$ for every $i$ and $t\in \mos_i$.

Define for each $i\in \{1,2\}$ and $d\in\SepCaseType$:
\[
\mos^d_i=\{t\in\TP\ |\ \case(\cmf^+_i)\wedge t\wedge d\not\models\bot\}.
\]
We now define a new hypermosaic:
\[
H' = \{(\mos^d_1,\mos^d_2)\ |\ d\in\SepCaseType\}.
\]

\begin{restatable}{claim}{claimconstruction}
\label{cl:construction claim}
The following hold:
\begin{enumerate}
  \item $H'$ is nominal clean.
  \item $H'$ is a witness for every $\mos\in H$.
\end{enumerate}
\end{restatable}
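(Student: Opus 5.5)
We prove both items directly from the definitions. Fix $\cmf^+=(\cmf_1\otimes f_1,\cmf_2\otimes f_2)\in\Case^+$ refining $\cmf\in\Case$, and for every $\mos\in H$ the common disjunct $\delta_\mos=d_\mos\wedge\bigwedge_{R\in\sigma}\nabla^R\mathbb{D}^R_\mos$ given by the failure of \eqref{eq:nominals proof inconsistency 1}. Two facts are used throughout. First, distinct $\Sep_\cmf$-types are mutually exclusive and every point satisfies exactly one of them. Second, every case in $\Case$ still contains the conjuncts from the base case: for each $a\in\Nom_0$, $\cmf_i(a)$ implies a full type $s^i_a\in\TP$. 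This holds because refinement only adds conjuncts, which is a trivial induction over the rounds. Consequently $\case(\cmf^+_i)\models @_a(s^i_a\wedge f_i(a))$.

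\emph{Nominal-cleanness.} Let $a\in\Nom_0$ and let $t\in\mos^d_i$ with $a\in t$. Then $\case(\cmf^+_i)\wedge t\wedge d$ is satisfiable, and at a witnessing point, which is $a^\M$, both $s^i_a$ and $f_i(a)$ hold. Since types are maximal, $t=s^i_a$. Since $\Sep_\cmf$-types are exclusive, $d=f_i(a)$. So at most one type contains $a$ on side $i$, and it occurs only in the single mosaic $(\mos^{f_i(a)}_1,\mos^{f_i(a)}_2)$.

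\emph{Witness: monotonicity.} Fix $\mos\in H$. For every $t\in\mos_i$, the star-type $\delta_\mos$ is a disjunct of $\sep^+_i(\cmf^+,t)$. Hence $\case(\cmf^+_i)\wedge\delta_\mos\wedge t$ is satisfiable, so in particular so is $\case(\cmf^+_i)\wedge d_\mos\wedge t$, and therefore $t\in\mos^{d_\mos}_i$. Thus $\mos_i\subseteq\mos^{d_\mos}_i$ for both $i$, and $H'$ extends $H$ (condition Mon).

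\emph{Witness: diamonds.} Let $\mydiam{R}\chi\in t\in\mos_i$. Take a pointed model $\M,w$ of $\case(\cmf^+_i)\wedge\delta_\mos\wedge t$, and an $R$-successor $v$ of $w$ satisfying $\chi$. Let $t'=\tp_\M(v)$ and let $d'$ be the $\Sep_\cmf$-type of $v$. By the semantics, $t\rightsquigarrow_R t'$ and $\chi\in t'$. Since $\M\models\case(\cmf^+_i)$, we also get $t'\in\mos^{d'}_i$. If $R\notin\sigma$, this already makes $(\mos^{d'}_1,\mos^{d'}_2)\in H'$ a witness.

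The main obstacle is the extra requirement when $R\in\sigma$, namely $\mos\rightsquigarrow_R(\mos^{d'}_1,\mos^{d'}_2)$. This is handled by the star-type. Since $w$ satisfies $\mybox{R}\bigvee\mathbb{D}^R_\mos$ and the $\Sep_\cmf$-types are exclusive, we get $d'\in\mathbb{D}^R_\mos$. Now let $s\in\mos_j$ for either $j\in\{1,2\}$. As before, $\case(\cmf^+_j)\wedge\delta_\mos\wedge s$ has a model $\M',w'$. The conjunct $\mydiam{R}d'$ of $\delta_\mos$ gives an $R$-successor $v'$ of $w'$ satisfying $d'$. Its type $s'=\tp_{\M'}(v')$ satisfies $s\rightsquigarrow_R s'$ and, since $\M'\models\case(\cmf^+_j)$, also $s'\in\mos^{d'}_j$. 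This gives $\mos\rightsquigarrow_R(\mos^{d'}_1,\mos^{d'}_2)$, so every diamond in every type of $\mos$ has a witness in $H'$. Therefore $H'$ is a witness for every $\mos\in H$, which proves the claim.
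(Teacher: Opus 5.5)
Your proof is correct and follows the paper's argument step by step: nominal-cleanness because $\cmf^+_i$ fixes both the type and the $\Sep_\cmf$-type of each nominal, \textbf{Mon} via the $\Sep_\cmf$-type $d_\mos$ forced by $\delta_\mos$, and the $R\in\sigma$ witness condition by carrying the successor's $\Sep_\cmf$-type $d'$ over to the other side through the conjunct $\mydiam{R}d'$ of $\delta_\mos$. You also make explicit two points the paper leaves implicit, namely that the base-case type conjuncts survive every refinement, and that $d'\in\mathbb{D}^R_\mos$ follows from $\mybox{R}\bigvee\mathbb{D}^R_\mos$ together with exclusivity of $\Sep_\cmf$-types.
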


% \begin{claim}\label{cl:construction claim}
% We claim that:
% \begin{enumerate}
%   \item $H'$ is nominal clean.
%   \item $H'$ is a witness for every $\mos\in H$.
% \end{enumerate}
% \end{claim}

%We have proven Claim~\ref{cl:construction claim}. 
\begin{proof}
We first prove that $H'$ is nominal-clean. Fix a nominal $a\in\Nom$ and side $i\in \{1,2\}$.
Note that $\cmf_i^+$ specifies the type $t$ and $\Sep_\cmf$-type $d$ of $a$, due to the inductive base of the separator construction and the inductive step, respectively. Hence, on the $i$-th side $a$ can only belong to the type $t$ in the mosaic $(\mos^d_1,\mos^d_2)$.

We now prove the other claim that $H'$ is a witness for every $\mos\in H$. Take $\mydiam{R}\chi\in t\in\mos_i$. Consider the $\Sep_\cmf$-star-type $\delta_\mos$. Since $\delta_\mos$ is a disjunct of $\sep^+_i(\cmf^+,t)$ there is a model
\[
\M,w\models\case(\cmf^+_i)\wedge\delta_\mos\wedge t.
\]
Since $\M,w\models t$, it follows that $w$ has an $R$-child $w'$ satisfying some $t'\wedge d$ with $\chi\in t'\in\TP$ and $d\in\SepCaseType$. In particular $t'\in \mos_i^d$. Denote $\mos'=(\mos^d_1,\mos^d_2)$. Since $wRw'$ as well as $\tp_\M(w)=t$ and $\tp_\M(w')=t'$, we have $t\rightsquigarrow_{R} t'$.

If $R\notin\sigma$ then we are done with showing that $\mos'$ witnesses $\mydiam{R}\chi\in t\in\mos_i$. Otherwise $R\in\sigma$ in which case we need to additionally prove $\mos\rightsquigarrow_{R}\mos'$. Take any $s\in \mos_j$. We need to find some $s'\in \mos^d_j$ such that $s\rightsquigarrow_{R} s'$. Because $\delta_\mos$ is a disjunct of $\sep^+_j(\cmf^+,s)$ there is a model
\[
\N,v\models\case(\cmf^+_j)\wedge\delta_\mos\wedge s.
\]
Since $\M,w\models\delta_\mos$ as well as $wRw'$ and $\M,w'\models d$, we get $\delta_\mos\models\mydiam{R}d$. Hence $v$ has an $R$-child $v'$ in $\N$ such that $\N,v'\models d$. Take the type $s'=\tp_\N(v')$ of this $v'$. Since $\N,v'\models\case(\cmf^+_j)\wedge d\wedge s'$, we get $s'\in\mos_j^d$. Moreover, $s\rightsquigarrow_{R} s'$ follows from $vRv'$. This completes the proof that $\mos'$ witnesses $\mydiam{R}\chi\in t\in\mos_i$.

It remains to prove that $H'$ extends $H$. Take $\mos\in H$. Take the unique $\Sep_\cmf$-type $d$ enforced by $\delta_\mos$ in the sense that $\delta_\mos\models d$. For every $i$ and $t\in \mos_i$ we know that $\case(\cmf^+_i)\wedge\delta_\mos\wedge t$ is consistent, and hence so is $\case(\cmf^+_i)\wedge d\wedge t$. This implies $t\in \mos_i^d$. In other words, $\mos_i\subseteq \mos^d_i$, for both $i$. This proves that $H'$ extends $H$ and hence completes the proof of Claim~\ref{cl:construction claim}.
\end{proof}

Thus, by the induction assumption $H'$ was eliminated up to the previous round $\ell$, and so $\Case,\sep_1,\sep_2$ separates $H'$.
Thus, for the unique $\cmf\in\Case$ refined by $\cmf^+$ there is $d\in\SepCaseType$ such that $(\mos^d_1,\mos^d_2)\in H'$ has inconsistent consequences:
\begin{align}
\bigwedge_{i=1,2}\bigwedge_{t\in \mos^d_i}\sep_i(\cmf,t) \models \bot.\label{eq:nominals proof inconsistency 2}    
\end{align}
Since $\case(\cmf^+_i)\models\case(\cmf_i)$ and, for every  
$t\in \mos^d_i$, $\case(\cmf_i)\wedge t\models\sep_i(\cmf,t)$, it follows that
\[
\case(\cmf^+_i)\wedge t\models\sep_i(\cmf,t),
\]
for every $i\in\{1,2\}$, and $t\in \mos_i^d$. Since $\sep_i(\cmf,t)\in\Sep_\cmf$, either $\sep_i(\cmf,t)$ or $\neg\sep_i(\cmf,t)$ belongs to $d$. It follows from $t\in\mos^d_i$ that $\case(\cmf^+_i)\wedge t\wedge d$ is consistent and therefore the above implies
\[
\sep_i(\cmf,t) \in d.
\]
Since this is true for every $t\in\mos^d_i$ and $i$, by~\eqref{eq:nominals proof inconsistency 2} $d$ is inconsistent, a contradiction. This completes the proof of~\eqref{eq:nominals proof inconsistency 1} and thus finishes the proof of correctness. This % proves % Lemma~\ref{lem:construction-correctness}, which in turn 
completes the proof of Lemma~\ref{lem:sep construction}.
% \nb{JJ: we have the proof of a lemma inside the proof of another lemma}
% Note that in the special case when $\case(\overline{\cmf}^+_i)$ is inconsistent with $t$ the disjunction $\sep^+_i(\overline{\cmf}_1^+,t)$ is empty and hence equivalent to $\bot$.
% \end{proof}

It remains to sum up our constructions.

\begin{restatable}{theorem}{thmsummingup}\label{thm:summing-up}
  If $\varphi_1,\varphi_2$ are not jointly $\Hmc(\sigma)$-consistent, then we can compute in fourfold exponential time an $\Hmc(\sigma)$-separator (of fourfold exponential size) for $\varphi_1,\varphi_2$.
\end{restatable}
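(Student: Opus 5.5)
The plan is to chain together Corollary~\ref{cor:hypermosaics}, Lemma~\ref{lem:sep construction} and Lemma~\ref{lem:hypersep2sep}, and to track the sizes of the objects along the way.

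First, assume $\varphi_1,\varphi_2$ are not jointly $\Hmc(\sigma)$-consistent. By Corollary~\ref{cor:hypermosaics}, every hypermosaic $\{(\{t_1\},\{t_2\})\}$ with $\varphi_i\in t_i$ is eliminated at some round. Let $n=|\varphi_1|+|\varphi_2|$. There are at most $2^{n}$ types, $2^{2^{O(n)}}$ mosaics and $2^{2^{2^{O(n)}}}$ hypermosaics. Since $\Zmc_{\ell+1}\subseteq\Zmc_\ell$, the procedure stabilizes after some $L\le 2^{2^{2^{O(n)}}}$ rounds. Each round can be carried out in time polynomial in the number of hypermosaics, because witnesses are checked by ranging over $\Zmc_\ell$. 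So the elimination itself runs in triple exponential time.

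Next, we run the construction of Lemma~\ref{lem:sep construction} alongside the elimination for all $L$ rounds. This yields a hyperseparator $\Case,\sep_1,\sep_2$ that separates each such singleton hypermosaic. Lemma~\ref{lem:hypersep2sep} then gives an $\Hmc(\sigma)$-separator $\theta_{t_1,t_2}$ for $t_1,t_2$, and we combine these via \eqref{eq:compile}:
\[
\bigvee_{t_1\ni\varphi_1}\bigwedge_{t_2\ni\varphi_2}\theta_{t_1,t_2}.
\]
This is an $\Hmc(\sigma)$-separator for $\varphi_1,\varphi_2$, and the combination costs only an exponential factor.

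The main obstacle is the size analysis of the iterated refinement. A naive bound explodes: $\Case$ gains a factor $|\SepCaseType|^{2|\Nom_0|}$ in each round, and $|\SepCaseType|=2^{|\Sep_\cmf|}$ grows with the number of formulae. Iterating this over triply exponentially many rounds would be non-elementary. I would control it through two observations.

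\begin{itemize}
\item $|\Sep_\cmf|\le 2|\TP|\le 2^{n+1}$ in every round, independently of $\ell$. Hence at most $2^{2^{n+1}}$ $\Sep_\cmf$-types exist.
\item The depth of case splitting can be bounded by making the refinement idempotent. After a refinement step, $\sep^+_i(\cmf^+,t)$ is determined by semantic data: which $\Sep_\cmf$-star-types are consistent with $t$ and $\case(\cmf^+_i)$. So two cases inducing the same consistency pattern can be merged. Concretely, cases should be identified with the sets of satisfiable combinations they induce.
\end{itemize}

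Rather than bounding $|\Case|$ directly, I would bound the number of distinct formulae and cases up to logical equivalence. Each $\sep^+_i(\cmf^+,t)$ is a disjunction of star-types over a doubly exponential alphabet, so there are at most triply exponentially many distinct ones per round. Representing formulae as DAGs, each round adds only that many new nodes on top of the previous round's nodes.

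Over $L$ rounds this gives a triply exponential DAG. Unfolding it to a tree can add depth proportional to $L$. Because each $\nabla^R$ layer increases modal depth by one and the separators are Boolean combinations of star-types, I expect each unfolded $\sep_i(\cmf,t)$ to have size $m\le 2^{2^{2^{2^{O(n)}}}}$. The number of cases, taken modulo equivalence, should remain triply exponential: each case is specified by $\Sep_\cmf$-types for the $O(n)$ nominals, iterated through the chain of rounds, with identical cases merged.

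With $|\Case|$ triply exponential, the factor $2^{|\Case|}$ in Lemma~\ref{lem:hypersep2sep} is fourfold exponential, and so is $m$. This gives the stated fourfold exponential size and running time. Checking the satisfiability conditions, such as $\case(\cmf^+_i)\wedge\delta\wedge t\not\models\bot$, is in \ExpTime\ in the formula size via the $\Hat$ decision procedure, so it stays within this bound. The delicate point is the claimed triply exponential bound on $|\Case|$ across all rounds, and that is the step I would scrutinize most.
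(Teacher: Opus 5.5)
\textbf{Gap: the bound on the number of cases.} Your high-level chain (Corollary~\ref{cor:hypermosaics}, Lemma~\ref{lem:sep construction}, Lemma~\ref{lem:hypersep2sep}, then \eqref{eq:compile}) is the paper's. The size analysis, however, has a genuine gap, exactly at the step you flag.

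You bound the number of rounds $L$ only by the number of hypermosaics, which is triply exponential. With that $L$, the refinement step $|\Case^+|\le|\Case|\cdot|\SepCaseType|^{2|\Nom_0|}$ gives $|\Case|\le(2^{2^{O(n)}})^{L}$. This is fourfold exponential, so the factor $2^{|\Case|}$ in Lemma~\ref{lem:hypersep2sep} becomes fivefold exponential (elementary, but one exponential too many).

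Merging cases does not rescue this, for three reasons:
\begin{enumerate}
\item Counting sequences of $\Sep_\cmf$-type choices for the nominals along the rounds gives the same $(2^{2^{O(n)}})^{L}$.
\item Identifying cases by their consistency patterns still leaves fourfold exponentially many classes. Such a pattern assigns to each of the $2|\TP|$ pairs $(i,t)$ a subset of the triply exponentially many star-types. Your claim that only triply exponentially many distinct $\sep^+_i(\cmf^+,t)$ arise per round fails for the same reason.
\item Merging cases that are not logically equivalent breaks the construction itself. Claim~\ref{cl:construction claim} uses that $\case(\cmf^+_i)$ pins down the type and the $\Sep_\cmf$-type of every nominal; this is what makes $H'$ nominal-clean.
\end{enumerate}

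\textbf{Missing idea: \textbf{Mon} gives doubly exponentially many rounds.} The elimination stops after doubly exponentially many rounds, because of condition \textbf{Mon}.

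First, elimination is monotone. Let $H,H'\in\Zmc_0$ with $H'$ extending $H$; then $H'\in\Zmc_\ell$ implies $H\in\Zmc_\ell$. Indeed, a witness $H''$ for some $\mos'\in H'$ with $\mos\subseteq\mos'$ componentwise also extends $H$. Moreover $\mos'\rightsquigarrow_R\mos''$ implies $\mos\rightsquigarrow_R\mos''$, so $H''$ is a witness for $\mos$ as well.

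Now let $H\in\Zmc_\ell\setminus\Zmc_{\ell+1}$ with $\ell\ge 1$. Some $\mos\in H$ has a witness $H_1\in\Zmc_{\ell-1}\setminus\Zmc_\ell$. $H_1$ extends $H$, but $H$ does not extend $H_1$: otherwise monotonicity would put $H_1$ into $\Zmc_\ell$. Iterating gives a chain $H\prec H_1\prec\dots\prec H_\ell$ that is strict in the extension preorder. This is a strictly increasing chain of downward closures of hypermosaics inside the set of all mosaics, so $\ell\le 2^{2|\TP|}$.

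With this bound, the plain recurrences already suffice:
\begin{itemize}
\item $m_{\ell+1}\le m_\ell\cdot f(n)$, with $f$ triply exponential, from the number and size of $\Sep_\cmf$-star-types;
\item $|\Case_{\ell+1}|\le|\Case_\ell|\cdot g(n)$, with $g$ doubly exponential.
\end{itemize}
Together they yield triply exponential $m$ and $|\Case|$, with no merging and no DAG-unfolding argument. The final step via Lemma~\ref{lem:hypersep2sep} and \eqref{eq:compile} then contributes the fourth exponential.
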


\begin{proof}
Suppose $\varphi_1,\varphi_2$ are not jointly $\Hmc(\sigma)$-consistent. By Corollary~\ref{cor:hypermosaics}, all hypermosaics $\{(\{t_1\},\{t_2\})\}$ with $\varphi_1\in t_1,\varphi_2\in t_2$ are eliminated. Lemma~\ref{lem:hypersep2sep} gives us an $\Hmc(\sigma)$-separator for each such pair $t_1,t_2$ of types. These separators can be assembled to an $\Hmc(\sigma)$-separator of $\varphi_1,\varphi_2$ as in  Equation~\eqref{eq:compile}.

The analysis of both the running time of the procedure and the size of the constructed separators is based on two key observations. First, hypermosaics $H'$ witnessing some $\mos\in H$ have to satisfy Condition~\textbf{Mon}. This restriction ensures that the procedure terminates after at most doubly exponentially many rounds, despite the fact that there are triply exponentially many hypermosaics. Second, in each round $\ell$ of the elimination, one can bound the maximal size $m_\ell$ of $\sep_{i}(\cmf,t)$ independently of both the number and the size of the cases in~$\Case$. We give a brief inductive argument. 

Clearly, $m_{0}\leq |\sigma|$ by the definition of the base case. 
For the inductive step, observe that for a case $\cmf\in \Case$, a $\Sep_\cmf$-type is built from exponentially many formulae $\sep_i(\cmf,t)$, one for each type $t$ and, by induction, each of size at most $m_\ell$. Hence, the number of $\Sep_\cmf$-types is bounded by $2^{2^{n}}$ and their size is $O(2^n\times m_\ell)$. Since a 
$\Sep_\cmf$-star-type is determined by one $\Sep_\cmf$-type and a subset of $\Sep_\cmf$-types for every relation symbol, it follows that: 
\begin{enumerate}

    \item the number of $\Sep_\cmf$-star-types is at most $O(2^{2^n}\times 2^{2^{2^n}})$ and
    
   % \item the size of a $\Sep_\cmf$-type is $O(2^n\times m_\ell)$, and

    \item the size of a $\Sep_\cmf$-star-type is
    $O(2^n\times m_\ell+2^{2^n}\times 2^{n}\times m_\ell)$.
    
\end{enumerate}
Since the size of $\sep_i^+(\cmf^+,t)$ is bounded by the product of the bounds in Items~1 and~2 above, we obtain that
$m_{\ell+1}\leq m_\ell\times f(n)$ for some triply exponential function $f$. Since $\ell$ is at most doubly exponential, a triply exponential upper bound on $m_\ell$ follows. The number of cases can be bound in a similar fashion. 

% It remains to note that the separators constructed in Lemma~\ref{lem:hypersep2sep} and Equation~\eqref{eq:compile} grow only by a single exponential. 
It remains to note that the final step with Lemma~\ref{lem:hypersep2sep} and Equation~\eqref{eq:compile} only adds a single exponential blowup.
\end{proof}

We conclude this section with a comment on how to adapt the construction to hybrid logics with $@$ and $U$. 
%The modification is actually very simple. 
The only change is in the definition of $\Sep_\cmf$-star-types:
\begin{itemize}

    \item With $@$, a $\Sep_\cmf$-star-type additionally specifies $\bigwedge_{a\in\sigma}@_a d_a$ for $d_a\in\SepCaseType$.
    
    \item With $U$, we treat it as a relation in $\sigma$, that is, a $\Sep_\cmf$-star-type additionally specifies $\nabla^U\mathbb{D}^U$ for $\mathbb{D}^U\subseteq \SepCaseType$ and $\nabla^U$ defined as with any other $R$ in place of $U$.
    
\end{itemize}
The rest of the construction works without any further changes.
The proof of its correctness is the same, with Claim~\ref{cl:construction claim} additionally asserting that $H'$ is $@$-consistent or $U$-consistent, respectively.

\section{Separators in Graded Hybrid Logics}\label{sec:graded}

We extend our hybrid logics with \emph{graded modalities} $\langle \bowtie n\  R\rangle\varphi$ for ${\bowtie}\in\{\leq, \geq\}$ and $n\geq 0$ with the following semantics: 
\begin{align*}
\Mmc,w & \models \langle \bowtie n\ R\rangle \phi &&\text{if\ \ \ \  }|\{v\mid (w,v)\in R^\Mmc,\Mmc,v\models\phi\}|\bowtie n.
\end{align*}
%
% \[\Mmc,w\models \langle \bowtie n\ R\rangle \phi \quad \text{ if }|\{v\mid (w,v)\in R^\Mmc,\Mmc,v\models\phi\}|\bowtie n.\]
We refer with $\Gmc,\Gat,\Gfull$ to the extensions of 
$\Hmc,\Hat,\Hfull$, respectively, with these modalities, and set $\GL=\{\Gmc,\Gat,\Gfull\}$. 
The combination of nominals and graded modalities in the logics of 
\GL is widely used in description logic research and practice.
It underpins standards such as OWL and is supported by most reasoning systems~\cite{DL-Textbook}. Following standard description logic conventions, in the case of 
\Gfull we only allow for ungraded use of $U$, so for example~$\mydiam{U}\phi$ is allowed but $\langle \leq n\ U\rangle \phi$ is not.

As before, separator existence can be model-theoretically characterized using the appropriate  bisimulations~\cite{DBLP:journals/corr/abs-1910-00039,DBLP:journals/sLogica/Rijke00}. Let $\Lmc\in \GL$ and $\Lmc'$ the corresponding language in \HL. Then $\Lmc(\sigma)$-bisimulations are defined just as $\Lmc'(\sigma)$-bisimulations with conditions \emph{(back)} and \emph{(forth)} replaced by the following:  
% %
% \begin{definition}\label{def:bisi}
% Let $\M,\N$ be models, and $\sigma$ a signature. A \emph{$\Gmc(\sigma)$-bisimulation} between \M and \N 
% is a relation $Z\subseteq \dom(\M)\times \dom(\N)$ such that 
% every $(w,w')\in Z$ \emph{(atom)} and \emph{(nom)} from $\Hmc(\sigma)$-bisimulations and the following variants of the \emph{(back)} and \emph{(forth)} conditions for all $R\in \sigma\cap\Rel$: 
\begin{description}%[\IEEEsetlabelwidth{abc}]

    \item[(gforth)]  
     for every finite $S\subseteq \{v\mid(w,v)\in R^\M\}$, there is $S'\subseteq \{v'\mid (w',v')\in R^\N\}$ and a bijection $\pi:S\to S'$ such that $(v,\pi(v))\in Z$ for all $v\in S$, for all $R\in\sigma\cap\Rel$;
    
    \item[(gback)] for every finite $S'\subseteq \{v'\mid(w',v')\in R^\N\}$, there is $S\subseteq \{v\mid (w,v)\in R^\M\}$ and a bijection $\pi:S\to S'$ such that $(v,\pi(v))\in Z$ for all $v\in S$, for all $R\in\sigma\cap\Rel$;
\end{description}    
%
% A $\Gmc_@(\sigma)$-bisimulation is a $\Gmc(\sigma)$-bisimulation $Z$ that additionally satisfies Condition~\emph{(@)}, and a $\Gmc_U(\sigma)$-bisimulation is a $\Gmc(\sigma)$-bisimulation for which \emph{(gback)} and \emph{(gforth)} also hold for universal modality~$U$.
% 
% For $\Lmc\in \GL$, we write $\M,w\sim_{\Lmc,\sigma} \N,w'$ if there is an $\Lmc(\sigma)$-bisimulation $Z$ between \M and \N such that $(w,w')\in Z$. 
% \end{definition}
%
% Note that an $\Lmc(\sigma)$-bisimulation is surjective and inverse surjective / total whenever $U\in \sigma$. 
% We call formulae $\varphi,\psi\in\Lmc$ \emph{jointly consistent under $\Lmc(\sigma)$-bisimulations} if there are pointed models $\M,w$ and $\N,w'$ with $\M,w\models\varphi$, $\N,w'\models \psi$, and $\M,w\sim_{\Lmc,\sigma} \N,w'$.
The characterization of separator existence in Lemma~\ref{lem:joint-consistent} for \HL extends to all logics in $\GL$.
\begin{lemma}\label{lem:joint-consistent-graded}
Let $\Lmc\in\GL$. Then, for every $\varphi_1,\varphi_2\in \Lmc$ and every signature $\sigma$, the following are equivalent:
  \begin{enumerate}
  
      \item there is an $\Lmc(\sigma)$-separator for $\varphi_1,\varphi_2$;
      
      \item $\varphi_1,\varphi_2$ are not jointly $\Lmc(\sigma)$-consistent.
      
  \end{enumerate}
\end{lemma}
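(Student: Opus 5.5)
The plan follows the standard Robinson-style argument used for Lemma~\ref{lem:joint-consistent}, replacing ordinary $\sigma$-bisimulations by graded ones throughout.

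For the direction $(1)\Rightarrow(2)$, we first show that $\Lmc(\sigma)$-formulae are invariant under $\Lmc(\sigma)$-bisimulations. This is done by induction on the formula. The only new case is $\langle\bowtie n\ R\rangle\chi$ with $R\in\sigma$. If $w$ has at least $n$ $R$-successors satisfying $\chi$, pick a set $S$ of exactly $n$ of them. Condition (gforth) gives a set $S'$ of $R$-successors of $w'$ together with a bijection $\pi:S\to S'$ contained in $Z$. By the induction hypothesis every element of $S'$ satisfies $\chi$, so $w'\models\langle\geq n\ R\rangle\chi$. The case $\leq n$ follows symmetrically using (gback), since $\langle\leq n\ R\rangle\chi$ is equivalent to $\neg\langle\geq n{+}1\ R\rangle\chi$. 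Nominals, $@$ (via condition (@)) and the ungraded $U$ (via back/forth for $U$) are handled exactly as in the \HL case. Now suppose $\theta$ is a separator and $\M_1,w_1\sim\M_2,w_2$ with $\M_i,w_i\models\varphi_i$. Then $w_1\models\theta$, so $w_2\models\theta$ by invariance, which contradicts $\varphi_2\wedge\theta\models\bot$.

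For the direction $(2)\Rightarrow(1)$, argue contrapositively and assume no separator exists. Let $\Gamma=\{\theta\in\Lmc(\sigma)\mid\varphi_1\models\theta\}$. Then $\Gamma\cup\{\varphi_2\}$ is satisfiable. Otherwise compactness of FO, applied via the standard translation, yields finitely many members of $\Gamma$ whose conjunction is inconsistent with $\varphi_2$, and that conjunction is a separator. Take a model $\M_2,w_2$ of $\Gamma\cup\{\varphi_2\}$ and let $T$ be its full $\Lmc(\sigma)$-theory. Then $T\cup\{\varphi_1\}$ is satisfiable, since otherwise some $\neg\tau$ with $\tau\in T$ would lie in $\Gamma$, which is impossible. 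This gives $\M_1,w_1\models\varphi_1$ with $\Lmc(\sigma)$-equivalent points. Because of nominals and $@$, this requires care: the theory should be taken with respect to the whole pair of models, so that the $\Lmc(\sigma)$-theories of the named points $a^{\M_i}$, and for $U$ the whole model, agree as well. This is the same bookkeeping already done in~\cite[Theorem~5.6]{DBLP:journals/tocl/ArtaleJMOW23}.

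Next, pass to $\omega$-saturated elementary extensions of $\M_1$ and $\M_2$ (in the FO sense). We claim that on such models $\Lmc(\sigma)$-equivalence, namely $Z=\{(v,v')\mid v,v' \text{ satisfy the same } \Lmc(\sigma)\text{-formulae}\}$, restricted to points whose relationship to the distinguished points is definable, is a graded bisimulation. Conditions (atom) and (nom) are immediate, as is (@), since $@_a\chi$ transfers between equivalent points. The main obstacle is (gforth). Given finite distinct $R$-successors $v_1,\dots,v_k$ of $v$ with $(v,v')\in Z$, we must find distinct successors $v'_1,\dots,v'_k$ of $v'$ with $v_j\,Z\,v'_j$. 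For this, group the $v_j$ by their $\Lmc(\sigma)$-type. For each type class of size $m$ and each finite fragment $\Delta$ of that type, $v\models\langle\geq m\ R\rangle\bigwedge\Delta$, hence $v'$ satisfies the same formula. By $\omega$-saturation, applied to the FO-type in variables $x_1,\dots,x_m$ expressing ``pairwise distinct $R$-successors of $v'$, each realizing the type'', we obtain $m$ distinct realizers. Different type classes are automatically disjoint, so combining the classes yields an injective map into the successors of $v'$ whose pairs lie in $Z$. (gback) is symmetric, and $U$ is treated as an ungraded relation, using saturation for single successors. Hence $\varphi_1,\varphi_2$ are jointly $\Lmc(\sigma)$-consistent, completing the contrapositive.
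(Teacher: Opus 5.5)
Your proof is correct, and it is precisely the standard compactness-plus-$\omega$-saturation argument the paper invokes without spelling out; it only refers back to Lemma~\ref{lem:joint-consistent} and \cite[Theorem~5.6]{DBLP:journals/tocl/ArtaleJMOW23}. The one thing to clean up is the qualifier ``restricted to points whose relationship to the distinguished points is definable'': your (gforth), (gback) and $U$ arguments already show that the \emph{full} $\Lmc(\sigma)$-equivalence relation between the two $\omega$-saturated models is a graded bisimulation (a restricted relation would not obviously contain the partners that saturation produces), and the extra care you mention for named points and $U$ is automatic, since $@_a\chi$ and $\mydiam{U}\chi$ already lie in the pointed theory $T$.
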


Based on this lemma, one can adapt the techniques introduced in~\cite{DBLP:journals/tocl/ArtaleJMOW23} for \HL (as discussed at the beginning of Section~\ref{sec:hyper}) to show that
separator existence is in \TwoExpTime for $\Gfull$ and in \coNExpTime for \Gmc and \Gat.
The witness conditions for mosaic elimination can be obtained from the witness conditions developed below for hypermosaics. Matching lower bounds can also be obtained by adapting the lower bound proofs in~\cite{DBLP:journals/tocl/ArtaleJMOW23}. In fact, the \TwoExpTime lower bound carries over directly, since the $\Hfull(\sigma)$-bisimulations used in the reduction are already $\Gfull(\sigma)$-bisimulations (see also the size lower bound proof below). Slightly more care is required for the \coNExpTime lower bound, since the $\Hmc(\sigma)$-bisimulation depicted in~\cite[Figure~10]{DBLP:journals/tocl/ArtaleJMOW23}
is not a $\Gmc(\sigma)$-bisimulation. It can, however, be transformed into one by taking a binary tree also on the left-hand side and then adding an additional layer of leaves: a single $R$-successor interpreting nominal $a$ on the left, and a fresh $R$-successor for each leaf on the right.

Unfortunately, as before, the upper bound proofs are not constructive. Hence, we extend the hypermosaic elimination procedure from the previous section to the logics in \GL. 

\subsection{Hypermosaic Elimination}

As in the ungraded case, we focus on the procedure for the basic case without $@$ or $U$. The adaptations needed for these additional features are the same as in the ungraded case and will not be discussed in any detail again. 

Let $\phi_1,\phi_2$ be input formulae, and $\sigma$ be a signature. The notions of set of subformulae $\sub(\phi_1,\phi_2)$, ($\sub(\phi_1,\phi_2)$-)types, mosaics, hypermosaics, extensions of hypermosaics, and realizability in bisimilar models are all defined as before. 
To define the elimination condition, we need to adapt the notion of a witness. 

Let $H,H'$ be hypermosaics, and let $\mos\in H$ and $R\in\Rel$. Then $H'$ is an \emph{$R$-witness for $\mos$} if there is a function
\[
\Rmc:\TP\times\{1,2\}\times H'\times\TP\to\mathbb{N}\cup\{\infty\}
\]
that satisfies the following conditions:
\begin{enumerate}[label=(\roman*)]
    
    \item\label{it:R-wit 1} For all $t,s\in\TP$, $i\in\{1,2\}$, $\mos'\in H'$, if $\Rmc(t,i,\mos',s)>0$, then $t\in \mos_i$ and $s\in \mos'_i$;

    \item\label{it:R-wit 2} For all $t,s\in\TP$, $i\in\{1,2\}$, $\mos'\in H'$, if $a\in s$ for some nominal $a$ then $\Rmc(t,i,\mos',s)\leq 1$;
    
    \item\label{it:R-wit 3} For all $i\in\{1,2\}$, $t\in \mos_i$, and $\langle \bowtie n\ R\rangle\chi\in\sub(\phi_1,\phi_2)$:
    \[
    \langle \bowtie n\ R\rangle\chi\in t\quad \text{ iff } \quad \sum_{\substack{\mos'\in H'\\ \chi\in s\in \TP}} \Rmc(t,i,\mos',s)\bowtie n.
    \]
    
    \item\label{it:R-wit 4} If $R\in\sigma$, then for all $\mos'\in H'$, all $i,j\in\{1,2\}$, and all $t\in \mos_i,t'\in \mos_j$:
    \[
    \sum_{s\in \mos'_i}\Rmc(t,i,\mos',s)=\sum_{s'\in \mos'_j}\Rmc(t',j,\mos',s').
    \]

\end{enumerate}
The equations employ the standard extension of arithmetic to $\mathbb{N}\cup\{\infty\}$ with $\infty+x=x+\infty=\infty$ for every $x$.
We call a function $\Rmc$ as above an \emph{$R$-counting function for $\mos,H'$}.
Intuitively, $\Rmc(t,i,\mos',s)$ says how many copies of the type $s$ from within mosaic $\mos'$ are $R$-successors of the type $t$ on the side $i$.
Condition~\ref{it:R-wit 1} states that only types from $\mos$ and copies of types from the respective side of $\mos'$ are linked.
Condition~\ref{it:R-wit 2} states that nominals are never copied more than once. Condition~\ref{it:R-wit 3} says that all constraints imposed by graded modalities are respected.
Condition~\ref{it:R-wit 4} ensures that the picked successor types can be realized in a bisimilar way in $H'$.

We say that $H'$ is a \emph{witness} for $\mos\in H$ if $H'$ extends $H$ and is an $R$-witness for $\mos$ for every $R\in\Rel$. We now give the elimination procedure for $\Gmc$.

\emph{Elimination procedure for $\Gmc$}. Let $\Zmc_0$ denote the set of all nominal-clean and $\sigma$-consistent hypermosaics. We obtain $\Zmc_{i+1}$ from $\Zmc_{i}$ by dropping all $H$ that contain a mosaic $\mos$ that does not have a witness in $\Zmc_{i}$. Let $\Zmc^{\ast}$ be the set of hypermosaics where this stabilizes. We have a graded variant of Theorem~\ref{thm:correctness-elimination}.

\begin{restatable}{theorem}{thmcorrectnesseliminationgraded}
\label{thm:correctness-elimination-graded}
For a hypermosaic $H$, the following are equivalent: 
\begin{itemize}
    \item $H$ is realizable by a $\Gmc(\sigma)$-bisimulation;
    \item $H\in \Zmc^*$.
\end{itemize}
\end{restatable}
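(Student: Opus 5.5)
We follow the two directions of the proof of Theorem~\ref{thm:correctness-elimination}, replacing the witness condition by the counting conditions \ref{it:R-wit 1}--\ref{it:R-wit 4}.

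\emph{Soundness (realizable implies not eliminated).} Assume $H$ is realized in $\M_1,\M_2$ by a $\Gmc(\sigma)$-bisimulation $Z$. We may take $Z$ to be the maximal $\Gmc(\sigma)$-bisimulation; its equivalence-like structure then groups elements into bisimilarity classes. Let $H^Z$ be the set of mosaics $(\mos_1,\mos_2)$ where, for a class $C$ of $Z$-related elements (those reachable via $Z\circ Z^{-1}$ chains), $\mos_i$ is the set of types realized by elements of $C$ in $\M_i$. Then $H^Z$ is $\sigma$-consistent and nominal-clean, and it extends $H$, since every mosaic realized by $Z$ lies inside one class. It remains to show that $H^Z$ is a witness for every $\mos\in H^Z$. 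Fix $R$ and $\mos$, and pick for each $t\in\mos_i$ a representative $w_t$ of type $t$ in the class. Define $\Rmc(t,i,\mos',s)$ as the number (possibly $\infty$) of $R$-successors of $w_t$ in $\M_i$ that have type $s$ and lie in the class of $\mos'$. Conditions \ref{it:R-wit 1}--\ref{it:R-wit 3} are then immediate, using that a nominal is realized by a single element. For \ref{it:R-wit 4} with $R\in\sigma$, we use (gforth)/(gback): bisimilar $w_t, w_{t'}$ have the same number of $R$-successors in each bisimulation class, since finite subsets of one side biject into the same class on the other side, and this extends to infinite cardinalities via arbitrarily large finite subsets. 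Hence $H^Z$ survives every round, and by monotonicity so does $H$ (arguing as before, any $H$ extended by a surviving witness system survives).

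\emph{Completeness (not eliminated implies realizable).} Take $H^*\in\Zmc^*$ extending $H$ that is a witness for each of its mosaics, with counting functions $\Rmc_\mos$. The model construction must now produce multiplicities. We let $W^{\M_i}$ consist of the nominal points $(t^i_a,\mos)$, each used once, together with copies $(t,\mos,k)$ for $k\in\mathbb{N}$ of non-nominal types. We then add $R$-edges from each copy of $(t,\mos)$ to exactly $\Rmc_\mos(t,i,\mos',s)$ distinct copies of $(s,\mos')$. For non-nominal $s$, infinitely many copies make fresh choices possible; for nominal $s$ the multiplicity is at most $1$ by \ref{it:R-wit 2}. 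The bisimulation $Z$ again relates all copies belonging to the same $\mos$. The truth lemma ($\psi\in t$ iff the point satisfies $\psi$) follows by induction, using \ref{it:R-wit 3} for graded formulae. The bisimulation conditions use \ref{it:R-wit 4}: for related points, the number of $R$-successors inside each $\mos'$ agrees on both sides, so finite sets can be matched bijectively.

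\emph{Main obstacle.} The delicate part is the completeness direction with nominals. A nominal point may be the target of edges required by many sources, while only one copy of it exists. Conditions \ref{it:R-wit 2} and \ref{it:R-wit 4} must jointly ensure that the equal-count requirement is met when nominal successors are unique: sums per $\mos'$ have to match across sides even when one side uses a single nominal element and the other side uses several copies. We would first check that edges into a nominal element from different sources do not conflict, since each source independently gets at most one edge to it. In the soundness direction, the main obstacle is ensuring that $H^Z$ is nominal-clean while classes are taken with respect to the transitive closure of $Z$; to handle this we would take as $H^Z$ the maximal realized mosaics, as in the ungraded proof, and check that the counting still works class-wise.
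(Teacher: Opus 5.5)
Your proposal is correct and follows the paper's proof. The completeness direction is the same copy-based construction: countably many copies of nominal-free types, one copy of each nominal type, $R$-edges dictated by the counting functions $\Rmc_{\mos}$, $Z$ relating points with equal mosaic component, the truth lemma via \ref{it:R-wit 3}, and back and forth via \ref{it:R-wit 4}. Two details should be made explicit, as the paper does: obtain $H^*$ from $H$ by a finite chain of witnesses in $\Zmc^*$ rather than just asserting it exists, and add fresh isolated points for nominals that occur in no type on side $i$ of $H^*$.

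For soundness, counting the $R$-successors of class representatives per class-mosaic and type, under the maximal and hence difunctional $\Gmc(\sigma)$-bisimulation, is exactly the ``quantitative reasoning'' the paper leaves implicit. It also gives nominal-cleanness directly, since only $a^{\M_i}$ realizes a type containing $a$, so the fallback to maximal realized mosaics in your last paragraph is unnecessary.
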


\begin{proof}(Sketch)
The direction from top to bottom is almost the same as without grades. Given a realized $H$ we get its appropriate extension $H'\in\Zmc^*$ exactly as before. The only difference is that proving that $H'$ witnesses each $\mos\in H'$ requires quantitative reasoning.
% \nb{JK: some more in appendix?}

The opposite direction is also similar. We take $H^*\in\Zmc^*$ as before and construct models $\M_1,\M_2$ and a bisimulation $Z$. The difference is that now the domains $W^{\M_{i}}$ contain multiple copies of each nominal-free type in each mosaic:
\begin{align*}
W^{\M_{i}}=~& \{(t,j,\mos)\mid t\in \mos_{i}\text{ nom-free},\mos\in H^*, j\in\mathbb{N}\setminus\{0\}\} \cup{} \\
& \{(t,1,\mos)\mid t\in \mos_{i}\text{ has nom},\mos\in H^*\}\ \ \cup\ \ \text{Nom}_{i}
\end{align*}
with $i\in \{1,2\}$ and Nom$_{i}$ the nominals absent on the $i$-th side.
% % We next define $R^{\M_{i}}$ for relation symbols $R\in\sig(\phi_1)\cup\sig(\phi_2)$. $H^*$ is a witness for every $\mos\in H^*$. Let us fix, for every $\mos\in H^*$, an $R$-counting function $\Rmc_\mos$. We set:
% % \begin{gather*}
% % (t,j,\mos)\ R^{\Mmc_i}\ (t',j',\mos') \hspace{0.5cm}
% % \iff \hspace{0.5cm}
% % \Rmc_\mos(t,i,\mos',t')\geq j'.
% % \end{gather*}
% % The elements of $\Nom_i$ are not linked by $R^{\Mmc_i}$ with any element.

% Towards the definition of $R^{\M_{i}}$, for every $\mos\in H^*$ fix an $R$-counting function $\Rmc_\mos$ for $\mos,H^*$. We set:
% \begin{gather*}
%     (t,j,\mos)\ R^{\Mmc_i}\ (t',j',\mos') \hspace{0.5cm}
%     \iff \hspace{0.5cm}
%     \Rmc_\mos(t,i,\mos',t')\geq j'.
% \end{gather*}
% The elements of $\Nom_i$ are not linked by $R^{\Mmc_i}$ with any element.
The relation $R^{\M_{i}}$ links a point to an appropriate number of these copies, depending on $R$-counting functions for mosaics in $H^*$. The rest is as before: propositional variables and nominals are defined based on the type component of a point, and the bisimulation $Z$ links points with the same mosaic component.
\end{proof}
We obtain the following consequence of Theorem~\ref{thm:correctness-elimination-graded}. 
\begin{corollary}\label{cor:hypermosaics-graded}
    $\varphi_1$ and $\varphi_2$ are jointly $\Gmc(\sigma)$-consistent iff there exists a hypermosaic $\{(\{t_{1}\},\{t_{2}\})\}\in\Zmc^*$ with $\varphi_1\in t_{1}$ and $\varphi_2\in t_{2}$.
\end{corollary}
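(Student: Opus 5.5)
The plan is to derive Corollary~\ref{cor:hypermosaics-graded} from Theorem~\ref{thm:correctness-elimination-graded}, in the same way Corollary~\ref{cor:hypermosaics} follows from Theorem~\ref{thm:correctness-elimination}. The key observation is that joint $\Gmc(\sigma)$-consistency of $\varphi_1,\varphi_2$ is the same as realizability of a singleton hypermosaic $\{(\{t_1\},\{t_2\})\}$ with $\varphi_i\in t_i$. The theorem then converts realizability into membership in $\Zmc^*$.

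For the direction from left to right, I would assume pointed models $\M_1,w_1\models\varphi_1$ and $\M_2,w_2\models\varphi_2$ with a $\Gmc(\sigma)$-bisimulation $Z$ such that $(w_1,w_2)\in Z$. Put $t_i=\tp_{\M_i}(w_i)$. Since $\varphi_i$ is a subformula of itself, $\varphi_i\in t_i$. The mosaic $(\{t_1\},\{t_2\})$ is realized by $Z$, witnessed by the elements $w_1,w_2$, because the only required bisimilarity pair is $(w_1,w_2)$. Hence the hypermosaic $\{(\{t_1\},\{t_2\})\}$ is realizable by a $\Gmc(\sigma)$-bisimulation, and Theorem~\ref{thm:correctness-elimination-graded} places it in $\Zmc^*$.

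For the direction from right to left, I would take $\{(\{t_1\},\{t_2\})\}\in\Zmc^*$ with $\varphi_i\in t_i$. Theorem~\ref{thm:correctness-elimination-graded} gives models $\M_1,\M_2$ and a $\Gmc(\sigma)$-bisimulation $Z$ realizing this hypermosaic. That is, there are $w_1\in W^{\M_1}$ and $w_2\in W^{\M_2}$ with $\tp_{\M_i}(w_i)=t_i$ and $(w_1,w_2)\in Z$. Then $\M_i,w_i\models\varphi_i$, so $\M_1,w_1\sim_{\Gmc,\sigma}\M_2,w_2$, which is exactly joint $\Gmc(\sigma)$-consistency.

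There is no real obstacle here, since all the work sits in Theorem~\ref{thm:correctness-elimination-graded}. The only point needing care is that realizability of a mosaic with singleton components reduces exactly to the existence of one bisimilar pair realizing the two types; this is immediate from the definition of realization.
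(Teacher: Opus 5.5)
Your proposal is correct and follows the paper's route: the paper states the corollary as an immediate consequence of Theorem~\ref{thm:correctness-elimination-graded}, and your argument fills in the one routine step behind that. The step is that joint $\Gmc(\sigma)$-consistency of $\varphi_1,\varphi_2$ is exactly realizability of a singleton hypermosaic $\{(\{t_1\},\{t_2\})\}$ with $\varphi_i\in t_i$, which amounts to having one bisimilar pair of points with types $t_1$ and $t_2$.
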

% 
% \begin{lemma}
%     Let $H,H'$ be hypermosaics, $\mos\in H$ and $R\in\Rel$. The following conditions are equivalent:
%     \begin{enumerate}
%         \item there is a witnessing function $\Rmc$ for $R$ into $\mathbb{N}\cup\{\infty\}$;
%         \item there is a witnessing function $\Rmc$ for $R$ bounded by $\kappa 2^{2(|\varphi_{1}|+|\varphi_{2}|)}$;
%         \item there is a function $\Rmc$ satisfying the conditions of a witnessing function for $R$
%         with Condition~\ref{it:R-wit 4} weakened as follows: if $R\in\sigma$, then for all $\mos'\in H'$, all $i,j\in\{1,2\}$, and all $t\in \mos_i,t'\in \mos_j$:
%     \[
%     \sum_{s\in \mos'_i}\Rmc(t,i,\mos',s)=\sum_{s'\in \mos'_j}\Rmc(t',j,\mos',s')
%     \]
%     if $\sum_{s\in \mos'_i}\Rmc(t,i,\mos',s)\leq \kappa 2^{|\varphi_{1}|+|\varphi_{2}|}$ or
%     $\sum_{s'\in \mos'_j}\Rmc(t',j,\mos',s')\leq \kappa 2^{|\varphi_{1}|+|\varphi_{2}|}$.
%     \end{enumerate}
% \end{lemma}

The correctness proof for interpolant construction below will rely on the observation that values of $R$-counting functions only matter up to a threshold. Let $\kappa$ be the greatest grade appearing in $\phi_1,\phi_2$ and denote $\lambda=\kappa\times|\TP|$. Consider the equivalence relation $=^\lambda$ on $\mathbb{N}\cup\{\infty\}$ such that $x=^\lambda y$ iff either $x,y\leq\lambda$ and $x=y$, or $x,y>\lambda$. A \emph{weak $R$-counting function} is defined in the same way as an $R$-counting function, except that equality $=$ in the last condition~\ref{it:R-wit 4} is replaced by $=^\lambda$. The following lemma provides bounds for the values in (weak) $R$-counting functions. 

% To analyze the complexity of the construction, we also give bounds for the $R$-counting function.  

\begin{restatable}{lemma}{lemRcountingweak}
\label{lem:R-counting weak}
    Assume a mosaic $\mos$ and a hypermosaic $H'$. 
\begin{enumerate}
\item $H',\mos$ admit an $R$-counting function iff they admit a weak $R$-counting function. 
\item Moreover, if either condition holds for $H',\mos$, then they admit an $R$-counting function 
with codomain $\{0,\ldots,\lambda \times |\TP|\}$ and a weak $R$-counting function with codomain $\{0,\ldots,\lambda\}$. 
%respectively.\nb{J: is this a respectively or just conjunction?}
\end{enumerate}
\end{restatable}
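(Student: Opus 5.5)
The plan is to handle the easy direction of Item~1 directly, and obtain the rest from two normalisation operations on counting functions: a \emph{capping} step that bounds the entries of a weak $R$-counting function by $\lambda$, and a \emph{padding} step that turns a capped weak $R$-counting function into an $R$-counting function with entries bounded by $\lambda\times|\TP|$. The direction ``$R$-counting $\Rightarrow$ weak $R$-counting'' of Item~1 is immediate, since $x=y$ implies $x=^\lambda y$. Chaining the two operations then gives the converse of Item~1 together with both bounds of Item~2.

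Throughout, we may assume $\kappa\geq 1$. This is harmless, because raising $\kappa$ only weakens the claimed bounds. Then every entry $\Rmc(t,i,\mos',s)$ with $a\in s$ for a nominal $a$ is at most $1\leq\kappa$ by Condition~\ref{it:R-wit 2}. Hence an entry exceeding $\kappa$ always belongs to a nominal-free $s$, and we may freely raise or lower it without violating~\ref{it:R-wit 2}.

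\emph{Capping.} Given a weak $R$-counting function $\Rmc$, define $\Rmc'$ by replacing every entry larger than $\lambda$ (including $\infty$) by $\lambda$.
\begin{itemize}
\item Conditions~\ref{it:R-wit 1} and~\ref{it:R-wit 2} survive, since entries only shrink and positivity is preserved.
\item For~\ref{it:R-wit 3}, every threshold $n$ is at most $\kappa\leq\lambda$. A sum containing a capped entry was larger than $\kappa$ before and is still at least $\lambda\geq\kappa$, and all other sums are unchanged. Here I must be careful when $n=\lambda=\kappa$: I expect to cap at $\lambda$ only entries above $\max(\kappa,\ldots)$ appropriately, or, if needed, cap at $\kappa+1\leq\lambda$ whenever $|\TP|\geq 2$.
\item For~\ref{it:R-wit 4} with $=^\lambda$, a group sum $\sum_{s\in\mos'_i}\Rmc(t,i,\mos',s)$ that was at most $\lambda$ is unchanged. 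A group sum that was larger than $\lambda$ either had all its entries at most $\lambda$, and is then unchanged, or contained an entry above $\lambda$.
\end{itemize}
The delicate sub-case is the last one: if the group has a single nonzero entry, the capped sum is exactly $\lambda$ and the class under $=^\lambda$ may change. I will handle it by capping at $\lambda$ only when another positive entry remains in the group, and otherwise keeping the sum strictly above $\lambda$ via a value just over the cap. If this forces the bound to $\lambda+1$, I would revisit the exact choice of constants. This is the step I expect to be the main obstacle: bookkeeping the boundary between ``$\leq\lambda$'' and ``$>\lambda$'' exactly.

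\emph{Padding.} Now take a weak function whose entries are all at most $\lambda$, and fix $\mos'\in H'$. By weak~\ref{it:R-wit 4}, all pairs $(t,i)$ with $t\in\mos_i$ have group sums that either all coincide at a common value at most $\lambda$, in which case nothing is to be done, or all exceed $\lambda$. In the latter case, set the target $T=\lambda\times|\TP|$. Each group sum is at most $|\TP|\cdot\lambda=T$, since there are at most $|\TP|$ types $s$ and every entry is at most $\lambda$. Moreover, because the sum exceeds $\lambda=\kappa\times|\TP|$, pigeonhole yields some $s\in\mos'_i$ with entry larger than $\kappa$. Add the missing amount $T-\text{sum}$ to that single entry.

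This raise does not damage the remaining conditions. Condition~\ref{it:R-wit 3} is unaffected: every sum containing that entry was already larger than $\kappa\geq n$ and stays so, while all other sums are untouched. Condition~\ref{it:R-wit 1} holds because the raised entry was already positive. Condition~\ref{it:R-wit 2} holds because the raised $s$ is nominal-free by the remark above. Now all group sums for $\mos'$ equal $T$ exactly, so~\ref{it:R-wit 4} holds with genuine equality, and every entry lies in $\{0,\ldots,\lambda\times|\TP|\}$. Applying capping followed by padding to an arbitrary weak or ordinary $R$-counting function yields Item~1 and both bounds of Item~2.
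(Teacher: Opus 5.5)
Your route is the paper's own: cap a weak function at $\lambda$, then for each $\mos'$ whose group sums exceed $\lambda$ use pigeonhole to find in every group an entry larger than $\kappa$, and raise it so that all group sums agree. The paper raises to the largest group sum and you raise to the fixed value $\lambda\times|\TP|$; this difference is immaterial, and your padding step is correct.

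The gap is in the capping step. You diagnose it correctly but do not close it. Take $=^\lambda$ as literally defined: values $\le\lambda$ are compared exactly and all values $>\lambda$ are identified. Consider a group $(t,i,\mos')$ whose only positive entry exceeds $\lambda$. After capping, its sum is exactly $\lambda$, while a group of the same $\mos'$ with two positive entries keeps a sum $>\lambda$, so weak~\ref{it:R-wit 4} fails. The paper's proof passes over this point: it simply asserts that the capped function is still weak. That is only right under the reading in which $=^\lambda$ identifies all values $\geq\lambda$, which is the reading its star-types implement ($\mydiam{=\lambda\ R}$ is $\mydiam{\geq\lambda\ R}$). Your proposal leaves the point open (``if this forces the bound to $\lambda+1$, I would revisit the constants''). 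So your closing sentence, that capping followed by padding yields both bounds of Item~2, is not justified.

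Your patch gives a lone oversized entry the value $\lambda+1$ and caps all other oversized entries at $\lambda$. Here is what it actually proves.
\begin{itemize}
\item The result is a weak $R$-counting function with entries in $\{0,\ldots,\lambda+1\}$. Every group keeps its $=^\lambda$-class, and every modified entry stays $\geq\lambda>\kappa$, so~\ref{it:R-wit 3} is preserved.
\item Since $\lambda+1\le\lambda\times|\TP|$, your padding then yields Item~1 and an $R$-counting function with codomain $\{0,\ldots,\lambda\times|\TP|\}$.
\end{itemize}
This is all the paper uses: Claim~\ref{cl:construction graded claim} needs only Item~1, and the complexity remark needs only some bound.

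The weak codomain $\{0,\ldots,\lambda\}$, however, is not reached, and no entry-wise capping can reach it. If $\mos'_i$ is a singleton, any weak function with entries $\le\lambda$ gives that group a sum $\le\lambda$. Weak~\ref{it:R-wit 4} then forces every group of $\mos'$ to have one and the same exact sum $\le\lambda$. That is a global equalisation which capping does not provide and which would need a separate argument. As the lemma is literally stated, that clause remains unproved; what your argument honestly gives is the bound $\lambda+1$.

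A minor point concerns your reduction to $\kappa\geq 1$. It is not justified by ``raising $\kappa$ only weakens the bounds'', because $\kappa$ also fixes $\lambda$ and hence what ``weak'' means in the hypothesis of Item~1. Instead, exclude the degenerate cases $\kappa=0$ and $|\TP|=1$ explicitly. After that, $\lambda\geq 2\kappa\geq\kappa+1$, and your hedging about $n=\lambda=\kappa$ in the check of~\ref{it:R-wit 3} becomes unnecessary.
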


Observe that it follows from the second part of Lemma~\ref{lem:R-counting weak} that it can be decided in triply exponential time (in the size of $\varphi_{1},\varphi_{2}$) whether a hypermosaic $H'$ is an $R$-witness for a mosaic $\mos$. Hence the elimination procedure runs in triply exponential time overall. An elimination procedure that eliminates mosaics rather than hypermosaics that runs in doubly exponential time and decides separator existence is easily derived from the procedure above.

%The hypermosaic elimination procedure developed above can be adapted to $\Gmc_@$ and $\Gfull$ in precisely the same way  that the elimination procedure for $\Hmc$ was adapted to $\Hmc_{@}$ and $\Hfull$, respectively. Note that this works for $\Hfull$ since we only allow for ungraded use of $U$.
\subsection{Separator Construction with Grades}

In this subsection we show how to construct separators in the graded case. Our construction follows the ideas from the grades-free setting. The notions of a case distinction and a hyperseparator are defined as before, only with $\Gmc$ in place of $\Hmc$. Lemma~\ref{lem:hypersep2sep} remains valid as it does not depend on the logic, but only on the properties of a hyperseparator. We inductively prove the following graded variant of Lemma~\ref{lem:sep construction}.

\begin{lemma}\label{lem:sep graded construction}
For every round $\ell$ of the elimination procedure for $\Gmc$, we can compute a hyperseparator $\Case,\sep_1,\sep_2$ that separates every hypermosaic eliminated up to that round.
\end{lemma}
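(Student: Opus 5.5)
We follow the proof of Lemma~\ref{lem:sep construction} step by step, and change only the star-types and the witness claim. The base case is identical. The case distinction fixes the type of every nominal on each side, and $\sep_i(\cmf,t)$ is the conjunction of the atomic $\sigma$-literals in $t$. This separates every $\sigma$-inconsistent hypermosaic.

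For the inductive step we refine each $\cmf\in\Case$ into $\Case^+$ exactly as before, guessing for each nominal and side a $\Sep_\cmf$-type. Then $\Case^+$ is exhaustive by the same argument. The star-types must now carry counting information up to the threshold $\lambda$ of Lemma~\ref{lem:R-counting weak}. We let a \emph{graded $\Sep_\cmf$-star-type} be a formula
\[
d\wedge\bigwedge_{R\in\sigma}\bigwedge_{e\in\SepCaseType}\langle =_\lambda n_{R,e}\ R\rangle e ,
\]
where $n_{R,e}\in\{0,\ldots,\lambda\}\cup\{>\lambda\}$. Here $\langle =_\lambda n\ R\rangle e$ abbreviates $\langle\geq n\ R\rangle e\wedge\langle\leq n\ R\rangle e$, or $\langle\geq \lambda+1\ R\rangle e$ when $n={>}\lambda$. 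We then set $\sep^+_i(\cmf^+,t)$ to be the disjunction of all graded star-types $\delta$ with $\case(\cmf^+_i)\wedge\delta\wedge t$ satisfiable. The hyperseparator property follows as before. Distinct graded star-types are mutually exclusive, so a non-separated $H$ again yields, for each $\mos\in H$, a common disjunct $\delta_\mos$.

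Next we define $\mos^d_i$ and $H'=\{(\mos^d_1,\mos^d_2)\mid d\in\SepCaseType\}$ exactly as before. Nominal-cleanness and the fact that $H'$ extends $H$ are proved verbatim. The new content is the graded analogue of Claim~\ref{cl:construction claim}, namely that $H'$ is an $R$-witness for every $\mos\in H$. By Lemma~\ref{lem:R-counting weak}(1), it suffices to build a \emph{weak} $R$-counting function. For each $i$ and $t\in\mos_i$ we pick a model $\M,w\models\case(\cmf^+_i)\wedge\delta_\mos\wedge t$. We put $\Rmc(t,i,(\mos^d_1,\mos^d_2),s)$ equal to the number of $R$-children of $w$ that realize $s\wedge d$, capped or set to $\infty$. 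This is well-defined because each child satisfies exactly one $d$. Each such child's type $s$ lies in $\mos^d_i$, since $\case(\cmf^+_i)$ holds in $\M$.

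Conditions \ref{it:R-wit 1}--\ref{it:R-wit 3} then hold because the counts come from an actual model of $t$. For \ref{it:R-wit 2}, a nominal is realized at most once. For \ref{it:R-wit 3}, we take actual counts for all types, or choose a model in which the cardinalities are faithful. For $R\in\sigma$, condition~\ref{it:R-wit 4} in its weak form requires $\sum_{s\in\mos^d_i}\Rmc(t,i,\cdot,s)=^\lambda\sum_{s'\in\mos^d_j}\Rmc(t',j,\cdot,s')$. This holds because both sums equal the number of $R$-children satisfying $d$, and $\delta_\mos$ fixes that number up to $=^\lambda$ through $n_{R,d}$.

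After that the argument concludes exactly as in the ungraded case. $H'$ was eliminated earlier, so some $(\mos^d_1,\mos^d_2)$ has jointly inconsistent $\sep_i(\cmf,\cdot)$. All of these belong to $d$, which contradicts the consistency of $d$.

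The main obstacle is this last verification. The sums in \ref{it:R-wit 4} range over types $s$ inside a single $\Sep_\cmf$-type $d$, while the star-type only counts children per $d$ up to $\lambda$. This is exactly why the weak counting function and Lemma~\ref{lem:R-counting weak} are needed, and care is required where infinite or large counts appear. Size bounds follow as before: the number of graded star-types grows by a factor $(\lambda+2)^{|\SepCaseType|\cdot|\sigma|}$, which is still triply exponential per round. With $@$ and $U$, the additions to star-types are the same as in the ungraded case.
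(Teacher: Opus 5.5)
Your proposal follows the paper's proof step for step. It has the same base case and refinement of cases, star-types recording per $\Sep_\cmf$-type the number of $R$-children up to $\lambda$, and the same $H'$. Like the paper, it reads a weak $R$-counting function off models of $\case(\cmf^+_i)\wedge\delta_\mos\wedge t$ and upgrades it via Lemma~\ref{lem:R-counting weak}.

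One detail of that counting function needs repair. The map $d\mapsto(\mos^d_1,\mos^d_2)$ need not be injective: in later rounds the formulae in $\Sep_\cmf$ are not determined by types, so distinct $d$ can be compatible with exactly the same types on both sides. Hence ``the number of $R$-children realizing $s\wedge d$'' is not a well-defined value at a mosaic of $H'$, and your remark that each child satisfies exactly one $d$ does not address this. Fixing one representative $d$ drops the children satisfying the other $d$'s in the fibre and can violate condition~\ref{it:R-wit 3}. Likewise, your justification of~\ref{it:R-wit 4} (``both sums equal the number of children satisfying $d$'') presupposes singleton fibres.

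The paper instead lets $\Rmc(t,i,\mos',s)$ count the children $v$ with $\M,v\models s\wedge d$ for \emph{some} $d$ such that $\mos'=(\mos^d_1,\mos^d_2)$. The sum in~\ref{it:R-wit 4} is then $\sum_d|V_d|$ over the fibre. Since $=^\lambda$ is compatible with addition and your star-types fix each $|V_d|$ up to $=^\lambda$, the weak form of~\ref{it:R-wit 4} follows. For the same reason, take actual cardinalities ($\infty$ for infinite sets) rather than capping summands at $\lambda$. Capping at $\lambda$ can turn a fibre sum above $\lambda$ into exactly $\lambda$ on one side only, which breaks the weak condition.

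Incidentally, your choice $n_{R,e}\in\{0,\ldots,\lambda\}\cup\{{>}\lambda\}$ matches $=^\lambda$ exactly. The paper's convention $\mydiam{=\lambda\ R}=\mydiam{\geq\lambda\ R}$ does not distinguish $\lambda$ from $\lambda+1$ children, a harmless off-by-one that your version avoids.
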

\begin{proof}
We prove the lemma by induction. The core argument is the same as in the grades-free case. The base case is exactly the same. Let us deal with the inductive case.

Assume a hyperseparator $\Case,\sep_1,\sep_2$ separating all hypermosaics eliminated up to round $\ell$. We construct $\Case^+,\sep^+_1,\sep^+_2$ for all hypermosaics eliminated up to round $\ell+1$. As with $\Hmc$, $\Case^+$ is the result of refining $\Case$: we split every case $\cmf\in\Case$ by determining the $\SepCaseType$ of every nominal on each side.

The key modification needed for $\Gmc$ is in the definition of a star-type. As with $\Hmc$, a \emph{$\Sep_\cmf$-star-type} of a given point $v$ is the full description of $v$ and its neighbors with respect to $\SepCaseType$. What is different is that \emph{full description} is now quantitative: for every $R$ and $d\in\SepCaseType$ we specify exactly (up to $\lambda$) how many $R$-children of $v$ satisfy $d$. Formally, the $\SepCaseStarType$ is the set 
\[
%\SepCaseStarType =
\left\{
  d\wedge\hspace{-0.4cm}\bigwedge_{\substack{R\in\sigma\\ {d'}\in\SepCaseType}}\hspace{-0.3cm}\mydiam{= f_R(d')\ R} d'\ \;\middle|\;
  \begin{aligned}
  d\in\SepCaseType,
  f_R:\SepCaseType\to\{0,...,\lambda\}
  \end{aligned}\right\}
\]
% \[
% \OneSepCaseType = \{d\wedge\bigwedge_{\substack{R\in\sigma\\ {d'}\in\SepCaseType}}\mydiam{= f_R(d')\ R} d'\ |\ d\in\SepCaseType,\ f_R:\SepCaseType\to\{0,...,\kappa\}\}
% \]
where
\[
\mydiam{= n\ R} \theta = 
\begin{cases}
    \mydiam{\leq n\ R} \theta \wedge \mydiam{\geq n\ R} \theta & \text{ if } n<\lambda\\
    \mydiam{\geq n\ R} \theta & \text{ otherwise.}
\end{cases}
\]

As with $\Hmc$, for every type $t$ and side $i\in \{1,2\}$ we define:
\[
\sep^+_i(\cmf^+,t) = \bigvee
\left\{
  \delta \in\SepCaseStarType\mid
  \begin{aligned}
  %& \delta\in\OneSepCaseType\\
  %&
  \case(\cmf^+_i)\wedge\delta\wedge t\not\models\bot
  \end{aligned}
\right\}.
\]
The proof that $\Case^+,\sep^+_1,\sep^+_2$ is a hyperseparator is the same as with $\Hmc$. The proof that it separates every $H$ eliminated up to round $\ell+1$ goes along the same lines, which we outline now. Assume the hyperseparator does not separate $H$. This means that there is a case $\cmf^+\in\Case^+$ such that
for every $\mos\in H$ there is a star-type $\delta_\mos$ which is a disjunct of all $\sep^+_i(\cmf^+,t)$ with $i\in \{1,2\}$ and $t\in\mos_i$. We define a mosaic $H'$ by setting 
\[
\mos^d_i=\{t\in\TP\ |\ \case(\cmf^+_i)\wedge t\wedge d\not\models\bot\}.
\]
and:
\[
H' = \{(\mos^d_1,\mos^d_2)\ |\ d\in\SepCaseType\},
\]
all as in the ungraded case, and prove a graded variant of Claim~\ref{cl:construction claim}.
\begin{restatable}{claim}{clconstructiongradedclaim}
\label{cl:construction graded claim}
The following hold:
\begin{enumerate}
  \item $H'$ is nominal clean.
  \item $H'$ is a witness for every $\mos\in H$.
\end{enumerate}
\end{restatable}

\begin{proof}(Sketch)
The first item, and the proof that $H'$ extends $H$ are easy, as in the grades-free setting. We show that for every $R$ and $\mos\in H$, $H'$ is an $R$-witness for $\mos$. Fix $R$ and $\mos\in H$. We need to construct an $R$-counting function \Rmc. By Lemma~\ref{lem:R-counting weak}, it is enough to construct a \emph{weak} $R$-counting function, which we do now. 

Assume $i\in \{1,2\}$ and $t\in\TP$. If $t\notin \mos_i$ then $\Rmc(t,i,\mos',s)=0$ for every $\mos'$ and $s$. In the remaining case fix a model
\[
\M,w\models\case(\cmf^+_i)\wedge\delta_\mos\wedge t
\]
and denote by $V$ the set of $R$-children of the root. We set:
\[
\Rmc(t,i,\mos',s) = 
\left|
\left\{
  v\in V\ \;\middle|\;
  \begin{aligned}
  d\in\SepCaseType,\ \mos'=(\mos_1^d,\mos_2^d),\ \M,v\models s\wedge d
  \end{aligned}
\right\}
\right|
\]
for all $\mos',s$. This completes the construction of $\Rmc(t,i,\ \cdot\ ,\ \cdot\ )$. Running it for every possible values of $t,i$ we construct the entire $\Rmc$.
Checking that \Rmc is a weak $R$-counting function boils down to checking conditions~\ref{it:R-wit 1}, \ref{it:R-wit 2}, and~\ref{it:R-wit 3}, and the weak variant of~\ref{it:R-wit 4}. This is not very difficult, but it is worth to point out that weakness of the last condition~\ref{it:R-wit 4} is essential because the constructed formulae $\sep^+_i(\cmf^+,t)$ only count up to the threshold $\lambda$.
\end{proof}

Lemma~\ref{lem:sep graded construction} follows from Claim~\ref{cl:construction graded claim} exactly the same way as Lemma~\ref{lem:sep construction} from Claim~\ref{cl:construction claim}.
\end{proof}

\begin{restatable}{theorem}{thmsumminguptwo}\label{thm:summing-up-two}
  If $\varphi_1,\varphi_2$ are not jointly $\Gmc(\sigma)$-consistent, then we can compute in fourfold exponential time a $\Gmc(\sigma)$-separator (of fourfold exponential size) for $\varphi_1,\varphi_2$.
\end{restatable}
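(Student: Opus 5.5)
We follow the proof of Theorem~\ref{thm:summing-up} and change only the size and running-time bookkeeping to account for the counting in graded star-types. Assume $\varphi_1,\varphi_2$ are not jointly $\Gmc(\sigma)$-consistent. Corollary~\ref{cor:hypermosaics-graded} then says that every hypermosaic $\{(\{t_1\},\{t_2\})\}$ with $\varphi_1\in t_1$ and $\varphi_2\in t_2$ is eliminated at some round $\ell$. For that round, Lemma~\ref{lem:sep graded construction} provides a hyperseparator $\Case,\sep_1,\sep_2$ that separates it. Lemma~\ref{lem:hypersep2sep} depends only on the abstract properties of hyperseparators, so it still applies and turns this into a $\Gmc(\sigma)$-separator for $t_1,t_2$. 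These separators are then assembled into one for $\varphi_1,\varphi_2$ by Equation~\eqref{eq:compile}. Correctness is therefore immediate from the earlier results, and what remains is the complexity analysis.

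First, the number of rounds. Witnesses must extend $H$ (Condition \textbf{Mon}), so the argument from the ungraded case still shows that the procedure stabilises after at most doubly exponentially many rounds. Each single round is computable in triply exponential time, since Lemma~\ref{lem:R-counting weak}(2) lets us test for $R$-witnesses by searching over counting functions with values bounded by $\lambda\times|\TP|$.

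Second, the formula sizes $m_\ell$, with $n=|\varphi_1|+|\varphi_2|$. As before, $m_0\le|\sigma|$. For a case $\cmf$, the set $\Sep_\cmf$ contains at most $2\cdot 2^n$ formulae, each of size at most $m_\ell$. Hence there are at most $2^{2^{n+1}}$ $\Sep_\cmf$-types, each of size $O(2^n m_\ell)$.

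A graded star-type consists of one $d$ together with, for each $R\in\sigma$, a function $f_R:\SepCaseType\to\{0,\dots,\lambda\}$. The numbers of such objects are bounded as follows:
\[
\bigl(\lambda+1\bigr)^{|\sigma|\cdot 2^{2^{n+1}}}\le 2^{2^{2^{O(n)}}}.
\]
This uses that $\lambda=\kappa|\TP|$ is at most exponential in $n$, with grades coded in binary. The size of a single star-type is $O(|\sigma|\,2^{2^{n+1}}(2^n m_\ell+\log\lambda))$.

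Thus $m_{\ell+1}\le m_\ell\cdot f(n)$ for a triply exponential $f$. Over doubly exponentially many rounds this gives $m_\ell\le f(n)^{2^{2^{O(n)}}}$, which is again triply exponential. The number of cases satisfies $|\Case^+|\le|\Case|\cdot(2^{2^{2^n}})^{2|\Nom_0|}$, so $\log|\Case|$ also grows by at most a triply exponential factor per round and $|\Case|$ stays triply exponential. Lemma~\ref{lem:hypersep2sep} then yields size polynomial in $2^{|\Case|}\cdot m$, which is fourfold exponential. Equation~\eqref{eq:compile} adds only an exponential-size disjunction and conjunction. Computing each $\sep^+_i$ needs satisfiability checks of $\case(\cmf^+_i)\wedge\delta\wedge t$ in $\Gmc_@$, which run in time exponential in the formula size, so fourfold exponential overall.

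The main obstacle is the case count, because of the blow-up $2^{|\Case|}$. The delicate point is that refining $\Case$ multiplies it by the number of $\Sep_\cmf$-type assignments to nominals. This must be shown to give $\log|\Case_{\ell+1}|\le\log|\Case_\ell|+$ (triply exponential), not a squaring. The star-type size bound also requires that the graded counts contribute only $\log\lambda$ bits and do not inflate the base of the recursion for $m_\ell$.
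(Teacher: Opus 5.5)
Your proposal is correct and follows the paper's proof: correctness comes from Corollary~\ref{cor:hypermosaics-graded}, Lemma~\ref{lem:sep graded construction}, Lemma~\ref{lem:hypersep2sep} and \eqref{eq:compile}, and the size recursion is the ungraded one with only the number and size of the graded $\Sep_\cmf$-star-types changed; the paper charges a factor $\lambda$ per counting conjunct rather than your $\log\lambda$, which is harmless because $\lambda$ is only exponential, so your ``requirement'' that grades contribute only $\log\lambda$ is not actually needed. One slip is in the case count: the refinement assigns $\Sep_\cmf$-types, not star-types, to nominals, so the per-round factor is $|\SepCaseType|^{2|\Nom_0|}\le 2^{2^{n+2}|\Nom_0|}$, and $\log|\Case|$ therefore grows by only a singly exponential amount per round (doubly exponential even with your looser bound), which is exactly what is needed --- whereas the ``triply exponential'' increment you name as the target would, over doubly exponentially many rounds, only bound $|\Case|$ fourfold exponentially and hence the final separator fivefold exponentially.
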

\begin{proof}
   If $\varphi_1,\varphi_2$ are not jointly $\Gmc(\sigma)$-consistent, we can construct separators based on Corollary~\ref{cor:hypermosaics-graded} and Lemma~\ref{lem:hypersep2sep}. We have already argued that the elimination runs in triply exponential time. The only difference in the size analysis compared to Theorem~\ref{thm:summing-up} is in the number and size of $\Sep_\cmf$-star-types, which depend on $\lambda$ now and are in $O(2^{2^n}\times {\lambda^{2^{2^n}}})$ and $O(2^n\times m_\ell+\lambda\times 2^{2^n}\times 2^n\times m_\ell)$, respectively. Since $\lambda\leq n\times 2^n$, the rest of the analysis remains the same.
\end{proof}

% {\color{red} TODO briefly about $@$ and $U$}

\section{Separator Size: Lower Bound}\label{sec:lowerbounds}
In this section, we establish lower bounds on the size of separators in $\Hfull$ and $\Gfull$. 

\begin{restatable}{theorem}{thmlowerbound} \label{thm:size-lower-bound} Let $\Lmc\in\{\Hfull,\Gfull\}$. There is a signature $\sigma$ such that
% Let $\sigma = \{R,S,r\}$ for $R,S\in\Rel$ and $r\in \Var$.
for every $n\in\mathbb{N}$ there are $\phi_1,\phi_2\in\Lmc$ of size polynomial in $n$ which admit an $\Lmc(\sigma)$-separator, but every such separator has size at least triply exponential in $n$ and DAG-size doubly exponential in $n$.
\end{restatable}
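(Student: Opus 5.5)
We plan to reduce the lower bound to a known fact about modal formulae that must distinguish doubly exponentially many objects. The intended mechanism is a ``counter'' formula. We take $\sigma=\{R,S,r\}$ with $R,S\in\Rel$ and $r\in\Var$, plus the universal modality $U$, which is part of both $\Hfull$ and $\Gfull$. In $\Hfull$ with polynomial-size formulae one can force (using non-shared propositional variables) that every $R$-path from the root describes an $n$-bit counter, so that the underlying model has depth $2^n$. Under the shared signature the counter bits are invisible. The idea is that $\varphi_1$ forces a structure of exponential depth in which a non-shared nominal $a$ marks a node reachable along a path whose shape is coded by $r$-labels of length $2^n$. Meanwhile $\varphi_2$ (with its own non-shared nominal $b$) forces, via $U$, a globally bisimilar configuration that would have to contain two $\sigma$-bisimilar copies differing somewhere, as in the counterexample $(a\wedge\mydiam{R}a)\rightarrow(b\rightarrow\mydiam{R}b)$ from the introduction. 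Concretely, we aim for $\varphi_1,\varphi_2$ which are jointly $\Lmc(\sigma)$-inconsistent and for which any separator must express ``there is an $R$-path of length exactly $2^n$ to a point where $S$ loops back to a point with the same $r$-labelled path of length $2^n$'', or more precisely must distinguish all $2^{2^n}$ labelings of a path of length $2^n$ by $r$.

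The key steps in order are as follows. (1) Write the counter formulae: with non-shared variables $p_1,\dots,p_n$ and $U$, a polynomial-size $\mybox{U}$-constraint forces every $R$-successor to increment the counter, so that the $R$-depth up to the counter maximum is exactly $2^n$. (2) Define $\varphi_1$ so that, in every model, the point $a$ carries a word $w\in\{0,1\}^{2^n}$, read off by $r$ along the $R$-chain below it, and $\mydiam{S}$ from the root reaches a chain carrying the same word. Equality of words at exponential distance is checkable using the nominal $a$ and $@$/$U$: the bit at counter value $k$ on one chain equals the bit at counter value $k$ on the other. Define $\varphi_2$ dually, so that it asserts the two chains carry different words. (3) Verify that no pair of $\Lmc(\sigma)$-bisimilar models satisfies both, so a separator exists (by Lemma~\ref{lem:joint-consistent} and Lemma~\ref{lem:joint-consistent-graded}). (4) Show the size lower bound. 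Any separator $\theta$ must hold in the model $\M_w$ of $\varphi_1$ with word $w$ on both chains, and must fail in the model $\N_{w,w'}$ with $w\neq w'$. Restricted to $\sigma$, these models are essentially (up to $\sigma$-bisimulation) a root with an $R$-path labelled $w$ and an $S$-path labelled $w$ or $w'$. Hence $\theta$ defines on such ``two-path'' structures exactly the equality relation on $\{0,1\}^{2^n}$.

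The main obstacle is step (4): deriving size bounds from the fact that $\theta$ decides equality of two words of length $N=2^n$. Our first attempt is a counting argument on DAG-size. Consider the set of subformulae of $\theta$ true at the $R$-path positions of depth $k$. Over the suffixes of $w$ of length $N/2$ they must induce at least $2^{N/2}$ distinct equivalence classes, because $\theta$ evaluated at the root can be decomposed, following the communication-complexity / fooling-set style argument for modal formulae, into Boolean combinations of formulae evaluated on each path separately. The $S$-path is only reachable through $\mydiam{S}$, so $\theta$ is equivalent on these structures to a Boolean combination of formulae $\mydiam{R}\alpha$ and $\mydiam{S}\beta$. Equality on $2^N$ pairs forces a number of distinct subformulae at least $\log$ of the number of classes needed, and we expect this to yield DAG-size $\geq 2^{\Omega(N)}/\mathrm{poly}$, doubly exponential in $n$. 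The tree-size bound then follows by the standard argument that tree-size of a formula distinguishing all $2^N$ words must be at least $2^N$ leaves of distinct $\beta$-components, which is triply exponential in $n$. The delicate point is ensuring that $U$ and the shared nominals cannot shortcut this decomposition; we avoid this by putting no nominals in $\sigma$, and by arranging that the $\sigma$-reducts are bisimilar to disjoint tree models, so that $\mydiam{U}$ subformulae carry only constant information. For $\Gfull$ we make all paths deterministic, so that graded modalities add nothing.
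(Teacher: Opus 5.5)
There is a genuine gap, and it is quantitative. Your construction only ever encodes a word of length $N=2^n$, and deciding equality of two such words is cheap. In step~(4) you reduce to two-path $\sigma$-structures: a deterministic $R$-path carrying $w$, and an $S$-successor followed by an $R$-path carrying $w'$. On these, the formula
\[
\bigwedge_{k=1}^{N}\bigl(\mydiam{R}^{k}r\leftrightarrow\mydiam{S}\mydiam{R}^{k-1}r\bigr)
\]
(with offsets adjusted to your exact shape) holds iff $w=w'$. Its DAG-size is $O(N)$ and its tree size is $O(N^2)$, both singly exponential in $n$. So no argument whose only input is ``$\theta$ defines equality on $\{0,1\}^{2^n}$'' can give a doubly exponential DAG-size bound or a triply exponential tree-size bound. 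Your own count already shows this. Since $2^{N/2}$ classes can be told apart by $N/2$ Boolean features, you obtain only $N/2=2^{n-1}$ distinct subformulae, and the step to $2^{\Omega(N)}$ does not follow. The claim that the tree size needs $2^N$ leaves is refuted by the same formula. Note also that the non-shared nominals play no role in your step~(4), whereas they are precisely what makes the paper's instance hard.

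The missing idea is to force indistinguishability up to \emph{doubly} exponential depth, which needs a counter with $2^n$ bits rather than $n$ bits. In the paper, $r$ writes a $2^n$-bit number on every $R$-block of length $2^n$. A correct counter started at $0$ therefore reaches $2^{2^n}-1$ only after $2^{2^n}-1$ increments. A polynomial-size formula cannot enforce correct incrementing of such a counter, and this is exactly where the nominal enters. $\varphi_1$ describes good $SR$-trees: below the $S$-path point at depth $i<2^n$, consecutive blocks need only agree with the increment on bit $i+1$, and the maximal value never occurs. Meanwhile $\varphi_2=a\wedge\mybox{S}a$.

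\emph{Joint inconsistency.} In any $\sigma$-bisimilar pair, all $S$-path points become bisimilar to $a^{\M}$. Hence all $2^n$ single-bit constraints act on one $R$-path, producing a perfect counter that must hit the maximum.

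\emph{Depth lower bound.} Take the unravelled model of $\varphi_2$ consisting of an $S$-loop and one $R$-path with a correct counter. It can be turned into a good $SR$-tree by changing $r$ only at depth greater than $d$. So for every $d<2^{2^n}$ there are models of $\varphi_1$ and $\varphi_2$ whose $\sigma$-reducts are isomorphic up to depth $d$. $U$ is neutralised by three steps: relativise $\varphi_1$ to the $R,S$-reachable part, take the disjoint union of the two models, and replace $U$-subformulae by truth constants. This bounds the modal depth, hence the DAG-size, from below by $2^{2^n}$.

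\emph{Tree-size lower bound.} The triply exponential tree size comes from replacing $R$ by $R_1\cup R_2$. This forces any separator to contain a nested chain of modalities for every $\{R_1,R_2\}$-word of length $2^{2^n}$; it does not come from any equality or fooling-set argument.
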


\begin{proof}
The proof builds on ideas used to establish \TwoExpTime-hardness of deciding separator existence in \Hfull~\cite{DBLP:journals/tocl/ArtaleJMOW23}. We focus on $\Hfull$ but almost the same proof applies to \Gfull. 

Set $\sigma=\{R,S,r\}$ with $R,S\in\Rel$, $r\in\Var$, and let $n\in\mathbb{N}$.
We define jointly $\Hfull(\sigma)$-inconsistent formulae
$\varphi_1,\varphi_2$ which only have separators of large modal depth, that is, nesting of the modal operators. 
The main work is done by $\varphi_1$; we describe its behavior by describing its class of models up to $\Hmc(\sigma)$-bisimulations. A pointed model $\M,w$ is an \emph{$SR$-tree} if it is a tree that has an infinite $S$-path starting in the root, no $R$-edge is followed by an $S$-edge, and every point has an $R$-successor. We associate with each $R$-path $\pi$ of length $2^n$ in \Mmc the binary number $r(\pi)<2^{2^n}$ encoded by the valuation of variable $r\in\sigma$ along $\pi$.
We call an $SR$-tree \emph{good} if for every path $\pi_{SR}=\pi_S\pi_R$ starting at the root consisting of an $S$-part $\pi_S$ of length $i<2^n$, and an $R$-part $\pi_R$, and for every $u\in \pi_R$ whose position in $\pi_R$ is a multiple of $2^n$ we have:
\begin{enumerate}

    \item For every $R$-path $\pi\pi'$ starting in $u$ with $|\pi|=|\pi'|=2^n$, $r(\pi)=x$, $r(\pi')=x'$ and $y=x+1$, the values $x'$ and $y$ agree on the $(i+1)$-th bit.

    \item No $R$-path $\pi$ of length $2^n$ starting in $u$ has $r(\pi)= 2^{2^n}-1$.

\end{enumerate}
Intuitively, Property~(1) synchronizes a single bit in values of consecutive $R$-paths, and Property~(2) expresses that the maximal value $2^{2^n}-1$ is never reached. 

The class of all good $SR$-trees is denoted $\mathfrak{M}$. It is standard to give, using the universal modality and variables not in $\sigma$ to count modulo $2^n$, a formula $\varphi_1\in\Hfull$ of size polynomial in $n$ such that:
\begin{enumerate}[label=(\roman*)]
    \item If $\M,w\models\phi_1$ then $\M,w\sim_{\Hmc,\sigma}\N,v$ for some $\N,v\in\mathfrak{M}$.
    \item Every $\M,w\in\mathfrak{M}$ can be made a model of $\phi_1$ by changing the valuation of variables not in $\sigma$.
\end{enumerate}
We claim that $\varphi_1$ and $\varphi_2=a\wedge[S]a$, for a nominal $a$, are jointly $\Hmc(\sigma)$-inconsistent. With the aim of showing a contradiction, let $\M,w$ be a model of $\varphi_2$ which is $\Hmc(\sigma)$-bisimilar to a model of $\varphi_1$, and due to Point~(i), also $\Hmc(\sigma)$-bisimilar to a good $SR$-tree $\N,v$. Then, $a^\M$ is $\Hmc(\sigma)$-bisimilar to all elements on the $S$-path starting in $v$ and roots an infinite $R$-path $\pi_R$ such that every $u\in \pi_R$ whose position in $\pi_R$ is a multiple of $2^n$ satisfies Property~(1) \emph{for every} $i<2^n$. Hence, any $\pi\pi'$ starting in such $u$ satisfies $r(\pi')=r(\pi)+1$ modulo $2^{2^n}$. This results in a perfectly synchronized counter along $\pi_R$, in contradiction to Property~(2). 

Suppose now that $\varphi_1,\varphi_2$ are separated by $\theta\in\Hfull(\sigma)$ with modal depth $d<2^{2^n}$. In order to show a contradiction for (the stronger assumption) $\theta\in\Hmc(\sigma)$, consider the model $\N,v$ of $\varphi_2$ which consists of an $S$-loop around $v=a^\N$ and an infinite $R$-path starting in $v$ along which the $r$-counter starts in $0$ and is correctly incremented throughout. Consider its unraveling $\N^v,v$ into a tree. It is not a good $SR$-tree, but one can turn it into a good $SR$-tree $\M^0,v$ by modifying only the valuation of $r$ in positions of depth $>d$. By Property~(ii), this $\M^0,v$ can be turned into a model $\M,v$ of $\varphi_1$ by changing non-$\sigma$ symbols. We have $\M,v\models\theta$. Since $\N,v\models \neg \theta$, also $\N^v,v\models\neg\theta$. But this is a contradiction since the respective reducts of $\M,v$ and $\N^v,v$ to $\sigma$ are isomorphic on the first $d$-levels, and $\theta$ has modal depth $d$.
For $\theta\in\Hfull(\sigma)$, one has to be a bit more careful, but the argument is similar. We note that all this works for $\theta\in \Gfull(\sigma)$ as well since $R,S$ are interpreted as total functions in $\N,\M$.

Clearly, a doubly exponential lower bound on the modal depth of separators implies the same lower bound for formula and DAG-size. To achieve a triply exponential lower bound for formula size, we replace $R$ with a union of two relations $R_1,R_2$. 
\end{proof}
\section{Uniform Interpolants}
\label{sec:uniform}
We show that uniform interpolants behave very differently from Craig interpolants by proving that the existence of uniform interpolants is undecidable for all fragments of FO containing $\Hmc$, and hence for all languages considered in this article.  %
\begin{definition}
    Let $\Lmc_{1},\Lmc_{2}$ be fragments of FO and $\varphi$ a formula in FO. Let $\sigma$
    be a signature. Then a formula $\chi$ is called a \emph{uniform $\Lmc_{1}(\sigma)$-interpolant for $\varphi$ in $\Lmc_{2}$} if $\chi \in \Lmc_{2}$, $\sig(\chi)\subseteq \sigma$, and
\begin{itemize}
    \item $\varphi\models\chi$ and
    \item if $\varphi\models \psi$ and $\psi\in \Lmc_{1}$ and $\sig(\psi)\cap \sig(\varphi)\subseteq \sigma$, then $\chi \models \psi$.
\end{itemize}
\end{definition}

Typically, one assumes that $\varphi\in \Lmc_{1}=\Lmc_{2}$. The \emph{uniform interpolant existence problem for $\Lmc$} is the problem to decide for $\varphi\in \Lmc$ and signature $\sigma$ whether a uniform $\Lmc(\sigma)$-interpolant for $\varphi$ in $\Lmc$ exists. Observe that for logics with the uniform interpolation property (such as basic modal logic, the modal $\mu$-calculus, and propositional intuitionistic logic~\cite{visser1996uniform,DBLP:conf/aiml/DAgostinoH96,DBLP:journals/jsyml/Pitts92}), the uniform interpolant existence problem is trivially decidable.
\begin{restatable}{theorem}{thmundecunform}
\label{thm:undecunform} 
Let $\Lmc_{1},\Lmc_{2}$ be fragments of FO containing $\Hmc$. Then it is undecidable whether for a formula $\varphi\in \Hmc$ and signature $\sigma$ there exists a uniform $\Lmc_{1}(\sigma)$-interpolant for $\varphi$ in $\Lmc_{2}$.

In particular, for any fragment $\Lmc$ of FO containing $\Hmc$, the uniform interpolant existence problem for $\Lmc$ is undecidable.
\end{restatable}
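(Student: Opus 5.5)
We reduce from an undecidable problem, namely the complement of the halting problem, or equivalently the (in)finiteness/unboundedness of a tiling or two-counter machine computation, using the mechanism behind Example~\ref{exm:non-elem}. That example shows that a nominal together with bounded-depth constraints forces the strongest $\sigma$-consequence to be a finite disjunction. Without such bounds, the strongest consequence becomes an infinite conjunction of $\Hmc(\sigma)$-formulae, and it is finitely axiomatisable precisely when some finiteness condition holds.

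\emph{Encoding.} Given a deterministic two-counter (Minsky) machine $M$, we build a polynomial-size $\varphi_M\in\Hmc$ over $\sigma=\{R,S\}\cup\sigma_0$ (plus auxiliary non-$\sigma$ symbols). The formula $\varphi_M$ uses a nominal $a\notin\sigma$ with $\mybox{R}a\wedge\mybox{S}a$, as in Example~\ref{exm:non-elem}, so that in every model the $R$- and $S$-successors coincide. Hence every $\Hmc(\sigma)$-consequence of the form $\mydiam{R}\chi\to\mydiam{S}\chi$ holds. Additional non-$\sigma$ propositional variables describe, along $R$-paths, the configurations of $M$. The key designed property is the following. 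For every $k$, $\varphi_M\models\theta_k$, where $\theta_k\in\Hmc(\sigma)$ expresses ``$\mydiam{R}\chi_k\to\mydiam{S}\chi_k$'' for a $\sigma$-formula $\chi_k$ describing (in $\sigma$-symbols only) the first $k$ steps of the computation. These consequences are genuinely new at each $k$ as long as $M$ runs for at least $k$ steps. If $M$ halts after $N$ steps, then the strongest consequence is equivalent to $\theta_N$ (together with a fixed finite part), so it is $\Hmc(\sigma)$-expressible. The required upper direction is that $\theta_N\models\psi$ for every $\Hmc$- or FO-consequence $\psi$ of $\varphi_M$ in the relevant signature. We prove this by a model construction: every model of $\theta_N$ (plus the fixed part) is $\sigma$-equivalent, with respect to the formulas in $\Lmc_1$, to a model of $\varphi_M$, obtained by choosing the non-$\sigma$ part appropriately.

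\emph{Non-existence when $M$ does not halt.} Suppose $\chi\in\Lmc_2\subseteq\mathrm{FO}$ were a uniform interpolant. Then $\chi\models\theta_k$ for all $k$, since each $\theta_k\in\Hmc\subseteq\Lmc_1$. We use compactness, or more concretely a family of models, to derive a contradiction. For each $k$ we exhibit a model $\M_k$ of $\theta_1\wedge\dots\wedge\theta_{k-1}$ violating $\theta_k$, in which the $R$- and $S$-successors disagree only at depth about $k$. Since $\chi$ is first-order, we use an ultraproduct or an Ehrenfeucht--Fra\"iss\'e locality argument (Gaifman locality with the rank of $\chi$) to show that $\chi$ cannot distinguish $\M_k$ from a genuine model of $\varphi_M$ for large $k$. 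This gives $\M_k\models\chi$ but $\M_k\not\models\theta_k$, a contradiction. To make locality apply we arrange that $\sigma$ contains no nominal, so that $\sigma$-models have no definable global anchor, and that distances grow with $k$.

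\emph{Main obstacle.} The hardest step is the exact design of $\varphi_M$ and the halting direction. We must guarantee that a single finite $\Hmc(\sigma)$-formula entails all FO-consequences of $\varphi_M$ over $\sigma$, and not merely the $\Hmc$ ones, because $\Lmc_1$ may be as large as FO. We will first try to make $\varphi_M$ force, on halting, models that are $\sigma$-determined up to $\sigma$-isomorphism by the finite computation. Then the $\sigma$-reducts of models of $\varphi_M$ are characterised exactly by an $\Hmc(\sigma)$-formula, using nominal-free $\sigma$ and a bounded tree shape as in Example~\ref{exm:non-elem}. The characterisation of reducts must hold up to isomorphism, not just bisimulation, since $\Lmc_1$ may be all of FO. The second statement of the theorem follows by taking $\Lmc_1=\Lmc_2=\Lmc$.
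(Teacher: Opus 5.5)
\paragraph{The core mechanism fails.} Your reduction breaks at the mechanism it rests on. The definition of a uniform interpolant only requires $\sig(\psi)\cap\sig(\varphi)\subseteq\sigma$ for the test formulae $\psi\in\Lmc_1$, so these may contain \emph{fresh} nominals. Because $\varphi_M$ makes the $R$- and $S$-successors of the root the same point $a$, you get $\varphi_M\models\mydiam{R}b\rightarrow\mydiam{S}b$ for a fresh nominal $b$.

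\emph{The halting direction is false.} With nominal-free $\sigma$, every $\chi\in\Hmc(\sigma)$ is a plain modal formula and is invariant under $\sigma$-bisimulation. Take a model of $\varphi_M$ whose root has an $R$-successor; you need such models, otherwise every $\theta_k$ holds trivially. It satisfies $\chi$. Unravel it, so that the $R$-child and the $S$-child of the root become distinct, bisimilar copies of $a$, and interpret $b$ as the $R$-child. Then $\chi$ still holds but $\mydiam{R}b\rightarrow\mydiam{S}b$ fails. Hence for $\Lmc_2=\Hmc$ (e.g.\ $\Lmc_1=\Lmc_2=\Hmc$, which the theorem covers) no uniform interpolant exists, whether or not $M$ halts. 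Your fallback of characterising $\sigma$-reducts up to isomorphism by a nominal-free $\Hmc(\sigma)$-formula is impossible for the same reason.

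\emph{The non-halting direction fails for $\Lmc_2=\mathrm{FO}$.} Any FO uniform interpolant $\chi$ must entail $\forall y\,(R(x,y)\rightarrow S(x,y))$: otherwise, reinterpret the fresh $b$ as a violating $y$. This formula alone entails every $\theta_k$, and every $\Mmc_k$ refuting $\theta_k$ violates it. The discrepancy sits at depth one and is visible at quantifier rank two, so no locality or Ehrenfeucht--Fra\"iss\'e argument can yield $\Mmc_k\models\chi$. It is also unclear how $\chi_k$, over $\sigma$ alone, could describe a computation encoded by non-$\sigma$ variables.

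\paragraph{The missing idea.} The nominals belong on the \emph{test} side, not inside $\varphi$. The paper reduces from finite tiling using a formula $\varphi_{\mathcal{S}}$ that contains no nominals at all. Its non-$\sigma$ variables $Q,P,\mathsf{even}$ act as existentially quantified predicates.

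If the tiling system has no solution, these predicates can always be chosen. So every model of the $\sigma$-part $\psi_1\wedge\psi_2\wedge\psi_3$ expands to a model of $\varphi_{\mathcal{S}}$, and this conjunction is itself a uniform FO$(\sigma)$-interpolant in $\Hmc$. In this case the $\sigma$-reducts form the bisimulation-closed model class of a modal formula, which is exactly what your design lacks.

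If there is a solution, formulae $\chi_n'$ with fresh nominals describe, up to isomorphism, a finite linear order of length depending on $n$ together with the solution grid. Whether $\varphi_{\mathcal{S}}\models\neg\chi_n'$ then depends only on the parity of $n$. So any uniform $\Hmc(\sigma)$-interpolant in FO would define parity of finite linear orders, which is impossible. Non-existence thus comes from FO-undefinability of a property of finite structures, not from non-finite-axiomatisability of an infinite modal theory.
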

The proof uses the fact that in $\Hmc$ one can use nominals to describe finite structures almost up to isomorphism. To illustrate the idea, consider the formula $\phi_{even}$ in $\Hmc$ defined as the conjunction of 
\begin{align*}
\mydiam{E}(\mathsf{min} \wedge \mathsf{even}),\quad
    \mybox{E}(\mathsf{even} \rightarrow \mybox{S}\neg \mathsf{even}), \quad \mybox{E}(\neg \mathsf{even} \rightarrow \mybox{S}\mathsf{even}),\\
    \mybox{E}(\mybox{S}\bot \rightarrow \neg \mathsf{even}), \quad \mybox{E}\mybox{S}\neg\mathsf{min}
\end{align*}
There does not exist a uniform $\Hmc(\{E,S,\mathsf{min}\})$-interpolant for $\varphi_{even}$ in FO because such an
interpolant would provide a definition of finite linear orders of even length (with $S$ being the successor relation).
It is known that such a definition does not exist~\cite[Example~2.3.6]{DBLP:books/daglib/0082516}. To show this, consider $\mathcal{M}_{n}=(W_{n},S^{\Mmc_{n}},E^{\Mmc_{n}},\mathsf{min}^{\Mmc_{n}})$ with $W_{n}=\{a,a_{0},\ldots,a_{n}\}$, $(a_{i},a_{i+1})\in S^{\Mmc_{n}}$ for $i<n$, $(a,a_{i})\in E^{\Mmc_{n}}$ for $i\leq n$,
and $\mathsf{min}^{\Mmc_{n}}=\{a_{0}\}$. Regard $a_{0},\ldots,a_{n}$ as nominals and describe $\Mmc_{n},a$
using the conjunction $\chi_{n}\in \Hmc$ of the following formulae:
\begin{align*}
     \mydiam{E}(\mathsf{min}\wedge a_{0}) \wedge \bigwedge_{i=1}^{n}\mydiam{E}a_{i},\hspace{0.5cm}
\allowdisplaybreaks
     \mybox{E} \bigvee_{i=0}^{n}a_{i} \wedge [E]\bigwedge_{i\not=j}\neg (a_{i} \wedge a_{j}),\\
     \mybox{E}\bigwedge_{i=0}^{n-1}(a_{i}\rightarrow \mydiam{S}a_{i+1}),
     \quad \mybox{E}\bigwedge_{i=0}^{n-1}(a_{i} \rightarrow [S]a_{i+1}),\quad 
     \mybox{E}(a_{n}\rightarrow [S]\bot).
\end{align*}
Then $\Mmc_{n},a\models\chi_{n}$ and one can easily show that $\phi_{even}\models \neg \chi_{n}$ iff $n$ is an even number. Hence a uniform $\Hmc(\{E,S,\mathsf{min}\})$-interpolant $\chi$ would have the property that $\chi\models \neg \chi_{n}$ iff $n$ is an even number. This contradicts non-FO-definability of finite linear orders of even length. 

The proof technique introduced above is extended in a rather straightforward way in the full version on arXiv to provide a reduction of an undecidable finite tiling problem to uniform interpolant existence, thus proving Theorem~\ref{thm:undecunform}. In the reduction, one encodes $n\times m$-grids in the same way as linear orderings are encoded above.

\section{Discussion}
We have established the first elementary upper bounds for computing Craig interpolants in a family of logics for which the CIP cannot be readily recovered. A central open question is to determine the exact complexity of the problem, and thus to close the gap between our fourfold exponential upper bounds and threefold exponential lower bounds. Beyond these, our results prompt several broader research questions. We discuss the robustness of our interpolant construction method when the input languages are extended and when the underlying notion of interpolant is modified.

% {\bf Extending the Input Language}. 
\subparagraph*{Extending the Input Language.}
Many different constraints on accessibility relations have been considered in both modal and description logic~\cite{areces200714,handbookDL}. The extension of our results to role inclusions constraints, $R\subseteq S$, is straightforward. Also, for logics in $\HL$ our approach extends in a straightforward way to fundamental constraints such as transitivity and reflexivity. For instance, to deal with transitive relations, it suffices to strengthen the definition of $t \rightsquigarrow_{R} t'$ by demanding that also $\mydiam{R}\chi \in t'$ implies $\mydiam{R}\chi\in t$ and to observe that a bisimulation along a relation $R$ is also a bisimulation along the transitive closure of $R$. On the other hand, more substantial adaptations appear to be necessary if one adds inverse relations to the language. Whether this is possible at all remains open. Further examples of logics lacking CIP for which the interpolant existence problem is decidable but no elementary bound on the size of Craig interpolants is known include the guarded and two-variable fragments of FO, weak K4, and various fragments of first-order modal logics~\cite{DBLP:conf/lics/JungW21,DBLP:conf/aiml/KuruczWZ24,DBLP:journals/lmcs/KuruczWZ25}. An orthogonal, and likely challenging question is whether one can decide existence of interpolants in the \emph{hybrid $\mu$-calculus}, that is, the extension of $\Hmc$ with fixed points, and if so, how to compute them. Note that, while the $\mu$-calculus enjoys the CIP (it even supports uniform interpolation~\cite{DBLP:conf/aiml/DAgostinoH96}), the formula $(a \wedge \mydiam{R}a) \rightarrow (b \rightarrow \mydiam{R} b)$ from the introduction shows that the hybrid $\mu$-calculus does not.

\subparagraph*{Modifying the Notion of Interpolant.}
% {\bf Modifying the Notion of Interpolant}. 
In many applications of interpolants in description logic, it is desirable to obtain interpolants in languages that are strictly weaker than the input language. For example, given graded input formulae one may seek grades-free interpolants, as e.g. in~\cite{DBLP:conf/lics/KuijerTWZ25}. Our method is highly robust under this kind of restriction. Indeed, for any $\Lmc_{1}\subseteq \Lmc_{2}$ with $\Lmc_{1},\Lmc_{2}\in \HL\cup\GL$, straightforward adaptations of our algorithms show that interpolants in $\Lmc_{1}$ 
between formulae in $\Lmc_{2}$ can be computed, whenever they exist, in fourfold exponential time. We note, however, that further extensions of these results may not be feasible. For instance, once inverse relations are added, the existence of interpolants in $\Hmc$ between formulae in $\Gmc$ becomes undecidable~\cite{DBLP:conf/lics/KuijerTWZ25}.

% Maybe not:
% \begin{itemize}
%    \item first order separators in graded case
% \end{itemize}

\clearpage

\bibliographystyle{plainurl}
\bibliography{main}

\clearpage
\appendix

\section{Proofs for Section~\ref{sec:first-order}}

\lemjointconsone*

\begin{proof} 
Assume $\mathsf{m}=(\mos_{1},\mos_{2})$ and $\mathcal{M}_1,w_{1}\sim_{\Hmc,\sigma}\mathcal{M}_{2},w_{2}$ with
    $\mathcal{M}_i,w_{i}\models \mos_{i} \wedge \case(\cmf_{i})$, for both $i\in \{1,2\}$. We show that $\mathsf{m}$ is not eliminated in the elimination procedure for $\cmf=(\cmf_{1},\cmf_{2})$.
    To this end, let 
\[
\mathcal{X}=\{(\tp_{\Mmc_{1}}(d_{1}),\tp_{\Mmc_{2}}(d_{2})) \mid \Mmc_{1},d_{1} \sim_{\Hmc,\sigma}\mathcal{M}_{2},d_{2}\}\\
\]
It is easy to see that $\mathcal{X}\subseteq \mathcal{Y}^{\ast}$ and so the claim follows.
    
\medskip

Conversely, we construct models $\Mmc_{1},\Mmc_{2}$ witnessing that $\mos_{1}\wedge \case(\cmf_{1}),\mos_{2}\wedge \case(\cmf_{2})$ are jointly $\Hmc(\sigma)$-consistent for every $\mos\in \mathcal{Y}^{\ast}$.
We construct models $\Mmc_{i}$ by merging two models $\Mmc_{i}'$ and $\Nmc_{i}$, for both $i\in \{1,2\}$. 

Observe that for both $i\in \{1,2\}$ and any $\mos_{i}$ with $\mos\in \Ymc^{\ast}$ and nominal $a$, we have $a\in \mos_{i}$ iff $\mos_{i}=t_{a}^{i}$.
We begin with constructing $\Mmc_{i}'$. Let $W^{\Mmc_{i}'} = \mathcal{Y}^{\ast}$ and set for all $\mos,\mos'\in W^{\Mmc_{i}'}$: 
\begin{itemize}
    \item $\mos\in p^{\Mmc_{i}'}$ if $p\in \mos_{i}$, for every propositional variable $p$,
    \item $\mos = a^{\Mmc_{i}'}$ if $t_{a}^{i}=\mos_{i}$, for every nominal $a$,
    \item $(\mathsf{m},\mathsf{m}') \in R^{\Mmc_{i}}$ if $\mathsf{m}\rightsquigarrow_{R}\mathsf{m}'$, for every $R\in \sigma$.
\end{itemize}
We next define $\Nmc_{i}$ as any model satisfying $\case(\cmf_{i})$ and containing for every $\mos\in \mathcal{Y}^{\ast}$ and $\mydiam{R}\chi\in\mos_{i}$ with $R\not\in \sigma$ for some type $t_{\chi}$ a world
$w_{t_{\chi}}$ satisfying $t_{\chi}$ with $\chi\in t_{\chi}$ and $\mathsf{m}_{i}\rightsquigarrow_{R}t_{\chi}$. 
Note that such a model exists since $\mos_{i}\wedge \case(\cmf_{i})$ is satisfiable for both $i\in \{1,2\}$
for every $\mos\in \mathcal{Y}^{\ast}$. 

Now merge $\Mmc_{i}'$ and $\Nmc_{i}$ by taking their disjoint union and then doing the following: (i) if $a$ is a nominal for which $a^{\Mmc_{i}'}$ is defined, then identify $a^{\Nmc_{i}}$ with $a^{\Mmc_{i}}$ and drop all pairs $(a^{\Nmc_{i}},d)$ from $R^{\Nmc_{i}}$ with $R\in\sigma$ and (ii) add all $(\mos,w_{t_{\chi}})$ with
$\mathsf{m}_{i}\rightsquigarrow_{R}t_{\chi}$ to $R^{\Mmc_{i}'}$ for $R\not\in\sigma$.

It is not difficult to show using the construction that for all $\mos\in \mathcal{Y}^{\ast}$:
\begin{itemize}
\item $\Mmc_{1},\mos\sim_{\Hmc,\sigma} \Mmc_{2},\mos$;
\item $\Mmc_{i},\mos\models \mos_{i}$, for both $i\in\{1,2\}$;
\item $\Mmc_{i},w\models t$ iff $\Nmc_{i},w\models t$ for all $w$ in $\Nmc_{i}$ and types $t$, for both $i\in\{1,2\}$.
\end{itemize}
It follows that $\mos_{1}\wedge \case(\cmf_{1}),\mos_{2}\wedge \case(\cmf_{2})$ are jointly $\Hmc(\sigma)$-consistent for every $\mos\in \mathcal{Y}^{\ast}$.
%It follows that $s_{1}\wedge \case(\bar{t}_{1}),s_{2}\wedge \case(\bar{t}_{2})$ are jointly $\Hmc(\sigma)$-consistent.
\end{proof}
%The following lemma is a direct consequence of Lemma~\ref{lem:jointconsone}.
%\begin{restatable}{lemma}{lemjointcons}
%\label{lem:jointcons} 
%    $\phi_1,\phi_2$ are jointly $\Hmc(\sigma)$-consistent iff there are $t_{1},t_{2}$
%    with $\phi_1\in t_{1}$ and $\phi_2\in t_{2}$ and basic nominal sequences $\overline{t}_{1}$ and $\overline{t}_{2}$ such %that $(\{t_{1}\},\{t_{2}\})$ is not eliminated in the elimination procedure for $(\overline{t}_{1}$, $\overline{t}_{2})$.
%
\begin{lemma} If $\mos$ is eliminated, then $\mos_{1}\wedge \case(\cmf_{1})\models \sep_{\cmf}(\mos)$
and $\mos_{2}\wedge \case(\cmf_{2})\models \neg\sep_{\cmf}(\mos)$.
\end{lemma}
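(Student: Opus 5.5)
We prove the statement by induction on the round $\ell$ in which $\mos$ is eliminated, following the definition of $\sep_{\cmf}(\mos)$ clause by clause. The inductive hypothesis is that every singleton mosaic $(t,t')$ eliminated before round $\ell+1$ satisfies $t\wedge\case(\cmf_1)\models\sep_\cmf(t,t')$ and $t'\wedge\case(\cmf_2)\models\neg\sep_\cmf(t,t')$.

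\emph{Base case.} If $\mos_1\wedge\case(\cmf_1)$ is unsatisfiable, then $\sep_\cmf(\mos)=\bot$. The first entailment holds vacuously and the second is trivial. The case $i=2$ with $\sep_\cmf(\mos)=\top$ is symmetric. If $\mos$ is $\sigma$-inconsistent, witnessed by $r\in\mos_i$ and $\neg r\in\mos_j$, then the literal $r$ (or $\neg r$) is itself a member of the type. Hence both entailments are immediate, since types are maximal consistent sets of subformulae and their negations.

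\emph{Inductive step.} Let $\mos$ be eliminated in round $\ell+1$ because of $\mydiam{R}\chi\in\mos_1$ with $R\in\sigma$. The case $i=2$ is dual: there we use the separator $\mydiam{R}(\bigvee_{t'\in W_2}\bigwedge_{t\in W_1}\neg\sep_\cmf(t,t'))$ negated, or equivalently swap roles.

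For the first entailment, take $\M,w\models\mos_1\wedge\case(\cmf_1)$. It has an $R$-successor $v$ with $\chi\in t:=\tp_\M(v)$, and $\mos_1\rightsquigarrow_R t$ holds because $wRv$. Thus $t\in W_1$. Moreover $\M,v\models t\wedge\case(\cmf_1)$, since $\case(\cmf_1)$ consists of $@$-formulae and is therefore global. By the induction hypothesis, $\M,v\models\sep_\cmf(t,t')$ for every $t'\in W_2$, so $\M,w\models\sep_\cmf(\mos)$.

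For the second entailment, take $\N,u\models\mos_2\wedge\case(\cmf_2)$ and an arbitrary $R$-successor $u'$. Its type $t'$ satisfies $\mos_2\rightsquigarrow_R t'$, so $t'\in W_2$. For each $t\in W_1$, the induction hypothesis gives $\N,u'\models\neg\sep_\cmf(t,t')$. Hence $u'$ falsifies $\bigwedge_{t''\in W_2}\sep_\cmf(t,t'')$ for every $t\in W_1$, which gives $\N,u\models\mybox{R}\neg(\ldots)=\neg\sep_\cmf(\mos)$.

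\emph{Main obstacle.} The delicate point is justifying that every pair in $W_1\times W_2$ was eliminated strictly earlier, which the induction hypothesis needs. This follows from the definition of a witness: any $(t,t')\in W_1\times W_2$ satisfies $\chi\in t$ and $\mos\rightsquigarrow_R(t,t')$. It is therefore a witness for $\mydiam{R}\chi\in\mos_1$, and so cannot lie in $\Ymc_\ell$. Hence it was dropped in some round $\le\ell$, or already in the passage to $\Ymc_0$, and the induction hypothesis applies.

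One further subtlety: types in the witness step must range over all types, not only those in $\Ymc$. This matches the definitions of $W_1$ and $W_2$. The edge cases $W_1=\emptyset$ and $W_2=\emptyset$ are handled by the conventions for empty disjunctions and conjunctions. In particular, $W_1=\emptyset$ cannot happen when $\mos_1\wedge\case(\cmf_1)$ is satisfiable, so in that case $\bot$ is still sound.
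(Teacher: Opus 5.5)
Your proof is correct and follows the paper's argument: induction on the elimination round, base cases read off from the definition of $\sep_{\cmf}(\mos)$, and in the inductive step the same check that an $R$-successor realizing some $t\in W_1$ yields $\sep_{\cmf}(\mos)$ on side~1, while every $R$-successor on side~2 has its type in $W_2$ (you argue this directly via $\mybox{R}$, the paper by contradiction). You also make explicit two points the paper leaves implicit: that $\case(\cmf_i)$ carries over to successors because it consists of $@$-formulae, and that every pair in $W_1\times W_2$ is a witness and hence was eliminated earlier; your separator for the dual case $i=2$ is also sound.
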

\begin{proof}
    If $\mos$ is eliminated in the base case, then this follows by definition. 
    Now assume $\mydiam{R}\chi\in \mos_{1}$, $R\in \sigma$, and $\mos$ is eliminated in round $\ell+1$ because there is no witness in $\mathcal{Y}_{\ell}$. Then all $(t,t')$ with $t\in W_{1}=\{ t \in \TP\mid \chi\in t, \mos_{1} \rightsquigarrow_{R} t\}$
and $t'\in W_{2}= \{t'\in \TP\mid \mos_{2} \rightsquigarrow_{R} t'\}$ have been eliminated. By induction hypothesis,
the lemma holds for these $(t,t')$. To show that $\mos_{1}\wedge \case(\cmf_{1})\models \sep_{\cmf}(\mos)$,
assume $\Mmc_{1},w_{1}\models \mos_{1} \wedge \case(\cmf_{1})$. Then there is $v_{1}$ with $(w_{1},v_{1})\in R^{\Mmc_{1}}$
and $\Mmc_{1},v_{1}\models t$ for some $t\in W_{1}$. By induction hypothesis, $\Mmc_{1},v_{1}\models \sep_{\cmf}(t,t')$ for all $t'\in W_{2}$. Hence $\Mmc_{1},w_{1}\models \sep_{\cmf}(\mos)$. To show that $\mos_{2}\wedge \case(\cmf_{2})\models \neg\sep_{\cmf}(\mos)$,
assume $\Mmc_{2},w_{2}\models \mos_{2} \wedge \case(\cmf_{2}) \wedge \sep_{\cmf}(\mos)$.
Then there are $t\in W_{1}$ and $v_{2}$ with $(w_{2},v_{2})\in R^{\Mmc_{2}}$
and $\Mmc_{2},v_{2}\models \bigwedge_{t'\in W_{2}}\sep_{\cmf}(t,t')$.
By induction hypothesis, $\Mmc_{2},v_{2}\models \neg t'$ for all $t'\in W_{2}$. This contradicts the definition
of $W_{2}$ since $\mos_{2}\models \mybox{R}\bigvee_{t'\in W_{2}} t'$.
\end{proof}
We next consider $\Hfull$. Fix $\varphi_{1},\varphi_{2}$ in $\Hfull$ and a signature $\sigma$.
For $\Hfull$, we assume without loss of generality that $\mydiam{U}a\in \sub(\varphi_1,\varphi_2)$ for $a\in \sigma$.
We say that types $t,t'$ \emph{coincide for $U$}, in symbols $t\equiv_{U}t'$,
if $\mydiam{U}\chi\in t$ iff $\mydiam{U}\chi \in t'$ for all formulae $\mydiam{U}\chi$. Observe that types satisfied in the same model coincide for $U$.
Mosaics $\mos,\mos'$ coincide for $U$, in symbols $\mos\equiv_{U}\mos'$,
if $\mos_{i}\equiv_{U}\mos_{i}$ for both $i\in \{1,2\}$.
Types $t,t'$ are called \emph{$R$-coherent for $\Hfull$}, in symbols $t \rightsquigarrow_{R}^{U} t'$, if $t\equiv_{U}t'$ and $\chi\in t'$ and $\mydiam{R}\chi \in \sub(\varphi_1,\varphi_2)$ imply $\mydiam{R}\chi\in t$.
Singleton mosaics $\mos$ and $\mos'$ are \emph{$R$-coherent for $\Hfull$},
in symbols $\mos\rightsquigarrow_{R}^{U} \mos'$, if $\mos_{i}\rightsquigarrow_{R}^{U} \mos_{i}'$ for both $i\in\{1,2\}$.
Observe that the relations $\rightsquigarrow_{U}^{U}$ and $\equiv_{U}$ coincide.

Let $\mos$ be a singleton mosaic and $i\in \{1,2\}$. 
We next define the notion of a witness. If $R\in \Rel\setminus\sigma$, then we say that a singleton mosaic $\mos'$ is an \emph{$\Hfull$-witness} for $\mydiam{R}\chi\in \mos_{i}$ if $\chi\in \mos_{i}'$, $\mos_{i} \rightsquigarrow_{R}^{U} \mos_{i}'$, and $\mos \equiv_{U} \mos'$. If $R\in \sigma\cup \{U\}$ we additionally require that
$\mos \rightsquigarrow_{R}^{U} \mos'$. Observe that $\mos \rightsquigarrow_{R}^{U} \mos'$
implies $\mos_{i} \rightsquigarrow_{R}^{U} \mos_{i}'$ and $\mos \equiv_{U} \mos'$ but that the 
converse implication only holds for $R=U$. 

%\item[(W$_{\bar{\sigma}}$)] if $R\not\in\sigma$, then $\chi\in t_{i}'$ and $t_{i}\rightsquigarrow_{R} t_{i}'$.
%\end{description}
Fix a basic nominal case $\cmf = (\cmf_{1},\cmf_{2})$. 

\medskip
\noindent
\emph{Elimination procedure for $\cmf$ in $\Hfull$}. Let $\mathcal{Y}$ be the set of all singleton mosaics. Obtain $\mathcal{Y}_{0}$ from $\mathcal{Y}$ by dropping 
\begin{itemize}
    \item all $\mos$ such that $\mos_{i}\wedge \case(\cmf_{i})$ is not satisfiable for some $i\in \{1,2\}$, 
%    \item all $\mos$ such that $a\in \mos_{i}$ and $\mos_{i}\not=t_{a}^{i}$ for some $a$ and 
%    $i\in \{1,2\}$. 
    \item all $\sigma$-inconsistent $\mos$.
\end{itemize}
Next obtain by induction $\mathcal{Y}_{\ell+1}$ from $\mathcal{Y}_{\ell}$ by dropping all $\mos$ 
such that there is $\mydiam{R}\chi\in \mos_{i}$ without any $\Hfull$-witness in $\mathcal{Y}_{\ell}$.
Let $\mathcal{Y}^{\ast}$ denote the set of mosaics where this procedure stabilises.
\begin{restatable}{lemma}{lemjointconsonetwo}
\label{lem:jointconsonetwo} 
%Let $\overline{t}_{1}$ and $\overline{t}_{2}$ be nominal sequences
%and $\mos$ a singleton mosaic.
The following conditions are equivalent for all $\mos$:
\begin{itemize}
    \item $\mos_{1} \wedge \case(\cmf_{1}),\mos_{2}\wedge \case(\cmf_{2})$ are jointly $\Hfull(\sigma)$-consistent; 
    \item $\mathsf{m}\in \mathcal{Y}^{\ast}$.
\end{itemize}
\end{restatable}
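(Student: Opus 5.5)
We prove the two directions separately, following the proof of Lemma~\ref{lem:jointconsone} and adding the bookkeeping for $\equiv_U$.

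\emph{From joint consistency to membership in $\mathcal{Y}^\ast$.} Assume $\Mmc_1,w_1\sim_{\Hfull,\sigma}\Mmc_2,w_2$ with $\Mmc_i,w_i\models\mos_i\wedge\case(\cmf_i)$. Define
\[
\mathcal{X}=\{(\tp_{\Mmc_1}(d_1),\tp_{\Mmc_2}(d_2))\mid \Mmc_1,d_1\sim_{\Hfull,\sigma}\Mmc_2,d_2\}.
\]
We show $\mathcal{X}\subseteq\mathcal{Y}^\ast$ by induction on the rounds, i.e.\ that no element of $\mathcal{X}$ is ever dropped.

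\begin{itemize}
\item Every element of $\mathcal{X}$ survives the initial filtering. Its types are realized in $\Mmc_i$, which satisfies $\case(\cmf_i)$ globally, because $\case$ consists of $@$-formulae. Hence $\mos_i\wedge\case(\cmf_i)$ is satisfiable. The pair is also $\sigma$-consistent by (atom) and (nom).
\item For witnesses, take $\mydiam{R}\chi$ in the type of $d_i$. Pick an $R$-successor $e_i$ realizing $\chi$.
\begin{itemize}
\item If $R\in\sigma\cup\{U\}$, (forth)/(back) for $R$ yield a bisimilar partner $e_j$ with $d_jRe_j$. The pair of types of $(e_1,e_2)$ lies in $\mathcal{X}$ and satisfies $\rightsquigarrow^U_R$ on both sides. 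Here $\equiv_U$ holds automatically, since all types realized in one model agree on $\mydiam{U}$-formulae.
\item If $R\notin\sigma$, pair $e_i$ with any bisimilar partner of it on the other side. Such a partner exists because the bisimulation satisfies (forth)/(back) for $U$, which makes it total and surjective. Side $i$ then satisfies $\rightsquigarrow^U_R$, and $\equiv_U$ again holds for free.
\end{itemize}
\end{itemize}

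\emph{From membership in $\mathcal{Y}^\ast$ to joint consistency.} We mimic the merge construction from Lemma~\ref{lem:jointconsone}, restricted to one $\equiv_U$-class.

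\begin{itemize}
\item Fix $\mos\in\mathcal{Y}^\ast$. Let $\mathcal{Y}^\ast_\mos=\{\mos'\in\mathcal{Y}^\ast\mid \mos'\equiv_U\mos\}$. This set is closed under taking witnesses, since every $\Hfull$-witness is $\equiv_U$-related.
\item Build $\Mmc'_i$ on domain $\mathcal{Y}^\ast_\mos$, with $\sigma$-relations given by $\rightsquigarrow^U_R$ and $U$ universal.
\item Take $\Nmc_i\models\case(\cmf_i)$ containing, for each $\mos'$ and each $\mydiam{R}\chi\in\mos'_i$ with $R\notin\sigma$, a point realizing the required target type. One single $\Nmc_i$ should realize all of these targets together.
\item Merge as before: identify nominal points and add the non-$\sigma$ edges.
\end{itemize}

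Unlike for $\Hmc$, the merged model must now satisfy the $\mydiam{U}\chi$ formulae and the $[U]$ constraints globally. The truth lemma for $\mydiam{U}\chi$ uses the witness requirement for $U$: $U$ is treated like a $\sigma$-relation, so every $\mydiam{U}\chi$ has a witness inside $\mathcal{Y}^\ast_\mos$.

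\emph{Main obstacle.} We must check that the added $\Nmc_i$ part realizes only types $\equiv_U$-compatible with $\mos_i$, and that the $\Mmc'_1$ and $\Mmc'_2$ parts add no points violating $[U]$-formulae. Otherwise the union breaks the truth lemma for $U$.

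\begin{itemize}
\item For the $\Nmc_i$ part, first try to choose $\Nmc_i$ as a model of $\mos_i\wedge\case(\cmf_i)$ itself. This exists by the initial filtering, and its realized types all satisfy $\equiv_U$ with $\mos_i$. The needed targets $t_\chi$ with $\mos'_i\rightsquigarrow^U_R t_\chi$ then exist there, because $\mos'_i\equiv_U\mos_i$ and $\mos'_i\wedge\case(\cmf_i)$ is satisfiable.
\item Bisimilarity needs one extra step: every point of the $\Nmc_i$ parts must have a $\sigma\cup\{U\}$-bisimilar partner on the other side. We handle this by making the parts bisimilar to each other, taking for each point of $\Nmc_1$ the mosaic in $\mathcal{Y}^\ast_\mos$ it matches. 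Failing that, we take the product of $\Nmc_1$ and $\Nmc_2$ over $\sigma$-reducts.
\end{itemize}

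This bisimilarity step is the delicate part. I expect to resolve it by showing that, after the merge, every point reachable from $\mos$ is related to a point of $\Mmc'$ via $U$-back/forth.
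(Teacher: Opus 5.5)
\textbf{Gap in the converse direction.} Your first direction is the paper's argument. The set $\mathcal{X}$ of type pairs of $\Hfull(\sigma)$-bisimilar points survives every round. You also correctly use totality of the bisimulation (from the $U$-clauses) to supply witnesses for $\mydiam{R}\chi$ with $R\notin\sigma$.

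The converse direction has a genuine gap, exactly where you flag the ``main obstacle''. You import the auxiliary models $\Nmc_i$ from the $\Hmc$ proof. With forth/back for $U$, an $\Hfull(\sigma)$-bisimulation is total and surjective, so every point of the merged $\Mmc_1$, including every point of the $\Nmc_1$-part, needs a bisimilar partner in $\Mmc_2$, and conversely. In the $\Hmc$ proof this was harmless, because the $\Nmc_i$-parts are reached only via non-$\sigma$ edges and there is no $U$. Here nothing you propose secures it:
\begin{itemize}
\item An arbitrary point of $\Nmc_1$ may realize a type occurring in no mosaic of $\mathcal{Y}^\ast_\mos$, so ``the mosaic it matches'' need not exist.
\item A product of $\sigma$-reducts does not preserve the $\sub(\varphi_1,\varphi_2)$-types on either side.
\item The preceding step is also unjustified. $\mos'_i\equiv_U\mos_i$ only constrains $\mydiam{U}$-subformulae, so a given model of $\mos_i\wedge\case(\cmf_i)$ need not realize $\mos'_i$, nor any $t_\chi$ with $\mos'_i\rightsquigarrow^U_R t_\chi$.
\end{itemize}

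\textbf{The missing observation: no auxiliary models are needed.} Unlike the $\Hmc$ procedure, the $\Hfull$ elimination requires an $\Hfull$-witness in $\mathcal{Y}_\ell$ for \emph{every} $\mydiam{R}\chi\in\mos_i$, including $R\in\Rel\setminus\sigma$. You used this yourself in the first direction. You also already noted that $\mathcal{Y}'=\{\mos'\in\mathcal{Y}^\ast\mid\mos'\equiv_U\mos\}$ is closed under witnesses. So build the models directly on $\mathcal{Y}'$:
\begin{itemize}
\item Give both models the domain $\mathcal{Y}'$.
\item Interpret each $R\in\sigma\cup\{U\}$ in both models by $\mos'\rightsquigarrow^U_R\mos''$. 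For $U$ this relation is just $\equiv_U$, hence universal on $\mathcal{Y}'$.
\item Interpret $R\in\Rel\setminus\sigma$ in $\Mmc_i$ side-wise by $\mos'_i\rightsquigarrow^U_R\mos''_i$.
\end{itemize}
The truth lemma then follows from the witness conditions together with coherence. The identity on $\mathcal{Y}'$ is an $\Hfull(\sigma)$-bisimulation, since the $\sigma\cup\{U\}$-edges coincide in the two models and totality is automatic on a common domain. Nominals are well defined because all nominals lie in $\sigma$ here, so $(\cmf_1(a),\cmf_2(a))$ is the unique mosaic containing $a$, and it lies in $\mathcal{Y}'$ as a $U$-witness for $\mydiam{U}a$. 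This is the paper's construction.
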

\begin{proof}
    The proof is similar to the proof of Lemma~\ref{lem:jointconsone}.
    Assume $\mathsf{m}=(\mos_{1},\mos_{2})$ and $\mathcal{M}_1,w_{1}\sim_{\Hfull,\sigma}\mathcal{M}_{2},w_{2}$ with
    $\mathcal{M}_i,w_{i}\models \mos_{i} \wedge \case(\cmf_{i})$, for both $i\in\{1,2\}$. We show that $\mathsf{m}$ is not eliminated in the elimination procedure for $\cmf$.
    To this end, let 
\[
\mathcal{X}=\{(\tp_{\Mmc_{1}}(d_{1}),\tp_{\Mmc_{2}}(d_{2})) \mid \Mmc_{1},d_{1} \sim_{\Hfull,\sigma}\mathcal{M}_{2},d_{2}\}\\
\]
It is easy to see that $\mathcal{X}\subseteq \mathcal{Y}^{\ast}$ and so the claim follows.
    
\medskip

Conversely, given $\mos^{\ast}\in \mathcal{Y}^{\ast}$ we construct models $\Mmc_{1},\Mmc_{2}$ witnessing that $\mos^{\ast}_{1}\wedge \case(\cmf_{1}),\mos^{\ast}_{2} \wedge \case(\cmf_{2})$ are jointly $\Hfull(\sigma)$-consistent. Let $\mathcal{Y}' = \{\mos'\in \mathcal{Y}^{\ast}\mid \mos'\equiv_{U}\mos^{\ast}\}$.
Now construct models $\Mmc_{i}$ by setting $W^{\Mmc_{i}} = \mathcal{Y}'$ and for all $\mos,\mos'\in W^{\Mmc_{i}}$: 
\begin{itemize}
    \item $\mos\in p^{\Mmc_{i}}$ if $p\in \mos_{i}$, for every propositional variable $p$;
    \item $\mos = a^{\Mmc_{i}}$ if $t_{a}^{i}=\mos_{i}$, for every nominal $a$;
    \item $(\mathsf{m},\mathsf{m}') \in R^{\Mmc_{i}}$ if $\mathsf{m}\rightsquigarrow_{R}^{U}\mathsf{m}'$, for every $R\in \sigma\cup \{U\}$;
    \item $(\mathsf{m},\mathsf{m}') \in R^{\Mmc_{i}}$ if $\mathsf{m}_{i}\rightsquigarrow_{R}^{U}\mathsf{m}_{i}'$, for every $R\in \Rel\setminus\sigma$.
\end{itemize}
It is not difficult to show using the construction that for all $\mos\in \mathcal{Y}'$:
\begin{itemize}
\item $\Mmc_{1},\mos\sim_{\Hfull,\sigma} \Mmc_{2},\mos$;
\item $\Mmc_{1},\mos\models \mos_{1}$;
\item $\Mmc_{2},\mos\models \mos_{2}$.
\end{itemize}
It follows that $\mos_{1}^{\ast}\wedge \case(\cmf_{1}),\mos_{2}^{\ast}\wedge \case(\cmf_{2})$ are jointly $\Hmc(\sigma)$-consistent for every $\mos\in \mathcal{Y}^{\ast}$.
\end{proof}
We now construct, similar to $\Hmc$, by induction for each eliminated $\mos$ a $\Hfull(\sigma)$-formula $\sep_{\cmf}(\mos)$ 
with 
\begin{equation*}
\mos_{1} \wedge \case(\cmf_{1}) \models \sep_{\cmf}(\mathsf{m}),
\hspace{0.1cm} \mos_{2} \wedge \case(\cmf_{2}) \models \neg\sep_{\cmf}(\mathsf{m})
\end{equation*}
The base case is exactly as before. If $\mos$ is eliminated in round $\ell+1$, we now have to also consider the case $R\not\in\sigma$ and $R\in U$. Assume $\mos$ is eliminated because there is 
$\mydiam{R}\chi\in \mos_{i}$ with $R \not\in \sigma$ or $R=U$ and there is no witness 
left in $\mathcal{Y}_{\ell}$. Assume $i=1$ (the case $i=2$ is dual).
Set $W_{1}=\{ t \in \TP\mid \chi\in t, \mos_{1} \rightsquigarrow_{R}^{U} t\}$
and $W_{2}= \{t'\in \TP\mid \mos_{2} \equiv^{U} t'\}$. Note that all mosaics in $W_{1}\times W_{2}$  have been eliminated before round $\ell+1$. Then we set
\[
\sep_{\cmf}(\mos) = \mydiam{U}(\bigvee_{t\in W_{1}}\bigwedge_{t'\in W_{2}}\sep_{\cmf}(t,t'))
\]
(Here, if $W_{1}=\emptyset$, then $\sep_{\cmf}(\mos)=\bot$. Note that $W_{2}\not=\emptyset$ since $\mos_{2}\in W_{2}.)$ 

If $\mos$ is eliminated because there is 
$\mydiam{R}\chi\in \mos_{i}$ with $R \in \sigma$ and there is no witness 
left in $\mathcal{Y}_{\ell}$ then we define $\sep_{\cmf}(\mos)$
exactly as for $\Hmc$ but with $\rightsquigarrow_{R}$ replaced by $\rightsquigarrow_{R}^{U}$ in the definition of
$W_{i}$, for both $i\in\{1,2\}$. It follows from Lemma~\ref{lem:jointconsonetwo} that $\sep_{\cmf}(\mos)$ is as required. The construction of the $\Hfull(\sigma)$-separator for $\varphi_{1},\varphi_{2}$ is now exactly the same as for $\Hmc$.  

\bigskip

We next consider $\Hmc_{@}$. Fix $\varphi_{1},\varphi_{2}$ in $\Hmc_{@}$ and a signature $\sigma$.
For $\Hmc_{@}$, we assume without loss of generality that $@_{a}a\in \sub(\varphi_1,\varphi_2)$ for $a\in \sigma$. 
We say that types $t,t'$ \emph{coincide for $@$}, in symbols $t\equiv_{@} t'$, if for all $@_{a}\chi\in \sub(\varphi_,\varphi_{2})$, $@_{a}\chi\in t$ iff $@_{a}\chi\in t'$. Mosaics $\mos,\mos'$ coincide for $@$,
in symbols $\mos \equiv_{@} \mos'$, if $\mos_i \equiv_{@} \mos_{i}'$ for both $i\in \{1,2\}$.
Types $t,t'$ are called \emph{$R$-coherent for $\Hmc_{@}$}, in symbols $t \rightsquigarrow_{R}^{@} t'$, if $t\equiv_{@}t'$ and $\chi\in t'$ and $\mydiam{R}\chi \in \sub(\varphi_1,\varphi_2)$ imply $\mydiam{R}\chi\in t$.
Singleton mosaics $\mos$ and $\mos'$ are \emph{$R$-coherent for $\Hmc_{@}$},
in symbols $\mos\rightsquigarrow_{R}^{@} \mos'$, if $\mos_{i}\rightsquigarrow_{R}^{@} \mos_{i}'$ for both $i\in \{1,2\}$.

Let $\mos$ be a singleton mosaic and $i\in \{1,2\}$. 
We next define the notion of a witness. For $R\in \sigma$, we say that a singleton mosaic $\mos'$ is an \emph{$\Hmc_{@}$-witness} for $\mydiam{R}\chi\in \mos_{i}$ if $\chi\in \mos_{i}'$ and 
$\mos \rightsquigarrow_{R}^{@} \mos'$. We say that a singleton mosaic $\mos'$ is an \emph{$\Hmc_{@}$-witness} for $@_{a}\chi\in \mos_{i}$ if $a,\chi\in \mos_{i}$ and $\mos \equiv_{@} \mos'$.

%\item[(W$_{\bar{\sigma}}$)] if $R\not\in\sigma$, then $\chi\in t_{i}'$ and $t_{i}\rightsquigarrow_{R} t_{i}'$.
%\end{description}
Fix a basic nominal case $\cmf =(\cmf_{1},\cmf_{2})$. Note that then $\chi\in \cmf_{i}(a)$ iff $@_{a}\chi\in \cmf_{i}(a)$, for all $@_{a}\chi\in \sub(\varphi_{1},\varphi_{2})$.

\medskip
\noindent
\emph{Elimination procedure for $\cmf$ in $\Hmc_{@}$}. Let $\mathcal{Y}$ be the set of all singleton mosaics. Obtain $\mathcal{Y}_{0}$ from $\mathcal{Y}$ by dropping 
\begin{itemize}
    \item all $\mos$ such that $\mos_{i}\wedge \case(\cmf_{i})$ is not satisfiable for some $i\in \{1,2\}$, 
%    \item all $\mos$ such that $a\in \mos_{i}$ and $\mos_{i}\not=t_{a}^{i}$ for some $a$ and 
%    $i\in \{1,2\}$. 
    \item all $\sigma$-inconsistent $\mos$.
\end{itemize}
Next obtain by induction $\mathcal{Y}_{\ell+1}$ from $\mathcal{Y}_{\ell}$ by dropping all $\mos$ 
such that there is $\mydiam{R}\chi\in \mos_{i}$ or $@_{a}\chi\in \mos_{i}$ without any $\Hmc_{@}$-witness in $\mathcal{Y}_{\ell}$.
Let $\mathcal{Y}^{\ast}$ denote the set of mosaics where this procedure stabilises.
The following is shown similarly to Lemma~\ref{lem:jointconsone}.
\begin{restatable}{lemma}{lemjointconsthree}
\label{lem:jointconsthree} 
%Let $\overline{t}_{1}$ and $\overline{t}_{2}$ be nominal sequences
%and $\mos$ a singleton mosaic.
The following conditions are equivalent for all $\mos$:
\begin{itemize}
    \item $\mos_{1} \wedge \case(\cmf_{1}),\mos_{2}\wedge \case(\cmf_{2})$ are jointly $\Hmc_{@}(\sigma)$-consistent;
    \item $\mathsf{m}\in \mathcal{Y}^{\ast}$.
\end{itemize}
\end{restatable}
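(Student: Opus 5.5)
We prove Lemma~\ref{lem:jointconsthree} along the lines of Lemma~\ref{lem:jointconsone}, adapting each direction to the $@$-operator.

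\emph{From joint consistency to membership in $\mathcal{Y}^{\ast}$.} Suppose $\Mmc_1,w_1\sim_{\Hmc_@,\sigma}\Mmc_2,w_2$ with $\Mmc_i,w_i\models\mos_i\wedge\case(\cmf_i)$. We set
\[
\mathcal{X}=\{(\tp_{\Mmc_1}(d_1),\tp_{\Mmc_2}(d_2))\mid \Mmc_1,d_1\sim_{\Hmc_@,\sigma}\Mmc_2,d_2\}
\]
and show by induction on $\ell$ that $\mathcal{X}\subseteq\mathcal{Y}_\ell$. Elements of $\mathcal{X}$ survive the base step. Every type realized in $\Mmc_i$ is consistent with $\case(\cmf_i)$, since $\Mmc_i$ satisfies that formula globally. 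The pair is also $\sigma$-consistent by (atom) and (nom). For the inductive step, all types realized in one model agree on every $@_a\chi$, so $\equiv_@$ holds between any two pairs in $\mathcal{X}$. A demand $\mydiam{R}\chi$ with $R\in\sigma$ is witnessed via (forth)/(back), which gives a bisimilar successor pair whose types are $R$-coherent for $\Hmc_@$. A demand $@_a\chi\in\mos_i$ is witnessed by the pair of types of $(a^{\Mmc_1},a^{\Mmc_2})$. This pair lies in $\mathcal{X}$ by condition (@) when $a\in\sigma$. When $a\notin\sigma$ some care is needed, because $a^{\Mmc_i}$ might not be bisimilar to anything. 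We handle this by first extending $Z$ or the models, e.g.\ by taking disjoint copies so that every element occurs in a bisimilar pair. In the worst case we weaken the witness condition for non-shared $a$ to require only $a,\chi\in t'$ on side $i$ together with $\equiv_@$.

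\emph{From membership in $\mathcal{Y}^{\ast}$ to joint consistency.} Given $\mos^\ast\in\mathcal{Y}^\ast$, we restrict to $\mathcal{Y}'=\{\mos\in\mathcal{Y}^\ast\mid\mos\equiv_@\mos^\ast\}$, mirroring the $\Hfull$ proof. Each $\Mmc_i$ is built by merging a mosaic model $\Mmc'_i$ on $\mathcal{Y}'$ with an auxiliary model $\Nmc_i$. In $\Mmc'_i$, an $R$-edge for $R\in\sigma$ is placed between $\mos,\mos'$ iff $\mos\rightsquigarrow^@_R\mos'$. The model $\Nmc_i$ satisfies $\case(\cmf_i)$ and supplies $R$-successors for $R\notin\sigma$. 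We then identify nominals exactly as in the proof of Lemma~\ref{lem:jointconsone}. Nominal interpretations are well defined because $\cmf_i$ fixes $t^i_a$, and every surviving $\mos$ with $a\in\mos_i$ has $\mos_i=t^i_a$ by the base elimination. The @-witness condition is what places $t^i_a$ into $\mathcal{Y}'$: it requires a mosaic in $\mathcal{Y}'$ whose $i$-th type contains $a$, and this type must be $t^i_a$. Also, $\chi\in\cmf_i(a)$ iff $@_a\chi\in\cmf_i(a)$, and all types in $\mathcal{Y}'$ agree on $@$-formulae. Together these make the truth lemma for $@_a\chi$ immediate. Finally, $Z=\{(\mos,\mos)\}$ satisfies (@) for $a\in\sigma$ because $(t^1_a,t^2_a)$ forms a single mosaic in $\mathcal{Y}'$, by $\sigma$-consistency.

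\emph{Main obstacle.} The hardest part is the merge. The nominal points in $\Nmc_i$ are identified with points of $\Mmc'_i$, and we must check that the truth lemma still holds for types in $\Nmc_i$ whose subformulae mention $@$ or nominals. The plan is to choose $\Nmc_i$ so that each $a^{\Nmc_i}$ realizes $\cmf_i(a)=t^i_a$. Every $@$-subformula then gets the same truth value in both parts, and the induction on formulae goes through as in Lemma~\ref{lem:jointconsone}.
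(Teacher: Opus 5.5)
Your proposal is correct and follows the route the paper intends; the paper only says the lemma is shown like Lemma~\ref{lem:jointconsone}. That route is the set $\mathcal{X}$ of $\Hmc_{@}(\sigma)$-bisimilar type pairs for one direction, and the merged mosaic model for the other, where the $@$-witness for $@_aa$ (present in every type by the standing assumption $@_aa\in\sub(\varphi_1,\varphi_2)$, which you use implicitly and should cite) forces $(t^1_a,t^2_a)\in\mathcal{Y}^\ast$ and hence (@). Drop your hedging about $a\notin\sigma$: the lemma lives in the setting of Theorem~\ref{thm:nom}, where $\sigma$ contains every nominal of $\varphi_1,\varphi_2$, so (@) always supplies the witness; for non-shared nominals the singleton-mosaic characterization genuinely breaks down (which is why Section~\ref{sec:hyper} passes to sets of types), so neither a copying trick nor a reformulated witness condition could have covered that case.
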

We now construct, similar to $\Hmc$, by induction for each eliminated $\mos$ a $\Hmc_{@}(\sigma)$-formula $\sep_{\cmf}(\mos)$ 
with 
\begin{equation*}
\mos_{1} \wedge \case(\cmf_{1}) \models \sep_{\cmf}(\mathsf{m}),
\hspace{0.1cm} \mos_{2} \wedge \case(\cmf_{2}) \models \neg\sep_{\cmf}(\mathsf{m})
\end{equation*}
The base case is exactly as before. If $\mos$ is eliminated in round $\ell+1$, we now have to also consider the case $@_{a}\chi$. Assume $\mos$ is eliminated because there is 
$@_{a}\chi\in \mos_{i}$ and there is no witness 
left in $\mathcal{Y}_{\ell}$. Assume $i=1$ (the case $i=2$ is dual).
Set $W_{1}=\{ t \in \TP\mid a,\chi\in t, \mos_{1} \equiv_{@} t\}$
and $W_{2}= \{t'\in \TP\mid \mos_{2} \equiv_{@} t'\}$. Note that all mosaics in $W_{1}\times W_{2}$ have been eliminated before round $\ell+1$. Then we set
\[
\sep_{\cmf}(\mos) = @_{a}(\bigvee_{t\in W_{1}}\bigwedge_{t'\in W_{2}}\sep_{\cmf}(t,t'))
\]
(Here, if $W_{1}=\emptyset$, then $\sep_{\cmf}(\mos)=\bot$. Note that $W_{2}\not=\emptyset$ since $\mos_{2}\in W_{2}.)$ 

If $\mos$ is eliminated because there is 
$\mydiam{R}\chi\in \mos_{i}$ with $R \in \sigma$ and there is no witness 
left in $\mathcal{Y}_{\ell}$ then we define $\sep_{\cmf}(\mos)$
exactly as for $\Hmc$ but with $\rightsquigarrow_{R}$ replaced by $\rightsquigarrow_{R}^{@}$ in the definition of
$W_{i}$, for both $i\in\{1,2\}$. It follows from Lemma~\ref{lem:jointconsthree} that $\sep_{\cmf}(\mos)$ is as required. The construction of the $\Hmc_{@}(\sigma)$-separator for $\varphi_{1},\varphi_{2}$ is now exactly the same as for $\Hmc$.  

\section{Proofs for Section~\ref{sec:hyper}}

\thmhypermosaics*

\begin{proof}

Assume first that $H$ is realized in $\M_{1}, \M_{2}$ by $\Hmc(\sigma)$-bisimulation $Z$. 
% Without loosing generality $Z$ is the \emph{maximal} $\Hmc(\sigma)$-bisimulation between $\M_{1}$ and $\M_{2}$.
Define $H'$ as the set of maximal (w.r.t.~set-inclusion on both coordinates) mosaics $\mos$ realized by $Z$. By definition, $H'$ is nominal-clean and $\sigma$-consistent. Moreover, $H'$ is a witness for all $\mos\in H\cup H'$. So neither $H$ nor $H'$ are eliminated.

Conversely, assume that $H$ is not eliminated. We construct models $\M_{1}$ and $\M_{2}$ realizing $H$ by $\Hmc(\sigma)$-bisimulation $Z$. We first construct a sequence $H_0,H_1,\ldots$ of hypermosaics in $\Zmc^*$ as follows:
\begin{itemize}
\item We start with $H_0:=H$.
\item For $i\geq 0$, try to pick some $\mos\in H_i$ such that $H_i$ itself is not a witness for $\mos$. If there is no such $\mos$, then $H_i$ is the last element of the sequence. Otherwise, $H_{i+1}$ is some $H'\in \Zmc^*$ that is a witness for $\mos$. (Such $H'$ always exists as $H_i$ is not eliminated.)
\end{itemize}
Note that the sequence is finite since each witness $H_{i+1}$ (strictly) extends $H_i$. Moreover, by definition, the last element $H^*$ of the sequence is a witness for all mosaics $\mos\in H^*$. Clearly, $H^*$ is $\sigma$-consistent and nominal-clean. 

We are now in a position to define $\M_{i}$ and $Z$. For $i\in \{1,2\}$, let Nom$_{i}$ be the set of nominals $a\in \Nom \cap (\sig(\varphi_1)\cup\sig(\varphi_2))$ that do not occur in any $t$ with $t\in \mos_{i}$ 
for some $\mos\in H^*$. Take
\[
W^{\M_{i}}= \{(t,\mos)\mid t\in \mos_{i}, \mos\in H^*\} \cup \text{Nom}_{i}
\]
as the domain of $\M_i$.
We next define $R^{\M_{i}}$ for relation symbols $R\in\sig(\varphi_1)\cup\sig(\varphi_2)$. Let $(t,\mos),(t',\mos')\in W^{\M_{i}}$. Then $((t,\mos),(t',\mos'))\in R^{\M_i}$ iff $t\rightsquigarrow_R t'$ and, if $R\in \sigma$, additionally $\mos \rightsquigarrow_R \mos'$.
For a propositional variable $p\in \sig(\varphi_1)\cup\sig(\varphi_2)$, let $p^{\M_{i}}=\{ (t,\mos)\in W_{i} \mid p\in t\}$.

For $a\in \Nom \cap (\sig(\varphi_1)\cup\sig(\varphi_2)$ we make a case distinction.
If there is $(t,\mos)\in W_{i}$ with $a\in t$, then let $a^{\M_{i}}= (t,\mos)$.
Otherwise $a\in \text{Nom}_{i}$ and we set $a^{\M_{i}}=a$.

\begin{claim}
    For every $i\in\{1,2\}$, $\psi\in\sub(\varphi_1,\varphi_2)$ and every $(t,\mos)\in W^{\Mmc_i}$, we have: 
    \[
    \psi\in t\quad\text{ iff }\quad\Mmc_i,(t,\mos)\models \psi.
    \]
\end{claim}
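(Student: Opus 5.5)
We prove the claim by induction on the structure of $\psi\in\sub(\varphi_1,\varphi_2)$, fixing $i$ and $(t,\mos)\in W^{\Mmc_i}$. The propositional cases are immediate. For $\psi=\top$ there is nothing to show. For $\psi=p$ the claim holds by the definition of $p^{\M_i}$. Negation and conjunction follow from the induction hypothesis, because $t$ is a type, i.e.\ a maximal satisfiable subset of $\sub(\varphi_1,\varphi_2)$ and negations. Thus $\neg\chi\in t$ iff $\chi\notin t$, and $\chi_1\wedge\chi_2\in t$ iff both conjuncts are in $t$.

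For a nominal $\psi=a$ we use nominal-cleanness of $H^*$. If $a\in t$, then by construction $a^{\M_i}=(t_a^i,\mos^a)$, where $t_a^i$ is the unique type on side $i$ containing $a$ and $\mos^a$ is the unique mosaic in $H^*$ with $t_a^i\in\mos^a_i$. Since $(t,\mos)$ is in $W^{\M_i}$ with $a\in t$, uniqueness gives $t=t_a^i$ and $\mos=\mos^a$, so $(t,\mos)=a^{\M_i}$. Conversely, if $(t,\mos)=a^{\M_i}$, then $a^{\M_i}$ was not taken from $\text{Nom}_i$, so it was defined as a pair whose type contains $a$; hence $a\in t$. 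Elements of $\text{Nom}_i$ are never of the form $(t,\mos)$, so they cause no conflict.

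The main case is $\psi=\mydiam{R}\chi$. For the direction from right to left, suppose $\M_i,(t,\mos)\models\mydiam{R}\chi$. Then there is an $R$-successor $(t',\mos')$ with $\M_i,(t',\mos')\models\chi$. No $R$-edges lead into $\text{Nom}_i$, so the successor has this form. By the induction hypothesis $\chi\in t'$. The edge definition gives $t\rightsquigarrow_R t'$, and by the definition of $R$-coherence this yields $\mydiam{R}\chi\in t$. For the direction from left to right, suppose $\mydiam{R}\chi\in t\in\mos_i$. Since $H^*$ is a witness for every mosaic in $H^*$, in particular for $\mos$, some $\mos'\in H^*$ witnesses $\mydiam{R}\chi\in t\in\mos_i$. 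So there is $t'\in\mos'_i$ with $\chi\in t'$ and $t\rightsquigarrow_R t'$, and, if $R\in\sigma$, also $\mos\rightsquigarrow_R\mos'$. These are exactly the conditions for $((t,\mos),(t',\mos'))\in R^{\M_i}$. By the induction hypothesis $\M_i,(t',\mos')\models\chi$, hence $\M_i,(t,\mos)\models\mydiam{R}\chi$.

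I expect the main obstacle to be the bookkeeping in the $\mydiam{R}\chi$ case. We must make sure the witness property of $H^*$ is used for the specific pair $(t,\mos)$ and matches the edge definition exactly, including the extra clause $\mos\rightsquigarrow_R\mos'$ when $R\in\sigma$. A second point of care is that the dummy elements in $\text{Nom}_i$ never interfere: they have no $R$-edges, so they cannot create spurious diamond witnesses, and they are never of the form $(t,\mos)$, so the nominal case cannot identify them with pairs.
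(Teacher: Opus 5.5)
Your proof is correct and follows the paper's argument. You use induction on $\psi$ and treat the $\mydiam{R}\chi$ case as the paper does: the forward direction uses that $H^*$ is a witness for each of its own mosaics, and the backward direction uses the coherence $t\rightsquigarrow_R t'$ built into the definition of the edges. Your explicit treatment of the Boolean and nominal cases (via nominal-cleanness of $H^*$, noting that elements of $\text{Nom}_i$ carry no edges) only fills in details the paper dismisses as trivial.
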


\begin{proof}
We prove the claim by induction on $\psi$. The only non-trivial case are formulae of shape $\psi=\mydiam{R}\chi$. 

($\Rightarrow$) Suppose $\mydiam{R}\chi\in t$. Since $H^*$ is a witness for $\mydiam{R}\chi\in t\in \mos$, there is $\mos'\in H^*$ and a type $t'\in \mos'_i$ such that $\chi\in t'$, $t\rightsquigarrow_Rt'$ and, in case $R\in\sigma$, also $\mos\rightsquigarrow_R \mos'$. By induction hypothesis, we have $\M_i,(t',\mos')\models \chi$. By construction, we have $((t,\mos),(t',\mos'))\in R^{\M_i}$. Hence, $\M_i,(t,\mos)\models \mydiam{R}\chi$.

($\Leftarrow$) Suppose $\M_i,(t,\mos)\models \mydiam{R}\chi$. Then there is $(t',\mos')$ with $\M_i,(t',\mos')\models\chi$ and $((t,\mos),(t',\mos'))\in R^{\M_i}$. By induction hypothesis, $\chi\in t'$. Since $t\rightsquigarrow_R t'$, we have $\mydiam{R}\chi\in t$.
\end{proof}

Finally, define $Z\subseteq W^{\M_1}\times W^{\M_2}$ by setting
\begin{gather*}
(t,\mos)\ Z\ (t',\mos') \iff \mos=\mos'.
\end{gather*}

\begin{claim}
    $Z$ is an $\Hmc(\sigma)$-bisimulation between $\Mmc_1$ and $\Mmc_2$, and each $\mos\in H^*$ is realized in $\Mmc_1,\Mmc_2$ by $Z$.
\end{claim}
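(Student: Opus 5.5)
We have to verify the four bisimulation conditions (atom), (nom), (forth), (back) for every pair $((t,\mos),(t',\mos))\in Z$, and then check realization. Realization comes straight from the definitions: for $\mos\in H^*$ with $\mos_1=\{t_1,\dots,t_k\}$ and $\mos_2=\{t'_1,\dots,t'_l\}$, we pick the elements $(t_j,\mos)\in W^{\M_1}$ and $(t'_j,\mos)\in W^{\M_2}$. These are pairwise $Z$-related because they share the mosaic component. By the preceding claim, applied to all $\psi\in\sub(\varphi_1,\varphi_2)$, we get $\tp_{\M_1}((t_j,\mos))=t_j$ and likewise on side~2. Hence $\mos$ is realized by $Z$.

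For (atom), take $(t,\mos)\,Z\,(t',\mos)$ with $t\in\mos_1$ and $t'\in\mos_2$. Since $H^*$ is $\sigma$-consistent, every $p\in\sigma\cap\Var$ satisfies $p\in t$ iff $p\in t'$, and the valuation is read off the type component. For (nom), take $a\in\sigma\cap\Nom$. The same $\sigma$-consistency gives $a\in t$ iff $a\in t'$. It remains to show that $a^{\M_1}=(t,\mos)$ iff $a\in t$, and likewise on side~2. This is exactly what nominal-cleanness provides. There is at most one type $t^i_a$ containing $a$, occurring in at most one mosaic, so the pair $(t^i_a,\mos)$ is the unique element of $W^{\M_i}$ whose type contains $a$. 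It is therefore well defined as $a^{\M_i}$. If $a$ occurs in no type on side $i$, then $a^{\M_i}$ is the fresh element $a\in\text{Nom}_i$, which lies outside the domain of $Z$. In that case the (nom) condition holds trivially for every $Z$-related pair, since neither side of the equivalence is true. The claim also needs $\sigma$-consistency to force that, if $a\in\sigma$, then $a$ is absent on both sides or present on both sides within a mosaic; that is again immediate from the definition.

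For (forth), let $R\in\sigma\cap\Rel$, let $(t,\mos)\,Z\,(t',\mos)$, and let $((t,\mos),(s,\mos'))\in R^{\M_1}$. By the definition of $R^{\M_1}$ with $R\in\sigma$, we have $\mos\rightsquigarrow_R\mos'$. By the definition of $\rightsquigarrow_R$ on mosaics, applied to $t'\in\mos_2$, there is $s'\in\mos'_2$ with $t'\rightsquigarrow_R s'$. Together with $\mos\rightsquigarrow_R\mos'$, this gives $((t',\mos),(s',\mos'))\in R^{\M_2}$. Moreover $(s,\mos')\,Z\,(s',\mos')$, since the mosaic components agree. The elements of $\text{Nom}_i$ have no relevant edges, so no edge from a $Z$-related point leads into them. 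This needs a small check: by the definition of $R^{\M_i}$, edges exist only between pairs. (back) is symmetric.

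The only step needing care is the nominal bookkeeping in (nom). One has to make sure that nominal-cleanness really gives a unique element per nominal and side, and that elements of $\text{Nom}_i$ never interact with $Z$. The forth and back steps are the point where the definition of $R^{\M_i}$, which additionally requires $\mos\rightsquigarrow_R\mos'$ for $R\in\sigma$, pays off. Once that definition is in place they are routine.
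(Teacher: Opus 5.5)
Your proof is correct and follows the paper's argument. In both, (forth) and (back) come from the extra condition $\mos\rightsquigarrow_R\mos'$ in the definition of $R^{\M_i}$ for $R\in\sigma$, which gives a matching successor with the same mosaic component on the other side. You also spell out (atom) and (nom) via $\sigma$-consistency and nominal-cleanness, and derive realization from the preceding truth claim; the paper treats these parts as immediate.
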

\begin{proof}
The atomic and nominal conditions are immediate. For the back and forth conditions it is enough to show that for every $R\in\sigma$, points $(t,\mos)$, $(t',\mos)$, and mosaics $\mos'$, either both $(t,\mos),(t',\mos)$ have an $R$-successor with second component $\mos'$ or none has. Suppose $((t,\mos),(\hat t,\mos'))\in R^{\M_i}$ for some $i\in\{1,2\}$. By definition of $R^{\M_i}$, $\mos\rightsquigarrow_R \mos'$. That is, for every $j\in\{1,2\}$ and every $s\in\mos_j$, we have some $s'\in\mos'_j$ with $s\rightsquigarrow_R s'$. Hence, there is $\hat t'\in\mos_{3-i}$ with $t'\rightsquigarrow_R \hat t'$. By definition of $R^{\M_i}$, $((t',\mos),(\hat t',\mos'))\in R^{\M_{3-i}}$.
\end{proof}

It remains to note that $H$ is realized in $\Mmc_1,\Mmc_2$ by $Z$ since each $\mos\in H$ is (component-wise) contained in some $\mos^*\in H^*$ and each $\mos^*$ in $H^*$ is realized in $\Mmc_1,\Mmc_2$ by $Z$, by the Claim.
\end{proof}

\subsection{Elimination Procedure for \Hfull}

We let $\Zmc^*$ denote the result of the hypermosaic elimination procedure with the changes needed for $\Hfull$ described in the main body of paper. Then, we have the following equivalent of Theorem~\ref{thm:correctness-elimination}.

\begin{theorem}\label{thm:correctness-elimination-full}
For a hypermosaic $H$, the following are equivalent: 
\begin{itemize}

    \item  $H$ is realizable by an $\Hfull(\sigma)$-bisimulation;
    
    \item $H\in \Zmc^*$.
    
\end{itemize}
\end{theorem}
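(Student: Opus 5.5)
We follow the proof of Theorem~\ref{thm:correctness-elimination} and change only what $U$ and the modified definitions require: mosaics have both components non-empty, hypermosaics in $\Zmc_0$ are $U$-consistent, each type contains $\mydiam{U}a$ for every nominal $a$ in $\sub(\varphi_1,\varphi_2)$, and witnesses are required for $\mydiam{U}\chi$, with $U$ treated as a symbol of $\sigma$.

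\emph{Realizable implies not eliminated.} Assume $H$ is realized in $\M_1,\M_2$ by an $\Hfull(\sigma)$-bisimulation $Z$. Let $H'$ be the set of maximal mosaics realized by $Z$. Since $U\in\sigma$ in the bisimulation sense, $Z$ is total and surjective. Hence every maximal realized mosaic has both components non-empty, and every element of $\M_i$ lies in one of them. All types realized in a single model agree on formulae $\mydiam{U}\chi$, so $H'$ is $U$-consistent. As before, $H'$ is nominal-clean and $\sigma$-consistent, and it extends $H$.

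It remains to check that $H'$ witnesses every $\mydiam{R}\chi$. For $R\in\sigma\cup\{U\}$, the (forth)/(back) conditions applied to all $Z$-related pairs give a realized successor mosaic $\mos'$ with $\mos\rightsquigarrow_R\mos'$. For $U$ this says every point of $\mos$ has all points as successors, so any maximal mosaic containing the witnessing point will do. For $R\notin\sigma$, the successor lies in some maximal realized mosaic, since $Z$ is total. Since all of $H'$ is realized and the set of realized hypermosaics is closed under this witness step, no hypermosaic in it is ever eliminated by induction on rounds.

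\emph{Not eliminated implies realizable.} Build $H^*\in\Zmc^*$ extending $H$ that witnesses all its mosaics, using the same finite chain construction as before. Define $W^{\M_i}=\{(t,\mos)\mid t\in\mos_i,\mos\in H^*\}$. The set $\mathrm{Nom}_i$ is not needed here. Because $\mydiam{U}a$ belongs to every type and $H^*$ is a witness for $\mydiam{U}a$, some type in some $\mos_i$ contains $a$. Nominal-cleanness then makes $a^{\M_i}$ well defined.

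Relations $R\neq U$ are defined as before, and $U^{\M_i}$ is the full relation. In the truth lemma, the new case is $\mydiam{U}\chi$:
\begin{itemize}
\item ($\Rightarrow$) comes from the witness in $H^*$.
\item ($\Leftarrow$) uses $U$-consistency: if $\chi\in t'$ for some point $(t',\mos')$, then $H^*$ must have been a witness for $\mydiam{U}\chi$ somewhere. More carefully, $t'\rightsquigarrow_U t'$ gives $\mydiam{U}\chi\in t'$, because $\mydiam{U}\chi\in\sub$ and types are consistent. $U$-consistency then transfers $\mydiam{U}\chi$ to $t$.
\end{itemize}
The $@_a\chi$ formulae are abbreviations and need no separate treatment.

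Define $Z$ by $\mos=\mos'$ as before. The (forth)/(back) conditions for $R\in\sigma$ go exactly as in the $\Hmc$ case. For $U$, non-emptiness of both components of every mosaic makes $Z$ total and surjective, which yields the $U$-conditions.

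The step I expect to be the main obstacle is the $\Leftarrow$ case for $\mydiam{U}\chi$ in the truth lemma. It depends on the type-coherence conventions (that $t\rightsquigarrow_U t'$ includes $t\equiv_U t'$) together with $U$-consistency of $H^*$ across both sides.
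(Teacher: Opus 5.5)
Your proposal is correct and follows the paper's proof essentially step for step: maximal realized mosaics for the first direction, and for the converse the witness chain to a self-witnessing $H^*$, the $(t,\mos)$-model with full $U$, the truth lemma via $U$-consistency, and $Z$ relating points with equal mosaic component; you also make explicit two points the paper leaves tacit, namely the $U$-consistency of $H'$ and the existence of a type containing each nominal $a$ via the witness for $\mydiam{U}a$. One minor point you share with the paper: first replace $Z$ by the maximal $\Hfull(\sigma)$-bisimulation, since for an arbitrary $Z$ the maximal realized mosaics need not be nominal-clean, and note that the first direction can only place $H$ in $\Zmc^*$ when $H$ is itself nominal-clean.
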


\begin{proof} 
The proof is similar to the proof of Theorem~\ref{thm:correctness-elimination}.

Assume first that $H$ is realized in $\M_{1}, \M_{2}$ by $\Hmc(\sigma)$-bisimulation $Z$. 
% Without loosing generality $Z$ is the \emph{maximal} $\Hmc(\sigma)$-bisimulation between $\M_{1}$ and $\M_{2}$.
Define $H'$ as the set of maximal (w.r.t.~set-inclusion on both coordinates) mosaics $\mos$ realized by $Z$. By definition, $H'$ is nominal-clean and $\sigma$-consistent. Moreover, $H'$ is a witness for all $\mos\in H\cup H'$. So neither $H$ nor $H'$ are eliminated.

Conversely, assume that $H$ is not eliminated. We construct models $\M_{1}$ and $\M_{2}$ realizing $H$ by $\Hmc(\sigma)$-bisimulation $Z$. We first construct a sequence $H_0,H_1,\ldots$ of hypermosaics in $\Zmc^*$ as follows:
\begin{itemize}
\item We start with $H_0:=H$.
\item For $i\geq 0$, try to pick some $\mos\in H_i$ such that $H_i$ itself is not a witness for $\mos$. If there is no such $\mos$, then $H_i$ is the last element of the sequence. Otherwise, $H_{i+1}$ is some $H'\in \Zmc^*$ that is a witness for $\mos$. (Such $H'$ always exists as $H_i$ is not eliminated.)
\end{itemize}
Note that the sequence is finite since each witness $H_{i+1}$ (strictly) extends $H_i$. Moreover, by definition, the last element $H^*$ of the sequence is a witness for all mosaics $\mos\in H^*$. Clearly, $H^*$ is $\sigma$-consistent, $U$-consistent, and nominal-clean. 

We are now in a position to define $\M_{i}$ and $Z$. Take
\[
W^{\M_{i}}= \{(t,\mos)\mid t\in \mos_{i}, \mos\in H^*\}
\]
as the domain of $\M_i$ and set $U^{\M_i}=W^{\M_i}\times W^{\M_i}$.
We next define $R^{\M_{i}}$ for relation symbols $R\in\Rel\cap(\sig(\varphi_1)\cup\sig(\varphi_2))$. Let $(t,\mos),(t',\mos')\in W^{\M_{i}}$. Then $((t,\mos),(t',\mos'))\in R^{\M_i}$ iff $t\rightsquigarrow_R t'$ and, if $R\in \sigma$, additionally $\mos \rightsquigarrow_R \mos'$.
For a propositional variable $p\in \sig(\varphi_1)\cup\sig(\varphi_2)$, let $p^{\M_{i}}=\{ (t,\mos)\in W_{i} \mid p\in t\}$.

For $a\in \Nom \cap (\sig(\varphi_1)\cup\sig(\varphi_2)$, we set $a^{\M_{i}}= (t_a,\mos)$ where $\mos\in H^*$ is the unique mosaic that mentions $a$ and $t_a\in \mos_i$ is the unique type in $\mos_i$ that contains $a$.

\begin{claim}
    For every $i\in\{1,2\}$, $\psi\in\sub(\varphi_1,\varphi_2)$ and every $(t,\mos)\in W^{\Mmc_i}$, we have: 
    \[
    \psi\in t\quad\text{ iff }\quad\Mmc_i,(t,\mos)\models \psi.
    \]
\end{claim}

\begin{proof}
We prove the claim by induction on $\psi$. The only non-trivial case are formulae of shape $\psi=\mydiam{R}\chi$. Here, $R$ can be either a relation symbol in $\Rel$ or $U$.

($\Rightarrow$) Suppose $\mydiam{R}\chi\in t$. Since $H^*$ is a witness for $\mydiam{R}\chi\in t\in \mos$, there is $\mos'\in H^*$ and a type $t'\in \mos'_i$ such that $\chi\in t'$, $t\rightsquigarrow_Rt'$ and, in case $R\in\sigma$, also $\mos\rightsquigarrow_R \mos'$. By induction hypothesis, we have $\M_i,(t',\mos')\models \chi$. By definition of $R^{\M_i}$ (recall that this is potentially $U^{\M_i}$), we have $((t,\mos),(t',\mos'))\in R^{\M_i}$. Hence, $\M_i,(t,\mos)\models \mydiam{R}\chi$.

($\Leftarrow$) Suppose $\M_i,(t,\mos)\models \mydiam{R}\chi$. Then there is $(t',\mos')$ such that $\M_i,(t',\mos')\models\chi$ and $((t,\mos),(t',\mos'))\in R^{\M_i}$. By induction hypothesis, $\chi\in t'$. If $R\in\Rel$, then $((t,\mos),(t',\mos'))\in R^{\M_i}$ implies $t\rightsquigarrow_R t'$, and hence $\mydiam{R}\chi\in t$. If $R=U$, then $\chi\in t'$ implies $\mydiam{U}\chi\in t'$. $U$-consistency of $H^*$ finally implies $\mydiam{U}\chi\in t$.
\end{proof}

Finally, define $Z\subseteq W^{\M_1}\times W^{\M_2}$ by setting
\begin{gather*}
(t,\mos)\ Z\ (t',\mos') \iff \mos=\mos'.
\end{gather*}

\begin{claim}
    $Z$ is an $\Hmc(\sigma)$-bisimulation between $\Mmc_1$ and $\Mmc_2$, and each $\mos\in H^*$ is realized in $\Mmc_1,\Mmc_2$ by $Z$.
\end{claim}
\begin{proof}
The atomic and nominal conditions are immediate. For the back and forth conditions it is enough to show that for every $R\in\sigma\cup\{U\}$, points $(t,\mos)$, $(t',\mos)$, and mosaics $\mos'$, either both $(t,\mos),(t',\mos)$ have an $R$-successor with second component $\mos'$ or none has. This is true for $R=U$ by our assumption that $\mos_1$ and $\mos_2$ are non-empty for every $\mos\in H^*$.
For $R\in \sigma$, suppose $((t,\mos),(\hat t,\mos'))\in R^{\M_i}$ for some $i\in\{1,2\}$. By definition of $R^{\M_i}$, $\mos\rightsquigarrow_R \mos'$. That is, for every $j\in\{1,2\}$ and every $s\in\mos_j$, we have some $s'\in\mos'_j$ with $s\rightsquigarrow_R s'$. Hence, there is $\hat t'\in\mos_{3-i}$ with $t'\rightsquigarrow_R \hat t'$. By definition of $R^{\M_i}$, $((t',\mos),(\hat t',\mos'))\in R^{\M_{3-i}}$.
\end{proof}

It remains to note that $H$ is realized in $\Mmc_1,\Mmc_2$ by $Z$ since each $\mos\in H$ is (component-wise) contained in some $\mos^*\in H^*$ and each $\mos^*$ in $H^*$ is realized in $\Mmc_1,\Mmc_2$ by $Z$, by the Claim.
\end{proof}

\begin{corollary}\label{cor:hypermosaics-full}
    $\varphi_1$ and $\varphi_2$ are jointly $\Hfull(\sigma)$-consistent iff there is a hypermosaic $\{(\{t_{1}\},\{t_{2}\})\}\in\Zmc^*$ with $\varphi_1\in t_{1}$ and $\varphi_2\in t_{2}$.
\end{corollary}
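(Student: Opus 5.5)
The corollary follows by instantiating Theorem~\ref{thm:correctness-elimination-full} at the singleton hypermosaic $\{(\{t_1\},\{t_2\})\}$, exactly as Corollary~\ref{cor:hypermosaics} follows from Theorem~\ref{thm:correctness-elimination}. By definition, $\varphi_1,\varphi_2$ are jointly $\Hfull(\sigma)$-consistent iff there are pointed models $\M_1,w_1\models\varphi_1$ and $\M_2,w_2\models\varphi_2$ with $\M_1,w_1\sim_{\Hfull,\sigma}\M_2,w_2$. So the proof reduces to showing that such a situation is the same as the realizability of some $\{(\{t_1\},\{t_2\})\}$ with $\varphi_1\in t_1$ and $\varphi_2\in t_2$.

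\emph{($\Rightarrow$)} Fix witnesses $\M_1,w_1$ and $\M_2,w_2$ together with an $\Hfull(\sigma)$-bisimulation $Z$ containing $(w_1,w_2)$. Put $t_i=\tp_{\M_i}(w_i)$; then $\varphi_i\in t_i$, since $\varphi_i$ is a subformula of itself. The mosaic $(\{t_1\},\{t_2\})$ has both components non-empty, as required for $\Hfull$, and it is realized by $Z$ through the elements $w_1,w_2$. Hence the hypermosaic $\{(\{t_1\},\{t_2\})\}$ is realizable by an $\Hfull(\sigma)$-bisimulation, and Theorem~\ref{thm:correctness-elimination-full} places it in $\Zmc^*$.

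\emph{($\Leftarrow$)} Suppose $\{(\{t_1\},\{t_2\})\}\in\Zmc^*$ with $\varphi_i\in t_i$. By Theorem~\ref{thm:correctness-elimination-full} there are models $\M_1,\M_2$ and an $\Hfull(\sigma)$-bisimulation $Z$ realizing this hypermosaic. That gives elements $w_1,w_2$ with $(w_1,w_2)\in Z$ and $\tp_{\M_i}(w_i)=t_i$, so $\M_i,w_i\models\varphi_i$, and therefore $\varphi_1,\varphi_2$ are jointly $\Hfull(\sigma)$-consistent.

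No step carries real difficulty; all the work is in Theorem~\ref{thm:correctness-elimination-full}. The only point to check is the mismatch in terminology. The proof of that theorem speaks of ``$\Hmc(\sigma)$-bisimulation'', but its construction sets $U^{\M_i}$ to the full relation and its last claim verifies back and forth for $U$. So the bisimulation it produces is an $\Hfull(\sigma)$-bisimulation, with $(\mathrm{@})$ following from the nominal conditions together with $U$. Note also that types are taken over $\sub(\varphi_1,\varphi_2)$, which by the WLOG assumption contains $\mydiam{U}a$ for every nominal $a$. This does not affect membership of $\varphi_i$ in $t_i$.
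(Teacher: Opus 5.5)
Your proposal is correct and matches the paper's approach: the paper gives no separate proof and treats the corollary as an immediate consequence of Theorem~\ref{thm:correctness-elimination-full} applied to the singleton hypermosaic $\{(\{t_1\},\{t_2\})\}$, which is what you do. Your remark is also correct and useful: the construction in that proof actually yields an $\Hfull(\sigma)$-bisimulation, with the @-condition following from (nom) and back/forth for $U$, despite the proof's wording ``$\Hmc(\sigma)$-bisimulation''.
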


\subsection{Elimination Procedure for \Hat}

We let $\Zmc^*$ denote the result of the hypermosaic elimination procedure with the changes needed for $\Hat$ described in the main body of paper. Then, we have the following equivalent of Theorem~\ref{thm:correctness-elimination}.

\begin{theorem}\label{thm:correctness-elimination-at}
For a hypermosaic $H$, the following are equivalent: 
\begin{itemize}

    \item  $H$ is realizable by an $\Hat(\sigma)$-bisimulation;
    
    \item $H\in \Zmc^*$.
    
\end{itemize}
\end{theorem}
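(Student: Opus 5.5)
I would follow the template of the proofs of Theorem~\ref{thm:correctness-elimination} and Theorem~\ref{thm:correctness-elimination-full}, adapting each direction to the three $@$-specific changes. These changes are the extra witness requirement for $@_a\chi$, @-consistency of hypermosaics in $\Zmc_0$, and the convention that $@_aa\in t$ for every $a\in\sigma$.

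\emph{Realizable implies not eliminated.} Assume $H$ is realized in $\M_1,\M_2$ by an $\Hat(\sigma)$-bisimulation $Z$, and let $H'$ be the set of maximal mosaics realized by $Z$. Nominal-cleanness and $\sigma$-consistency hold exactly as before. @-consistency holds because all types realized in a single model $\M_i$ agree on every formula $@_a\chi$. For the witness condition on $\mydiam{R}\chi$, I would argue as in the $\Hmc$ case. For $@_a\chi\in t\in\mos_i$, the point $a^{\M_i}$ satisfies $a\wedge\chi$. We need it to lie in some mosaic realized by $Z$. If $a\in\sigma$, condition (@) gives $(a^{\M_1},a^{\M_2})\in Z$. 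If $a\notin\sigma$, we can still take the mosaic generated by $a^{\M_i}$ alone, namely $(\{\tp(a^{\M_i})\},\emptyset)$ on side $i$. This is realized vacuously, so some maximal realized mosaic contains it. Since $H'$ extends $H$ and witnesses every mosaic in $H\cup H'$, neither $H$ nor $H'$ is ever eliminated.

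\emph{Not eliminated implies realizable.} Build the chain $H=H_0,H_1,\ldots$ in $\Zmc^*$ exactly as before, ending in some $H^*$ that witnesses all of its own mosaics. Define $\M_i$ on $\{(t,\mos)\mid t\in\mos_i,\mos\in H^*\}\cup\mathrm{Nom}_i$, and define relations, variables, nominals and $Z$ as in the $\Hmc$ proof. In the truth lemma, the new case is $\psi=@_a\chi$.
\begin{itemize}
\item For ($\Rightarrow$): the witness condition gives $\mos'\in H^*$ and $t'\in\mos'_i$ with $a,\chi\in t'$. By nominal-cleanness, $a^{\M_i}=(t',\mos')$, and the induction hypothesis then yields $\chi$ at that point.
\item For ($\Leftarrow$): if $\M_i,a^{\M_i}\models\chi$, then $a^{\M_i}=(t',\mos')$ with $a,\chi\in t'$. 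Type consistency gives $@_a\chi\in t'$, and @-consistency transfers this to $t$.
\end{itemize}

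A subtle point is whether $a^{\M_i}$ can be one of the fresh elements in $\mathrm{Nom}_i$ when some type in $H^*$ contains $@_a\chi$. This cannot happen, because the witness requirement for $@_a\chi$, or for $@_a\neg\chi$ via $@_a\top$-style reasoning using $@_aa$, forces a type containing $a$ to occur on side $i$. I would ensure this by noting that $@_aa$ must appear in the relevant types whenever $a$ occurs under an $@$.

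The main obstacle is condition (@) for $Z$: for each $a\in\sigma$ we need $(a^{\M_1},a^{\M_2})\in Z$. Here the assumption $@_aa\in t$ for all $a\in\sigma$ does the work. Since every type in $H^*$ (if $H^*$ is nonempty on side $i$) contains $@_aa$, the witness condition forces a type $t_a^i$ containing $a$ to exist on side $i$. By $\sigma$-consistency, the mosaic containing $t_a^1$ has all its types containing $a$, so its side-2 types also contain $a$; and nominal-cleanness makes this mosaic unique for each side. Hence $a^{\M_1}$ and $a^{\M_2}$ share the same mosaic component, and so are $Z$-related. Empty sides need a small separate check, handled by requiring witnesses on both sides as in the $\Hfull$ adaptation. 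Back and forth are then verified exactly as in Theorem~\ref{thm:correctness-elimination}, and the realization of $H$ follows from \textbf{Mon}.
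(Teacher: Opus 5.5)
\textbf{Overall route and the gap.} Your plan follows the paper's route: the same $H'$ in the first direction, the same chain to $H^*$ and the same model in the second, and the same handling of $@_a\chi$ in the truth lemma. Your extra assumption that $@_aa$ is a subformula for every nominal occurring under an $@$ correctly handles the $\mathrm{Nom}_i$ points, which the paper's ($\Leftarrow$) step passes over.

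The genuine gap is condition (@). From $\sigma$-consistency of the mosaic $\mos$ containing $t^1_a$ you infer that its second component consists of $a$-types. If $\mos_2=\emptyset$, this gives nothing. None of the $\Hat$ requirements (nominal-clean, $\sigma$-consistent, @-consistent, @-witnesses) prevents $t^1_a$ and $t^2_a$ from lying in two different mosaics $(\{t^1_a\},\emptyset)$ and $(\emptyset,\{t^2_a\})$. This is not a corner case settled by a small check: with the procedure as defined, the equivalence actually fails.

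Take $\sigma=\{a,p\}$, $\varphi_1=@_ap$, $\varphi_2=@_a\neg p$, and the following types:
\begin{itemize}
\item $t_1\ni @_ap,\neg a,\neg p$ and $t_2\ni @_a\neg p,\neg a,\neg p$;
\item $t^1_a\ni @_ap,a,p$ and $t^2_a\ni @_a\neg p,a,\neg p$.
\end{itemize}
The hypermosaic $\{(\{t_1\},\{t_2\}),(\{t^1_a\},\emptyset),(\emptyset,\{t^2_a\})\}$ lies in $\Zmc_0$ and is a witness for each of its mosaics. Hence $\{(\{t_1\},\{t_2\})\}\in\Zmc^*$. But it is not realizable, because (@) and (atom) force $a^{\M_1}$ and $a^{\M_2}$ to agree on $p$. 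The paper's own proof misses this: its final claim only shows that $Z$ is an $\Hmc(\sigma)$-bisimulation and never checks (@). So you were right to single out (@) as the crux, but your resolution does not go through.

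\textbf{Repair.} The procedure itself has to be strengthened. For example, require every witness hypermosaic to contain, for each nominal $a\in\sigma$, one mosaic with $a$-types on both sides.
\begin{itemize}
\item The $H'$ of your first direction satisfies this, because (@) puts $a^{\M_1}$ and $a^{\M_2}$ into one bisimilarity class.
\item In the second direction, nominal-cleanness then gives exactly the alignment $a^{\M_1}=(t^1_a,\mos)$, $a^{\M_2}=(t^2_a,\mos)$ that your argument needs.
\end{itemize}
Importing the $\Hfull$ requirement that all mosaics be two-sided is stronger than necessary and changes the statement, since one-sided hypermosaics such as $\{(\{t\},\emptyset)\}$ are realizable in $\Hat$.

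\textbf{A further flaw in the first direction.} Take $H'$ to be the mosaics induced by the $\sigma$-bisimilarity classes, not the maximal realized mosaics. If one-sided mosaics count as realized vacuously, as you use for $a\notin\sigma$, the maximal ones need be neither $\sigma$-consistent nor nominal-clean. This shortcoming is inherited from the paper.
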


\begin{proof}
The proof is almost literally the same as the proof of Theorem~\ref{thm:correctness-elimination}. We provide it for the sake of completeness and point out the few the differences. 

Assume first that $H$ is realized in $\M_{1}, \M_{2}$ by $\Hmc(\sigma)$-bisimulation $Z$. 
% Without loosing generality $Z$ is the \emph{maximal} $\Hmc(\sigma)$-bisimulation between $\M_{1}$ and $\M_{2}$.
Define $H'$ as the set of maximal (w.r.t.~set-inclusion on both coordinates) mosaics $\mos$ realized by $Z$. By definition, $H'$ is nominal-clean and $\sigma$-consistent. Moreover, $H'$ is a witness for all $\mos\in H\cup H'$. So neither $H$ nor $H'$ are eliminated.

Conversely, assume that $H$ is not eliminated. We construct models $\M_{1}$ and $\M_{2}$ realizing $H$ by $\Hmc(\sigma)$-bisimulation $Z$. We first construct a sequence $H_0,H_1,\ldots$ of hypermosaics in $\Zmc^*$ as follows:
\begin{itemize}
\item We start with $H_0:=H$.
\item For $i\geq 0$, try to pick some $\mos\in H_i$ such that $H_i$ itself is not a witness for $\mos$. If there is no such $\mos$, then $H_i$ is the last element of the sequence. Otherwise, $H_{i+1}$ is some $H'\in \Zmc^*$ that is a witness for $\mos$. (Such $H'$ always exists as $H_i$ is not eliminated.)
\end{itemize}
Note that the sequence is finite since each witness $H_{i+1}$ (strictly) extends $H_i$. Moreover, by definition, the last element $H^*$ of the sequence is a witness for all mosaics $\mos\in H^*$. Clearly, $H^*$ is $\sigma$-consistent and nominal-clean. 

We are now in a position to define $\M_{i}$ and $Z$. For $i\in \{1,2\}$, let Nom$_{i}$ be the set of nominals $a\in \Nom \cap (\sig(\varphi_1)\cup\sig(\varphi_2))$ that do not occur in any $t$ with $t\in \mos_{i}$ 
for some $\mos\in H^*$. Take
\[
W^{\M_{i}}= \{(t,\mos)\mid t\in \mos_{i}, \mos\in H^*\} \cup \text{Nom}_{i}
\]
as the domain of $\M_i$.
We next define $R^{\M_{i}}$ for relation symbols $R\in\sig(\varphi_1)\cup\sig(\varphi_2)$. Let $(t,\mos),(t',\mos')\in W^{\M_{i}}$. Then $((t,\mos),(t',\mos'))\in R^{\M_i}$ iff $t\rightsquigarrow_R t'$ and, if $R\in \sigma$, additionally $\mos \rightsquigarrow_R \mos'$.
For a propositional variable $p\in \sig(\varphi_1)\cup\sig(\varphi_2)$, let $p^{\M_{i}}=\{ (t,\mos)\in W_{i} \mid p\in t\}$.

For $a\in \Nom \cap (\sig(\varphi_1)\cup\sig(\varphi_2)$ we make a case distinction.
If there is $(t,\mos)\in W_{i}$ with $a\in t$, then let $a^{\M_{i}}= (t,\mos)$.
Otherwise $a\in \text{Nom}_{i}$ and we set $a^{\M_{i}}=a$.

\begin{claim}
    For every $i\in\{1,2\}$, $\psi\in\sub(\varphi_1,\varphi_2)$ and every $(t,\mos)\in W^{\Mmc_i}$, we have: 
    \[
    \psi\in t\quad\text{ iff }\quad\Mmc_i,(t,\mos)\models \psi.
    \]
\end{claim}

\begin{proof}
We prove the claim by induction on $\psi$. Formulae of shape $\psi=\mydiam{R}\chi$ are treated as in the respective claim in the proof of Theorem~\ref{thm:correctness-elimination}. We consider here only formulae of shape $\psi=@_a\chi$.

($\Rightarrow$) Suppose $@_a\chi\in t$. Since $H^*$ is a witness for $\mos\in H^*$, there is $\mos'\in H^*$ and a type $t'\in \mos'_i$ such that $a,\chi\in t'$. By induction hypothesis, we have $\M_i,(t',\mos')\models a$ and $\M_i,(t',\mos')\models \chi$. Thus, $\M_i,(t,\mos)\models @_a\chi$.

($\Leftarrow$) Suppose $\M_i,(t,\mos)\models @_a\chi$. Then there is $(t',\mos')$ with $\M_i,(t',\mos')\models a$ and
$\M_i,(t',\mos')\models \chi$. By induction, $a,\chi\in t'$ and hence $@_a\chi\in t'$. By @-consistency of $H^*$, $@_a\chi\in t$.
\end{proof}

Finally, define $Z\subseteq W^{\M_1}\times W^{\M_2}$ by setting
\begin{gather*}
(t,\mos)\ Z\ (t',\mos') \iff \mos=\mos'.
\end{gather*}

\begin{claim}
    $Z$ is an $\Hmc(\sigma)$-bisimulation between $\Mmc_1$ and $\Mmc_2$, and each $\mos\in H^*$ is realized in $\Mmc_1,\Mmc_2$ by $Z$.
\end{claim}
\begin{proof}
The atomic and nominal conditions are immediate. For the back and forth conditions it is enough to show that for every $R\in\sigma$, points $(t,\mos)$, $(t',\mos)$, and mosaics $\mos'$, either both $(t,\mos),(t',\mos)$ have an $R$-successor with second component $\mos'$ or none has. Suppose $((t,\mos),(\hat t,\mos'))\in R^{\M_i}$ for some $i\in\{1,2\}$. By definition of $R^{\M_i}$, $\mos\rightsquigarrow_R \mos'$. That is, for every $j\in\{1,2\}$ and every $s\in\mos_j$, we have some $s'\in\mos'_j$ with $s\rightsquigarrow_R s'$. Hence, there is $\hat t'\in\mos_{3-i}$ with $t'\rightsquigarrow_R \hat t'$. By definition of $R^{\M_i}$, $((t',\mos),(\hat t',\mos'))\in R^{\M_{3-i}}$.
\end{proof}

It remains to note that $H$ is realized in $\Mmc_1,\Mmc_2$ by $Z$ since each $\mos\in H$ is (component-wise) contained in some $\mos^*\in H^*$ and each $\mos^*$ in $H^*$ is realized in $\Mmc_1,\Mmc_2$ by $Z$, by the Claim.
\end{proof}

\begin{corollary}\label{cor:hypermosaics-at}
    $\varphi_1$ and $\varphi_2$ are jointly $\Hat(\sigma)$-consistent iff there is a hypermosaic $\{(\{t_{1}\},\{t_{2}\})\}\in\Zmc^*$ with $\varphi_1\in t_{1}$ and $\varphi_2\in t_{2}$.
\end{corollary}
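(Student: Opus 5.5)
This is a direct consequence of Theorem~\ref{thm:correctness-elimination-at}, in exactly the way Corollary~\ref{cor:hypermosaics} follows from Theorem~\ref{thm:correctness-elimination}. The plan is to show both directions by relating joint $\Hat(\sigma)$-consistency of $\varphi_1,\varphi_2$ to realizability of a singleton hypermosaic.

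For the direction from left to right, assume pointed models $\M_1,w_1\models\varphi_1$ and $\M_2,w_2\models\varphi_2$ with $\M_1,w_1\sim_{\Hat,\sigma}\M_2,w_2$, witnessed by an $\Hat(\sigma)$-bisimulation $Z$ with $(w_1,w_2)\in Z$. Put $t_i=\tp_{\M_i}(w_i)$, so $\varphi_i\in t_i$. The mosaic $(\{t_1\},\{t_2\})$ is then realized by $Z$, using the single elements $w_1$ and $w_2$. Hence the hypermosaic $\{(\{t_1\},\{t_2\})\}$ is realizable by an $\Hat(\sigma)$-bisimulation, and Theorem~\ref{thm:correctness-elimination-at} places it in $\Zmc^*$.

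For the direction from right to left, take $\{(\{t_1\},\{t_2\})\}\in\Zmc^*$ with $\varphi_i\in t_i$. Theorem~\ref{thm:correctness-elimination-at} yields models $\M_1,\M_2$ and an $\Hat(\sigma)$-bisimulation $Z$ realizing it. So there are $w_1,w_2$ with $(w_1,w_2)\in Z$ and $\tp_{\M_i}(w_i)=t_i$. Since $\varphi_i\in t_i$, we get $\M_i,w_i\models\varphi_i$, which is exactly joint $\Hat(\sigma)$-consistency.

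The only point needing care is a subtlety in the proof of Theorem~\ref{thm:correctness-elimination-at}. There the bisimulation is announced as an $\Hmc(\sigma)$-bisimulation, but for $\Hat$ it must also satisfy condition \emph{(@)}. This should follow from the standing assumption that $@_aa\in\sub(\varphi_1,\varphi_2)$ for every $a\in\sigma$, hence $@_aa$ lies in every type. A witness for $@_aa$ then forces every $a\in\sigma$ to be realized, on each side, in some mosaic of the witnessing hypermosaic $H^*$. By $\sigma$-consistency and nominal-cleanness, both $a^{\M_1}$ and $a^{\M_2}$ sit in the same mosaic. Since $Z$ links points with the same mosaic component, $(a^{\M_1},a^{\M_2})\in Z$. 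With this checked, the corollary follows immediately.
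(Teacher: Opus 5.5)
\textbf{There is a genuine gap, at exactly the step you flagged as needing care.} Deriving the corollary from Theorem~\ref{thm:correctness-elimination-at} is the paper's route, and both of your directions are fine \emph{given} that theorem. You are also right that the paper's proof of the theorem only produces an $\Hmc(\sigma)$-bisimulation and never checks \emph{(@)}. Your patch for \emph{(@)}, however, fails. $\sigma$-consistency only constrains types lying in the same mosaic, and nominal-cleanness constrains each side separately. For $\Hat$, mosaics may have an empty component (non-emptiness is imposed only for $\Hfull$), so a hypermosaic in $\Zmc_0$ can contain both $(\{t^1_a\},\emptyset)$ and $(\emptyset,\{t^2_a\})$. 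The $@_aa$-witnesses then make $a$ occur on each side, but in different mosaics, and the constructed $Z$ does not relate $a^{\M_1}$ and $a^{\M_2}$.

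No better argument can close this, because under the literal $\Hat$ modifications the statement is false. Take $\sigma=\{a,p\}$, $\varphi_1=@_ap$ and $\varphi_2=@_a\neg p$. Let $t_1,t_2$ be the types of a point satisfying $\neg a\wedge\neg p$ together with $\varphi_1$, resp.\ $\varphi_2$. Let $s_1,s_2$ be the types of the $a$-point satisfying $p$, resp.\ $\neg p$. Then $H=\{(\{t_1\},\{t_2\}),(\{s_1\},\emptyset),(\emptyset,\{s_2\})\}$ has the following properties:
\begin{itemize}
\item It is $\sigma$-consistent, nominal-clean and $@$-consistent: side~1 carries only $t_1,s_1$ and side~2 only $t_2,s_2$, which agree on all $@$-formulae.
\item It is a witness for each of its own mosaics: there are no diamonds, and all $@$-requirements are met by $s_1$ and $s_2$.
\end{itemize}
Hence $H$, and with it $\{(\{t_1\},\{t_2\})\}$, survives in $\Zmc^*$. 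Yet $\varphi_1,\varphi_2$ are not jointly $\Hat(\sigma)$-consistent, since $@_ap$ separates them.

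What is missing is an extra condition on $\Zmc_0$ for $\Hat$. For instance: for every nominal $a\in\sigma$, if $H$ contains a side-1 type $t^1_a$ and a side-2 type $t^2_a$ containing $a$, then $(\{t^1_a\},\{t^2_a\})\in H$. Hypermosaics extracted from an $\Hat(\sigma)$-bisimulation satisfy this, because \emph{(@)} makes $(\{t^1_a\},\{t^2_a\})$ realized, and \emph{(nom)} makes it maximal. With this condition your argument goes through:
\begin{itemize}
\item $H^*$ extends $\{(\{t_1\},\{t_2\})\}$, so by the $@_aa$-witnesses both $t^1_a$ and $t^2_a$ occur in $H^*$.
\item By the new condition and nominal-cleanness, they lie in a single mosaic $\mathsf{m}$.
\item Therefore $(t^1_a,\mathsf{m})\,Z\,(t^2_a,\mathsf{m})$, which is \emph{(@)}.
\end{itemize}
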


\subsection{Proofs for Separator Construction}

\claimconstruction*

\begin{proof}
We first prove that $H'$ is nominal-clean. Fix a nominal $a\in\Nom$ and side $i\in \{1,2\}$.
Note that $\cmf_i^+$ specifies the type $t$ and $\Sep_\cmf$-type $d$ of $a$, due to the inductive base of the separator construction. Hence, on the $i$-th side $a$ can only belong to the type $t$ in the mosaic $(\mos^d_1,\mos^d_2)$.

We now prove the other claim that $H'$ is a witness for every $\mos\in H$. Take $\mydiam{R}\chi\in t\in\mos_i$. Consider the $\Sep_\cmf$-star-type $\delta_\mos$. Since $\delta_\mos$ is a disjunct of $\sep^+_i(\cmf^+,t)$ there is a model
\[
\M,w\models\case(\cmf^+_i)\wedge\delta_\mos\wedge t.
\]
Since $\M,w\models t$, it follows that $w$ has an $R$-child $w'$ satisfying some $t'\wedge d$ with $\chi\in t'\in\TP$ and $d\in\SepCaseType$. In particular $t'\in \mos_i^d$. Denote $\mos'=(\mos^d_1,\mos^d_2)$. Since $wRw'$ as well as $\tp_\M(w)=t$ and $\tp_\M(w')=t'$, we have $t\rightsquigarrow_{R} t'$.

If $R\notin\sigma$ then we are done with showing that $\mos'$ witnesses $\mydiam{R}\chi\in t\in\mos_i$. Otherwise $R\in\sigma$ in which case we need to additionally prove $\mos\rightsquigarrow_{R}\mos'$. Take any $s\in \mos_j$. We need to find some $s'\in \mos^d_j$ such that $s\rightsquigarrow_{R} s'$. Because $\delta_\mos$ is a disjunct of $\sep^+_j(\cmf^+,s)$ there is a model
\[
\N,v\models\case(\cmf^+_j)\wedge\delta_\mos\wedge s.
\]
Since $\M,w\models\delta_\mos$ as well as $wRw'$ and $\M,w'\models d$, we get $\delta_\mos\models\mydiam{R}d$. Hence $v$ has an $R$-child $v'$ in $\N$ such that $\N,v'\models d$. Take the type $s'=\tp_\N(v')$ of this $v'$. Since $\N,v'\models\case(\cmf^+_j)\wedge d\wedge s'$, we get $s'\in\mos_j^d$. Moreover, $s\rightsquigarrow_{R} s'$ follows from $vRv'$. This completes the proof that $\mos'$ witnesses $\mydiam{R}\chi\in t\in\mos_i$.

It remains to prove that $H'$ extends $H$. Take $\mos\in H$. Take the unique $\Sep_\cmf$-type $d$ enforced by $\delta_\mos$ in the sense that $\delta_\mos\models d$. For every $i$ and $t\in \mos_i$ we know that $\case(\cmf^+_i)\wedge\delta_\mos\wedge t$ is consistent, and hence so is $\case(\cmf^+_i)\wedge d\wedge t$. This implies $t\in \mos_i^d$. In other words, $\mos_i\subseteq \mos^d_i$, for both $i$. This proves that $H'$ extends $H$ and hence completes the proof of Claim~\ref{cl:construction claim}.
\end{proof}

We remark that in the presence of $@$ or $U$, the Claim~\ref{cl:construction claim} additionally asserts that $H'$ is $@$-consistent or $U$-consistent, respectively. This property of $H'$ follows from the observation that in our construction a case $\cmf^+\in\Case^+$ always determines the types of all nominals on both sides.

\thmsummingup*

\begin{proof}
We add the missing details for the analysis of the size and computation time of the separators. 

Let $n=|\varphi_1|+|\varphi_2|$ and $s(n)=2^n\times 2^{2^n}$. We start with analyzing the number of rounds the hypermosaic elimination runs. A trivial triply exponential upper bound is the number of hypermosaics since in each round at least one is eliminated. However, exploiting the fact that witnesses $H'$ for a mosaic $m\in H$ have to extend $H$, one can show a better bound.  

\begin{lemma}\label{lem:runtime}
The elimination terminates after $s(n)$ rounds.
\end{lemma}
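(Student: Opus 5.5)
We would prove the lemma by a chain argument: attach to every late-eliminated hypermosaic a long chain of hypermosaics, each strictly extending the next, and then bound the length of such chains by $s(n)$. Counting the hypermosaics themselves is useless, since there are triply exponentially many. The whole argument therefore rests on \textbf{Mon}, via a coarse invariant of hypermosaics that provably grows along any such chain. Throughout, $|\TP|\le 2^n$.

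\emph{Step 1 (monotonicity).} We first show by induction on $\ell$ that elimination is upward closed under extension. If $H'$ extends $H$ and $H\notin\Zmc_\ell$, then $H'\notin\Zmc_\ell$. The base case $\ell=0$ holds because $\sigma$-consistency and nominal-cleanness of $H'$ pass down to $H$.

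For the inductive step, let $\mos\in H$ have no witness in $\Zmc_{\ell-1}$, and pick $\hat\mos\in H'$ with $\mos_i\subseteq\hat\mos_i$. Suppose $H'\in\Zmc_\ell$. Then $\hat\mos$ has a witness $H''\in\Zmc_{\ell-1}$ that extends $H'$, hence extends $H$. We claim $H''$ is then also a witness for $\mos$. The data for $\mydiam{R}\chi\in t\in\mos_i$ come from $t\in\hat\mos_i$, and the condition $\mos\rightsquigarrow_R\mos''$ follows from $\hat\mos\rightsquigarrow_R\mos''$ because $\mos$ is componentwise smaller. This contradicts the choice of $\mos$. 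Consequently the round at which $H$ is eliminated depends only on the downward closure $\downarrow H$ of $H$ under componentwise inclusion, and it is antitone in $\downarrow H$.

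\emph{Step 2 (descending chains).} Suppose $H$ is eliminated exactly in round $\ell+1$. Then some $\mos\in H$ has a witness in $\Zmc_{\ell-1}$ but none in $\Zmc_\ell$. Hence some witness $H^{\ell}$, which extends $H$, is eliminated exactly in round $\ell$. Moreover, $\downarrow H^{\ell}\supsetneq\downarrow H$, since otherwise Step~1 would force $H$ to be eliminated by round $\ell$.

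Iterating this gives a chain
\[
\downarrow H\subsetneq\downarrow H^{\ell}\subsetneq\cdots\subsetneq\downarrow H^0
\]
of length $\ell+1$. It therefore suffices to bound such chains by $s(n)$.

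\emph{Step 3 (the invariant; main obstacle).} The naive bound on chain length is the number of mosaics, about $2^{2\cdot 2^n}$, which is too weak. To reach $s(n)=2^n\cdot 2^{2^n}$, we would replace $\downarrow H$ by the smaller invariant
\[
\mu(H)=\{(i,t,S)\mid t\in\TP,\ S\subseteq\TP,\ \exists\mos\in H:\ t\in\mos_i,\ S\subseteq\mos_{3-i}\}.
\]
For each fixed $(i,t)$, only the family of maximal partner sets $S$ matters. We would show that the elimination round of $H$ is already determined by $\mu(H)$.

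The key point is that witnesses are checked type by type. The condition for $\mydiam{R}\chi\in t\in\mos_i$ involves only $t$, and, through $\mos\rightsquigarrow_R\mos'$, the other side $\mos_{3-i}$ together with the types of $\mos_i$. We would first try to split each mosaic $\mos$ into its single-type pieces $(\{t\},\mos_{3-i})$ and $(\mos_i,\{t'\})$, and prove by induction on rounds, as in Step~1, that this splitting never changes the round at which a hypermosaic is eliminated.

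If that works, Step~2 yields a strictly increasing chain of subsets of a set with at most $2\cdot 2^n\cdot 2^{2^n}$ relevant elements. Using the symmetry between the sides, and the fact that the pieces for a fixed $(i,t)$ form a chain of up-sets, this can be sharpened to $s(n)$. If the splitting invariance fails for $R\in\sigma$, where $\mos\rightsquigarrow_R\mos'$ genuinely couples both sides, we would fall back on counting pairs consisting of one type and one side-set, with a separate argument that each round must enlarge some such pair. This is the step we expect to require the most care.
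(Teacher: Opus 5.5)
Your Steps~1 and~2 are sound up to one repair, and they are what the paper's justification for this lemma points to. The paper gives no proof, only the remark that witnesses must satisfy \textbf{Mon}. The repair: nominal-cleanness does not pass down to a hypermosaic that $H'$ extends. For example, $\{(\{t_a\},\{s\}),(\{t_a,u\},\{s\})\}$ with $a\in t_a$ is not nominal-clean, while the extending hypermosaic $\{(\{t_a,u\},\{s\})\}$ may be. So monotonicity, and the dependence of the round only on $\downarrow H$, hold only for hypermosaics in $\Zmc_0$. Every $H^k$ in your chain lies in $\Zmc_0$, so this is all Step~2 needs. Steps~1--2 then give a strict chain of non-empty down-sets of mosaics, hence at most $2^{2|\TP|}\le 2^{2^{n+1}}$ rounds. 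That bound is doubly exponential, which is all the later size analysis uses, but it is not $s(n)=2^n\cdot 2^{2^n}$.

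The gap is Step~3, which is where $s(n)$ would have to come from, and its central claim is false. $\mu(H)$ does not determine when, or even whether, $H$ is eliminated, because it forgets which types must be realized together in one class of pairwise bisimilar points. Here is a counterexample. Let $\sigma=\{R,p,q\}$, let $r_1,\dots,r_4$ be the four conjunctions $\pm p\wedge\pm q$, and let $z_1,\dots,z_4\notin\sigma$. Put $\theta_k=\mydiam{R}\top\wedge\mybox{R}(z_k\wedge\neg r_k)$, $\varphi_1=\theta_1\vee\theta_2$ and $\varphi_2=\theta_3\vee\theta_4$. The $z_k$ keep the $\sigma$-formulae $\mybox{R}\neg r_k$ out of the subformulae, so types need not agree on them. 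Let $t_1,t_2,s_1,s_2$ be types agreeing on $p,q$ such that the $k$-th contains $\theta_k$ and the negations of the other $\theta_j$. Take $\mos=(\{t_1,t_2\},\{s_1,s_2\})$, and let $H'$ consist of your four pieces $(\{t_1\},\{s_1,s_2\})$, $(\{t_2\},\{s_1,s_2\})$, $(\{t_1,t_2\},\{s_1\})$, $(\{t_1,t_2\},\{s_2\})$. Then $\mu(\{\mos\})=\mu(H')$.
\begin{itemize}
\item $\{\mos\}$ is eliminated in round~1. All types of any $\mos'$ with $\mos\rightsquigarrow_R\mos'$ agree on $p,q$, so they satisfy one and the same $r_j$, which the $j$-th type forbids.
\item $H'$ is realizable, hence never eliminated. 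Each piece involves only three of the four constraints, which are met by letting all $R$-successors satisfy the remaining $r_j$, with the $z$'s chosen per point.
\end{itemize}

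This is exactly the coupling via $\mos\rightsquigarrow_R\mos'$ for $R\in\sigma$ that you anticipated. It also defeats the natural proof of your fallback, namely that extensions with equal $\mu$ are eliminated in the same round: $H'\cup\{\mos\}$ extends $H'$ with the same $\mu$. The fallback itself is stated without argument. Even a working Step~3 would only bound chains by about $2|\TP|\cdot 2^{|\TP|}$, and the ``symmetry'' sharpening is not explained. As it stands you have proved the bound given by the number of mosaics, not $s(n)$. Either state the lemma with that (still doubly exponential) bound, or find a genuinely different argument for the sharper constant.
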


We can now analyze the size of the separators constructed during elimination. The fundamental observation is that, in each round of the elimination, one can bound the size of $\sep_{i}(\cmf,t)$ independently of both the number and the size of the cases in~$\Case$. 

Let $m_{\ell}$ be the maximal size of some $\sep_i(\cmf,t)$ and $n_\ell$ be the maximal number of cases $\Case$, after $\ell$ rounds of elimination. Clearly, $m_{0}\leq 2|\sigma|$ and $n_0\leq 2^{n^2}$ by the definition of the base case. 

In the inductive step, let us first consider $m_{\ell+1}$. %In Step~1 of the construction, the maximal size $m_\ell'$ of some $\sep_i(t,\overline\cmf)$ in the uniform separator $\Case_1,\Case_2,\sep_1,\sep_2$ is bounded by $m_\ell'\leq m_\ell\times q(n)$. Analyzing Step~2 of the construction,
We observe that, for any case $\cmf\in \Case$:
\begin{enumerate}[label=(\roman*)]
% \begin{enumerate}[label=(\roman*)]

    \item the number of $\Sep_\cmf$-types is bounded by $2^{2^{n}}$, 
    
    \item the number of $\Sep_\cmf$-star-types is bounded by $2^{2^n}\times2^{2^{2^n}}$,
    
    \item the size of a $\Sep_\cmf$-type is $O(2^n\times m_\ell)$, and

    \item the size of a $\Sep_\cmf$-star-type is
    $O(2^n\times m_\ell+2^{2^n}\times 2^{n}\times m_\ell)$.
    
\end{enumerate}

Since the size of $\sep_i^+(\cmf^+,t)$ is bounded by the product of the bounds in Items~(ii) and~(iv) above, we obtain that
\[m_{\ell+1}\leq m_\ell\times f(n)\] for some triply exponential function $f$. An easy induction yields an upper bound $m_\ell\leq f'(n)^\ell$ after $\ell$ rounds, for some triply exponential function $f'$. Since by Lemma~\ref{lem:runtime} the elimination terminates after $s(n)$ rounds, the maximal size of separators is bounded by a triply exponential function.

We next bound the number of cases $n_{\ell+1}$. 
% For this, let $H_1,\ldots,H_k$ be the minimal hypermosaics that witness $\mos\in H$, which were fixed in the beginning of the construction. It is routine to verify that the number of such minimal hypermosaics is bounded by $g(n)$ for some doubly exponential function $g$.\nb{J: I believe it is true, but better make argument}
Each case is refined by all possible functions $\Nom_0\to \SepCaseType$. By Item~(i) above, there are doubly exponentially many of those. Hence, 
\[n_{\ell+1}\leq n_\ell\times g(n),\] 
for some doubly exponential function $g$. Again, one shows by induction that $n_\ell\leq g'(n)^\ell$, for some doubly exponential function $g'$. Since the elimination terminates after $s(n)$ rounds by Lemma~\ref{lem:runtime}, the number of cases is bounded by a triply exponential function. 

The final combination of the constructed separators in Lemma~\ref{lem:hypersep2sep} leads to formulae of overall fourfold exponential size.
\end{proof}

\section{Proofs for Section~\ref{sec:graded}}

\lemRcountingweak*

\begin{proof}
We show (1). Only the right-to-left implication is nontrivial. Assume \Rmc is a weak $R$-counting function for $\mos,H'$.
We first obtain a weak $R$-counting function $\Rmc'$ bounded by $\lambda$ for $\mos, H'$ by setting
\[
\Rmc'(t,i,\mos',s) =
\begin{cases}
    \Rmc(t,i,\mos',s) & \text{if $\Rmc(t,i,\mos',s)\leq \lambda$,}\\
    \lambda & \text{otherwise,}
\end{cases}
\]
We now modify $\Rmc'$ further. Consider $\mos'\in H'$. 

\medskip
\noindent
Case 1. $\sum_{s\in \mos'_i}\Rmc'(t,i,\mos',s)\leq \lambda$ for some $t\in \mos_{i}$ and $i\in \{i,2\}$. 
Then, by definition, $\sum_{s\in \mos'_i}\Rmc'(t,i,\mos',s)\leq \lambda$ for all $t\in \mos_{i}$ and $i\in \{1,2\}$
and the required equations hold for $\Rmc'$ on $\mos'$.

\medskip
\noindent
Case 2. $\sum_{s\in \mos'_i}\Rmc'(t,i,\mos',s)> \lambda$ for some (equivalently, all) $t\in \mos_{i}$ and $i\in \{1,2\}$. We modify $\Rmc'(t,i,\mos',s)$ as follows. By the pigeonhole principle, for every $i\in \{1,2\}$ and $t\in \mos_{i}$ there exists $s_{\kappa(i,t)}\in \mos_{i}'$ such that $\Rmc(t,i,\mos',s_{\kappa(i,t)})>\kappa$. Then let
    $$
    n_{\mos'} = \max\{\sum_{s\in \mos'_i}\Rmc(t,i,\mos',s)\mid t\in \mos_{i}, i\in \{1,2\}\}
    $$
    and update $\Rmc'$ on $\mos'$ by setting 
    \[
\Rmc'(t,i,\mos',s) =
\begin{cases}
    n_{\mos'}-\sum_{s\in \mos'_i\setminus\{\kappa(t,i)\}}\Rmc(t,i,\mos',s) & \text{if $s=s_{\kappa(i,t)}$,}\\
    \Rmc(t,i,\mos',s) & \text{otherwise.}
\end{cases}
\]
Observe that now 
\[
\max\{\sum_{s\in \mos'_i}\Rmc'(t,i,\mos',s)\mid t\in \mos_{i}, i\in \{i,2\}\} = n_{\mos'}
\]
for all $t\in \mos_{i}$ and $i\in \{1,2\}$. Moreover, the remaining conditions for an $R$-counting function still hold by definition.

(2) follows directly from the proof of (1). 
\end{proof}

\thmcorrectnesseliminationgraded*

\begin{proof}
%Assume first that $H$ is realized in $\M_{1}, \M_{2}$ by $\Gmc(\sigma)$-bisimulation $Z$.
%Define $H'$ as the set of maximal (w.r.t.~set-inclusion on both coordinates) mosaics $\mos$ realized by $Z$. By %definition, $H'$ is nominal-clean and $\sigma$-consistent. Moreover, $H'$ is a witness for all $\mos\in H\cup H'$. So %neither $H$ nor $H'$ are eliminated.
% For $i\in \{1,2\}$ and $w\in W^{\M_i}$, denote with $T_j(w)$ the set of types
% %
% \begin{align*}
% %
%   T_j(w)=\{\tp_{\M_j}(v)\mid v\in W^{\M_j} \text{ with }\M_i,w\sim_{\Gmc,\sigma}\M_j,v\}.
% %
% \end{align*}
% Let $H'$ be the set of all mosaics $(T_1(w),T_2(w))$ for $w\in W^{\M_1}\cup W^{\M_2}$. By definition, $H'$ is a $\sigma$-consistent and nominal-clean hypermosaic. Moreover, $H'$ is a witness for every $\mos\in H\cup H'$. Since $H'$ extends $H$, neither $H$ nor $H'$ are eliminated.
We show the direction from bottom to top. Assume that $H$ is not eliminated. We construct models $\M_{1}$ and $\M_{2}$ realizing $H$ by $\Gmc(\sigma)$-bisimulation $Z$. We first construct a sequence $H_0,H_1,\ldots$ of hypermosaics in $\Zmc^*$ as follows:
\begin{itemize}
\item We start with $H_0:=H$.
\item For $i\geq 0$, try to pick some $\mos\in H_i$ such that $H_i$ itself is not a witness for $\mos$. If there is no such $\mos$, then $H_i$ is the last element of the sequence. Otherwise, $H_{i+1}$ is some $H'\in \Zmc^*$ that is a witness for $\mos$. (Such $H'$ always exists as $H_i$ is not eliminated.)
\end{itemize}
Note that the sequence is finite since each witness $H_{i+1}$ strictly extends $H_i$. Moreover, by definition, the last element $H^*$ of the sequence is a witness for all mosaics $\mos\in H^*$. Clearly, $H^*$ is $\sigma$-consistent and nominal-clean. 

We are now in a position to define $\M_{i}$ and $Z$. Let Nom$_{i}$ be the set of nominals $a\in \Nom \cap (\sig(\phi_1)\cup\sig(\phi_2))$ that do not occur in any $t$ with $t\in \mos_{i}$ 
for some $\mos\in H^*$. Take
\begin{align*}
W^{\M_{i}}=~& \text{Nom}_{i} \cup{}\\ 
& \{(t,j,\mos)\mid t\in \mos_{i}\text{ nom-free},\mos\in H^*, j\in\mathbb{N}, j\geq 1\} \cup{} \\
& \{(t,1,\mos)\mid t\in \mos_{i}\text{ has nom},\mos\in H^*\}
\end{align*}
as the domain of $\M_i$. We next define $R^{\M_{i}}$ for relation symbols $R\in\sig(\phi_1)\cup\sig(\phi_2)$. $H^*$ is a witness for every $\mos\in H^*$. Let us fix, for every $\mos\in H^*$, an $R$-counting function $\Rmc_\mos$. We set:
\begin{gather*}
    (t,j,\mos)\ R^{\Mmc_i}\ (t',j',\mos') \hspace{0.5cm}
    \iff \hspace{0.5cm}
    \Rmc_\mos(t,i,\mos',t')\geq j'.
\end{gather*}
The elements of $\Nom_i$ are not linked by $R^{\Mmc_i}$ with any element.

For a propositional variable $p\in \sig(\phi_1)\cup\sig(\phi_2)$, let 
\[
p^{\M_{i}}=\{ (t,j,\mos)\in W^{\M_i} \mid p\in t\}.
\]

For $a\in \Nom \cap (\sig(\phi_1)\cup\sig(\phi_2))$ we make a case distinction.
If there is $(t,j,\mos)\in W_{i}$ with $a\in t$ then let $a^{\M_{i}}= (t,j,\mos)$.
Otherwise $a\in \text{Nom}_{i}$ and we set $a^{\M_{i}}=a$.

\begin{claim}
    For every $i\in\{1,2\}$, $\chi\in\sub(\phi_1,\phi_2)$ and every $(t,j,\mos)\in W^{\Mmc_i}$, we have: 
    \[
    \chi\in t\text{ iff }\Mmc_i,(t,j,\mos)\models \chi.
    \]
\end{claim}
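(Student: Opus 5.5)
We prove the claim by induction on the structure of $\chi\in\sub(\phi_1,\phi_2)$. Atoms and nominals follow directly from the definitions of $p^{\M_i}$ and $a^{\M_i}$; for nominals we use nominal-cleanness of $H^*$. Nominal-cleanness says that if $a\in t$ then $t$ is the unique type $t_a^i$ containing $a$ on side $i$, and it occurs in exactly one mosaic of $H^*$. Since nominal-containing types get only the copy $j=1$, the element $(t,1,\mos)$ chosen as $a^{\M_i}$ is the only element whose type contains $a$. Boolean cases are immediate. The real work is the graded modalities $\langle\bowtie n\ R\rangle\chi'$ (the plain diamond is the case $\geq 1$).

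For the graded case, fix $(t,j,\mos)\in W^{\M_i}$ and consider its $R$-successors. By definition of $R^{\M_i}$ they are exactly the elements $(s,j',\mos')$ with $\mos'\in H^*$, $s\in\mos'_i$ and $1\le j'\le \Rmc_\mos(t,i,\mos',s)$. For nominal types, condition~\ref{it:R-wit 2} gives $\Rmc_\mos(t,i,\mos',s)\le 1$, so only $j'=1$ is needed, and that element exists. Elements of $\Nom_i$ are never successors. Condition~\ref{it:R-wit 1} guarantees that $\Rmc_\mos(t,i,\mos',s)>0$ implies $s\in\mos'_i$, so every counted copy exists in the domain. The successors are pairwise distinct across different $(s,\mos')$. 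Hence the number of $R$-successors satisfying $\chi'$ equals
\[
\sum_{\mos'\in H^*,\ \chi'\in s}\Rmc_\mos(t,i,\mos',s),
\]
where the induction hypothesis is used to replace "$\M_i,(s,j',\mos')\models\chi'$" by "$\chi'\in s$". Condition~\ref{it:R-wit 3}, applied with $t\in\mos_i$, then yields $\langle\bowtie n\ R\rangle\chi'\in t$ iff this count $\bowtie n$, which is exactly the semantics. Here $\infty$ corresponds to infinitely many successors, matching the convention in $\mathbb{N}\cup\{\infty\}$.

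The main obstacle is bookkeeping rather than mathematics. First, the counting function used must be $\Rmc_\mos$ for the mosaic $\mos$ in the first component of the element, and it must be an $R$-counting function for $\mos,H^*$. This holds because $H^*$ was chosen to be a witness for all of its mosaics, for every $R\in\Rel$, including $R\notin\sigma$. Second, the successor set must not depend on the copy index $j$ of the source, which holds by definition. I would also note that for $R\notin\sig(\phi_1)\cup\sig(\phi_2)$ nothing needs checking. Afterwards, the bisimulation claim would be proved using condition~\ref{it:R-wit 4}: the equal sums give, for each $\mos'$, equally many successors with mosaic component $\mos'$ from any two $Z$-related points. This allows a bijection for (gforth) and (gback).
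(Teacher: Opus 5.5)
Your proposal is correct and follows the paper's proof: induction on $\chi$, with the graded case reduced, via counting the $R$-successors of $(t,j,\mos)$, to condition~(iii) for the counting function $\Rmc_\mos$. The paper states only this core equivalence. You additionally spell out the bookkeeping it leaves implicit: conditions~(i) and~(ii), so that every counted copy exists; nominal-cleanness, for the nominal base case; and the fact that elements of $\Nom_i$ are never successors.
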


\begin{proof}
We prove the claim by induction on $\chi$. The only non-trivial case is with graded modalities, i.e.~when $\chi=\mydiam{\bowtie n\ R}\theta$. Let $\Rmc_\mos$ be the $R$-counting function from the construction of $\M_1,\M_2$. We have
\begin{align*}
    \Mmc_i,(t,j,\mos)\models \mydiam{\bowtie n\ R}\theta
    \iff&
    \sum_{\substack{\theta\in\mos'\in H^*\\ t'\in\mos'_i}} \Rmc_\mos(t,i,\mos',t') \bowtie n\\
    \iff&
    \mydiam{\bowtie n\ R}\theta \in t
\end{align*}
where the second equivalence is the third condition~(iii) from the definition of an $R$-counting function.
\end{proof}

Finally define $Z\subseteq W^{\M_1}\times W^{\M_2}$ by setting
\begin{gather*}
(t,j,\mos)\ Z\ (t',j',\mos') \iff \mos=\mos'.
\end{gather*}

\begin{claim}
    $Z$ is a $\Gmc(\sigma)$-bisimulation between $\Mmc_1$ and $\Mmc_2$, and each $\mos\in H^*$ is realized in $\Mmc_1,\Mmc_2$ by $Z$.
\end{claim}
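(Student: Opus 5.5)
The statement has two parts: (atom)/(nom) and the graded back-and-forth conditions, and realization of every $\mos\in H^*$. I would handle the easy conditions first and spend the effort on the graded conditions.

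\textbf{Atomic and nominal conditions.} Let $(t,j,\mos)\,Z\,(t',j',\mos)$ with $t\in\mos_1$ and $t'\in\mos_2$. Condition (atom) follows from $\sigma$-consistency of $H^*$: all types in $\mos_1\cup\mos_2$ agree on $\sigma\cap\Var$. For (nom), take $a\in\sigma$. By $\sigma$-consistency, $a\in t$ iff $a\in t'$. If $a\in t$, then nominal-cleanness makes $(t,1,\mos)$ the unique element carrying $a$ on side~1, and likewise $(t',1,\mos)$ on side~2. Nominal types are only copied with $j=1$, so $a^{\M_1}=(t,j,\mos)$ iff $a^{\M_2}=(t',j',\mos)$. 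The elements of $\Nom_i$ lie outside $Z$, so they impose no condition.

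\textbf{Graded back and forth.} Fix $R\in\sigma$ and $(w,w')=((t,j,\mos),(t',j',\mos))\in Z$. Since $Z$ relates points exactly when their mosaic components coincide, it suffices to find, for each $\mos'\in H^*$, a bijection between
\[
\{u\mid w\,R^{\M_1}\,u,\ u\text{ has mosaic component }\mos'\}
\]
and the corresponding set of $R$-successors of $w'$ in $\M_2$. Taking the union of these bijections over $\mos'$ then yields (gforth) and (gback), and in fact a global bijection of the successor sets, which is stronger than what is needed for finite $S$.

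By construction of $R^{\M_i}$, the successors of $(t,j,\mos)$ with mosaic component $\mos'$ are the points $(s,k,\mos')$ with $1\le k\le\Rmc_\mos(t,1,\mos',s)$. Condition~\ref{it:R-wit 2} ensures these copies actually exist for types containing nominals, since $k\le 1$ there. So this set has cardinality $\sum_{s\in\mos'_1}\Rmc_\mos(t,1,\mos',s)$, and similarly for $w'$ on side~2. Condition~\ref{it:R-wit 4} for the $R$-counting function $\Rmc_\mos$, which $H^*$ admits because it is a witness for $\mos$, says these two sums are equal. The sums may be $\infty$; in that case both sets are countably infinite, so a bijection still exists.

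\textbf{Main obstacle and realization.} The delicate point is making sure that all $R$-successors of $(t,j,\mos)$ are counted by $\Rmc_\mos(t,i,\cdot,\cdot)$. This holds because $R^{\M_i}$ is defined solely via $\Rmc_\mos$ and via no other relationship. One must also check that $\Rmc_\mos(t,i,\mos',s)>0$ forces $s\in\mos'_i$ (Condition~\ref{it:R-wit 1}), so that successors are genuine domain elements.

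For realization, pick for each $\mos\in H^*$ the elements $(t,1,\mos)$ with $t\in\mos_i$. These are pairwise $Z$-related by definition of $Z$. The preceding claim gives $\tp_{\M_i}((t,1,\mos))=t$. Hence every $\mos\in H^*$ is realized by $Z$.
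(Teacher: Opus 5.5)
Your proposal is correct and follows the paper's proof. Both arguments reduce (gforth)/(gback) to showing that $Z$-related points have the same number of $R$-children with mosaic component $\mos'$, compute that number as $\sum_{s\in\mos'_i}\Rmc_\mos(t,i,\mos',s)$, and conclude by Condition~\ref{it:R-wit 4}; you additionally spell out, soundly, the parts the paper calls immediate or leaves implicit: (nom) via nominal-cleanness and single copies of nominal types, the $\infty$ case, and realization via the preceding type claim.
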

\begin{proof}
The atomic and nominal conditions are immediate. For the back and forth conditions it is enough to show that for every $R\in\sigma$ and points $(t,j,\mos)$, $(t',j',\mos)$ (with the same $\mos$), both points have the same number of $R$-children whose mosaic coordinate is $\mos'$, for every $\mos'\in H^*$.
Fix $(t,j,\mos)\in W^{\M_1}$, $(t',j',\mos)\in W^{\M_2}$ and $R\in\sigma$, and let $\Rmc_\mos$ be the $R$-counting function for $\mos$ from the construction. For every $\mos'\in H^*$, the number of $R$-children of $(t,j,\mos)$ in $\M_1$ which have $\mos'$ on the mosaic coordinate is exactly $\sum_{s\in \mos'_1}\Rmc_\mos(t,1,\mos',s)$. Likewise, the number of $R$-children of $(t',j',\mos)$ in $\M_2$ with $\mos'$ on the mosaic coordinate is also exactly $\sum_{s'\in \mos'_2}\Rmc_\mos(t',2,\mos',s')$. By the fourth condition~\ref{it:R-wit 4} from the definition of an $R$-counting function these two numbers are equal. This proves the claim, and hence also Theorem~\ref{thm:correctness-elimination-graded}
\end{proof}
\end{proof}

\clconstructiongradedclaim*

\begin{proof}
The first item is proven in the same way as in the proof of Claim~\ref{cl:construction claim}. We now prove the other claim that $H'$ is a witness for every mosaic from $H$. The proof that $H'$ extends $H$ is unchanged. It remains to show that for every $R$ and $\mos\in H$, $H'$ is an $R$-witness for $\mos$. Fix $R$ and $\mos\in H$. We need to construct an $R$-counting function \Rmc. By Lemma~\ref{lem:R-counting weak}, it is enough to construct a \emph{weak} $R$-counting function, which we do now. 

Assume $i\in \{1,2\}$ and $t\in\TP$. If $t\notin \mos_i$ then $\Rmc(t,i,\mos',s)=0$ for every $\mos'$ and $s$. In the remaining case fix a model
\[
\M,w\models\case(\cmf^+_i)\wedge\delta_\mos\wedge t
\]
and denote by $V$ the set of $R$-children of the root. We set:
\[
\Rmc(t,i,\mos',s) = 
\left|
\left\{
  v\in V\ \;\middle|\;
  \begin{aligned}
  & d\in\SepCaseType,\ \mos'=(\mos_1^d,\mos_2^d),\\
  & \M,v\models s\wedge d
  \end{aligned}
\right\}
\right|
\]
for all $\mos',s$. This completes the construction of $\Rmc(t,i,\ \cdot\ ,\ \cdot\ )$. Running it for every possible values of $t,i$ we construct the entire $\Rmc$.
% \[
% \Rmc(t,i,\mos',s) =\min(\alpha,\kappa)
% \]
% where
% \[
% \alpha =
% \left|
% \left\{
%   v\in V\ \;\middle|\;
%   \begin{aligned}
%   & d\in\SepCaseType,\ \mos'=(\mos_1^d,\mos_2^d),\\
%   & \M,v\models s\wedge d
%   \end{aligned}
% \right\}
% \right|.
% \]

We show that \Rmc is a weak $R$-counting function.
Conditions~\ref{it:R-wit 1}, and~\ref{it:R-wit 2} from the definition are straighforward to check. Condition~\ref{it:R-wit 3} follows from $\M,w\models t$, because for every $s$, the number of $R$-children of $w$ which satisfy $s$ is exactly $\sum_{\mos'\in H'} \Rmc(t,i,\mos',s)$. For (the weak variant of) the last condition~\ref{it:R-wit 4} observe that if $R\in\sigma$ then $\delta_{\mos}$ specifies, exactly up to $\lambda$, the number $f_R(d)$ of $R$-children satisfying $d$, for every $d\in\SepCaseType$.
Thus, whenever $t\in\mos_i$, we have 
\[
\sum_{s\in\mos'_i}\Rmc(t,i,\mos',s)
=
\sum_{\substack{d\in\SepCaseType\\ \mos'=(\mos_1^{d},\mos_2^{d})}} |V_d|
=^\lambda
\sum_{\substack{d\in\SepCaseType\\ \mos'=(\mos_1^{d},\mos_2^{d})}} f_R(d)
\]
where $V_d=\{v\in V\ |\ \M,v\models d\}$, and $\M,V$ are the model and its subset from the definition of $\Rmc(t,i,\ \cdot\ ,\ \cdot\ )$. The second equality above follows from the fact that $\M,w\models\delta_\mos$, and hence for every $d$ we have $|V_d|=^\lambda f_R(d)$ by $\delta_\mos\models\mydiam{= f_R(d)}d$. Since this does not depend on the choice of $\M,w$ for $t,i$, it follows that for all $i,j=1,2$, $t\in \mos_i,t'\in \mos_j$ and $\mos'\in H'$:
\[
\sum_{s\in\mos'_i}\Rmc(t,i,\mos',s)
=^\lambda
\sum_{\substack{d\in\SepCaseType\\ \mos'=(\mos_1^{d},\mos_2^{d})}} f_R(d)
=^\lambda
\sum_{s'\in\mos'_j}\Rmc(t',j,\mos',s').
\]
This proves (the weak variant of) condition~\ref{it:R-wit 4}, thereby proving that \Rmc is a weak $R$-counting function, and hence completes the proof of Claim~\ref{cl:construction graded claim}.
\end{proof}

\section{Proofs for Section~\ref{sec:lowerbounds}}

\thmlowerbound*

We first give the missing arguments that $\varphi_1,\varphi_2$ described in the main part are not $\Hfull(\sigma)$-separable by a formula of small modal depth. Then we show how to achieve triply exponential lower bound for the formula size. 

\medskip
For the first point, we first modify $\varphi_1$ in a way that it only talks about points reachable via $R,S$ from the root of the model. Observe that Properties~(i) and~(ii) from the main part can be achieved with $\varphi_1$ of shape $\varphi_1=\psi_1\wedge \mybox{U}\psi_2$ for $\psi_1,\psi_2\in \Hmc$. Introduce a fresh proposition $s$ and consider $\varphi_1'$ defined as follows:
\[\varphi_1' = \psi_1\wedge s\wedge \mybox{U}\left(s\to (\psi_2\wedge \mybox{R}s\wedge \mybox{S}s)\right).\]
It should be clear that $\varphi_1'$ still satisfies Properties~(i) and~(ii) and is as required. So from now on we assume that $\varphi_1$ can only see points reachable via $R,S$ from the root. 

We now show that no $\Hfull$ formula over $\sigma$ with modal depth smaller than $2^{2^n}$ separates $\phi_1,\phi_2$. Assume towards contradiction that $\theta$ is such a formula and let $d<2^{2^n}$ be its modal depth. 

As described in the main part, we start with model 
$\N,v\models\phi_2$ consisting of a single infinite $R$-path $\pi$ with an $S$-self-loop around the first point $v$, such that consecutive segments of $\pi$ implement correct increments of $r$-values modulo $2^{2^n}$, starting from $0$. Consider its unraveling $\N^v,v$ into a tree.
It is not a good $SR$-tree, but one can turn it into a good $SR$-tree $\M^0,v$ by modifying only the valuation of $r$ in positions of depth $>d$. By Property~(ii), this $\M^0,v$ can be turned into a model $\M,v$ of $\varphi_1$ by changing non-$\sigma$ symbols. We have $\M,v\models\theta$. Since $\N,v\models \neg \theta$, also $\N^v,v\models\neg\theta$. If $\theta\in\Hmc$ this is already a contradiction since the respective reducts of $\M,v$ and $\N^v,v$ to $\sigma$ are isomorphic on the first $d$-levels, and $\theta$ has modal depth $d$.

We now derive contradiction in the general case $\theta\in\Hfull$. Let $\M^*$ be the disjoint union of $\N$ and $\M$, denote by $w_1$ the copy of $v$ from $\M$, and by $w_2$ the copy of $v$ from $\N$. It is easily verified that
$\M^*,w_1\models\phi_1$ and $\M^*,w_2\models\phi_2$; the former is a consequence of
our assumption that $\varphi_1$ only sees points reachable via $R,S$.  Consequently, $\M^*,w_1\models\theta$ and $\M^*,w_2\models\neg\theta$. Let $\theta'$ be obtained from $\theta$ by replacing every subformula $\chi$ which starts with $\mybox{U}$ or $\mydiam{U}$ by $\top$ if $\chi$ is true in $\M^*$ and by $\bot$ otherwise. By construction, $\M^*,w\models\theta'$ and $\M^*,v\models\neg\theta'$, and moreover $\theta'$ does not use $U$ and has modal depth at most $d$. But this is impossible since the reachable parts of $\M^*,w_1$ and $\M^*,w_2$ are isomorphic to $\M,v$ and $\N^v,v$, respectively, and the $\sigma$-reducts of these are isomorphic up to level $d$.

The above proves that $\phi_1,\phi_2$ require separators of doubly exponential modal depth. To achieve $\phi_1^+,\phi_2^+$ which require triply exponential formula size we use the following trick. Instead of the relation $R$ we use a union of two different $R_1,R_2$. That is, in $\phi_1$ we replace every $\mydiam{R}\zeta$ with $\mydiam{R_1}\zeta\vee\mydiam{R_2}\zeta$ and every $\mybox{R}\zeta$ with $\mybox{R_1}\zeta\wedge\mybox{R_2}\zeta$. The other formula $\phi_2^+=\phi_2$ is unchanged. Since the construction of $\phi_1$ can be carried out in such a way that its modal depth does not depend on $n$, it follows that $\phi_1^+$ is polynomial in $n$.

We claim that if $\theta$ separates $\phi_1^+,\phi_2^+$ then it embeds a full binary tree of depth $2^{2^n}$ in the following sense. For every sequence $T_1T_2...\in\{R_1,R_2\}^{2^{2^n}}$ there is a sequence $\zeta_1\zeta_2...$ of length $2^{2^n}$ of subformulae of $\theta$ where each $\zeta_i$ begins with $\mydiam{T_i}$ or $\mybox{T_i}$, and $\theta_i$ is a strict subformula of $\theta_j$ whenever $i<j$. This claim is proven with an argument very similar to the above argument for the lower bound on modal depth of separators for $\phi_1,\phi_2$. The only adaptation is that, if some sequence $\pi=T_1T_2...$ of modalities is missing in $\theta$, then the model $\N,v$ (and hence the models derived from it) use this missing $\pi$ in place of the infinite $R$-path.

Finally, note that for every $d<2^{2^n}$ our construction provides models $\M^*,w_i\models\phi_i$ for both $i\in\{1,2\}$, such that the $\sigma$-reducts of the reachable parts of these two models are isomorphic up to depth $d$. This proves that actually no $\theta\in\Gmc$ of modal depth $d<2^{2^n}$ separates $\phi_1,\phi_2$. The same trick as with $\Hfull$ shows that this is not even possible with $\theta\in\Gfull$ of modal depth $d<2^{2^n}$.
This completes the proof of Theorem~\ref{thm:size-lower-bound}.

\section{Proofs for Section~\ref{sec:uniform}}

\thmundecunform*

The proof is by reduction of the following undecidable tiling problem~\cite{EmdeBoas97}.
%
% \vspace*{-\medskipamount}
\begin{definition}
	A \emph{tiling system} $\mathcal{S}=(\mathcal{T},H,V,R,L,T,B)$ consists of a finite set
	$\Tmc$ of \emph{tiles}, horizontal and vertical \emph{matching
		relations} $H,V \subseteq \Tmc \times \Tmc$, and sets $R,L,T,B \subseteq
	\Tmc$ of \emph{right} tiles, \emph{left} tiles, \emph{top} tiles, and
	\emph{bottom} tiles. A \emph{solution} to $\mathcal{S}$ is a triple $(n,m,\tau)$
	where $n,m \geq 1$ and $\tau: \{0,\ldots,n\} \times \{0,\ldots,m\} \rightarrow \Tmc$ such
	that the following hold:
	\begin{enumerate}
		
		\item $(\tau(i,j),\tau(i+1,j)) \in H$, for all $i<n$ and $j \leq m$;
		
		\item $(\tau(i,j),\tau(i,j+1)) \in V$, for all $i\leq n$ and $j<m$;
		
		\item $\tau(0,j) \in L$ and $\tau(n,j) \in R$, for all $0 \leq j \leq m$;
		
		\item $\tau(i,0) \in B$ and $\tau(i,m) \in T$, for all $0 \leq i \leq n$.
		
	\end{enumerate}
	\vspace*{-\medskipamount}
\end{definition}
%
%We first show undecidability \emph{with} the universal modality %$\Box_{u}$.
%Thus, we can use description logic notation for now when defining $\phi$.

Let $\mathcal{S}=(\Tmc,H,V,R,L,T,B)$ be a tiling system as above. We construct a formula $\varphi_{\mathcal{S}}$ and signature $\sigma$ such
that the following are equivalent:
\begin{enumerate}
    \item $\mathcal{S}$ has no solution;
    \item there is a uniform FO$(\sigma)$-interpolant for $\varphi_{\Smc}$ in $\Hmc$;
    \item there is a uniform $\Hmc(\sigma)$-interpolant for $\varphi_{\Smc}$ in FO.
\end{enumerate}
(Observe that (2) $\Rightarrow$ (3) always holds.) To construct $\varphi_{\Smc}$, regard the tiles $t\in \Tmc$
as propositional variables and let $\sigma$ contain $\Tmc$, the symbols $S,E,\mathsf{min}$ introduced in the paper, propositional variables $\mn{left}, \mn{right}, \mn{top}, \mn{bottom}$, and binary relations $F,R_{x},R_{y}$.

$E$ will again allow us to access the nodes in a finite linear order. $F$ will play a similar role as it will 
allow us to access the nodes on a grid that provides the space for a solution to the tiling problem.
%We use $E$ and $R$ as approximations of the universal %modality. In the intended 
Let $\phi_{\Smc} =\psi_{1}\wedge \cdots \wedge \psi_{4}$ where $\psi_{1},\ldots,\psi_{4}$ are defined as follows: 
%
	
	%	\item The roles $r_x$, $r_y$, and their inverses are functional:
	%	%
	%	$$
	%	\top \sqsubseteq (\leqslant 1\ r), \text{ for } r \in
	%	\{r_x,r_y,r_x^-,r_y^-\}
	%	$$
	%	
$\psi_{1}$ says that every grid node is labeled with exactly
	one tile and the matching conditions are satisfied:
    \begin{align*}
	\mybox{F} \bigvee_{t\in \Tmc}(t \wedge \bigwedge_{t' \in \Tmc,\; t'\not=t} \neg t') \\[4mm]
	\mybox{F} \bigwedge_{t\in \Tmc}(t \rightarrow (\bigvee_{(t,t') \in H} \mybox{R_{x}}t' \wedge \bigvee_{(t,t') \in V} \mybox{R_{y}}t'))
	\end{align*}
	
$\psi_{2}$ says that propositional variables \mn{left}, \mn{right}, \mn{top}, \mn{bottom} mark the borders of the grid in the expected way:
	\begin{align*}	
		\mybox{F}(\mn{bottom} \rightarrow \neg \mn{top} \wedge\mybox{R_{x}}\mn{bottom}) \\	
		\mybox{F}(\mn{right} \rightarrow \mybox{R_{y}}\mn{right})\\
		\mybox{F}(\mn{left} \rightarrow \neg \mn{right} \wedge \mybox{R_{y}}\mn{left})\\
		\mybox{F}(\mn{top} \rightarrow \mybox{R_{x}}\mn{top})\\
	%	\mn{bottom} & \sqsubseteq & \mn{right} \sqcup \exists r_{x}. \mn{bottom}
		\mybox{F}(\neg \mn{top} \leftrightarrow\mydiam{R_{y}}\top) \\
        \mybox{F}(\neg \mn{right} \leftrightarrow\mydiam{R_{x}}\top	)\\
        \mybox{F}(\mn{bottom} \rightarrow \bigvee_{t\in B}t), \quad
        \mybox{F}(\mn{right} \rightarrow \bigvee_{t\in R}t)\\
        \mybox{F}(\mn{left} \rightarrow \bigvee_{t\in L}t), \quad
        \mybox{F}(\mn{top} \rightarrow \bigvee_{t\in T}t).
\end{align*}

$\psi_{3}=\mydiam{F}(\mn{left} \wedge \mn{bottom})$.

\medskip
$\psi_{4}$ says that either (i) there is an infinite outgoing $R_x$/$R_y$-path starting at a node in $\mn{left} \wedge \mn{bottom}$ satisfying $Q$ on the grid or (ii) there is a finite outgoing $R_x$/$R_y$-path starting at a node in $\mn{left} \wedge \mn{bottom}$ satisfying $Q$ and leading to a grid cell that does not close or (iii) the formula $\varphi_{even}$ defined in the paper holds. Hence $\psi_{4}$ is the conjunction of $\mydiam{F}(Q \wedge \mn{left} \wedge \mn{bottom})$ and
	$$	
	\mybox{F}(Q \rightarrow \varphi_{Q,P} \vee \varphi_{even})
    $$
    where 
    $$
    \varphi_{P,Q}= \mydiam{R_{x}}Q \vee \mydiam{R_{y}}Q \vee
	(\mydiam{R_{x}}\mydiam{R_{y}}P \wedge \mydiam{R_{y}}\mydiam{R_{x}}\neg P))
	$$
%Set $\sigma= \sig(\varphi)\setminus \{Q,P\}$.
The following lemma shows that (1) $\Rightarrow$ (2).
\begin{lemma}
    If $\mathcal{S}$ has no solution, then $\psi_{1}\wedge\psi_{2}\wedge\psi_{3}$ is a uniform FO$(\sigma)$-interpolant for $\phi_{\Smc}$.
\end{lemma}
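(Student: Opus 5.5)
We have to establish two things: $\varphi_{\Smc}\models\psi_1\wedge\psi_2\wedge\psi_3$, and $\psi_1\wedge\psi_2\wedge\psi_3\models\psi$ for every FO-formula $\psi$ with $\varphi_{\Smc}\models\psi$ and $\sig(\psi)\cap\sig(\varphi_{\Smc})\subseteq\sigma$. The first is immediate, since the three formulae are conjuncts of $\varphi_{\Smc}$ and use only symbols of $\sigma$ (they do not mention $Q$, $P$, or the auxiliary $\mathsf{even}$).

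For the second we argue semantically. Let $\psi$ be as above and take a pointed model $\M,w\models\psi_1\wedge\psi_2\wedge\psi_3$. We build a model $\M',w$ that agrees with $\M$ on all symbols of $\sigma$ and on every symbol not in $\sig(\varphi_{\Smc})$, and only reinterprets $Q$, $P$ and $\mathsf{even}$, so that $\M',w\models\varphi_{\Smc}$. Then $\M',w\models\psi$. Because $\psi$ mentions none of $Q$, $P$, $\mathsf{even}$, and the domain and the interpretation of all other symbols are unchanged, we get $\M,w\models\psi$. No bisimulation or unraveling is needed, which is why the argument covers arbitrary FO-formulae $\psi$.

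It remains to choose $Q$ and $P$. Set $Q=\{v\mid (w,v)\in F^\M\}$, i.e.\ all $F$-successors of $w$, and choose $\mathsf{even}$ arbitrarily. By $\psi_3$ some $F$-successor satisfies $\mn{left}\wedge\mn{bottom}$, so $\mydiam{F}(Q\wedge\mn{left}\wedge\mn{bottom})$ holds. For $\mybox{F}(Q\rightarrow\varphi_{P,Q}\vee\varphi_{even})$ it suffices to choose $P$ so that every $F$-successor $v$ of $w$ satisfies $\mydiam{R_x}Q\vee\mydiam{R_y}Q\vee(\mydiam{R_x}\mydiam{R_y}P\wedge\mydiam{R_y}\mydiam{R_x}\neg P)$.

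Here we expect the main obstacle, namely the interaction between the grid and the $F$-successors. The first two disjuncts need $R_x$/$R_y$-successors of $v$ that are again $F$-successors of $w$, but $\M$ need not be closed in that way. We would argue as follows. Consider the $R_x/R_y$-reachable part starting from an $F$-successor satisfying $\mn{left}\wedge\mn{bottom}$. If $\psi_1,\psi_2$ held everywhere on this part, it were finite, and every cell closed ($R_xR_y$ and $R_yR_x$ meet), then it would contain a finite grid encoding a solution of $\Smc$. This contradicts the assumption that $\Smc$ has no solution. So some failure always occurs: either an infinite path or a non-closing cell, in which case we set $P$ to separate the two corners of that cell, or we leave the $F$-successors.

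The difficulty is that $\psi_1,\psi_2$ are only guaranteed at $F$-successors of $w$. We would therefore define $Q$ more carefully, as the set of nodes $R_x/R_y$-reachable inside the $F$-successors from the marked starting node, along a path that is infinite or ends in a non-closing cell. Existence of such a path is the contrapositive of the no-solution assumption: if no such path existed, König's lemma together with closure of all cells would yield a finite grid with a valid tiling. Every remaining $Q$-node at which the path leaves $F$ or stops must then be handled by $\varphi_{even}$ or by the $P$-disjunct. We expect that this last bookkeeping, choosing $P$ consistently across all non-closing cells and, where necessary, using $\varphi_{even}$ (which is satisfiable by reinterpreting $\mathsf{even}$ unless $E$ encodes a finite even chain), is where the details must be checked most carefully.
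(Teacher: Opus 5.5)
Your reduction of the lemma to an expansion property is the same as the paper's, and it is sound. If every pointed model of $\psi_1\wedge\psi_2\wedge\psi_3$ becomes a model of $\varphi_{\mathcal{S}}$ after reinterpreting only $Q$, $P$ and $\mathsf{even}$, then every FO-consequence $\psi$ of $\varphi_{\mathcal{S}}$ with $\sig(\psi)\cap\sig(\varphi_{\mathcal{S}})\subseteq\sigma$ follows from $\psi_1\wedge\psi_2\wedge\psi_3$.

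The whole content of the lemma, however, is showing that such $Q,P$ exist when $\mathcal{S}$ has no solution, and you leave exactly this open. The justification you sketch also does not work. It assumes that a cell which fails to close can be caught by letting $P$ separate ``the two corners'', i.e.\ that the node has both an $R_xR_y$-successor and an $R_yR_x$-successor. $\psi_2$ does not guarantee this, because $\neg\mn{right}$ is not propagated along $R_y$ and $\neg\mn{top}$ is not propagated along $R_x$. So a non-right, non-top node may have an $R_y$-successor marked $\mn{right}$, which has no $R_x$-successor. At such a node $\mydiam{R_y}\mydiam{R_x}\neg P$ is false for every $P$. By contrast, the two things you do worry about are harmless. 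A $Q$-node that is not an $F$-successor of $w$ is unconstrained by $\mybox{F}(Q\to\cdots)$, so a path leaving the $F$-successors is a success. And if $Q$ is a single path, $P$ only has to work at one node.

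This is not bookkeeping: for $\psi_2$ as written, the step fails. Here is a counterexample.
\begin{itemize}
\item Let $w$ have exactly the $F$-successors $u_0,u_1,u_2,y_0,y_1,y_2$.
\item Set $R_x=\{(u_0,u_1),(u_1,u_2),(y_0,y_1)\}$ and $R_y=\{(u_k,y_k)\mid k\le 2\}$.
\item Set $\mn{left}=\{u_0,y_0\}$, $\mn{bottom}=\{u_0,u_1,u_2\}$, $\mn{top}=\{y_0,y_1,y_2\}$, $\mn{right}=\{u_2,y_1,y_2\}$ and $E=\emptyset$.
\item Place tiles $a,b,c,d,e,f$ at $u_0,u_1,u_2,y_0,y_1,y_2$.
\item Use an extra tile $z$, left tiles $\{a,d\}$, bottom tiles $\{a,b,c\}$, right tiles $\{c,e,f\}$ and top tiles $\{d,e,f\}$.
\item Set $H=\{(a,b),(b,c),(d,e),(c,z),(e,z),(f,z)\}$ and $V=\{(a,d),(b,e),(c,f),(d,z),(e,z),(f,z)\}$.
\end{itemize}
One checks that $\psi_1\wedge\psi_2\wedge\psi_3$ holds at $w$.

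$\mathcal{S}$ has no solution. The tile $\tau(0,0)=a$ forces $\tau(1,0)=b$, $\tau(0,1)=d$ and $\tau(1,1)=e$. Since $b$ is not a right tile, $n\ge 2$, so $\tau(2,0)=c$. Then $\tau(2,1)$ would have to be $f$ (by $V$) and $z$ (by $H$).

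Yet no expansion satisfies $\psi_4$. The six nodes form a finite acyclic graph closed under $R_x,R_y$, and $\varphi_{even}$ is false because $E$ is empty. So some $Q$-node would need $\mydiam{R_x}\mydiam{R_y}P\wedge\mydiam{R_y}\mydiam{R_x}\neg P$. But both corners of $u_0$ are $y_1$. The open cell at $u_1$ has no $R_yR_x$-corner, because $y_1$ is right. Every other node lacks a corner.

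Hence the negated diagram of this finite model is an FO$(\sigma)$-consequence of $\varphi_{\mathcal{S}}$ that $\psi_1\wedge\psi_2\wedge\psi_3$ does not entail. The paper's one-line proof asserts the same expansion and overlooks this configuration.

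The fix is to add the $\Hmc(\sigma)$-conjuncts $\mybox{F}(\neg\mn{right}\to\mybox{R_y}\neg\mn{right})$ and $\mybox{F}(\neg\mn{top}\to\mybox{R_x}\neg\mn{top})$ to $\psi_2$. They hold in the grid models used for the converse direction. Then, as long as one stays among the $F$-successors of $w$, every non-right, non-top node has both corners, and your plan goes through. Let $o$ be the left-bottom $F$-successor. There are two cases.
\begin{itemize}
\item Some $R_x/R_y$-path from $o$ is infinite, leaves the $F$-successors, or reaches a node whose corners are not a single common point. Then take $Q$ to be that path and $P$ a singleton corner.
\item Otherwise, follow one $R_x$-path and one $R_y$-path from $o$; they are finite because there is no infinite path. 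Fill in the grid with the unique closing points; their tiles form a solution.
\end{itemize}
König's lemma is neither needed nor applicable, since models may be infinitely branching. The disjunct $\varphi_{even}$ is never used in this direction.
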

\begin{proof}
If $\mathcal{S}$ has no solution, then one can show that every model $\Mmc,w$ of $\psi_{1}\wedge\psi_{2}\wedge\psi_{3}$
can be expanded to a model $\Mmc',w$ of $\phi_{\Smc}$ by either making $Q$ true in an infinite $R_{x}/R_{y}$-path within $\{d \mid (w,d)\in F^{\Mmc}\}$ starting at some $w'\in (\mn{left} \wedge \mn{bottom})^{\Mmc}$ with
$(w,w')\in F^{\Mmc}$ or making $Q$ true in a finite such $R_{x}/R_{y}$-path that leads to a node satisfying $(\mydiam{R_{x}}\mydiam{R_{y}}P \wedge \mydiam{R_{y}}\mydiam{R_{x}}\neg P)$. The claim follows immediately.
\end{proof}
The following lemma shows that (3) $\Rightarrow$ (1).
\begin{lemma}
    If $\mathcal{S}$ has a solution, then there is no uniform $\Hmc(\sigma)$-interpolant for $\varphi_{\Smc}$ in FO.
\end{lemma}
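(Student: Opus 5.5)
We will show that if $\mathcal{S}$ has a solution, then a uniform interpolant $\chi$ in FO would define, among a class of finite structures, a property that is not FO-definable. This contradicts an Ehrenfeucht--Fra\"iss\'e argument in the same way as the illustration with $\varphi_{even}$. The idea is to fix a solution $(n,m,\tau)$ and build, for every $k\geq 1$, a pointed model $\Mmc_k,w$ over $\sigma$. This model consists of the solution grid, reachable via $F$ with $R_x,R_y$, tiles and border markers as given by $\tau$, together with a linear order of length $k$ reachable via $E$ with $S$ and $\mathsf{min}$, exactly as $\Mmc_{k}$ in the $\varphi_{even}$ example.

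Regarding every element of $\Mmc_k$ as a nominal, we write an $\Hmc(\sigma\cup\Nom')$-formula $\chi_k$ describing $\Mmc_k,w$ up to isomorphism on the $F$- and $E$-reachable part. We combine the conjuncts of $\chi_{k}$ from the example with analogous conjuncts for the grid: each grid nominal gets its tile and border labels, $\mybox{F}$ is a disjunction of the grid nominals, and $\mybox{R_x},\mybox{R_y},\mydiam{R_x},\mydiam{R_y}$ point to the correct nominals. The fresh nominals do not occur in $\varphi_\Smc$, so they are allowed in the second clause of the definition of a uniform interpolant.

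The key step is to verify that $\varphi_\Smc\models\neg\chi_k$ iff $k$ is even. Take any model of $\chi_k$. Its $F$-part is a genuine, correctly closing finite grid tiled by a solution, so $\psi_1$--$\psi_3$ hold. Since the grid is finite, option (i) of $\psi_4$ fails. Every cell closes, because the $R_x R_y$- and $R_y R_x$-successors are the same nominal, so option (ii) fails regardless of $P$. Hence $\varphi_\Smc$ forces $\varphi_{even}$ at the $Q$-nodes, where $\varphi_{even}$ uses $E$ from those nodes. We need $\psi_4$ to make $\varphi_{even}$ effectively a statement about the $E$-linear order. For this we would arrange that grid nodes also carry the $E$-edges to the order, or adjust the construction so that $\varphi_{even}$ is evaluated at a node seeing the order. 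Under that arrangement $\varphi_{even}$ holds iff $k$ is even. Conversely, for even $k$ we can expand $\Mmc_k$ by $Q$ along a finite path and suitable $\mathsf{even}$ to obtain a model of $\varphi_\Smc\wedge\chi_k$. Thus a uniform $\Hmc(\sigma)$-interpolant $\chi$ in FO satisfies $\chi\models\neg\chi_k$ iff $k$ is even, since $\varphi_\Smc\models\neg\chi_k$ and $\chi$ is a consequence of $\varphi_\Smc$. For the reverse direction, when $k$ is even, $\Mmc_k,w$ expands to a model of $\varphi_\Smc$, so $\Mmc_k\models\chi(w)$, which gives $\chi\not\models\neg\chi_k$. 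For odd $k$, $\chi\models\neg\chi_k$ means $\Mmc_k\not\models\chi(w)$.

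So $\chi$ holds in $\Mmc_k,w$ exactly for even $k$. The grid part is fixed and finite, independent of $k$. A standard Ehrenfeucht--Fra\"iss\'e game on $\Mmc_k$ versus $\Mmc_{k+1}$ with $k\gg 2^{\mathrm{qr}(\chi)}$ shows they satisfy the same FO sentences of quantifier rank $\mathrm{qr}(\chi)$ at $w$. Duplicator copies moves in the identical grid component and plays the usual linear-order strategy on the order, in the spirit of \cite[Example~2.3.6]{DBLP:books/daglib/0082516}. This is a contradiction. The main obstacle is the precise interplay in $\psi_4$: checking that in $\Mmc_k$ the disjuncts (i) and (ii) really fail, and that $\varphi_{even}$ is evaluated at a point whose $E$-successors form the intended order. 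If the literal formulae do not achieve this, I would first add $E$-edges from every grid node to the order, keeping $\chi_k$ accurate.
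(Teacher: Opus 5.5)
Your proposal is correct and follows the paper's proof. You fix a solution and combine its grid with an $S$-chain of length $k$. You describe the result by an $\Hmc$-formula $\chi_k$ with fresh nominals, so that $\varphi_{\mathcal{S}}\wedge\chi_k$ is satisfiable for exactly one parity of $k$: closing cells kill the $P$-disjunct, and finiteness rules out infinite $Q$-paths. You then conclude that the uniform interpolant would define parity of finite successor chains in FO.

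The main difference is how $\varphi_{even}$ gets to see the chain. The paper identifies the chain's $E$-source $a$ with the corner $a_{0,0}$ where $Q$ starts. Your fallback instead adds $E$-edges from every grid node to the chain, which works equally well provided $\chi_k$ pins down all $E$-edges of grid nodes, as you intend. A second difference is that you prove parity non-definability directly by an Ehrenfeucht--Fra\"iss\'e game, whereas the paper only cites it.

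Note the parity slip in your write-up: by your own expansion argument, $\varphi_{\mathcal{S}}\models\neg\chi_k$ holds iff $k$ is odd, not even.
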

\begin{proof} \
	Assume, for a proof by contradiction, that $\mathcal{S}$ has a solution $\tau$ consisting of a properly tiled $n_{0}\times m_{0}$ grid and that there is a uniform $\Hmc(\sigma)$-interpolant $\chi(x)$ for $\varphi_{\Smc}$ in FO. We show that then there is a formula in FO distinguishing between finite linear orders of even and odd length. To this end, we first expand the model $\Mmc_{n}$ defined in the paper to a model 
    $$
    \Nmc_{n}=(V_{n},(r^{\Nmc_{n}})_{r\in\sigma})
    $$
    where $V_{n}$ is obtained by taking the nodes in $W_{n}$, adding nodes
    \begin{itemize}
        \item $n$ and $a_{i,j}$ for $0\leq i \leq n_{0}$ and $0\leq j \leq m_{0}$.
    \end{itemize}
    and identifying $a$ with $a_{0,0}$.
    Take the following interpretation of the remaining symbols: 
\begin{itemize}
	\item $a_{i+1,j}$ is the only $R_{x}$-successor of $a_{i,j}$, for $i+1\leq n_{0}$;
	\item $a_{i,j+1}$ is the only $R_{y}$-successor of $a_{i,j}$, for $j+1\leq m_{0}$;
	\item the borders of the grid satisfy the respective propositional variables $\mn{left}, \mn{right}$, $\mn{top}$, $\mn{bottom}$;
	%\item $a$ is identified with $a_{0,0}$.
    \item $t^{\Nmc_{n}}=\{a_{i,j} \mid \tau(i,j)=t\}$, for $t\in \Tmc$;
    \item $(n,a_{i,j})\in F^{\Nmc_{n}}$ for all $i\leq j_{0}$ and $j\leq m_{0}$.
\end{itemize}
Then $\Nmc_{n}$ can be expanded to a model satisfying $\varphi_{\Smc}$ iff $n$ is even.

Now we describe $\Nmc_{n}$ using a formula $\chi_{n}'$ in $\Hmc$ by regarding $a_{i,j}$ as nominals and taking
the conjunction of the following formulae (with $\chi_{n}$ defined in the paper):
\begin{align*}
     \mydiam{F}\chi_{n} \wedge \bigwedge_{i,j}\mydiam{F}a_{i,j}\\
     \mybox{F}\bigvee_{i,j}a_{i,j} \wedge \mybox{F}\bigwedge_{i,j\not=i',j'}\neg (a_{i,j} \wedge a_{i',j'})\\
     \mybox{F}\bigwedge_{i,j}(a_{i,j}\rightarrow \mydiam{R_{x}}a_{i+1,j}), \quad \mybox{F}\bigwedge_{i,j}(a_{i,j} \rightarrow \mybox{R_{x}}a_{i+1,j})\\
     \mybox{F}\bigwedge_{i,j}(a_{i,j}\rightarrow \mydiam{R_{y}}a_{i,j+1}), \quad \mybox{F}\bigwedge_{i,j}(a_{i,j} \rightarrow \mybox{R_{y}}a_{i,j+1})\\
     \mybox{F}\bigwedge_{j}(a_{n_{0}},j)\rightarrow \mybox{R_{x}}\bot)\\
     \mybox{F}\bigwedge_{i}(a_{i,m_{0}})\rightarrow \mybox{R_{y}}\bot)\\
     \mydiam{F}(a_{i,j}\wedge t) \text{ if }\tau(i,j)= t    \end{align*}
Then $\phi_{\Smc}\models \neg \chi_{n}'$ iff $n$ is an even number. Hence a uniform $\Hmc(\sigma)$-interpolant $\chi$ for $\varphi_{\Smc}$ in FO would have the property that $\chi\models \neg \chi_{n}$ iff $n$ is an even number. But this contradicts non-FO-definability of finite linear orders of even length in FO. 
\end{proof}

\end{document}